\documentclass[a4paper]{amsart}

\usepackage[british]{babel}
\usepackage[OT2,T2A,T1]{fontenc}
\usepackage[utf8]{inputenc}
\usepackage{eulervm}
\usepackage[scr=dutchcal, bb=boondox]{mathalfa}
\usepackage{physics}
\usepackage{amsmath,amsthm,amssymb,dsfont,extarrows,amscd,amsfonts,mathtools,bm,physics}
\usepackage[mathscr]{euscript}
\usepackage{MnSymbol}

\usepackage{hyperref}
\usepackage[capitalize]{cleveref}
\usepackage{ multirow} 
\usepackage{fullpage, color}
\usepackage[labelformat=empty]{subfig}
\usepackage{booktabs}
\usepackage{enumerate, longtable}
\usepackage{makecell}
\usepackage{IEEEtrantools}
\usepackage{todonotes}

\usepackage[alphabetic, initials]{amsrefs}

 \usepackage{relsize}
\usepackage{tikz-cd}
\usepackage{xcolor}

\usepackage[hmargin=2.6cm,vmargin=3.2cm]{geometry}
\usepackage{marginnote, mparhack}

\usepackage{tikz}
\usetikzlibrary{decorations.pathreplacing,angles,quotes,decorations.markings}

\tikzset{
  midarrow/.style={
    postaction={decorate},
    decoration={markings, mark=at position #1 with {\arrow{>}}}
  },
  midarrow/.default=0.55
}

\tikzset{
  contour arrows/.style 2 args={
    postaction={decorate},
    decoration={
      markings,
      mark=at position #1 with {\arrow{>}},
      mark=at position #2 with {\arrow{>}}
    }
  },
  contour antipodal/.style={
    contour arrows={0.20}{0.70} % change 0.20 to slide both together
  },
  contour antipodal reversed/.style={
    postaction={decorate},
    decoration={
      markings,
      mark=at position 0.20 with {\arrowreversed{>}},
      mark=at position 0.70 with {\arrowreversed{>}}
    }
  }
}

\crefname{equation}{Eq.}{Eqs.}
\crefname{eqnarray}{Eq.}{Eqs.}
\crefname{algo}{Algorithm}{Algorithms}
\crefname{conj}{Conjecture}{Conjectures}
\crefname{lem}{Lemma}{Lemmas}
\crefname{thm}{Theorem}{Theorems}
\crefname{claim}{Claim}{Claims}
\crefname{rmk}{Remark}{Remarks}
\crefname{prop}{Proposition}{Propositions}
\crefname{section}{Section}{Sections}
\crefname{appendix}{Appendix}{Appendices}
\crefname{cor}{Corollary}{Corollaries}
\crefname{figure}{Figure}{Figures}
\crefname{table}{Table}{Tables}
\crefname{example}{Example}{Examples}
\crefname{prob}{Problem}{Problems}
\crefname{assm}{Assumption}{Assumptions}
\crefname{defn}{Definition}{Definitions}

\DeclareMathOperator{\diag}{diag}

\def\bary{\begin{array}} 
\def\eary{\end{array}} 
\def\ben{\begin{enumerate}} 
\def\een{\end{enumerate}}
\def\bit{\begin{itemize}} 
\def\eit{\end{itemize}}

\theoremstyle{plain}

\newtheorem{thm}{Theorem}[section]
\newtheorem{lem}{Lemma}[thm]

\newtheorem{prop}{Proposition}[section]

\newtheorem*{conj*}{Conjecture}
\newtheorem{cor}{Corollary}[thm]
\newtheorem*{cor*}{Corollary}

\theoremstyle{definition}

\newtheorem{example}{Example}[section]

\newcommand{\GITl}[1]{\backslash \!\! \backslash _{\kern-.2em #1 \kern0.1em}}
\newcommand{\GIT}[1]{/\!\!/_{\kern-.2em #1 \kern0.1em}}

\newtheorem{theorem}{Theorem}[section]
\newtheorem{lemma}[theorem]{Lemma}
\newtheorem{lemma-definition}[theorem]{Lemma-Definition}
\newtheorem{proposition}[theorem]{Proposition}

\theoremstyle{definition}

\newtheorem{defn}[theorem]{Definition}

\theoremstyle{remark}
\newtheorem{remark}{Remark}[section]

\numberwithin{equation}{section}
\numberwithin{figure}{section}

\DeclareMathOperator{\Hw}{\mathcal{H}}
\DeclareMathOperator{\CC}{\mathbb{C}}
\DeclareMathOperator{\ZZ}{\mathbb{Z}}

\DeclareMathOperator{\RS}{\mathbb{P}^1}
\DeclareMathOperator{\Hol}{\mathscr{O}}
\DeclareMathOperator{\Rat}{\mathrm{Rat}}
\DeclareMathOperator{\PSL}{\mathbf{PSL}}
\DeclareMathOperator{\Div}{\mathrm{Div}}
\DeclareMathOperator{\Cr}{\mathrm{Cr}}
\newcommand{\ttop}{\mathrm{(top)}}
\newcommand{\out}{\mathrm{(out)}}
\newcommand{\inn}{\mathrm{(in)}}
\newcommand{\low}{\mathrm{(low)}}
\newcommand{\Sym}{\mathfrak{S}}
\newcommand{\Roots}{\mathfrak{R}}
\newcommand{\X}{\mathfrak{X}}
\newcommand{\Lie}{\mathscr{L}}

\usepackage{titlesec}
\titleformat{\section}
	{\centering\Large\scshape\bfseries}{\thesection.}{0.5em}{} 
\titleformat{\subsection}
 	{\normalfont\bfseries\scshape}{\thesubsection.}{.5em}{}
\titleformat*{\paragraph}{\bfseries}

  \newenvironment{acknowledgements}{%
  
  \begin{abstract}
}{%
  \end{abstract}
}
\numberwithin{equation}{section}

\makeatletter
\renewcommand{\l@section}{\@dottedtocline{1}{0em}{1.5em}}
\makeatother

\title{Landau--Ginzburg models of generalised Dubrovin--Zhang form and  pole collision: Dynkin-type A}
\author[A.~Proserpio]{Alessandro Proserpio}
\address[A.~Proserpio]{
School of Mathematics and Statistics,
University of Glasgow, Glasgow, G12 8QQ, United Kingdom.}
\email{a.proserpio.1 [at] research.gla.ac.uk}

\author[K.~van Gemst]{Karoline van Gemst}
\address[K.~van Gemst]{Dipartimento di Matematica e Applicazioni, Universit\`a di Milano-Bicocca, \newline
Via Roberto Cozzi 55, I-20125 Milano, Italy and INFN sezione di Milano-Bicocca}
\email{karoline.vangemst [at] unimib.it}

\begin{document}

\begin{abstract}
  In \cites{Bri20, BvG22}, the authors derive one-dimensional Landau--Ginzburg mirrors of Dubrovin--Zhang Frobenius manifolds constructed on regular orbit spaces of an extension of affine Weyl groups. We generalise the method employed, and classify the resulting Frobenius manifold structures in Dynkin type $A\,$. We interpret our results in terms of a stratification on the Hurwitz space boundary, and develop a pole-collision framework to compare the Frobenius structures within different strata. With this, we can prove  a structural result at the level of the prepotential, for arbitrary rank and dimension, as a suitable renormalised limit of the formulae in \cite{PS26}. As a corollary, a conjecture of \cite{MZ24} regarding the form of prepotentials related to doubly-extended affine Weyl groups is proven. 
\end{abstract}

\maketitle
\setcounter{tocdepth}{1}
\tableofcontents

\section{Introduction}\label{sec:intro}
Frobenius manifolds were introduced by Dubrovin as a geometric framework for
the WDVV (Witten--Dijkgraaf--Verlinde--Verlinde) equations arising in two-dimensional topological field theory
\cite{Dub96}. They encode this system of equations in terms of a
commutative, associative multiplication on the tangent bundle of a manifold, compatible with
a flat metric, a unit field and an Euler vector field. It turns out that this 
geometric structure appears in a range of mathematical and physical contexts,
from quantum cohomology and singularity theory to integrable systems and
supersymmetric field theories. In algebraic geometry, Frobenius manifolds arise in quantum cohomology, where
the Frobenius multiplication is the quantum product and the prepotential (the associated solution to the WDVV equations) is the generating function of the
genus-zero Gromov--Witten invariants. In singularity theory, they appear on the
base spaces of miniversal unfoldings of simple singularities  through Saito's theory of primitive forms \cite{Sai81}.
In the theory of integrable systems, Frobenius manifolds give rise to principal
hierarchies of hydrodynamic type and provide a natural geometric setting for
flat pencils of metrics, bi-Hamiltonian structures and dispersionless limits of
integrable hierarchies. In physics, they encode the genus-zero sector of
topological field theories and are closely related to mirror symmetry. 

A variety of generalisations of Frobenius manifolds have been developed, such as (flat, bi-flat) $F$-manifolds \cites{HM99,Man05,AL13}, almost Frobenius manifolds \cite{Dub04}, degenerate Frobenius manifolds \cite{Str99}, and generalised Frobenius manifolds (without flat unit) \cite{LQZ25}. Nevertheless, 
the semisimple, or massive, Frobenius manifold case remains the most extensively studied. The setting of semisimplicity allows one to use canonical coordinates, in which the tangent
algebra diagonalises, and is the natural setting for reconstruction
results (Givental--Teleman reconstruction \cites{Giv01, Tel12}). It is
moreover the setting for Landau--Ginzburg realisations of
Frobenius manifolds via Hurwitz spaces, where the canonical coordinates are
given by the critical values of a meromorphic function. More specifically, a Landau--Ginzburg model of a Frobenius manifold consists of a pair $(\lambda, \phi)$, where $\lambda$ is a meromorphic function called the \textit{Landau--Ginzburg superpotential}, and $\phi$ is the \textit{primary differential}. Such differentials were classified into five types in \cite{Dub96}.

The benefit of having a Landau--Ginzburg description of a Frobenius manifold is two-fold.
Firstly, it provides explicit superpotentials from which the Saito metric, 
multiplication, intersection form, as well as associated flat coordinates and Hamiltonian densities, may be computed in a straightforward way by residue
formulae. Secondly, it links Frobenius manifold theory with adjacent areas, such as mirror symmetry formulations (the Landau--Ginzburg model can be viewed as equivalent to the usual mirror symmetric B-model), as well as giving access to the Chekhov--Eynard--Orantin topological recursion procedure \cites{EO09, DOSS14, DNOPS19}.

A particularly important class of Frobenius manifolds is given by those constructed
on the regular orbit spaces of extended affine Weyl groups, denoted by $\widetilde{W}^{(\bar k)}(\Roots)$ for a crystallographic root system $\Roots\,$, as constructed by
Dubrovin and Zhang \cite{DZ98}. Their construction extends Dubrovin's earlier
work on orbit spaces of finite Coxeter groups \cite{Dub98}. For simply laced Dynkin types, these structures are related to the
quantum cohomology of orbifold projective lines \cite{Ros10}. A Landau--Ginzburg, or $B$-model, description of Dubrovin--Zhang Frobenius
manifolds was originally constructed in some cases \cites{DZ98,DSZZ19}, and was later obtained in full generality by Brini and  Brini--van Gemst from a degeneration of spectral curves of
the relativistic Toda chain \cites{Bri20,BvG22}. 

The appearance of relativistic Toda spectral curves in this construction has
two complementary sources.  On the one hand, it is motivated by
Seiberg--Witten theory: the odd periods of (almost dual of) polynomial Frobenius manifolds
associated with finite Coxeter groups are known to reproduce the quantum
periods of the Seiberg--Witten curves for four-dimensional
$\mathcal N=2$ pure Yang--Mills theory.  Nekrasov's formulation of the
corresponding five-dimensional $\mathcal N=1$ theory compactified on a circle
replaces the finite Weyl group geometry by its extended affine counterpart,
which suggested that Dubrovin--Zhang Frobenius manifolds should play the analogous
role in the five-dimensional theory. On the other hand, relativistic Toda spectral curves also arise from the
large-$N$ Chern--Simons/topological-string perspective.  Through the
Gopakumar--Ooguri--Vafa correspondence, Chern--Simons theory is related to
topological string theory; for spherical Seifert manifolds, the resulting
large-$N$ B-model curves are identified with Toda-type spectral curves of ADE
type.  Thus both the Seiberg--Witten and the Chern--Simons/topological-string
pictures point to the same relativistic Toda spectral curve geometry. 

The method to explicitly produce such Landau--Ginzburg models, as described in \cites{Bri20, BvG22}, consists of starting from
the characteristic equations of Lax operators, which may be classified into  Dynkin types $\Roots \in \{A_\ell, B_\ell, C_\ell, D_\ell, E_6, E_7, E_8, F_4, G_2\}$, and writing them as polynomials in fundamental characters (in a chosen representation), via the correspondence between elementary symmetric polynomials, Young diagrams and wedge products.  The spectral parameter (the LG-superpotential), and the parameter encoding the extension of the affine Weyl group, are then introduced via shifting a canonical fundamental character. 

The aim of the present paper is to generalise this method, and do an in-depth study in Dynkin type $A$.  The remaining Dynkin types are to be treated in a separate publication \cite{PvGprep}. More precisely,  instead of shifting only the
canonical fundamental character, as in the Dubrovin--Zhang case, we allow
shifts of an arbitrary number of fundamental characters. We then ask which such modified spectral curves still define
semisimple Frobenius manifolds, and study the resulting geometry.  Note that throughout this paper, unless stated otherwise, Einstein summation convention will be employed. 

\subsection{Main results}

The main results of this paper are contained in two parts. 

\vspace{1em}

In the first part (Section \ref{sec:Ashifts}), we generalise the method of \cites{Bri20, BvG22}.   This is done by 
modifying the relativistic Toda spectral curves of \cites{Bri20,BvG22} through
shifts of fundamental characters of the form
\begin{equation}
    w_i \longmapsto w_i + f_i \frac{\lambda}{w_0}, \quad \text{ for } i \in I \subset \{1, \cdots, \ell\},
    \label{eq:introshift}
\end{equation}
where $f_i$, for $i \in I$, are generic parameters, and $\ell$ is the rank of the associated complex simple Lie algebra. We classify the Frobenius manifolds of type $A$  by proving the following theorem. 
\begin{theorem}[= Theorem \ref{thm:AMain}]\label{thm:Amainintro}
    Let $\pi = (m_1,\ldots,m_{\ell(\pi)})$ be a partition of $p$ of length $\ell(\pi)$, and let $Q_{{\ell}}$ be the characteristic equation of the Lax-operator of type $A_\ell$ as it appears in \cite{BvG22} (and \eqref{eq:detA}). Then, all Frobenius manifolds obtained from performing a shift of the form \eqref{eq:introshift} on $Q_\ell$, are semisimple $\ell+1+\ell(\pi)$-dimensional Frobenius manifolds of charge $d=1$, with a Landau--Ginzburg model of form:
    \begin{equation}
    \lambda^{(I)}_{A_\ell}(\mu;\underline{w})
    =
    \frac{w_0\, Q_\ell(\mu;w_1, \cdots, w_\ell)}
    {\mu^{\bar k}
    \prod_{\alpha=1}^{\ell(\pi)}(\mu-w_{-\alpha})^{m_\alpha}}, \qquad \phi =\dd\log\mu\,,
    \label{eq:LGintro}
\end{equation}
where $I = \{\bar{k}, \bar{k}+1, \cdots, \bar{k}+p\}$ for some $\bar{k} \in \{1, \cdots, \ell-p\}$, and $w_{i}$, for $i = -p, \cdots, \ell$ are to be considered coordinates on the Frobenius manifolds.
\end{theorem}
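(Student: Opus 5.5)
We first make the shift \eqref{eq:introshift} explicit on $Q_\ell$. In type $A_\ell$ the characteristic polynomial of the Lax operator in the defining representation is
\[
Q_\ell(\mu;w)=\sum_{j=0}^{\ell+1}(-1)^j w_j\,\mu^{\ell+1-j}, \qquad w_0=w_{\ell+1}=1 \text{ (up to normalisation)},
\]
so each $w_i$ enters linearly. The spectral curve of \cite{BvG22} is obtained by shifting $w_{\bar k}\mapsto w_{\bar k}+\lambda/w_0$ and solving for $\lambda$. Shifting $w_i\mapsto w_i+f_i\lambda/w_0$ for every $i\in I$ and solving the resulting linear equation gives
\[
\lambda=\frac{w_0\,Q_\ell(\mu;w)}{\sum_{i\in I}(-1)^i f_i\,\mu^{\ell+1-i}} .
\]
The denominator is a polynomial. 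Write $\bar k=\min I$ and $\bar k+p=\max I$; after rescaling $\mu\mapsto\mu^{-1}$ if needed, so that the power of $\mu$ sits as in \eqref{eq:LGintro}, the denominator becomes $\mu^{\bar k}$ times a polynomial $D(\mu)$ of degree $p$ with $D(0)\neq0$. Its roots $w_{-\alpha}$ have multiplicities that define a partition $\pi$ of $p$. Gaps in $I$ only set some coefficients of $D$ to zero, so the generic root data are governed by $\{\bar k,\dots,\bar k+p\}$; we then absorb the $f_i$ into the roots $w_{-\alpha}$. This is how every shift produces a superpotential of the stated form.

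The second step is to show that $\lambda$ of form \eqref{eq:LGintro} with $\phi=\dd\log\mu$ defines a semisimple Frobenius manifold on the Hurwitz-type space with coordinates $(w_0,\dots,w_\ell,w_{-1},\dots,w_{-\ell(\pi)})$. We view $\lambda$ as a meromorphic function on $\mathbb P^1$ with the following poles:
\begin{itemize}
\item a pole of order $\bar k$ at $\mu=0$;
\item a pole of order $\ell+1-\bar k-p$ at $\mu=\infty$ (this needs $\bar k\le\ell-p$);
\item poles of order $m_\alpha$ at the points $w_{-\alpha}$.
\end{itemize}
Here $\dd\log\mu$ is a primary differential of the third-kind type, with simple poles at $0$ and $\infty$. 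We then apply Dubrovin's Hurwitz space theorem \cite{Dub96}. The Saito metric is
\[
\eta(\partial,\partial')=\sum_{\dd\lambda=0}\operatorname{Res}\frac{\partial(\lambda\phi)\,\partial'(\lambda\phi)}{\dd\lambda},
\]
and the multiplication comes from the analogous three-point residue. These give flatness, WDVV and semisimplicity at points where the critical values are distinct and serve as canonical coordinates.

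The dimension count has to be checked against the parameter count. The stratum is the space of rational functions with this pole profile, with $\mu$ rescaling partially fixed by $\phi$. Its dimension is the number of critical points, $2\cdot\#\{\text{zeros}\}-2$ corrected by Riemann--Hurwitz. This should equal $(\ell+1)+\ell(\pi)$: the count gives $\ell+1$ zeros, $\ell(\pi)$ free pole positions, and the leading constant $w_0$, minus the torus scaling fixed by $\phi$. The charge follows from the Euler field $E=\sum(\text{degree})\,w\partial_w$, which is determined by the homogeneity $\lambda(c\mu;\dots)$. With the unit identified as $\partial_{w_0}$-type rescaling, $\mathcal L_E\eta=(2-d)\eta$ should force $d=1$, exactly as in the Dubrovin--Zhang case.

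The main obstacle is the classification direction: showing that every admissible $I$ gives exactly these models and none beyond them. Concretely, shifts with gaps or with $\bar k+p=\ell+1$, $\bar k=0$ must either degenerate (losing semisimplicity or nondegeneracy of $\eta$) or reduce to a contiguous $I$ of the same type. I would handle this by computing $\eta$ in the coordinates $w$ explicitly via residues at $0$, $\infty$ and the $w_{-\alpha}$, and checking nondegeneracy. Nondegeneracy should fail precisely when a pole order at $0$ or $\infty$ vanishes, that is, when $\bar k=0$ or $\bar k+p=\ell+1$: then $\phi$ acquires a residue at a point that is not a pole of $\lambda$, and a flat direction collapses.
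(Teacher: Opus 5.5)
\textbf{Existence part.} This half of your argument matches the paper's proof of Theorems~\ref{thm:FrobAsimple} and~\ref{thm:FrobAmult}. It consists of the Riemann--Hurwitz count giving dimension $\ell+1+\ell(\pi)$, the admissibility of $\dd\log\mu$ (its poles $0,\infty$ are poles of $\lambda$), and Dubrovin's Hurwitz construction.

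Your charge argument, however, mixes up the unit and the Euler field. The unit is not a rescaling. It is the translation $e=-\sum_{i\in I}\frac{f_i}{w_0}\partial_{w_i}$, with $\mathcal L_e\lambda=1$. The Euler field is $E=w_0\partial_{w_0}$, with $\mathcal L_E\lambda=\lambda$ (Lemma~\ref{lemma:eEgen}). Then $d=1$ because $\phi$ is $E$-invariant while $\lambda$ has weight one, so $\mathcal L_E\eta=\eta$.

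\textbf{The gap: non-consecutive $I$.} The genuine gap is the classification direction, namely excluding non-consecutive $I$.

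Your first step says gaps ``only set some coefficients of $D$ to zero'' and that the $f_i$ can then be absorbed into the roots. This is false. With a gap, $D$ has degree $\max I-\bar k$ but only $|I|-1<\max I-\bar k$ free coefficients, so the roots are constrained. For example, $I=\{\bar k,\bar k+2\}$ forces the two movable poles to be $\pm a$. Likewise, $I=\{1,2,4\}$ forces the three movable poles to sum to zero. The resulting family is a proper subvariety, of dimension $\ell+|I|$, of the Hurwitz space attached to $\{\bar k,\dots,\max I\}$. It is neither a model of the stated form nor ``of the same type'' as any consecutive $I$.

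Your proposed degeneracy mechanism also points at the wrong place. The cases $\bar k=0$ or $\max I=\ell+1$ cannot occur, since $I\subset\{1,\dots,\ell\}$. So, as written, your plan would predict that gapped shifts give non-degenerate metrics of dimension $\ell+|I|$, contradicting the theorem.

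\textbf{What is missing} is the paper's Proposition~\ref{prop:Anonconsec}. Use the local behaviour
\begin{itemize}
\item at $0$: $\partial_{w_i}\lambda\sim\mu^{i-\bar k}$ and $\lambda'\sim\mu^{-\bar k-1}$;
\item at $\infty$: $\partial_{w_i}\lambda\sim\mu^{i-\max I}$ and $\lambda'\sim\mu^{\ell-\max I}$;
\item at the generically simple movable poles $r$: $\partial_{w_i}\lambda\sim(\mu-r)^{-1}$ and $\lambda'\sim(\mu-r)^{-2}$.
\end{itemize}
Moving the contour to these points, one checks that $\eta_{ij}=0$ whenever $\bar k\le i\le\max I$ and $j\ge 1$. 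At $0$ a residue would need $i+j=\bar k$; at $\infty$ it would need $i+j=\max I+\ell+1$; at the movable poles the integrand is regular.

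Hence the $\max I-\bar k+1$ rows indexed by $w_{\bar k},\dots,w_{\max I}$ are supported in the $|I|$ columns $w_0,w_{-1},\dots,w_{-(|I|-1)}$. For consecutive $I$ these two numbers agree. A gap makes $\max I-\bar k+1>|I|$, so these rows are linearly dependent and $\det\eta=0$. This parameter-count mismatch, not a vanishing pole order at $0$ or $\infty$, is what rules out non-consecutive $I$ and completes the classification.
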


 In particular, when ordering the index set $I=\{i_1, \cdots i_{|I|}\}$ such that $i_j < i_k$ if and only if $j<k$, $I$ must be of consecutive form (i.e. $i_{j} = i_{j-1}+1$ for all $j \in \{2, \cdots, |I|\}$) to obtain the structure of a Frobenius manifold. It can be shown that if this is not the case, then the Saito metric degenerates (Proposition \ref{prop:Anonconsec}). Moreover,  all possible Frobenius structures of type $A$ have LG-models of generalised DZ-form. The canonical case of \cites{Bri20, BvG22} is recovered by setting $I = \{\bar{k}\}$.  To obtain this form, the shift functions $f_i$, for $i \in I$, in \eqref{eq:introshift} are elementary symmetric polynomials of $w_{-p}, \cdots, w_{-1}$ (Remark \ref{rmk:Afepsilon}). 

The Landau--Ginzburg models of
\cites{PS26,Zuo20,MZ24} are recovered as special cases, and our construction
produces new, more general, solutions of the WDVV equations. As an application, we derive explicit closed-form expressions for Saito flat coordinates and prepotentials for several non-trivial examples. 

\vspace{1em}

In the second part (Section \ref{sec:polecollision}), we interpret the partition data as a stratification of the Hurwitz
    boundary obtained by coalescence of movable poles. Each boundary stratum is known to carry a Frobenius manifold given by a choice of a primary differential compatible with all the superpotentials of the form in \cref{thm:Amainintro}. We relate the Frobenius
    structures on different strata via an explicit pole-collision formalism. The open stratum corresponds to the case in which all
additional movable poles are simple; lower dimensional strata are obtained by
allowing these poles to coalesce, and are therefore indexed by partitions obtained by coarsening the longest one. The simple poles that collapse to produce a higher order pole will be said to be part of a \textit{collision cluster} (or block). We
develop a pole-collision formalism that relates the residue metric and the
Frobenius three-tensor on different strata, giving an effective procedure for
passing between Frobenius structures on a stratum and its boundary.

 Using this framework, we introduce a natural `clusterisation' of flat coordinates in each collision block, and prove a general structural form for the prepotentials of Frobenius structures associated with the relativistic Toda chain, in type $A$. That is,
 \begin{prop}\label{prop:flatcoordinatesintro}
       Let $\hat{\ell},\bar{k}\in\ZZ_{\geq 0}$, and $\underline{t}$ be flat coordinates for the $\widetilde{W}^{(\bar{k})}(A_{\hat{\ell}+\bar{k}})$ Dubrovin--Zhang manifold. Fix $p\in\ZZ_{\geq 0} $ and let $\pi=(m_1,\dots, m_{\ell(\pi)})$ be a partition of $p\,$, and consider the associated Landau--Ginzburg model of the form \eqref{eq:LGintro} for $\ell:=\hat{\ell}+\bar{k}+p\,$. 
    One can find coordinates of the form $(\underline{t},\underline{\beta},\underline{c})$ -- called collision coordinates -- on the associated Frobenius manifold such that:
          \begin{enumerate}
              \item $\underline{\beta}\in\CC^{\ell(\pi)}$ are flat coordinates logarithmic in the movable poles;
              \item $\underline{c}=\bigl(\underline{c}_1,\dots, \underline{c}_{\ell(\pi)}\bigr)\in\CC^{m_1}\times \dots\times\CC^{m_{\ell(\pi)}}$ are (in general curved) natural collision quantities that are well-behaved in the limit (\cref{eq:collisionchartexp,prop:polecoalescence}). In each collision cluster, these are related to flat coordinates in a controllable, block-diagonal, universal way that only depends on the size (\cref{lem:rootofcmflat}).
          \end{enumerate}
 \end{prop}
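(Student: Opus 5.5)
Our plan is to construct the collision coordinates explicitly from the superpotential \eqref{eq:LGintro}. We work at the generic point of the Hurwitz boundary stratum indexed by $\pi$ and use the residue formulae for the Saito metric $\eta(\partial,\partial')=-\sum_{\mathrm{poles}}\operatorname{Res}\,\partial(\lambda\phi)\,\partial'(\lambda\phi)/\dd\lambda$, with $\phi=\dd\log\mu$.

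First we factor $\lambda=\lambda_{DZ}\cdot R$. Here $\lambda_{DZ}$ is the Dubrovin--Zhang superpotential of $\widetilde W^{(\bar k)}(A_{\hat\ell+\bar k})$, with poles only at $\mu=0,\infty$. The factor $R$ is a ratio of monic polynomials: degree $p$ in the numerator and $\prod_\alpha(\mu-w_{-\alpha})^{m_\alpha}$ in the denominator. Concretely, we write $Q_\ell=\prod_{i}(\mu-e^{x_i})$ and split off $p$ of the roots, grouped into clusters of sizes $m_\alpha$.

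For each cluster $\alpha$ we expand the numerator factor attached to $\alpha$ around $w_{-\alpha}$:
\[
\prod_{j=1}^{m_\alpha}(\mu-e^{x_{\alpha,j}})=\sum_{r=0}^{m_\alpha}c_{\alpha,r}\,(\mu-w_{-\alpha})^{m_\alpha-r}.
\]
The $c_{\alpha,r}$, $r=1,\dots,m_\alpha$, are the collision quantities $\underline c_\alpha$. They are well behaved as simple poles coalesce, because they are symmetric functions of the relative positions within the cluster. This is the content of \cref{eq:collisionchartexp} and \cref{prop:polecoalescence}, and we would invoke that proposition to see that the limit from the open stratum is regular.

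Next we identify the flat coordinates. The $\underline t$ are the DZ flat coordinates, obtained from the expansion of $\lambda^{1/n}$ at $0$ and $\infty$; these expansions are unchanged, up to a renormalisation, because $R\to1$ at both ends. The $\beta_\alpha$ are $\log w_{-\alpha}$, plus possibly a shift by $\log$ of the product of that cluster's numerator roots, chosen so that $\partial_{\beta_\alpha}$ is covariantly constant. This is checked via the standard fact that the residues of $\log\lambda$-type periods at a pole of order $m_\alpha$ give flat functions: the $\lambda^{1/m_\alpha}$-expansion at $w_{-\alpha}$.

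For the remaining local coordinates near each pole, we use that $\lambda^{1/m_\alpha}$ near $w_{-\alpha}$ has local Puiseux parameter whose coefficients are flat (the standard Hurwitz-space rule). These coefficients are polynomial in $c_{\alpha,r}/c_{\alpha,0}$-type quantities, via an upper-triangular change depending only on $m_\alpha$. This is \cref{lem:rootofcmflat}, which supplies block-diagonality and universality.

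The main obstacle is showing that the Saito metric is block diagonal in $(\underline t,\underline\beta,\underline c)$ up to the flat shifts, so that no cross terms couple distinct clusters or the DZ sector. We would try the pole-collision formalism first: compute $\eta$ on the open stratum with all $m_\alpha=1$, where the computation is explicit and reduces to \cite{PS26}. We would then take the renormalised limit cluster by cluster, using \cref{prop:polecoalescence} to control the residues at the coalescing poles.
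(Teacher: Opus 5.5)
There is a genuine gap. Your collision quantities are not the ones in the statement, and with your definition the property you need fails.

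In the paper, $\underline c_\alpha$ is defined in \cref{eq:collisionchartexp} from the top-stratum flat coordinates as $c_i=\sum_j\alpha_j\Delta_j^{i-1}$. Here $\alpha_j$ are the residues of $\lambda\,\dd\log\mu$ at the colliding \emph{simple poles}, and $\Delta_j$ are their logarithmic displacements from $\beta$. \Cref{prop:polecoalescence} then identifies the limits with the principal Laurent coefficients of $\lambda(e^z)$ at $z=\beta_\alpha$, i.e.\ of $\lambda$ in the coordinate $z=\log\mu$ in which $\phi=\dd z$.

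Your $c_{\alpha,r}$ are instead Taylor coefficients of an arbitrarily chosen degree-$m_\alpha$ factor of the numerator, so $c_{\alpha,0}\equiv 1$. They record positions of \emph{zeros} of $\lambda$ relative to the pole and are defined only on the stratum. They have nothing to do with the sequences controlled by \cref{prop:polecoalescence}, so citing that result for them is a non sequitur.

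More seriously, the type-(a) flat coordinates $\operatorname{Res}_{w_{-\alpha}}\lambda^{k/m_\alpha}\,\dd\log\mu$ depend on the Taylor coefficients at $w_{-\alpha}$ of $(\mu-w_{-\alpha})^{m_\alpha}\lambda=w_0\,\mu^{-\bar k}Q_\ell\prod_{\gamma\neq\alpha}(\mu-w_{-\gamma})^{-m_\gamma}$ (and of $1/\mu$). Hence they depend on the Dubrovin--Zhang roots and on the other clusters, so there is no block-diagonal, size-only relation between your $c_{\alpha,r}$ and the flat coordinates.

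With the correct $c$'s the claim is immediate. Write $\lambda(e^z)=\tfrac{c_m}{w^m}\bigl(1+\tfrac{c_{m-1}}{c_m}w+\dots+\tfrac{c_1}{c_m}w^{m-1}+\order{w^m}\bigr)$ with $w=z-\beta$. For $k\leq m-1$, the residue of $\lambda^{k/m}\dd w$ only involves the coefficient of $w^{k-1}$ in the $(k/m)$-th power of the bracket, hence only $c_m^{k/m}$ and the ratios $c_j/c_m$. That is \cref{lem:rootofcmflat}, with $\beta$ the type-(b) and $c_1$ the type-(c) coordinate of \cref{Eq:5.1b,Eq:5.1c}. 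So $\beta_\alpha$ is flat as a type-(b) coordinate, not through a $\lambda^{1/m_\alpha}$-expansion as you suggest. Adding logarithms of numerator roots to $\beta_\alpha$ would in general destroy flatness.

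The Dubrovin--Zhang sector has the same defect. The factorisation $\lambda=\lambda_{DZ}R$ depends on which $p$ roots of $Q_\ell$ you split off, and these are permuted by monodromy. The claim $R\to1$ is false at $\mu=0$, where $R(0)=\prod_{\alpha,j}e^{x_{\alpha,j}}/\prod_\alpha w_{-\alpha}^{m_\alpha}$. At $\mu=\infty$, the $\order{\mu^{-1}}$ terms of $R$ feed into the $\mu^0$-coefficients of $\lambda^{k/(\hat\ell+1)}$; compare $t_1$ in \cref{ex:A5123}, which contains $w_{-1}+w_{-2}$. So the DZ flat coordinates of $\lambda_{DZ}$ are not $\eta$-flat, and your ``up to a renormalisation'' hides precisely the missing step.

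What works is the additive decomposition the paper uses, following \cite{PS26}: $\lambda=h+\sum_\alpha(\text{principal part at }w_{-\alpha})$, with $h$ a Dubrovin--Zhang superpotential. The principal parts are $\order{\mu^{-1}}$ at $\infty$ and holomorphic at $0$, so they do not affect the type-(a) residues there, and the DZ flat coordinates of $h$ are $\eta$-flat.

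Finally, ``block-diagonal'' in the statement refers to the change of variables $\underline c_\alpha\mapsto$ flat coordinates splitting cluster by cluster, with upper-triangular Jacobian. It does not refer to a block-diagonality of $\eta$ to be established through a renormalised limit. Once the $c$'s are the Laurent coefficients in $z=\log\mu$, the flat coordinates follow directly from Dubrovin's formulae on the stratum. \Cref{prop:polecoalescence} is only needed to identify these coefficients with limits of the top-stratum quantities.
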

\begin{theorem}[= Theorem \ref{thm:prepotential}]
Let $\hat{\ell},\bar{k},p,\pi$ and $\underline{t}$ be as in \cref{prop:flatcoordinatesintro}, and denote by $F^{\mathrm{DZ}(\hat{\ell},\bar{k})}(\underline{t})\in \mathcal{A}^{(\bar{k})}_{\hat{\ell}+\bar{k}}$ the corresponding Dubrovin--Zhang prepotential (\cref{thm:DZ:reconstruction}). The prepotential associated to the Landau--Ginzburg model of the form in \cref{eq:LGintro} can be written as:
 \begin{equation*}
        \begin{aligned}
F_\pi(\underline{t},\underline{\beta},\underline{c})&=F^{\mathrm{DZ}(\hat{\ell},\bar{k})}(\underline{t})+\sum_{\alpha=1}^{\ell(\pi)}F^{(m_\alpha)}_{\mathrm{ren}}(\beta_\alpha,\underline{c}_\alpha)+\\&\quad +\sum_{\alpha=1}^{\ell(\pi)}\sum_{i=1}^{m_\alpha}\Lambda_i(\underline{t},\beta_\alpha)\,c_{i\alpha}+\sum_{1\leq \alpha<\gamma\leq \ell(\pi)}F_{\mathrm{int}}^{\alpha,\gamma}(\beta_\alpha,\beta_\gamma,\underline{c}_\alpha,\underline{c}_\gamma)\,,
\end{aligned}
    \end{equation*}
  where $F^{(m)}_{\mathrm{ren}}$ is a renormalised limit (\cref{prop:prepotentiallimit,prop:renormalisedlimitexplicit}) of the logarithmic sector of the top-stratum prepotential for the collapse of $m$ simple poles, $\Lambda_i(\underline{t},\nu)\in \mathcal{A}^{(\bar{k})}_{\hat{\ell}+\bar{k}}[\nu,e^{\pm \nu}]$ and the pairwise interaction term between two collision clusters is of the form:
  \[
  \begin{aligned}
      F_{\mathrm{int}}^{\alpha,\gamma}(\beta_\alpha,\beta_\gamma,\underline{c}_\alpha,\underline{c}_\gamma)=c_{1,\alpha}c_{1,\gamma}\log(e^{\beta_\alpha}-e^{\beta_\gamma})+\sum_{k= 1}^{m_\alpha+m_\gamma-2}\frac{1}{(e^{\beta_\alpha}-e^{\beta_\gamma})^k}P^{\alpha,\gamma}_k(e^{\beta_\alpha},e^{\beta_\gamma},\underline{c}_\alpha,\underline{c}_\gamma)\,,
  \end{aligned}
  \]
 for some families of polynomials $\bigl\{P^{\alpha,\gamma}_k\bigr\}\,$.     In particular:
    \begin{enumerate}
        \item $F_\pi$ depends polynomially on $\{c_{1,\alpha},\dots, c_{m_\alpha-1,\alpha}\}_{1\leq \alpha\leq \ell(\pi)}\,$ and it is a Laurent polynomial of $c_{m_\alpha,\alpha}$ for any pole of order at least three ($m_\alpha\geq 3$), up to a logarithmic term of the form $c_{1,\alpha}^2\log c_{m_\alpha,\alpha}$ in each collision cluster.
        \item It is invariant under the action of the Young subgroup $\Sym_\pi$ of the partition $\pi$ acting by permutation of collision blocks giving poles of the same order.
    \end{enumerate}
\end{theorem}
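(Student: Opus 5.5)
The plan is to obtain $F_\pi$ as a renormalised limit of the top-stratum prepotential, i.e. the one for the partition $(1,\dots,1)$ of $p$ with all movable poles simple. For that prepotential I will assume the closed form of \cite{PS26}: a Dubrovin--Zhang part $F^{\mathrm{DZ}(\hat\ell,\bar k)}(\underline t)$, terms linear in the residue-type flat coordinates with coefficients in $\mathcal A^{(\bar k)}_{\hat\ell+\bar k}[\nu,e^{\pm\nu}]$, and a logarithmic sector of the shape $\sum_{a<b} c_a c_b\log(e^{\beta_a}-e^{\beta_b}) + \sum_a c_a^2(\ldots)$. Here $e^{\beta_a}$ are the simple poles and $c_a$ their flat ``residue'' coordinates.

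\textbf{Step 1: reduce to a local statement.} By \cref{prop:polecoalescence} and the pole-collision formalism, the residue metric and the Frobenius three-tensor on a lower stratum are limits of those on the top stratum. These limits are taken in the collision chart \eqref{eq:collisionchartexp}, where the $m_\alpha$ simple poles of a cluster are written as $e^{\beta_\alpha}$ plus small displacements. Since third derivatives determine $F$ up to quadratic terms, I will first show that $c_{ijk}$ on the stratum $\pi$ is the limit of the top-stratum $c_{ijk}$. Then I define $F_\pi$ as the top-stratum prepotential with the divergent, coordinate-quadratic part subtracted, and pass to the limit. This is exactly the renormalisation described in \cref{prop:prepotentiallimit,prop:renormalisedlimitexplicit}.

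\textbf{Step 2: sort terms by how many clusters they involve.} I split the top-stratum prepotential accordingly.
\begin{enumerate}
\item Terms independent of the movable poles give $F^{\mathrm{DZ}}$ unchanged.
\item Terms linear in residues, $\Lambda(\underline t,\beta_a)c_a$, are Taylor-expanded around $\beta_\alpha$ inside each cluster. Re-expressed in the collision quantities $\underline c_\alpha$ via \cref{lem:rootofcmflat}, these become $\sum_i\Lambda_i(\underline t,\beta_\alpha)c_{i\alpha}$. The expansion terminates, since only $m_\alpha$ moments survive, and $\Lambda_i$ stays in $\mathcal A[\nu,e^{\pm\nu}]$ because it is built from $\nu$-derivatives of $\Lambda$.
\item Pairs within the same cluster give $F^{(m_\alpha)}_{\mathrm{ren}}$ by definition.
\item Pairs in different clusters give $\sum c_ac_b\log(e^{\beta_a}-e^{\beta_b})$. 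Expanding $\log(e^{\beta_\alpha}+\epsilon_a - e^{\beta_\gamma}-\epsilon_b)$ in the displacements gives the leading term $c_{1\alpha}c_{1\gamma}\log(e^{\beta_\alpha}-e^{\beta_\gamma})$, where $c_{1}$ is the total residue. The remaining terms are $(e^{\beta_\alpha}-e^{\beta_\gamma})^{-k}$ times polynomials in the higher moments.
\end{enumerate}
The moments of cluster $\alpha$ only go up to order $m_\alpha-1$. By symmetric-function bookkeeping, a term of total displacement degree $k$ must therefore satisfy $k\le (m_\alpha-1)+(m_\gamma-1)$, which gives the bound $m_\alpha+m_\gamma-2$.

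\textbf{Step 3: the two structural claims.} Polynomiality in $c_{1\alpha},\dots,c_{m_\alpha-1,\alpha}$ is inherited from the polynomial dependence in Step 2 together with the block-triangular change of variables of \cref{lem:rootofcmflat}. The Laurent dependence on $c_{m_\alpha,\alpha}$ and the single term $c_{1\alpha}^2\log c_{m_\alpha,\alpha}$ come from the explicit form of $F^{(m)}_{\mathrm{ren}}$ in \cref{prop:renormalisedlimitexplicit}. The logarithm arises as the surviving remnant of the diagonal $c_a^2\log$ divergences after subtracting $c_{1\alpha}^2\log\epsilon$. Invariance under $\Sym_\pi$ follows because the superpotential \eqref{eq:LGintro} is symmetric under permuting poles of equal order, together with the symmetric roles of $\alpha,\gamma$ in $F_{\mathrm{int}}$.

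\textbf{Main obstacle.} I expect the hardest point to be Step 1's legitimacy: the subtracted divergent terms must be genuinely at most quadratic in the flat coordinates $(\underline t,\underline\beta,\underline c)$ of the limit stratum. Otherwise the limit would fail to solve the WDVV equations with the correct metric. I would try to prove this first by computing the divergent part of the top prepotential in the collision chart. The goal is to show that it has the form $(\sum_{a\in\alpha}c_a)^2\log\epsilon$ plus terms linear in the $c$'s times $\epsilon$-dependent constants. This uses $c_{1\alpha}=\sum_{a\in\alpha}c_a$ being flat on both strata.
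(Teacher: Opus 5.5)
\textbf{Verdict.} Your Steps 2--3 use the same decomposition as the paper's proof, but Step 1, which is supposed to justify it, rests on two false claims. The decomposition is: $F^{\mathrm{DZ}}$ untouched; intra-cluster logarithms giving $F^{(m_\alpha)}_{\mathrm{ren}}$; residue-linear terms giving $\Lambda_i$ (this comes from \cref{cor:generalpolecollapselinear}, not \cref{lem:rootofcmflat}); inter-cluster logarithms giving $F_{\mathrm{int}}$ as finite limits, as in \cref{rem:partialcollapse}. Your degree count giving $k\le m_\alpha+m_\gamma-2$ is fine.

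\textbf{First false claim: $c^{\low}=\lim_n\iota_n^\ast c^{\ttop}$.} The three-point function on the stratum $\pi$ is not the limit of the pulled-back top-stratum one.
\begin{itemize}
\item In the collision chart $(\iota_n^\ast c^{\ttop})_{ijk}$ is of order $\delta_n^{m+2-(i+j+k)}$ (\cref{lem:ctensorcomponebeta}). For instance, for $m=2$ the $c_2c_2c_2$ component blows up like $\delta_n^{-2}$.
\item What converges to $c^{\low}$ is only the outer part, $c^{\low}=\lim_n\bigl(\iota_n^\ast c^{\ttop}-c^{\inn}_n\bigr)$ (\cref{lem:expansioncritical}). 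The critical points falling into the collision point can leave divergent or finite non-zero contributions in any component with two or more Latin indices (\cref{lem:divergencectensorcomponents}).
\item This matters even where you only take finite limits. For $m_\alpha=2$, $(\iota_n^\ast c^{\ttop})_{c_{2\alpha}c_{2\alpha}c_{1\gamma}}\equiv0$, since $F^{\ttop}$ has no term cubic in the residues. On the stratum, however, this component is a non-zero multiple of $1/c_{2\alpha}$, entirely equal to $-\lim_n(c^{\inn}_n)_{c_{2\alpha}c_{2\alpha}c_{1\gamma}}$.
\item The finite limit of $F_{\mathrm{int}}$ is linear in $c_{2\alpha}$, so it reproduces this component only through the Christoffel symbol $\Gamma^{c_{2\alpha}}_{c_{2\alpha}c_{2\alpha}}=-1/(2c_{2\alpha})$ of $\eta^{\low}$.
\item Since the collision chart is not flat, ``third derivatives'' must be ${}^{\low}\nabla$-covariant, and the connection itself changes in the limit.
\end{itemize}

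\textbf{Second false claim: the divergent part is quadratic in flat coordinates.} Your plan for the ``main obstacle'' would have you prove something false.
\begin{itemize}
\item The residue coordinates do not stay bounded. Since $c_m(n)=\sum_a\alpha_a\Delta_a^{m-1}\to c_m\neq0$ while $\Delta_a\to0$, one has $\alpha_a\simeq\delta_n^{1-m}(V_\kappa^{-1})_{am}c_m$ (\cref{lem:flatcm}).
\item Hence the divergent part of $F_{\log}$ is not $(\sum_a\alpha_a)^2\log\epsilon$ plus linear terms. Already for $m=2$ it is $\frac{c_2^2}{4\Delta^2}\log\frac{c_2}{4\Delta^2}$ (\cref{ex:prepotentialtwopoles}). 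In the flat coordinate $t=\sqrt{c_2}$ this is of type $t^4\log t$, far from quadratic.
\item Its divergent third derivatives are cancelled by $c^{\inn}_n$, not absorbed by the quadratic ambiguity of $F$.
\item Subtracting the divergence is not enough in general. For $m=3$ the regularised limit has the wrong third derivatives and must be corrected by a $G$ with ${}^{\low}\nabla^3G={}^{\low}\nabla^3F^{\mathrm{(reg)}}-c^{\low}$ (\cref{prop:prepotentiallimit}).
\end{itemize}

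\textbf{What is missing.} You need the inner/outer splitting of critical points together with the sector separation of \cref{lem:ctensorcomponebeta}: $F_{\log}$ feeds only components with at most one $\beta$-index, and $F_{\mathrm{lin}}$ only those with at least two. This is what the paper uses to argue that the correction stays in the intra-cluster logarithmic sector. That in turn is what makes $F^{\mathrm{DZ}}$, the $\Lambda_i$-terms and $F_{\mathrm{int}}$ plain finite limits, which your Step 2 takes for granted.

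Relatedly, the $c_{1\alpha}^2\log c_{m_\alpha\alpha}$ term does not arise from ``diagonal divergences minus $c_{1\alpha}^2\log\epsilon$''. It comes from $\tfrac12\sum_a\alpha_a^2\log\alpha_a$ with $\log\alpha_a=\log c_m-(m-1)\log\delta_n+\dots$ (\cref{lem:regularisedstructural}).
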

In fact, the coordinate transformation in \cref{lem:rootofcmflat}, ensures that property (1) will still hold for $F_\pi\,$, when written as a function of flat coordinates. As a corollary, we confirm the conjecture of \cite{MZ24}, for prepotentials related to orbit spaces of doubly-extended affine Weyl groups, which corresponds to the case $\pi=(p)\,$. 

\subsection{Organisation of the paper}
The paper is organised as follows. In Section \ref{Section:DZ} we provide the necessary generalities on Frobenius manifolds and spectral curves. In Section \ref{sec:Ashifts}, we introduce the main method employed for producing  spectral curves and prove a classification theorem of the possible Frobenius manifold structures obtained.  Section \ref{sec:Ex} consists of several non-trivial examples, for which we derive explicit closed-form expressions for flat frames and prepotentials. In Section \ref{sec:polecollision},  we perform an in-depth analysis of the boundary of the Hurwitz spaces associated to the Landau--Ginzburg models derived in Section \ref{sec:Ashifts} and, using this, we compare and relate the different Frobenius manifold structures. We also prove a general form of the associated prepotentials. Finally, we conclude the paper in Section \ref{sec:conc} and discuss some potential future directions.

\section{Background}\label{Section:DZ}
We start by recalling the basic definitions from the theory of Frobenius manifolds. The reader is referred to \cite{Dub96} for further details.

\begin{defn}
A (complex, holomorphic) Frobenius manifold of charge $d\in\CC$ is a 5-tuple $\mathcal{M}=(M,  \cdot, \eta, e, E)$, where $M$ is a finite dimensional complex manifold equipped with the following geometric data:
\begin{itemize}
    \item an $\Hol_M$-bilinear multiplication of holomorphic vector fields $\cdot:\X_M\otimes_{\Hol_M}\X_M\to \X_M\,$, where $\X_M$ denotes the sheaf of holomorphic vector fields on $M\,$;
    \item a symmetric, non-degenerate bilinear form $\eta:\X_M\otimes_{\Hol_M}\X_M\to\Hol_M\,$;
    \item two holomorphic vector fields $e,E\in\X_M(M)\,$, respectively the identity and Euler vector field;
\end{itemize}
such that:
\begin{enumerate}
    \item [(FM1)] At each point $p \in M$, the fibre $T_pM$ of the holomorphic tangent bundle at $p$ has the structure of a unital associative commutative Frobenius algebra with multiplication $\cdot_p\,$, pairing $\eta_p$ and identity element $e_p\,$;
    \item [(FM2)] The bilinear form $\eta$ is flat;
    \item [(FM3)] The unit vector field is covariantly constant,
    $\nabla e = 0\,$,
with respect to the Levi-Civita connection $\nabla$ associated to $\eta\,$;
\item [(FM4)] If $c$ denotes the $(0,3)$-tensor $c(X,Y,Z):=\eta(X\cdot Y,Z)$ for any three holomorphic vector fields $X, Y $ and $Z\,$, then $c$ and $\nabla c$ are totally symmetric;
\item [(FM5)] The Euler vector field satisfies:
\[
\begin{aligned}
    \Lie_E\cdot&=\cdot\,,&&& \Lie_E\eta&=(2-d)\eta\,,
\end{aligned}
\]
where $\Lie $ denotes the Lie derivative.
\end{enumerate}
\label{def:frob}
\end{defn}

\smallskip

A Frobenius manifold is {\it semisimple} if the set 
\[
\mathrm{Discr}(M) \coloneqq \{p\in M~|~\exists v \neq 0\in T_p M~{\rm with}~ v\cdot v=0\,\}
\] 
has positive complex codimension. Whenever $E$ is in the group of units of $(T_p M, \cdot)$,  one may define a second flat metric, $g \in \Gamma(M, \mathrm{Sym}^2 T^*M)$, by \cite{Dub04}
\begin{equation}
    g(E \cdot X, Y) = \eta(X, Y). 
    \label{eq:gdef}
\end{equation}
This second flat metric is usually called the \textit{intersection form}.

A key consequence of Definition \ref{def:frob} is the existence of a one-parameter affine family of flat metrics on $T^*M$ given by the pencil
\begin{equation}
    g^* + \lambda \eta^*,
\end{equation}
where $g^*, \eta^*$ denote the dual cotangent metrics and $\lambda \in \mathbb{C}$ \cite{Dub98b}.

The property (FM4) can be rephrased by saying that the tensor field $c$ is a Codazzi tensor of rank three \cite{Fer81}. Since the metric is flat, it follows that there locally exists a holomorphic function $F$ in the neighbourhood $U$ of each point of $M$ such that:
\begin{equation}\label{eq:ctensorF}
    c\,\lvert_U=\nabla ^3F\,.
    \end{equation}
The function $F$ is called \textit{prepotential} (or free energy) of the Frobenius manifold \cite{Dub96}, and it is determined up to a local holomorphic function in the kernel of $\nabla ^3\,$. Notice that, for any function $F\,$, defining $c$ as in \cref{eq:ctensorF} automatically ensures that it is a Codazzi tensor, and that the multiplication is commutative. Associativity, on the other hand, is a non-trivial constraint on $F\,$, as we shall discuss in more detail below.

Notice that the metric itself is a Codazzi tensor of rank two. Therefore, by the same argument, it can be locally written as the second covariant derivative of a holomorphic function. Owing to the property (FM3), it is easy to see that:
\begin{equation}
    \eta\lvert_U=\nabla^2\Lie_eF\,,
\end{equation}
in the same neighbourhood $U\,$. Therefore, the prepotential simultaneously determines the multiplication and the metric on the Frobenius manifold.

This description becomes more evident after observing that, since $\eta$ is a flat metric, there exists a distinguished local frame around each point. Namely, these are the coordinate vector fields $\bigl\{\pdv{t_i}\bigr\}_{i=1}^{\dim M}$ of a system of \textit{flat coordinates} $\{t^i\}_{i=1, \cdots, \text{dim}(M)}$ for $\eta\,$. In other words, $\nabla\pdv{t_i}=0$ for any $i\,$.
Throughout this paper, we let $\partial_{x^i}$ be short-hand for $\frac{\partial}{\partial x^i}$, and $\partial_i$ be short-hand for $\frac{\partial}{\partial t^i}$. Clearly, affine functions of the coordinates of any given flat system are again flat, so the coordinate frame $\{\partial_i\}$ will be defined up to such transformations. Since $e$ is itself flat it can, without loss of generality, be chosen to be a coordinate vector field in our system. A standard choice is $e=\partial_1\,$. However, for different applications, it might be more natural to choose $e$ to be a different flat coordinate vector field. Of course, this is just a notational issue.

In $t$-coordinates, the components of the $c$-tensor are simply given by third derivatives of the prepotential $F\,$, as the Christoffel symbols vanish by construction. The description of the structure, therefore, drastically simplifies to: 
\ben[(i)]
    \item $e=\partial_{1}$;
    \item $\eta_{ij} \equiv \eta(\partial_i, \partial_j) = \partial_{1ij}^3 F\,$, which are the entries of a constant, non-degenerate matrix;
    \item $c_{ijk} \equiv c(\partial_i, \partial_j, \partial_k) = \eta(\partial_i \cdot \partial_j, \partial_k) = \partial_{ijk}^3 F$;
    \item $E = \sum_{i} d_i t^i \partial_i + \sum_{i} r_i \partial_i $;
    \item $g_{ij} = \sum_k E^k c_{kij} $;
    \item $\partial_i \cdot \partial_j =  c^k_{ij} \partial_k$, where $c^k_{ij} \coloneqq  \eta^{k m}c_{mij}$, and $\eta^{ij} \coloneqq (\eta)^{-1}_{ij}$.
    \item As a consequence of the multiplication being associative, $F$ satisfies the Witten--Dijkgraaf--Verlinde--Verlinde (WDVV) equations,
\begin{equation}
\partial_{ijk}^3F  \,  \eta^{kl}\,\partial_{lmn}^3 F= j \longleftrightarrow m\,.
\end{equation}
\een 

The form of the Euler vector field follows from the following
\begin{prop}[\cite{Dub99}]
    The vector field $\nabla E$ is covariantly constant with respect to $\nabla$, i.e. $\nabla^2 E=0\,$.
\end{prop}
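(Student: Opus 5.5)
The plan is to show that $\nabla E$, viewed as a $(1,1)$-tensor $X\mapsto \nabla_X E$, is parallel. Since $\eta$ is flat, we work throughout in flat coordinates $t^1,\dots,t^n$, where Christoffel symbols vanish. In these coordinates the claim becomes $\partial_i\partial_j E^k=0$ for all $i,j,k$, which says exactly that $E^k$ is affine in $\underline t$. The only inputs are the two conditions in (FM5), together with (FM4) in the form $c_{ijk}=\partial^3_{ijk}F$ and the nondegeneracy of the product.

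\textbf{Step 1 (metric condition).} Write out $\Lie_E\eta=(2-d)\eta$ in flat coordinates. Since $\eta_{ij}$ is constant, this gives
\[
\partial_i E^k\,\eta_{kj}+\partial_j E^k\,\eta_{ik}=(2-d)\,\eta_{ij}\,.
\]
Set $A_{ij}:=\eta_{jk}\,\partial_i E^k$. The identity says that the symmetric part of $A$ is the constant $\tfrac{2-d}{2}\eta_{ij}$, so $\partial_l$ of the symmetric part vanishes. Define $B_{lij}:=\partial_l A_{ij}=\eta_{jk}\,\partial_l\partial_i E^k$. Then $B$ is symmetric in $(l,i)$, because partial derivatives commute, and antisymmetric in $(i,j)$, by the identity just derived. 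A three-tensor that is symmetric in its first pair of indices and antisymmetric in its last pair vanishes by the standard braid argument:
\[
B_{lij}=-B_{lji}=-B_{jli}=B_{jil}=B_{ijl}=-B_{ilj}=-B_{lij}\,.
\]
Hence $B=0$. Because $\eta$ is nondegenerate, this gives $\partial_l\partial_i E^k=0$.

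This argument does not even use the multiplicative condition $\Lie_E\cdot=\cdot$; it only needs the Lie derivative of a constant nondegenerate metric to be a constant multiple of that metric, i.e.\ $E$ is a conformal Killing field with constant factor. So Step 1 alone finishes the proof, and I do not expect a real obstacle.

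The one point that needs care is the sign bookkeeping in the braid argument. I would also note explicitly that the conclusion is local and coordinate-free: $\nabla^2E$ is a tensor, and it vanishes in one flat chart around each point. This is what yields the form $E=\sum_i d_i t^i\partial_i+\sum_i r_i\partial_i$ stated earlier, after a linear change of flat coordinates diagonalising the constant matrix $\partial_jE^i$ (assumed diagonalisable, as is customary).
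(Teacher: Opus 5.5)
Your argument is correct, and it is the standard derivation. In flat coordinates, the condition $\mathcal{L}_E\eta=(2-d)\eta$ with constant $d$ makes $\eta_{jk}\partial_l\partial_iE^k$ symmetric in $(l,i)$ and antisymmetric in $(i,j)$, hence zero; this is what underlies the cited result, since the paper gives no proof of its own and only cites Dubrovin. The one inaccuracy is your opening list of inputs, which names (FM4) and the multiplicative part of (FM5): as you later note yourself, only flatness, nondegeneracy of $\eta$ and constancy of $d$ are needed.
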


\subsection{Hurwitz Frobenius manifolds}
\label{sec:subHur}
A Hurwitz space is a moduli space parametrising ramified covers of the Riemann sphere. A point in a Hurwitz space is an equivalence class $[\lambda: C_g \to \mathbb{P}^1]$, where $C_g$ is a genus-$g$ smooth  complex projective curve (or compact Riemann surface) and $\lambda$ is a morphism to the complex projective line realising $C_g$ as a branched cover of $\mathbb{P}^1$. The equivalence relation is given by automorphisms of the cover. \\
We consider Hurwitz spaces with fixed ramification over infinity. Let the preimage of $\infty$ consist of $q+1$ distinct points, denoted by $\infty_i \in C_g$ for $i=0, \cdots, q$, with $\lambda$ having degree $n_i + 1$ near $\infty_i$. The corresponding Hurwitz space will be denoted $\Hw_{g;\underline{n}}$, where $\underline{n} \coloneqq (n_0, \cdots, n_q)$ encodes the ramification over $\infty\,$. 
This is a connected complex manifold (or indeed an irreducible quasi-projective complex algebraic variety \cite{Ful69}) of dimension
 \[
 d_{g,\underline{n}}:=\dim(\Hw_{g;\underline{n}}) = 2g + 2q + \sum_{i=0}^q n_i\,.\] 
 We shall build Frobenius manifold structures on (submanifolds of) $\mathcal{H}_{g;\underline{n}}$. The construction is laid out in detail in \cite{BvG22}, here we give a brief account.  By Riemann's existence theorem, the branch points $u_i$ of $\lambda$ may serve as local coordinates away from the closed subsets on which they pairwise coincide, i.e. $u_i = u_j$ for $i \neq j\,$. On the complement of the union of such subsets, we define a family of semisimple, commutative, associative, and unital $\mathbb{C}$-algebras on the tangent bundle defined fibre-wise by
 \begin{equation}
    \partial_{u_i} \cdot \partial_{u_j} = \delta_{ij}\,\partial_{u_i}.
\label{eq:prodss}
\end{equation}
This defines a semisimple multiplication on the tangent bundle of $\Hw_{g;\underline{n}}\,$, for which the coordinate vector fields in the directions of the branch points are idempotents. For this reason, the critical values $u_i$ will be called \textit{canonical coordinates} on the Frobenius manifolds. The identity and Euler vector field are given by:
\begin{equation}
 e =    \sum_{i=1}^{ d_{g,\underline{n}}}\partial_{u_i},
\quad     E = \sum_{i=1}^{ d_{g,\underline{n}}}u_i \partial_{u_i}.
\label{eq:eEhur}
\end{equation}
From the Hurwitz perspective, these are the generators of the affine action induced by the stabiliser of the marked point $\infty$ on the target $\RS\,$, i.e.: 
\begin{equation}
    (C_g,\lambda) \mapsto (C_g, a\lambda+b), \qquad u_i \mapsto au_i + b,
    \label{Eq:Hurwitzaffine}
\end{equation}
for $a\in \mathbb{C}^\ast$ and $b\in\CC\,$.

 The remaining ingredients of the Frobenius structure are constructed from a choice of an \textit{admissible primary differential} $\phi$ on the curve $C_g\,$. Such Abelian differentials have been classified into five families in \cite{Dub96}, and they generalise the construction of Saito primitive forms on the parameter space for miniversal unfoldings of surface singularities \cites{Sai81,Sai83}. For our purposes, it will be sufficient to restrict ourselves to an exact differential of the third kind of the form $\phi=\dd\log\mu$. This differential will be lambda-admissible by the analogous argument as that of \cite{BvG22}. Let us denote by $\{q_1,\dots, q_{d_{g;\underline{n}}}\}\equiv \Cr_\lambda$ the set of critical points of $\lambda\,$, i.e. $\lambda(q_i)=u_i\,$. The metric $\eta\,$ is then defined as follows in the frame induced by the canonical coordinates:
 \begin{equation}    \label{eq:etares}
     \eta_{ij}\equiv\eta\bigl(\partial_{u_i}\,,\,\partial_{u_j}\bigr):=-\delta_{ij}\Res_{q_i}\tfrac{\phi^2}{\dd\lambda}\,.
 \end{equation}
 As a consequence, the $c$-tensor and the intersection form $g$ are similarly given by:
 \begin{subequations}
     \begin{equation}    \label{eq:cres}
         c_{ijk}=-\delta_{ij}\delta_{ik}\Res_{q_i}\tfrac{\phi^2}{\dd\lambda}\,,
     \end{equation}
     \begin{equation}    \label{eq:gres}
         g_{ij}=-\delta_{ij}\,\Res_{q_i}\tfrac{\phi^2}{\lambda\dd{\lambda}}\,.
     \end{equation}
 \end{subequations}

\subsection{Frobenius manifolds from extended affine Weyl groups}
Dubrovin--Zhang (DZ) manifolds are semisimple Frobenius manifolds defined on the space of regular orbits of extended affine Weyl groups \cite{DZ98}. In the following, we recall their construction  following \cite{BvG22} closely. \\

Let $\mathfrak{g}_\Roots $ be a rank-$\ell_\Roots $ complex simple Lie algebra associated to a root system $\Roots $, $\mathfrak{h}_\Roots $ the associated Cartan subalgebra, $\dim \mathfrak{h}_\Roots =\ell_{\Roots }$, and  $\mathcal{W}_\Roots $ the Weyl group. The construction of DZ-manifolds relies on a canonical choice of Dynkin node,  which we denote by $\bar k \in \{1, \dots, \ell_\Roots \}$, and we let $\alpha_{\bar k}$ and $\omega_{\bar k}$ be the corresponding simple root and fundamental weight, respectively.  The canonical node $\bar{k}$ is an ``attaching" vertex in the diagram, that is, the one which if removed splits the Dynkin diagram into disconnected $A$-type pieces.  The canonical Dynkin node is marked by a $\cross$ on the associated affine Dynkin diagram  in  \cref{fig:Dynkin}. For $\Roots  \neq A_\ell$,  the representation $\rho_{\bar k}$ is the fundamental representation of $\mathfrak{g}_\Roots $ of highest dimension, whereas for $\Roots  = A_\ell$, any (non-affine) node may be marked, and we obtain a unique Frobenius manifold structure for any marking up to the symmetry $\ell+1-k \leftrightarrow k$ by reflection around the centre of the diagram. \\

The action of $\mathcal{W}_{\Roots }$ on $\mathfrak{h}_\Roots $ may be lifted to an action of the affine Weyl group $\widehat{\mathcal{W}}_{\Roots } \cong \mathcal{W}_{\Roots } \ltimes  \Lambda^\vee_r(\Roots )$, with  $\Lambda_r^\vee(\Roots )$ being the lattice of co-roots:
 \begin{align}
    \widehat{\mathcal{W}}_{\Roots } \times \mathfrak{h}_\Roots & \mapsto \mathfrak{h}_\Roots,\\
    ((w, \alpha^\vee), h) & \mapsto w(h) + \alpha^\vee.
\end{align}
Then the \textit{extended} affine Weyl group $\widetilde{\mathcal{W}}_{\Roots }$ is defined as the semi-direct product $\widetilde{\mathcal{W}}_{\Roots } \coloneqq \widehat{\mathcal{W}}_{\Roots } \ltimes \mathbb{Z}$ acting on $\mathfrak{h}_\Roots  \oplus \mathbb{C}$ by 
\begin{align}
    \widetilde{\mathcal{W}}_{\Roots } \times {\mathfrak{h}_{\Roots }} \oplus \mathbb{C} & \rightarrow \mathfrak{h}_\Roots  \oplus \mathbb{C},\\
    ((w, \alpha^\vee, {n}), (h, v)) & \mapsto (w(h) + \alpha^\vee + n\omega_{\bar k}, {v} - {n}){.} 
    \label{eq:extweyl}
\end{align}

Let $\Sigma_\Roots $ denote the hyperplane arrangement associated to the root system $\Roots $, and $\mathfrak{h}^{\text{reg}}_\Roots  \coloneqq \mathfrak{h}_\Roots  \backslash  \Sigma_\Roots  $ be the set of regular elements in $\mathfrak{h}_\Roots $. The restriction of \eqref{eq:extweyl} to $\mathfrak{h}^{\rm reg}_\Roots  \oplus \mathbb{C}$ is then a free affine action, whose quotient defines the regular orbit space of the extended affine Weyl group of $\Roots $ with marked node $\bar k$ as
\begin{equation}
M^{\rm DZ}_{\Roots } \coloneqq (\mathfrak{h}_\Roots ^{\text{reg}} \times \mathbb{C})/ \widetilde{\mathcal{W}}_{\Roots }  \cong \mathcal{T}_\Roots ^{\text{reg}}/\mathcal{W}_{\Roots } \times \mathbb{C}^*, 
   \label{Def:DZmaniLG}
\end{equation}
where $\mathcal{T}^{\text{reg}}_\Roots  = \text{exp}( \mathfrak{h}_\Roots ^{\text{reg}})$ is the image of the set of regular elements of $\mathfrak{h}_\Roots ^{\rm reg}$ under the exponential map to the maximal torus $\mathcal{T}_\Roots $.

$M^{\rm DZ}_{\Roots }$ is a smooth complex manifold homeomorphic to a Zariski open subset of the affine GIT quotient $\mathrm{Spec}~ \mathcal{I}_{\Roots }^{\widetilde{\mathcal{W}}_{\Roots }}$. The linear coordinates $(x_1, \dots, x_{\ell_\Roots }; x_{\ell_\Roots +1})$ on $\mathfrak{h}_\Roots ^{\text{reg}} \oplus \mathbb{C}$, defined w.r.t. the co-root basis with $x_{\ell_{\Roots }+1}$ parametrising linearly the right summand, may serve as local coordinates. In fact, they provide a flat frame for the intersection form \eqref{eq:gdef}. By orthogonal extension of minus the Cartan--Killing form on $\mathfrak{h}_\Roots $, we define a non-degenerate pairing $\xi$ on  $\mathfrak{h}_\Roots  \times \mathbb{C}$ by
\begin{equation}
      \xi(\partial_{x_i}, \partial_{x_j}) \coloneqq
  \begin{cases}
                                   -(\mathcal{K}_\Roots )_{ij} & \text{if } i, j< \ell_{\Roots }+1, \\
                                   d_{\bar k} & \text{if } i=j=\ell_{\Roots }+1, \\
                                   0 & \text{otherwise.}
  \end{cases}
\end{equation}
 Here, $d_i \coloneqq \left\langle \omega_i, \omega_{\bar k}\right\rangle$, with $\left\langle\alpha, \beta\right\rangle$ being the pairing on $\mathfrak{h}_\Roots ^*$ induced by the restriction of the Killing form on the Cartan subalgebra.  The quotient map $\aleph: \mathcal{T}_\Roots ^{\text{reg}} \times \mathbb{C}^* \rightarrow M_\Roots ^{\rm DZ}$ from \eqref{Def:DZmaniLG} defines a principal {$\mathcal{W}_{\Roots }$}-bundle on $M_\Roots ^{\rm DZ}$: a section $\tilde{\sigma}_i$ lifts a (sufficiently small) open $U \subset M_\Roots ^{\rm DZ}$ to the $i^\text{th}$ sheet of the cover $V_i \in \widetilde{\sigma}_i^{-1}(U) \equiv V_1 \sqcup \cdots \sqcup V_{|\mathcal{W}_{\Roots }|} $.  The following reconstruction theorem holds.

\begin{thm}[Theorem~2.1 in \cite{DZ98}]\label{thm:DZ:reconstruction}
There exists a unique (up to isomorphism) semisimple Frobenius structure $\mathcal{M}_\Roots ^{\rm DZ}=(M_\Roots ^{\rm DZ}, e, E, \eta, \cdot)$  satisfying the following properties: 
\begin{enumerate}
    \item $e = \partial_{{\bar k}}$;
    \item $E = \tfrac{1}{d_{\bar k}} \partial_{x_{\ell_\Roots +1}} =\tfrac{1}{d_{\bar k}}\bigl( \sum_{j=1}^{\ell_{\Roots }} d_j\,t^j \partial_{j} +  \partial_{\ell_{\Roots }+1}\bigr)\,$;
    \item $g = \tilde{\sigma}_i^* \xi$;
    \item $F \in \mathbb{C}[t^1, \cdots, t^{\ell_{\Roots }+1}][e^{{t^{\ell_{\Roots }+1}}}]=:\mathcal{A}_{\Roots}^{(\bar k)}\,$.
\end{enumerate}
\label{Thm:DZ}
\end{thm}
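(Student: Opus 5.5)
The plan is to follow the Dubrovin--Zhang strategy, which passes through a Chevalley-type description of the invariant ring followed by the theory of flat pencils of metrics. The first step is to describe the ring of $\widetilde{\mathcal{W}}_\Roots$-invariant Fourier polynomials on $\mathfrak{h}_\Roots\oplus\CC$. I will show that it is generated by
\[
y_j := e^{2\pi i d_j x_{\ell_\Roots+1}}\,\chi_j(x),\qquad j=1,\dots,\ell_\Roots,
\]
where $\chi_j$ is the character of the $j$-th fundamental representation, together with $y_{\ell_\Roots+1}:=e^{2\pi i x_{\ell_\Roots+1}/d_{\bar k}}$ and its inverse. Here the $\ZZ$-factor shifting by $\omega_{\bar k}$ is precisely what the exponential prefactors compensate for. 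Consequently $(y_1,\dots,y_{\ell_\Roots+1})$ are global coordinates on $M^{\rm DZ}_\Roots$, and the quotient map $\aleph$ is a local biholomorphism on the regular locus. This makes the pulled-back form $g=\tilde\sigma_i^*\xi$ well defined and independent of the sheet.

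The second step is to compute the contravariant components $g^{ab}(y)=\xi^*(\dd y^a,\dd y^b)$. By invariance these are again invariant Fourier polynomials, hence polynomials in $y_1,\dots,y_\ell$ and $y_{\ell+1}^{\pm1}$. The key lemma, which I expect to be the main obstacle, is that every $g^{ab}$ is at most \emph{linear} in $y_{\bar k}$, and that $\det\bigl(\partial_{y_{\bar k}}g^{ab}\bigr)$ is a nonzero constant times a monomial in $y_{\ell+1}$. I would attack this with a weight/degree argument. Assign to $y_j$ the grading $d_j$ coming from the Euler field $E=\tfrac{1}{d_{\bar k}}\partial_{x_{\ell+1}}$, so that $g^{ab}$ has degree $d_a+d_b$. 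Since $\bar k$ is the attaching node, one has $d_{\bar k}>d_j$ for $j\neq\bar k$ and $d_a+d_b<2d_{\bar k}$ unless $a=b=\bar k$. This would force linearity, except for the $(\bar k,\bar k)$ entry, which needs a direct check from the leading term of the character product. For nondegeneracy I would compute the leading terms explicitly via dominant weights.

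Once this is in place, set $\eta^{ab}:=\partial_{y_{\bar k}}g^{ab}$, that is, the Lie derivative of $g^*$ along $e=\partial_{y_{\bar k}}$. Because $g^*$ is flat and linear along $e$, the standard Saito/Dubrovin lemma shows that $g^*+\lambda\eta^*$ is a flat pencil, and that $\eta$ is flat with $e$ covariantly constant. I would then construct flat coordinates $t^1,\dots,t^{\ell+1}$ of $\eta$ that are quasihomogeneous and polynomial in $y_1,\dots,y_\ell,y_{\ell+1}^{\pm1}$, with $t^{\ell+1}$ proportional to $x_{\ell+1}$. Dubrovin's theorem on quasihomogeneous regular flat pencils then yields a Frobenius structure. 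Its multiplication is read off from the contravariant Christoffel symbols, $\Gamma^{ab}_c(g)=c^{ab}_{\ d}\,\nabla^d\bigl(\text{flat}\bigr)$, and the prepotential satisfies $g^{ab}=(d-1+d_a+d_b)\,\eta^{ac}\eta^{bd}\partial_c\partial_dF$ up to the correction coming from the constant part of $E$. This formula shows that $F$ is polynomial in $t^1,\dots,t^\ell$ and $e^{t^{\ell+1}}$, i.e. $F\in\mathcal{A}^{(\bar k)}_\Roots$. Semisimplicity follows because $E\cdot$ has generically distinct eigenvalues, namely the roots of $\det(g^{ab}-u\eta^{ab})$.

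For uniqueness, any structure satisfying (1)--(4) has the same $g$ and the same $e$, and therefore the same $\eta^*=\Lie_e g^*$. Hence it has the same flat pencil, and a regular quasihomogeneous flat pencil determines the multiplication uniquely. The remaining freedom lies only in quadratic terms of $F$, which do not alter the Frobenius structure.
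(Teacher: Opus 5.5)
The paper does not prove this statement; it is quoted from \cite{DZ98}. Your outline follows the Dubrovin--Zhang route: a Chevalley-type theorem, the intersection form in invariant coordinates, linearity in $y_{\bar k}$, $\eta^*=\Lie_e g^*$, and a flat pencil. The gap is in the step you call the key lemma.

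All you know a priori is that $g^{ab}\in\CC[y_1,\dots,y_\ell,y_{\ell+1}^{\pm1}]$. Since $y_{\ell+1}$ has positive $E$-degree, quasi-homogeneity cannot exclude monomials $y_{\bar k}^2\,y_{\ell+1}^{-N}$, so the inequality $d_a+d_b<2d_{\bar k}$ proves nothing by itself. This really happens. For $A_3$ with $\bar k=2$ one has $d=(\tfrac12,1,\tfrac12)$ and $y_4=e^{2\pi i x_4}$ of degree $1$. The monomial $y_2^2y_4^{-1}$ then has exactly the degree of $g^{13}$, and even lies in the right class modulo the root lattice.

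What excludes such terms is a highest-weight estimate, not the grading:
\begin{itemize}
\item Write $g^{ab}=e^{2\pi i(d_a+d_b)x_{\ell+1}}G^{ab}(x)$. Here $G^{ab}$ is a $W$-invariant trigonometric polynomial whose exponents are weights of $\rho_{\omega_a}\otimes\rho_{\omega_b}$, hence $\preceq\omega_a+\omega_b$.
\item Expanding $G^{ab}$ in fundamental characters by induction on the dominance order, only monomials $\prod_j\chi_j^{m_j}$ with $\sum_j m_j\omega_j\preceq\omega_a+\omega_b$ occur.
\item Pairing with the dominant weight $\omega_{\bar k}$ gives $\sum_j m_jd_j\le d_a+d_b$. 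Hence $g^{ab}\in\CC[y_1,\dots,y_\ell,y_{\ell+1}]$, with non-negative powers of $y_{\ell+1}$ only. In the example, $2\omega_2-(\omega_1+\omega_3)=\alpha_2$ is what rules out $\chi_2^2$.
\end{itemize}
Only then does your degree count force linearity. The $(\bar k,\bar k)$ entry is settled by the cancellation of the $\chi_{\bar k}^2$ coefficient between the $-\mathcal{K}$ part of $\xi$ and its $\xi_{\ell+1,\ell+1}=d_{\bar k}$ part.

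The same estimate must be run for the contravariant Christoffel symbols $\Gamma^{ab}_c(g)$, which have degree $d_a+d_b-d_c$. The Saito--Dubrovin lemma needs these, not only $g^{ab}$, to be at most linear in $y_{\bar k}$. That is what gives $\Gamma(g-\lambda\eta)=\Gamma(g)-\lambda\Gamma(\eta)$, and the same degree count then yields $\nabla e=0$. This positivity is also what makes the flat coordinates, and hence $F$, polynomial rather than Laurent in $e^{t^{\ell+1}}$. That is exactly property (4); as written, your plan only gives Laurent dependence.

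There are further problems.
\begin{itemize}
\item \textbf{The pencil is not regular.} Here $d=1$ and $E$ has a constant component along $\partial_{\ell+1}$, so $R=\tfrac{d-1}{2}\,\mathrm{Id}+\nabla E=\nabla E$ annihilates $\partial_{\ell+1}$. Dubrovin's theorem on regular quasihomogeneous pencils can therefore be quoted neither for existence nor for your uniqueness argument. In $\eta$-flat coordinates $\Gamma^{\alpha\beta}_\gamma(g)=\tilde d_\beta\,c^{\alpha\beta}_\gamma$, with $\tilde d_\beta:=d_\beta/d_{\bar k}$ and $\tilde d_{\ell+1}=0$. So, using the symmetry of $c$, only $c^{\ell+1,\ell+1}_\gamma$ is not recovered from the pencil.
\item \textbf{How to repair it.} That missing component is fixed by the unit, since $\eta^{\ell+1,\beta}\neq0$ only for $\beta=\bar k$; this repairs uniqueness. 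For existence you must re-derive commutativity, associativity and the potential rather than cite the regular case. For instance, work from $g^{\alpha\beta}=(\tilde d_\alpha+\tilde d_\beta)\,\eta^{\alpha\sigma}\eta^{\beta\tau}\partial_\sigma\partial_\tau F+\mathrm{const}$, whose $(\ell+1,\ell+1)$ component is vacuous.
\item \textbf{Normalisation of $y_{\ell+1}$ (minor).} Your $y_j$ are invariant only if the $\ZZ$-factor acts by $x_{\ell+1}\mapsto x_{\ell+1}-1$. The extra generator must then be $e^{2\pi i x_{\ell+1}}$; $e^{2\pi i x_{\ell+1}/d_{\bar k}}$ is invariant only when $1/d_{\bar k}\in\ZZ$.
\item \textbf{Unproved claims.} Two steps are announced rather than proved. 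One is the nondegeneracy of $\eta$: the constant entries with $d_a+d_b=d_{\bar k}$ still have to be computed. The other is the generic simplicity of the roots of $\det(g^{ab}-u\eta^{ab})$.
\end{itemize}
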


Such Frobenius manifolds will always have charge one, or equivalently, the prepotential will be a degree-2 quasi-homogeneous function of its arguments.

\subsection{Canonical spectral curves from the relativistic Toda chain}
\label{sec:Todabackground}
 
In \cite{BvG22} LG-models for DZ-manifolds are derived from the characteristic equation of Lax operators associated to the relativistic Toda chain. These are fully integrable systems classified by (affine) Dynkin diagrams $\Roots   \in \{A_\ell, B_\ell, C_\ell, D_\ell, E_6, E_7, E_8, F_4, G_2\}$, as shown in Table \ref{fig:Dynkin}.
\begin{figure}[h!]
    \centering
    \includegraphics[width=0.7\linewidth]{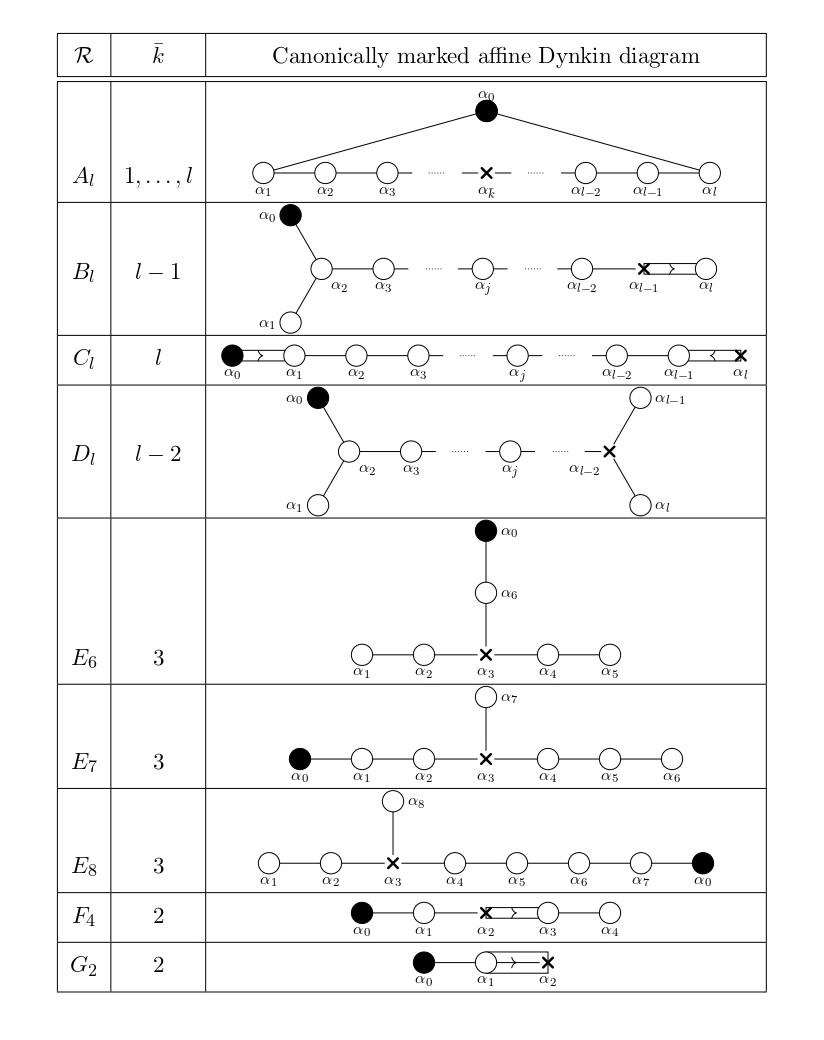}
    \caption{Affine Dynkin diagrams with canonical markings, as in \cite{DZ98}, Table~1. The node corresponding to the affine root is marked in black, and the canonical marked node is indicated with a $\times$. }
    \label{fig:Dynkin}
\end{figure}

In the following, we briefly recall the method employed in \cite{BvG22}. Let $\rho_{\omega}$ be a non-trivial irreducible representation with highest weight $\omega$. Then for $g \in \mathcal{G}$, where $\mathcal{G}$ is simply the connected complex simple Lie group associated to the complex simple Lie algebra with root system of type $\Roots $.  In the representation $\rho_{\omega}$ we have
\begin{equation}
\label{eq:Qnonred}
    \mathcal{Q}_{\omega} = \text{det}(g-\mathfrak{1} \cdot\mu) = \sum_{k=0}^{\text{dim}(\rho_{\omega})} (-\mu)^{\text{dim}(\rho_{\omega})-k}\chi_{\wedge^k\rho_{\omega}}(g),
\end{equation}
where $\chi_{\wedge^k\rho_{\omega}}$ denotes the exterior characters of the representation $\rho_{\omega}$ and the second equality is obtained from the cofactor expansion of the determinant via correspondence between elementary symmetric polynomials, Young diagrams and wedge products \cite{BvG22}. 

The representation ring of a simple Lie group is an integral polynomial ring generated by the fundamental characters, which implies that
\begin{equation*}
   \chi_{\wedge^k \rho_\omega}(g) \in \mathbb{Z}[\chi_1, \cdots, \chi_{\ell_{\Roots }}], 
\end{equation*}
where the $i^{\text{th}}$ fundamental character is given by the trace of $g$ in the $i^{\text{th}}$ fundamental representation: $\chi_i(g) = \text{Tr}{\rho_i}(g)$ and $\ell_{\Roots }$ denotes the rank of $\Roots $.  For each of the cases $\Roots  \in \{A_\ell, B_\ell, C_\ell, D_\ell, E_6, E_7, E_8, F_4, G_2\}$ $\rho_{\omega}$ is to be chosen as a non-trivial irreducible representation of smallest dimension. Via isomorphisms of flows on Prym--Tuyrin varieties, the choice of representation strongly affects the resulting superpotential but not the underlying Frobenius manifold \cite{Bri20}. Since $\rho_{\omega}$ is quasi-minuscule, \eqref{eq:Qnonred} can be written as
\begin{equation*}
    Q_{\omega} = (1-\mu)^{z_0}\prod_{0 \neq \tilde{\omega} \in \Gamma(\rho_{\omega})}^{}(e^{\tilde{\omega} \cdot h}-\mu),
\end{equation*}
where $z_0$ is the dimension of the zero weight space of $\rho_{\omega}$, $\Gamma(\rho)$ denotes the weight space of $\rho$ and $h$ is a choice of Cartan torus element conjugate to $g$; $[e^h] = [g]$.  Consider the reduced part of $Q_{\omega}$:
\begin{equation*}
    Q_{\omega}^{\text{red}} \coloneqq \prod_{0 \neq \tilde{\omega} \in \Gamma(\rho_{\omega})}^{}(e^{\tilde{\omega} \cdot h}-\mu).
\end{equation*}
Note that for  $\Roots  \neq B_\ell, E_8, F_4, G_2$ the representation is minuscule which means that $z_0 = 0$ and $Q_{\omega} = Q_{\omega}^{\text{red}}$. Define
\begin{equation}
\label{eq:shiftcanpoly}
    \mathcal{P}_{\omega}(w_{0}, \cdots, w_{\ell_{\Roots }};\lambda;\mu) \coloneqq Q_{\omega}^{\text{red}}\left(\chi_i = w_i - \delta_{i\bar{k}} \frac{\lambda}{w_0}\right),
\end{equation}
where $\bar{k}$ is the \textit{attaching} node of the Dynkin diagram (i.e. the one which if removed splits the Dynkin diagram into disconnected type $A$ pieces) as indicated in Figure \ref{fig:Dynkin}. As $\underline{w} \equiv (w_0, w_1, \cdots, w_{\ell_{\Roots }}) \in \mathbb{C}^* \times \mathbb{C}$  varies we obtain an $\ell_{\Roots }+1$-parameter family of algebraic planar curves in Spec$\mathbb{C}[\lambda, \mu]$ with fibre at $\underline{w}$ given by the vanishing locus of $\mathcal{P}_{\omega}$. After taking the normalisation of the closure of the fibres over $\underline{w}$ in $\mathbb{P}^2$, the $\lambda$-projection $(\lambda, \mu) \mapsto \lambda \in \mathbb{P}^1$, as $\underline{w}$ varies, defines a subvariety of the Hurwitz space $\mathcal{H}_{g;\underline{n}}$ where $g = h^{1,0}( \overline{\mathbb{V}(\mathcal{P}_{\omega})})$ and the ramification at $\infty$ of the $\lambda$-projection is recorded by $\underline{n}$. Note that we obtain, in general, a subvariety of the Hurwitz space unless $\text{dim}(\mathcal{H}_{g,\underline{n}}) = \ell_{\Roots }+1$, which is only the case for $\mathcal{R} = A_\ell$. 

\begin{theorem}[Theorem~3.6 in \cite{BvG22}]
For any simple Dynkin type $\Roots $ there exists a highest weight $\omega$ for the corresponding simple Lie algebra $\mathfrak{g}$, pairs of integers $(g_{\omega}, \mathsf{n}_{\omega})$, and an explicit embedding $\iota_{\omega} : \mathcal{M}^{\text{DZ}}_{\Roots } \hookrightarrow \mathcal{H}^{[\mu]}_{g_{\omega},\mathsf{n}_{\omega}}$, such that $\iota_\omega$ is a Frobenius manifold isomorphism onto its image $\mathcal{M}_\omega^{\text{LG}} \coloneqq \iota_\omega(\mathcal{M}^{\text{DZ}}_\Roots )$.
\label{thm:canonicalmirror}
\end{theorem}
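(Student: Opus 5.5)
The plan is to construct $\iota_\omega$ explicitly and then identify Frobenius structures by the uniqueness part of \cref{thm:DZ:reconstruction}. Matching $e$, $E$ and the intersection form $g$ will pin down the structure. This works because $\eta=\Lie_e g$ and the flat pencil $g^*+\lambda\eta^*$ determine the multiplication and $c$-tensor via Dubrovin's almost-duality. So it suffices to show that the residue structure \eqref{eq:etares}--\eqref{eq:gres} on the image reproduces $g=\tilde\sigma_i^*\xi$, $e=\partial_{\bar k}$ and $E=\tfrac1{d_{\bar k}}\partial_{x_{\ell_\Roots+1}}$.

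\textbf{Step 1: construction.} Take $\omega$ to be a minimal (quasi-minuscule) highest weight, as in \cref{sec:Todabackground}. Define $\iota_\omega$ by sending the orbit of $(h,x_{\ell_\Roots+1})$ to the family of curves $\mathcal{P}_\omega=0$ of \eqref{eq:shiftcanpoly}, with
\[
w_i=\chi_i(e^h),\qquad w_0=e^{x_{\ell_\Roots+1}}.
\]
This map is $\widetilde{\mathcal W}_\Roots$-invariant by construction. The fundamental characters are algebraically independent generators of the invariant ring, which gives injectivity and immersivity. Here $g_\omega$ and $\mathsf n_\omega$ are read off from the normalised closure of $\mathbb V(\mathcal P_\omega)$ and from the pole orders of the $\lambda$-projection at the points over $\infty$. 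These come from the Newton polygon of $\mathcal P_\omega$ in $\mu$: $\lambda$ enters linearly through $\chi_{\bar k}$, so $\lambda$ is rational in $\mu$ on each branch.

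\textbf{Step 2: $e$ and $E$.} Rescaling $w_0\mapsto a w_0$ with $w_i$ fixed rescales $\lambda\mapsto a\lambda$. By \eqref{Eq:Hurwitzaffine} this gives $E\propto w_0\partial_{w_0}=\partial_{x_{\ell_\Roots+1}}$, with the factor $1/d_{\bar k}$ fixed by the normalisation of $\xi$. Shifting $\lambda\mapsto\lambda+b$ is equivalent to $w_{\bar k}\mapsto w_{\bar k}+b/w_0$, so $e=w_0^{-1}\partial_{w_{\bar k}}$. This must then be checked to be $\eta$-flat and identified with $\partial_{t_{\bar k}}$.

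\textbf{Step 3: intersection form.} Write \eqref{eq:gres} as a sum over critical points and deform the contour, using
\[
g(\partial_a,\partial_b)=-\sum_{q\in\Cr_\lambda}\Res_q\frac{\partial_a\log\lambda\,\partial_b\log\lambda}{\dd\log\lambda}\,\phi^2 .
\]
The residues are then pushed to the zeros and poles of $\lambda$ and of $\phi=\dd\log\mu$. At $\lambda=0$ the curve reduces to $Q^{\mathrm{red}}_\omega(\chi=w)=0$, whose roots are $\mu=e^{\tilde\omega\cdot h}$. Near these points one gets $\partial_{x_i}\log\lambda\,\dd\log\mu$-type contributions linear in the weights, and the residues sum to $\sum_{\tilde\omega}(\tilde\omega\cdot\partial_{x_i})(\tilde\omega\cdot\partial_{x_j})$. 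This is proportional to the Killing form by Freudenthal's index formula. The contribution from $\mu\to0,\infty$ supplies the $d_{\bar k}$ entry in the $x_{\ell_\Roots+1}$ direction. This simultaneously proves flatness of $g$ in the $x$-frame.

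\textbf{Main obstacle.} For $\Roots\neq A_\ell$ the image is a proper subvariety of $\mathcal H_{g_\omega;\mathsf n_\omega}$. One must therefore show that the residue formulas, restricted to it, still define a closed Frobenius structure: the subvariety must be a Frobenius submanifold, with tangent spaces closed under the product and $\eta$ non-degenerate on them. The residue computation of Step 3 is also delicate there, because of the extra sheets and the zero-weight factor $(1-\mu)^{z_0}$ for $B_\ell,E_8,F_4,G_2$.

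I would first try to bypass closure entirely. Step 3 shows that the pencil restricted to the image is the flat pencil of \cref{thm:DZ:reconstruction}, and the restricted $\eta=\Lie_e g$ is then non-degenerate. Almost-duality then reconstructs a Frobenius structure intrinsically on the image, which coincides with the residue one. The remaining checks are that the $\mathcal P_\omega$-family has $\dim=\ell_\Roots+1$ independent deformations and that the non-DZ sheets contribute no residue. For those I would use the Prym--Tyurin reduction of \cite{Bri20} to relate the minimal representation to the spectral curve.
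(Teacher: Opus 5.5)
The paper gives no proof of this statement; it is quoted from \cite{BvG22}, so your proposal has to stand on its own. For $\Roots=A_\ell$ it essentially works, modulo the uniqueness point below: the image is open in the full genus-zero Hurwitz space, and your Step~3 is the residue computation the paper itself carries out for the intersection form at the end of Section~3.

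Outside type $A$, however, there is a genuine gap, and it sits exactly where you try to bypass closure. Let $N=\iota_\omega(M^{\rm DZ}_\Roots)\subset\Hw_{g_\omega;\mathsf n_\omega}$ be a proper subvariety. On vectors tangent to the family, the residue formulas \eqref{eq:etares}--\eqref{eq:gres} are just the pull-backs $\iota_\omega^*\eta$, $\iota_\omega^*c$, $\iota_\omega^*g$ of \emph{covariant} tensors. The relations linking them are $\mathcal{L}_e g^*=\eta^*$ and $\eta(X,Y)=g(E\cdot X,Y)$. These hold on the ambient space, for the \emph{cotangent} metrics and the ambient product, and they do not commute with pull-back (covariantly, $\mathcal{L}_e g\neq\eta$). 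They descend to $N$ only if $TN$ is closed under the ambient multiplication, i.e.\ only if $N$ is a Frobenius submanifold. Hence $\iota_\omega^*g=\xi$ says nothing by itself about $\iota_\omega^*\eta$ (which could even be degenerate) or about $\iota_\omega^*c$. Reconstructing a Frobenius structure from the pencil $\xi^*+\lambda\,\mathcal{L}_e\xi^*$ only reproduces the structure of \cite{DZ98}. The clause ``which coincides with the residue one'' is precisely the content of the theorem, and nothing in your argument yields it.

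To close the gap you need a genuinely new input, in one of two forms:
\begin{itemize}
\item a proof that the image is closed under the Hurwitz product, where the spectral-cover symmetry you allude to via \cite{Bri20} has to do real work rather than just count deformations; or
\item a direct residue computation of $\iota_\omega^*\eta$ and $\iota_\omega^*c$ on the family, e.g.\ showing that Dubrovin's residue formulas give $\iota_\omega^*\eta$-flat coordinates matching those of DZ, and that $\iota_\omega^*c=\nabla^3F^{\rm DZ}$.
\end{itemize}

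Two further points need repair.

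First, the uniqueness in \cref{thm:DZ:reconstruction} is among structures satisfying (4), which you never check. Dubrovin's flat-pencil reconstruction does not apply verbatim either, because the DZ pencil is not regular: $\nabla E$ has eigenvalue $0$ along $\partial_{t_{\ell_\Roots+1}}$. So $\Gamma^{\alpha\beta}_\gamma=(\tfrac{d+1}{2}-q_\beta)\,c^{\alpha\beta}_\gamma$ leaves $c^{\alpha,\ell_\Roots+1}_\gamma$ undetermined. These components are fixed by the unit axiom, since the degree-zero direction is $\eta$-dual only to $e$, but this has to be argued.

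Second, your claim that $\lambda$ enters linearly, making $\lambda$ rational in $\mu$, fails in general. Outside type $A$, and in particular for $E_8$ (cf.~\cite{Bri20}), $Q^{\rm red}_\omega$ need not be linear in $\chi_{\bar k}$, and this is what produces curves of genus $g_\omega>0$. Step~3 is then a residue computation on a positive-genus curve. There the points over $\mu=0,\infty$ and all poles of $\lambda$ must be located and their contributions evaluated; the rational shortcut is not available.
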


In the following section, we generalise this method by modifying the shift in \eqref{eq:shiftcanpoly}, and specialise to $\Roots = A_{\ell}$.

\section{Spectral curves in type A}
In this work, we shall focus on the case $\Roots=A_\ell\,$, postponing the analysis of the other Dynkin types to a separate publication. Let $Q_{\ell} \coloneqq Q_{\omega}^{\text{red}}$, for $\omega$ being the highest weight of the defining $\ell+1$-dimensional representation of $A_\ell$ (the minimal representation). In this case we have,
    \begin{equation}
       Q_{\ell} = 1 + (-1)^{\ell+1}\mu^{\ell+1} + \sum_{i=1}^{\ell}(-1)^i w_i \mu^i.
        \label{eq:detA}
    \end{equation} 
For any Dynkin type, the parameters $w_i$ are fundamental characters, and they relate to the linear coordinates on the root space by
 \begin{equation}
    w_k =  \sum_{0 \neq \widehat{\omega} \in \Gamma(\rho_{\omega})}^{} \mathfrak{m}_{\widehat{\omega}} e^{(\widehat{\omega}, x)}
          \label{eq:wkinx}
\end{equation}
where $\widehat{\omega}$ appears in the weight system, $\Gamma(\rho_{\omega})$, of $\rho_{\omega}$ with multiplicity $\mathfrak{m}_{\widehat{\omega}}$, with $\rho_{\omega}$ being the minimal representation.

We analyse spectral curves arising from a class more general than that of \eqref{eq:shiftcanpoly}.  Specifically, we let
\begin{equation}
    \chi_i \equiv w_i \mapsto w_i + f_i \dfrac{\lambda}{w_0}, \quad \text{for } i \in I  \subset \{1, \cdots, \ell\},
    \label{eq:shiftgen}
\end{equation}
where $f_i$ is independent of $\mu, \lambda, w_0, \cdots, w_{\ell}$. In other words, we keep the same $Q_{\ell}$ as in \cite{BvG22}, but allow for a more general introduction of $\lambda$. 

One of the main goals of this paper is to characterise the cases in which applying the shift \eqref{eq:shiftgen} to the spectral curve defined by \eqref{eq:detA}  induces a Landau--Ginzburg model $(\lambda, \dd\log\mu)$ for some Frobenius manifold and to study the resulting geometric structure. 

Let $S_{\Roots}^{(I)}$ denote $Q_{\omega_{\text{min}}}^{\text{red}}$ after performing the shift \eqref{eq:shiftgen} for specific $I$. This means that $S_{\Roots}^{(\{\bar{k}\})} \equiv \mathcal{P}_{\omega}$ for the minimal choice of $\omega$ as considered in \cite{BvG22}.  

Note that if $S$ is linear in $\lambda$, we have that the spectral curve given by $S = 0$ has genus 0, and $\lambda$ is a rational function given by
\begin{equation}
    \lambda = \dfrac{w_0 \, Q_{\omega_{\text{min}}}^{\text{red}}}{\sum_{i \in I}f_i[w_i]Q_{\omega_{\text{min}}}^{\text{red}}}.
    \label{eq:lambdagen}
\end{equation}
Due to the form of $Q_{\ell}$,   $S$ will always be linear in $\lambda$ for $\Roots = A_{\ell}$. 

 The following Lemma characterises the identity and Euler vector field for an arbitrary shift of the form in \cref{eq:shiftgen} for any Dynkin type, in terms of the corresponding action on the superpotential.
\begin{lemma}
    Let 
    \begin{equation}
         e = -\sum_{i \in I}\frac{f_i}{w_0}\partial_{w_i}, \quad E = w_0 \partial_{w_0}.
         \label{eq:eandE}
    \end{equation}
Then, 
  \begin{equation*}
   \mathcal{L}_e \lambda = 1 \quad \text{and} \quad  \mathcal{L}_E \lambda = \lambda,
 \end{equation*}
 where $\lambda$ is the spectral parameter introduced by a shift of the type \eqref{eq:shiftgen} to $Q^{\text{red}}_{\omega}$, with $\omega$ being the highest weight of the lowest dimensional non-trivial irreducible representation of Dynkin type $\Roots$, as in \cite{BvG22}.
 \label{lemma:eEgen}
\end{lemma}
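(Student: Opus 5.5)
The plan is to regard $\lambda$ as the function on (an open subset of) $\CC^\ast\times\CC^{\ell_\Roots}\times\CC_\mu$ cut out implicitly by the shifted curve
\[
S^{(I)}_\Roots(\lambda,\mu;\underline w)=Q^{\rm red}_\omega\Bigl(\chi_i=w_i+\delta_{i\in I}\,f_i\tfrac{\lambda}{w_0}\Bigr)=0,
\]
and to compute $\mathcal{L}_e\lambda$ and $\mathcal{L}_E\lambda$ by implicit differentiation at fixed $\mu$. Equivalently, we check that the flows of $e$ and $E$ act on the pair (curve, $\lambda$) as the affine maps of \eqref{Eq:Hurwitzaffine}, namely $\lambda\mapsto\lambda+s$ and $\lambda\mapsto a\lambda$. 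The key observation is that $S$ depends on $(\lambda,w_0,w_i)$ only through the shifted characters
\[
\chi_i=w_i+f_i\frac{\lambda}{w_0},\qquad i\in I,
\]
and through $\chi_j=w_j$ for $j\notin I$. Since the $f_i$ are independent of $\mu,\lambda,w_0,\dots,w_\ell$, this reduces everything to elementary identities.

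For the unit field, we look for the one-parameter family $\lambda\mapsto\lambda+s$ and ask which motion of the $w$'s keeps every argument $\chi_i$ of $Q^{\rm red}_\omega$ fixed. Taking $w_i(s)=w_i-s f_i/w_0$ for $i\in I$, with all other coordinates fixed, gives $\chi_i(s)=w_i-sf_i/w_0+f_i(\lambda+s)/w_0=\chi_i$. Hence $S(\lambda+s,\mu;\underline w(s))\equiv S(\lambda,\mu;\underline w)$, and so $\lambda(\mu;\underline w(s))=\lambda(\mu;\underline w)+s$. Differentiating at $s=0$ gives $\mathcal{L}_e\lambda=1$ with $e=-\sum_{i\in I}\frac{f_i}{w_0}\partial_{w_i}$. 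A more direct version of the same argument is
\[
\partial_{w_i}S=\frac{\partial Q}{\partial\chi_i},\qquad \partial_\lambda S=\sum_{i\in I}\frac{f_i}{w_0}\frac{\partial Q}{\partial\chi_i},
\]
so that
\[
\mathcal{L}_e\lambda=-\frac{\mathcal{L}_eS}{\partial_\lambda S}=\frac{\sum_{i\in I}\frac{f_i}{w_0}\partial_{\chi_i}Q}{\sum_{i\in I}\frac{f_i}{w_0}\partial_{\chi_i}Q}=1.
\]

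For the Euler field, the scaling $w_0\mapsto aw_0$, $\lambda\mapsto a\lambda$ leaves $\lambda/w_0$, and hence every $\chi_i$, invariant. Therefore $\lambda(\mu;aw_0,w_1,\dots)=a\,\lambda(\mu;w_0,w_1,\dots)$, and differentiating at $a=1$ gives $w_0\partial_{w_0}\lambda=\lambda$. Implicitly, $\partial_{w_0}S=-\frac{\lambda}{w_0^2}\sum_{i\in I}f_i\,\partial_{\chi_i}Q$, and the ratio with $\partial_\lambda S$ is $-\lambda/w_0$, which gives the same conclusion. In type $A$ one can also read both identities off the explicit rational formula \eqref{eq:lambdagen}: $\lambda$ is linear in $w_0$, and shifting $w_i$ along $e$ changes the numerator by exactly the denominator.

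The only real obstacle is justifying the implicit differentiation. This requires $\partial_\lambda S\neq0$ on the relevant locus, i.e.\ $\lambda$ must be a genuine (locally single-valued) function of $(\mu,\underline w)$. We will handle this by restricting to the open dense set where $\sum_{i\in I}f_i\,\partial_{\chi_i}Q^{\rm red}_\omega\neq0$. This set is nonempty for generic $f_i$, since $Q^{\rm red}_\omega$ genuinely depends on each fundamental character through the $\mu$-expansion \eqref{eq:Qnonred}; in type $A$ this is exactly the nonvanishing of the denominator in \eqref{eq:lambdagen}. Outside this set the identities extend by continuity and analyticity. The flow argument sidesteps the issue anyway, provided one works on a branch of $\lambda$ chosen consistently along the flow.
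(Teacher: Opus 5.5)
Your proof is correct and is essentially the paper's argument: both use the fact that $S$ depends on $\lambda$, $w_0$ and $w_i$ only through $\chi_i=w_i+f_i\lambda/w_0$, so the chain rule gives $\partial_\lambda S=-e(S)$ and $E(S)=-\lambda\,\partial_\lambda S$. The one difference is in how you make sure $\partial_\lambda S$ does not vanish along the curve. The paper deduces this from irreducibility of $S$ (inherited from $Q_{\omega}^{\text{red}}$) together with non-constancy in $\lambda$; your genericity remark only shows that $\partial_\lambda S$ is not the zero polynomial, which is enough in type $A$ (where $S$ is linear in $\lambda$) but not by itself in general, although your flow argument, which translates or rescales the whole family of curves, does not need this.
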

\begin{proof}
 Let $S \coloneqq S_{\Roots}^{(I)}$ for a fixed choice of  $\Roots \in \{A_{\ell}, B_{\ell}, C_{\ell}, D_{\ell}, E_6, E_7, E_8, F_4, G_2 \}$,  $I \subset \{1, \cdots, \ell\}$. Then, 
 \begin{equation*}
     0 = \mathcal{L}_e S(\lambda, \mu, \underline{w}) = \partial_{\lambda}S \, e(\lambda) + e(S).
 \end{equation*}
 Moreover, 
\begin{equation*}
     \partial_{\lambda}S = \sum_{i \in I}\frac{f_i}{w_0}\partial_{w_i}S =  -e(S).
\end{equation*}
Since the shift defining \(S\) is invertible over \(\mathbb C(w_0)\), the irreducibility of
\(Q_{\Roots}\) \cite{BvG22} implies that \(S\) is irreducible. Moreover, for generic
\(f_i\), it is non-constant in \(\lambda\). Hence \(\partial_\lambda S\) is non-zero on the
spectral curve, and
\begin{equation*}
     \mathcal{L}_e\lambda = 1. 
\end{equation*}
 Similarly, one shows that
$    E(S)  = -\lambda \partial_{\lambda} S \implies   \mathcal{L}_E \lambda = \lambda.$
\end{proof}
This means that $e, E$ generate the affine action on the associated  Hurwitz space of $\lambda$. As the unit and Euler vector fields of a Hurwitz Frobenius manifold  are characterised by this affine action, this identifies $e$ and $E$ with the unit and Euler vector field, respectively, under the assumption that $\lambda$ is indeed a Landau--Ginzburg superpotential for a Frobenius manifold. Note that Lemma \ref{lemma:eEgen} holds for arbitrary Dynkin type. 

\begin{remark}
\label{rmk:charge}
    Notice that for $e, E$ as in \eqref{eq:eandE}, the charge of any associated Frobenius manifold is one. This follows from the fact that $\mathcal{L}_E \lambda = \lambda$ when applying the Lie derivative in the direction of $E$ to the three-tensor \eqref{eq:cres}. 
\end{remark}

In the following, we specialise to $\Roots = A_{\ell}$ and analyse the geometric structure arising from shifts of the form \eqref{eq:shiftgen}. As mentioned, in this case $\lambda$ is given by \eqref{eq:lambdagen} after a shift of the type \eqref{eq:shiftgen}. 
Letting 
\begin{equation*}
    w_i \mapsto w_i + f_i \dfrac{\lambda}{w_0}, \quad \text{for } i \in I = \{\bar{k}, \bar{k}+1,  \cdots, \bar{k}+p\} \subset \{1, 2, \cdots, \ell\},
\end{equation*}
for some $\bar{k} \in \{1, \cdots, \ell-p\}$, gives 
\begin{equation*}
    \lambda = (-1)^{\bar{k}+p}\dfrac{w_0 Q_{\ell}}{\mu^{\bar{k}} Q_{\ell}^{(I)}},
\end{equation*}
where $Q_{\ell}^{(I)}$ is an order $p$ monic polynomial in $\mu$.  Without loss of generality, we have let $f_{\bar{k}+p} = 1$ \footnote{This normalisation is equivalent to letting $f_i \mapsto \frac{f_i}{f_{\bar{k}+p}}$, and $w_0 \mapsto \frac{w_0}{f_{\bar{k}+p}}$.}.  Since  $|\{f_i\}| = p$  (after normalisation) we can define the map
\begin{equation*}
    (f_{\bar{k}}, \cdots, f_{\bar{k}+p-1}) \mapsto ((-1)^{p}f_{\bar{k}}, \cdots, (-1)^{1}f_{\bar{k}+p-2}, f_{\bar{k}+p-1}) \overset{\text{Vieta}}{\mapsto} (w_{-1}, \cdots, w_{-p}),
\end{equation*}
where the latter arrow represents the Vieta mapping, which sends the coefficient vector of a monic polynomial to its roots. Note that for generic $\{f_i\}$, all roots $\{w_{-j}\}$ will be simple and the composition map $(f_i) \mapsto (w_{-j})$ is a global algebraic isomorphism between the parameter space $(\mathbb{C}^*)^p$ and  $Sym^p(\mathbb{C}^*)$. As such,  for generic $\{f_i\}$ and by absorbing the sign $(-1)^{\bar{k}+p}$ into $w_0$, we obtain
\begin{equation}
  \lambda = \dfrac{w_0 \, Q_{\ell}}{\mu^{\bar{k}} \prod_{i=1}^{p}(\mu-w_{-i})},
  \label{eq:lambdaAsimple}
\end{equation} 
and we may take $\{w_{-p}, \cdots, w_{-1}, w_0, w_{1}, \cdots, w_{\ell}\}$ as coordinates on the parameter space of $\lambda$. 

Hence, the Vieta mapping provides an explicit change of parameters from the shift coefficients $f_i$ to the pole locations $w_{-j}$, thereby identifying any $\lambda$ arising from shifts of the form \eqref{eq:shiftgen} with a natural generalised Dubrovin--Zhang form.

\begin{theorem}
\label{thm:FrobAsimple}
  $(\lambda_{A_{\ell}}, \dd\log\mu)$, for $\lambda_{A_{\ell}}$ as in \eqref{eq:lambdaAsimple}, is a Landau--Ginzburg model for a $\ell+p+1$-dimensional Frobenius manifold, $M$, of charge $d=1$ with unit $e$ and Euler vector field $E$ given by
  \begin{equation}
      e = -\sum_{i \in I}\dfrac{f_i}{w_0}\partial_{w_i}, \quad E = w_0 \partial_{w_0}\,.
      \label{eq:eEAsimple}
  \end{equation}
  The differential $\phi=\dd\log\mu$ is, in particular, $\lambda$-admissible.
\end{theorem}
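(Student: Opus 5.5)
The strategy is to place $\lambda_{A_\ell}$ of \eqref{eq:lambdaAsimple} inside a genus-zero Hurwitz space and apply Dubrovin's general theorem on Hurwitz Frobenius manifolds \cite{Dub96}. That theorem produces a semisimple Frobenius structure from \eqref{eq:prodss}, \eqref{eq:etares}, \eqref{eq:cres} once the primary differential is admissible, and once the residue metric is non-degenerate on an open dense set. Three things then have to be checked.

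\emph{Step 1: dimension count and local coordinates.} The function $\lambda$ is a rational function on $\mathbb{P}^1$ of degree $\ell+1$, since the numerator has degree $\ell+1$ and the denominator has degree $\bar k+p\le \ell$. Its poles are: $\mu=\infty$ of order $\ell+1-\bar k-p$, $\mu=0$ of order $\bar k$, and $p$ simple poles $w_{-1},\dots,w_{-p}$. Thus $q+1=p+2$ and $\sum_i n_i=(\ell+1-\bar k-p-1)+(\bar k-1)$. The Hurwitz dimension is $2q+\sum n_i=2p+2+\ell-p-1=\ell+p+1$. Fixing the two poles at $0$ and $\infty$ and removing the residual $\mathbb{C}^*$ scaling uses up the $\mathrm{PSL}_2$ freedom. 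Hence the parameters $(w_{-p},\dots,w_\ell)$ give exactly $\ell+p+1$ coordinates on an open subset of $\Hw_{0;\underline n}$. I will verify this by checking that the Jacobian $\partial(u_1,\dots,u_{\ell+p+1})/\partial(w_{-p},\dots,w_\ell)$ is generically non-zero. To do so, I show that the variations $\partial_{w_j}\lambda$ are linearly independent rational functions of $\mu$. This is clear because $\partial_{w_i}\lambda\propto \mu^{i-\bar k}/\prod(\mu-w_{-j})$ for $i\ge1$, $\partial_{w_0}\lambda=\lambda/w_0$, and $\partial_{w_{-j}}\lambda$ has a double pole at $w_{-j}$. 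The critical points are then generically simple, so the critical values $u_i$ are canonical coordinates.

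\emph{Step 2: admissibility of $\phi=\dd\log\mu$.} This is the main obstacle. Following the argument of \cite{BvG22}, I note that $\phi$ has simple poles only at $\mu=0,\infty$, with constant residues $\pm1$, and these poles sit at poles of $\lambda$. Dubrovin's conditions ask that $\phi$ be a differential of the third kind with poles at points $\infty_i$ and constant residues, and that its "periods" be independent of moduli, in the sense that $\partial_{u_i}(\phi)$ at fixed $\lambda$ is holomorphic away from the critical points. Since $\mu=0,\infty$ are fixed points of the parametrisation, $\partial\phi/\partial u_i|_{\lambda}$ vanishes identically, because $\phi$ is independent of $\underline w$ as a differential in $\mu$. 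I will write this out via the Rauch-type variational formula. Then $\eta$ in \eqref{eq:etares} is flat and $c$ satisfies (FM4), by Dubrovin's Lecture 5 theorem.

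\emph{Step 3: non-degeneracy and the identification of $e$, $E$, $d$.} I will compute $\eta$ in $w$-coordinates through the residue formula $\eta(\partial',\partial'')=\sum_{\mathrm{poles}}\Res\,\partial'\lambda\,\partial''\lambda\,\phi^2/\dd\lambda$, evaluated at the poles $0$, $\infty$ and $w_{-j}$. I will check that the resulting matrix has non-vanishing determinant for generic $\underline w$, for example by showing that it is block anti-triangular after ordering coordinates by degree in $\mu$. The contributions from $w_{-j}$ give non-degenerate diagonal blocks of the form $w_0/w_{-j}^{2}\cdot(\text{nonzero})$. Finally, \cref{lemma:eEgen} gives $\mathcal{L}_e\lambda=1$ and $\mathcal{L}_E\lambda=\lambda$, so $e$ and $E$ generate the affine action \eqref{Eq:Hurwitzaffine} and coincide with \eqref{eq:eEhur}. \cref{rmk:charge} then gives $d=1$. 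This proves the theorem.
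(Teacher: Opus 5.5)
Your route is the paper's. You realise $\lambda$ as a point of $\mathcal{H}_{0;(\ell-\bar k-p,\,\bar k-1,\,0,\dots,0)}$, match the dimension $\ell+p+1$, and fix the two distinguished poles at $0,\infty$ with the $\mathrm{PSL}_2$ freedom. You then invoke admissibility of $\dd\log\mu$ (\cite{Dub96}, \cite{BvG22}), Lemma~\ref{lemma:eEgen} and Remark~\ref{rmk:charge}. The paper does exactly this, asserting the local-coordinate claim and citing admissibility. The extra verifications you add, however, are partly wrong.

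\textbf{Step~2.} The condition you state, that $\partial_{u_i}\phi|_\lambda$ be holomorphic away from the critical points, is the right one. The reason you give for it is false: you conflate differentiating at fixed $\mu$, where $\dd\log\mu$ is trivially constant, with differentiating at fixed $\lambda$, which is what enters Dubrovin's construction. Differentiating $\lambda(\mu(\lambda,u),u)=\lambda$ gives
\[
\partial_{u_i}\phi\big|_{\lambda}=\dd g_i\,,\qquad g_i:=-\frac{\partial_{u_i}\lambda|_{\mu}}{\mu\,\partial_\mu\lambda}\,.
\]
Since $(\partial_{u_i}\lambda)(q_k)=\delta_{ik}$, the function $g_i$ has a simple pole at $q_i$ and is regular at the other critical points. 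It is also regular at $0$, $\infty$ and $w_{-j}$:
\begin{itemize}
\item at $0$ and $\infty$, these points are fixed, so $\partial_{u_i}\lambda|_\mu$ has pole order there at most that of $\lambda$, which is the exact pole order of $\mu\partial_\mu\lambda$;
\item at $w_{-j}$, $\partial_{u_i}\lambda|_\mu$ has at most a double pole and $\partial_\mu\lambda$ has exactly a double pole.
\end{itemize}
Hence $\partial_{u_i}\phi|_\lambda=-q_i\,\eta_{ii}\,\dd\mu/(\mu-q_i)^2$ with $\eta_{ii}=-1/\bigl(q_i^2\lambda''(q_i)\bigr)$. This is holomorphic away from $q_i$, as admissibility requires, but it is certainly not identically zero, since otherwise $\eta$ would vanish. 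A shorter route: in genus zero, $\dd\log\mu$ is the unique differential with simple poles at $\infty_0=\infty$, $\infty_1=0$ and residues $-1,+1$. That is Dubrovin's third-kind primary differential, so admissibility is his theorem.

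\textbf{Step~1.} Linear independence of the $\partial_{w_j}\lambda$ does not give $\det\bigl(\partial_{w_j}\lambda(q_i)\bigr)\neq0$. Suppose $\delta\lambda$ obeys the pole bounds of the $\partial_{w_j}\lambda$ (order $\le\bar k$ at $0$, $\le\ell+1-\bar k-p$ at $\infty$, $\le2$ at each $w_{-j}$) and vanishes at all $\ell+p+1$ simple critical points. Then $\delta\lambda/\dd\lambda$ is a holomorphic vector field on $\mathbb{P}^1$ vanishing at $0$ and $\infty$, so $\delta\lambda\in\mathbb{C}\,\mu\partial_\mu\lambda$. You must therefore also show $\mu\partial_\mu\lambda\notin\operatorname{span}\{\partial_{w_j}\lambda\}$. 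This holds. Multiplying by $\mu^{\bar k}\prod_j(\mu-w_{-j})/w_0$, the span becomes $\operatorname{span}\{\mu,\dots,\mu^{\ell},\,Q_\ell,\,Q_\ell/(\mu-w_{-j})\}$. The image of $\mu\partial_\mu\lambda$ is congruent modulo this span to the constant $-(\ell+1)$, which does not lie in it. This is just the normalisation of $Q_\ell$ (constant term $1$, top coefficient $(-1)^{\ell+1}$) killing the residual $\mathbb{C}^\ast$.

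\textbf{Step~3.} Once the $u_i$ are coordinates, this step is unnecessary. $\eta_{ii}=-1/\bigl(q_i^2\lambda''(q_i)\bigr)\neq0$ because $\dd\log\mu$ has no zeros and $q_i\neq0,\infty$. So you do not need the $w$-coordinate block computation, whose claimed structure you have not checked. This is consistent with Proposition~\ref{prop:Anonconsec}: there the degeneracy arises because the parameters fail to give coordinates on the larger Hurwitz space, i.e.\ the analogue of your Step~1 fails.
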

\begin{proof}
The function \eqref{eq:lambdaAsimple} has a pole of order $\ell+1-\bar{k}-p$ at $\mu=\infty$, a pole of order $\bar{k}$ at zero and $p$ simple poles at $w_{-1},\dots, w_{-p}$. Hence, it represents a point in the Hurwitz space $\Hw_{0\,;\,\underline{n}}$, with $\underline{n} = (\ell-\bar{k}-p,\bar{k}-1,\underset{p}{\underbrace{0,\dots, 0}})$. In fact, by the Riemann-Hurwitz formula 
\begin{equation*}
    \text{dim}(\mathcal{H}_{0;\underline{n}}) = 2(p+1) +\ell-\bar{k}-p+\bar{k}-1 = \ell+1+p = \text{dim}(M). 
\end{equation*}
Thus, since the source curve has genus zero, we use the \(\operatorname{PGL}_2\)-freedom
on the source to choose a representative in which the two distinguished poles
are located at \(0\) and \(\infty\). Then, on the corresponding open locus, a
generic rational function with pole orders \(\ell+1-\bar{k}-p\) at \(\infty\), \(k\)
at \(0\), and simple poles at \(w_{-1},\ldots,w_{-p}\), is represented 
by a function of the form \eqref{eq:lambdaAsimple}, with 
the parameters \(w_{-p},\ldots,w_{-1},w_0,w_1,\ldots,w_\ell\)  providing
local coordinates. 
  Hence,  $(\lambda_{A_{\ell}}, \dd\log\mu, e, E)$ gives a Frobenius manifold of dimension $\ell+1+p$ and charge one, by virtue of the $\lambda$-admissibility of the primary differential $\phi = \dd\log\mu$ for Dubrovin--Zhang manifolds \cite{BvG22} and Lecture 5 of \cite{Dub96},  together with  Lemma  \ref{lemma:eEgen}, and  Remark \ref{rmk:charge},  $d=1$. 
\end{proof}

\begin{remark}
      The Frobenius manifold structure on Hurwitz spaces of the type  $\Hw_{0\,;\,(\ell-\bar{k}-p,\bar{k}-1,0,\dots, 0)}$ with primary differential $\phi$ was studied in \cite{PS26}, where a set of flat coordinates and the corresponding prepotential were explicitly given in terms of the structures on the lower-dimensional Hurwitz space $\Hw_{0\,;\,(\ell-\bar{k}-p-1,\bar{k}-1)}$.
\end{remark}

\begin{remark}
\label{rmk:Afepsilon}
   Note that \eqref{eq:lambdaAsimple} is obtained from $Q_{\ell}$ via the shift \begin{gather}
     w_{\bar{k}+i} \rightarrow w_{\bar{k}+i} + \frac{\lambda}{w_0}\epsilon_{p-i}(w_{-1}, \cdots, w_{-p}), \qquad \text{ for } i = 0, \cdots, p,
 \end{gather}
where $\epsilon_i$ is the $i^{\text{th}}$ elementary symmetric polynomial in the coordinates $w_{-1}, \cdots, w_{-p}$. This can be seen as follows.
    Performing the proposed shift on \eqref{eq:detA} gives
    \begin{equation}
        \frac{\lambda}{w_0}\sum_{i=0}^p\left((-1)^{\bar{k}+i}\mu^{k+i}\epsilon_{p-i}(w_{-1}, \cdots, w_{-p})\right) +  Q_{\mathcal{A_{\ell}}}.
        \label{eq:Ashifttemp1}
    \end{equation}
    Now, we know that elementary symmetric polynomials are precisely related to factorisations of  polynomials as
    \begin{equation}
        P(x) = \prod_{i=1}^{n}(x-x_i) = \sum_{i=0}^{n}\epsilon_i(x_1, \cdots, x_n)(-1)^{i}x^{n-i},
    \end{equation}
     for a monic polynomial $P$ in $x$ of degree $n$.     Hence, \eqref{eq:Ashifttemp1} becomes
    \begin{equation}
        \frac{\lambda}{w_0}(-1)^{p+\bar{k}} \mu^{\bar{k}} \sum_{i=0}^{p}\underset{\prod_{j=1}^{p}(\mu-w_{-j})}{\underbrace{\left(-1)^{p-i}\mu^i \epsilon_{p-i}\right)}} +  Q_{\ell}.
    \end{equation}
    Considering the zero locus and solving for $\lambda$ gives the expression. 
\end{remark}

\begin{remark}
\label{rmk:maZuoisoA1}
    Letting $|I| = 2$ gives
    \begin{equation}
        \lambda = \dfrac{w_0 \, Q_{\ell}}{\mu^{\bar{k}} (\mu - w_{-1})} \underset{\eqref{eq:wkinx}}{=} \dfrac{w_0 \prod_{\tilde{\omega} \in \Gamma(\rho_{\omega})}^{}(\mu-e^{\tilde{\omega}(x)})}{\mu^{\bar{k}} (\mu-w_{-1})} = \dfrac{w_0 \prod_{i=1}^{\ell+1}(\mu-a_i)}{\mu^k(\mu-w_{-1})},
    \end{equation}
    where  $\Gamma(\rho_{\omega})$ is the weight system of $\rho_\omega$, the $\ell+1$-dimensional defining representation of $SL_{\mathbb{C}}(\ell+1)$, and  $w_0 = e^{c_{\omega}x_{0}}$. By letting $\bar{k} \mapsto m$, $w_{-1} \mapsto e^{i \phi_{\ell+2}}$, $\hat{a}_i \coloneqq a_i e^{\frac{c_{\omega}x_0}{\ell+1-\bar{k}}}$, and $\hat{\mu} = \mu e^{-\frac{c_{\omega}x_0}{\ell+1-\bar{k}}}$, we obtain precisely the form of \cite{Zuo20}, with $i\phi_j = \log\tilde{a}_j$, and $i \phi = \log\hat{\mu}$.
\end{remark}

On the locus in which movable poles coalesce \eqref{eq:lambdaAsimple} becomes
\begin{equation}
\lambda(\mu) =\frac{w_0Q_{\hat{\ell}+\bar{k}+p}}{\mu^{\bar{k}}\prod_{i=1}^{\ell(\pi)}(\mu-w_{-i})^{m_i}}\, ,  
     \label{eq:lambdaAmult}
\end{equation}
 for any partition $\pi=(m_1,\dots, m_{\ell(\pi)})\vdash p\,$, where $\ell(\pi)$ is the length of the partition $\pi$,  and $\bar{k} = \text{min}(I)$. By Remark \ref{rmk:Afepsilon}, this is obtained from $Q_{\ell\equiv \hat{\ell}+\bar{k}+p}$ using the shift functions  $f_{\bar{k}+p-i} \equiv \epsilon_i(w_{-p}, \cdots, w_{-1})|_{h(w_{-p, \cdots, w_{-1}})}$, where the map $h$ describes how the movable poles coalesce; $\eqref{eq:lambdaAsimple}|_{h(w_{-1}, \cdots,  w_{-p}) } = \eqref{eq:lambdaAmult}$. Writing $\lambda$ in this form will prove convenient when interpreting coalescence of movable poles in terms of a stratification of the Hurwitz boundary in Section \ref{sec:polecollision}.

\begin{theorem}
\label{thm:FrobAmult}
$(\lambda, \dd\log\mu)$, with $\lambda$ as in \eqref{eq:lambdaAmult} gives a Frobenius manifold structure of charge $d=1$ on the Hurwitz space $\mathcal{H}_{0; \underline{n}}$, with $\underline{n} = (\ell-\bar{k}-p, \bar{k}-1, m_1-1, \cdots m_{\ell(\pi)}-1)$, where   
\begin{equation*}
    e = e^{(I)}|_{h(w_{-p}, \cdots, w_{-1})}, \quad \text{and} \quad E = w_0 \partial_{w_0},
\end{equation*}
for $e^{(I)}$ as in \eqref{eq:eEAsimple}, where $h$ denotes the map of coalescence defined by $\eqref{eq:lambdaAsimple}|_{h(w_{-1}, \cdots,  w_{-p}) } = \eqref{eq:lambdaAmult}$. The differential $\phi=\dd\log\mu$ is, in particular, still $\lambda$-admissible.
\end{theorem}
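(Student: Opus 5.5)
The plan is to mirror the proof of Theorem \ref{thm:FrobAsimple}, now on a lower-dimensional Hurwitz stratum. First we identify the point in Hurwitz space represented by \eqref{eq:lambdaAmult}. The function $\lambda$ has:
\begin{itemize}
\item a pole of order $\ell+1-\bar{k}-p$ at $\mu=\infty$;
\item a pole of order $\bar{k}$ at $\mu=0$;
\item poles of order $m_\alpha$ at the $\ell(\pi)$ distinct points $w_{-\alpha}$.
\end{itemize}
Hence it is a genus-zero cover with ramification profile $\underline{n}=(\ell-\bar{k}-p,\bar{k}-1,m_1-1,\dots,m_{\ell(\pi)}-1)$ over $\infty$. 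The Riemann--Hurwitz count gives
\[
\dim\Hw_{0;\underline{n}}=2\bigl(\ell(\pi)+1\bigr)+(\ell-\bar{k}-p)+(\bar{k}-1)+(p-\ell(\pi))=\ell+1+\ell(\pi).
\]
This equals the number of free parameters $w_0,\dots,w_\ell,w_{-1},\dots,w_{-\ell(\pi)}$.

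Next we show these parameters are local coordinates on an open locus of $\Hw_{0;\underline{n}}$. Using the $\mathrm{PGL}_2$ freedom on the source, we fix the two distinguished poles at $0$ and $\infty$; a residual scaling of $\mu$ is killed by the normalisation of the constant term of $Q_\ell$ to $1$ and the leading coefficient $(-1)^{\ell+1}$. A generic rational function with the prescribed pole orders is then of the form \eqref{eq:lambdaAmult}. The numerator coefficients $w_0,\dots,w_\ell$ together with the pole positions parametrise it bijectively on the locus where the numerator is coprime to the denominator and the critical values are distinct. On that locus the critical values $u_i$ are also local coordinates, by Riemann's existence theorem.

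Then we check the unit and Euler fields. The pulled-back shift functions $f_{\bar{k}+p-i}=\epsilon_i\circ h$ are still independent of $\mu,\lambda,w_0,\dots,w_\ell$, so the argument of Lemma \ref{lemma:eEgen} applies verbatim. It gives $\Lie_e\lambda=1$ and $\Lie_E\lambda=\lambda$, since $\lambda$ is linear in $w_0$ and $e$ acts only on the $w_i$ with $i\in I$. These are exactly the generators of the affine action \eqref{Eq:Hurwitzaffine}, so in canonical coordinates $e=\sum\partial_{u_i}$ and $E=\sum u_i\partial_{u_i}$. Remark \ref{rmk:charge} then gives charge $d=1$.

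The main obstacle is $\lambda$-admissibility of $\phi=\dd\log\mu$ on this stratum, now that poles of higher order $m_\alpha$ appear. I would invoke Dubrovin's criterion (Lecture 5 of \cite{Dub96}) as in \cite{BvG22}: $\phi$ is a third-kind differential whose only poles are at $\mu=0$ and $\mu=\infty$, both of which are fixed poles of $\lambda$. It is regular and nonvanishing at the movable poles $w_{-\alpha}$ and at the critical points $q_i$.

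Two things must then be verified. First, the derivatives of $\phi$ along the Hurwitz deformation, computed at fixed $\lambda$, should be holomorphic away from the poles of $\lambda$. Second, near each movable pole, where $\lambda\sim(\mu-w_{-\alpha})^{-m_\alpha}$, the local parameter $\lambda^{-1/m_\alpha}$ should give $\phi$ an expansion with no singular part. Both hold because $\phi$ does not depend on the moduli.

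It then follows that the metric \eqref{eq:etares} is flat and, together with \eqref{eq:cres}, defines a Frobenius structure. Alternatively, one could observe that the structure is the restriction of the top-stratum one of Theorem \ref{thm:FrobAsimple} via $h$, but proving non-degeneracy would require the direct Hurwitz argument anyway.
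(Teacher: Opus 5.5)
Your argument is correct and is essentially the paper's proof, which just reruns Theorem~\ref{thm:FrobAsimple} on this stratum: matching $\dim\Hw_{0;\underline{n}}=\ell+1+\ell(\pi)$ with the parameter count, normalising $0,\infty$ via $\mathrm{PGL}_2$, applying Lemma~\ref{lemma:eEgen} to the coalesced shifts $\epsilon_i\circ h$ with Remark~\ref{rmk:charge} for $d=1$, and citing Lecture~5 of \cite{Dub96} for admissibility of $\dd\log\mu$, since its poles lie only at the fixed poles $0,\infty$ (though your paraphrase of Dubrovin's condition is reversed: fixed-$\lambda$ derivatives of a primary differential are regular at the poles of $\lambda$ and singular at the critical points). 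Do drop your closing aside: the structure on this stratum is \emph{not} the restriction of the top-stratum one along $h$, because Section~\ref{sec:polecollision} shows the pulled-back metric on a collision block degenerates to $\dd{\beta}\,\dd{c_1}$ and several components of $\eta$ and $c$ must be renormalised, so that route fails for reasons beyond non-degeneracy.
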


\begin{proof}
    Again, the dimension of the Frobenius manifold equals the dimension of the associated Hurwitz space, which means that the theorem follows by the same logic as in the proof of Theorem \ref{thm:FrobAsimple}, using Lemma \ref{lemma:eEgen} restricted to $h(w_{-1}, \cdots, w_{-p})$.
\end{proof}

\begin{remark}
   By letting $|I|=r+1$ and $w_{-r}= \cdots w_{-1} = e^{i \phi_{\ell+2}}$, $\bar{k} \mapsto m$ and defining $\hat{a}_i$, $\hat{\mu}$ analogously to Remark \ref{rmk:maZuoisoA1}, we obtain precisely the form of Proposition 2.4 in \cite{MZ24}, with $i\phi_j = \log\tilde{a}_j$ and $\mu = z = e^{i\phi}$. 
\end{remark}

While we can still send the shifting parameters $\{f_i\}$ to the roots of $Q_{\ell}^{(I)}$, the map is no longer an isomorphism. As a result, it is no longer possible to know precisely which nodes in the associated  Dynkin diagram are marked. To illustrate this consider $\ell=7$ with $I= \{1, 2, 3, 4\}$. For instance, letting  $w_{-3}\mapsto w_{-2}$ or $w_{-2}\mapsto w_{-1}$, should naively be associated with extensions of Dynkin nodes $\{1,2,4\}$ and $\{1,3,4\}$, respectively. However,  they give the same superpotential. This raises the question of whether the corresponding generalised DZ-manifolds, in an orbit space description, would be isomorphic as well, or whether this reflects a true discrepancy for higher order poles in more than two extended directions. 

Now, suppose we shift non-consecutively. That is, let $I \subset \{1, \cdots, \ell\}$, such that, in the ordering $i_{j_1}<i_{j_2}$ for $j_1<j_2$,  $I = \{\bar{k},  i_1, \cdots, i_p\} \neq  \{\bar{k}, \bar{k}+1, \cdots, \bar{k}+p\}$ for any $\bar{k} \in \{1, \cdots, \ell\}$. The resulting $Q_{\ell}^{(I)}$ is a monic polynomial of degree $\text{max}(I) - \bar{k}$ with some coefficients vanishing (specifically, coefficients of $\mu^s$ for $s$ being the skipped labels in $I$), after taking out the common factor $\mu^{\bar{k}}$.  In this case, the map sending the shift parameters to the roots of $Q_{\ell}^{(I)}$ fails to be surjective. However, it is still injective; in particular, we can still send the parameters $\{f_i\}$ to a size $p$ subset of the set of roots. Let us denote the image by $\{w_{-j}\}_{j=1, \cdots, p}$.  The remaining $\text{max}(I) - \bar{k}-p$ roots become smooth algebraic functions of $\underline{f}(w_{-1}, \cdots, w_{-p})$ given by the Vieta relations. Note that for generic $f_i$, as in the consecutive case, all poles away from $0, \infty$ are simple.

\begin{proposition}
\label{prop:Anonconsec}
  The Saito metric is degenerate for  $I$ not of the form  $ \{\bar{k}, \bar{k}+1, \cdots, \bar{k}+p\}$  for $\bar{k} \in \{1, \cdots, \ell\}$. 
\end{proposition}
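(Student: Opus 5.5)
We pass to the pole parametrisation and show that the parameter space of the family has fewer dimensions than the Hurwitz stratum in which the superpotentials live. Since $\eta$ is defined by the residue formula \eqref{eq:etares} on the full Hurwitz space, the restricted metric must then be degenerate.

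First, we write the shifted superpotential. With $I=\{\bar k=i_0<i_1<\dots<i_p\}$ and $f_{i_p}=1$, it reads
\[
\lambda=\frac{w_0\,Q_\ell}{\mu^{\bar k}\,Q^{(I)}_\ell(\mu)},\qquad Q^{(I)}_\ell(\mu)=\mu^{N}+\sum_{j=0}^{p-1}\pm f_{i_j}\,\mu^{i_j-\bar k},\quad N:=i_p-\bar k>p .
\]
Its roots $w_{-1},\dots,w_{-N}$ are generically simple. They satisfy $N-p$ Vieta constraints, namely the vanishing of the elementary symmetric functions $\epsilon_{N-s}$ for the skipped labels $s$. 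The free parameters are $(w_0,w_1,\dots,w_\ell,f_{i_0},\dots,f_{i_{p-1}})$, so the family $\mathcal M$ has dimension $\ell+1+p$.

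Next, we place the family in a Hurwitz space. As in the proof of Theorem \ref{thm:FrobAsimple}, $\lambda$ lies in $\Hw_{0;\underline n}$ with $\underline n=(\ell-\bar k-N,\bar k-1,0,\dots,0)$ ($N$ zeros). This space has dimension $\ell+1+N$, and $\lambda$ has exactly $\ell+1+N$ critical values $u_i$. The image of $\mathcal M$ is therefore a submanifold cut out by the $N-p$ equations $\epsilon_{N-s}(w_{-1},\dots,w_{-N})=0$. The metric induced from \eqref{eq:etares} is
\[
\eta|_{\mathcal M}=-\sum_i \Res_{q_i}\tfrac{\phi^2}{d\lambda}\,du_i^2\Big|_{\mathcal M}.
\]
The unit and Euler fields of Lemma \ref{lemma:eEgen} are tangent to $\mathcal M$. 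For $\mathcal M$ to carry a Frobenius structure with canonical coordinates, however, one needs $\ell+1+p$ independent canonical coordinates, all pairwise products of the $\partial_{u_i}$ closing on $T\mathcal M$, and a nondegenerate restricted metric.

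The main step is to exhibit the degeneracy explicitly, and I would do it with the residue formula. I would write the Saito metric in the coordinates $(w_0,w_{\geq1},f)$ using the standard LG expression
\[
\eta(\partial_a,\partial_b)=\sum_{\Cr_\lambda}\Res\frac{\partial_a\lambda\,\partial_b\lambda}{\lambda'(\mu)}\frac{d\mu}{\mu^2},
\]
and then move the contour to the poles $0,\infty,w_{-j}$. The derivatives $\partial_{f_{i_j}}\lambda=\mp w_0Q_\ell\,\mu^{i_j-2\bar k}/Q^{(I)2}_\ell$ all share the factor $Q_\ell/Q^{(I)2}_\ell$. Their residues at the roots $w_{-j}$ involve only the monomials $\mu^{i_j-\bar k}$ evaluated at those roots. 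The gaps in $I$ then make the pairings $\eta(\partial_{f_a},\cdot)$ linearly dependent: the $p$ vectors $(w_{-r}^{\,i_a-\bar k})_{r}$ pair only with the span of the allowed powers, while the Vieta constraint rows are missing. Concretely, I expect the $f$-block to have rank $<p$, or the combination $\sum_a(\text{coefficient})\partial_{f_a}$ corresponding to the missing power $\mu^{s}$ to be null.

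I expect this residue bookkeeping to be the main obstacle. The cleanest route is probably to compute in the smallest case, $I=\{\bar k,\bar k+2\}$, and find an explicit null vector, for instance a combination of $\partial_{f_{\bar k}}$ and $\partial_{w_{\bar k+1}}$ (equivalently of the pole-location vector fields). The general case then follows by showing that this null direction persists, via the factorisation of $\det\eta$ as a Vandermonde-type determinant in the $w_{-r}$ times a factor that vanishes on the Vieta constraint locus.
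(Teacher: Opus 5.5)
\textbf{There is a genuine gap: the proposal never actually establishes degeneracy.}

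Your opening principle is false. You argue that because the $(\ell+1+p)$-parameter family is a proper submanifold of the $(\ell+1+N)$-dimensional Hurwitz space, the restriction of \eqref{eq:etares} ``must then be degenerate''. But a non-degenerate symmetric form can restrict non-degenerately to a proper subspace. The paper itself supplies counterexamples: the non-type-$A$ Landau--Ginzburg models of Theorem~\ref{thm:canonicalmirror} live on proper subvarieties of Hurwitz spaces, yet carry a non-degenerate restricted (Frobenius) metric. So the dimension count proves nothing, and everything rests on your ``main step'', which is left as a list of expectations.

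Moreover, the mechanism you anticipate is not where the degeneracy lives. Take $I=\{\bar k,\bar k+2\}$ and write $Q^{(I)}_\ell=\mu^2-a$ with $a=\mp f_{\bar k}$. Moving the contour to the poles, the contributions at $\mu=0$ and $\mu=\pm\sqrt a$ give $\eta(\partial_a,\partial_{w_{\bar k}})$ equal to a non-zero multiple of $w_0a^{-2}$, so $\partial_{f_{\bar k}}$ is not null. For $\ell=3$, $I=\{1,3\}$, one finds that $\eta(\partial_a,\partial_a)$ is a non-zero multiple of $w_0w_1a^{-3}$, so the $f$-block has full rank. The proposed Vandermonde-type factorisation of $\det\eta$ is also unsubstantiated.

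\textbf{The missing idea is to look at the rows of the $w_{>0}$ directions, which is what the paper does.} For $i\geq 1$, the derivative $\partial_{w_i}\lambda=(-1)^iw_0\mu^{i-\bar k}/Q^{(I)}_\ell$ behaves as follows:
\begin{itemize}
\item it is $O(\mu^{i-\bar k})$ at $0$, where $\lambda'\sim\mu^{-\bar k-1}$;
\item it is $O(\mu^{i-\max I})$ at $\infty$, where $\lambda'\sim\mu^{\ell-\max I}$;
\item it has at most simple poles at the roots $r$ of $Q^{(I)}_\ell$, where $\lambda'\sim(\mu-r)^{-2}$.
\end{itemize}
Comparing orders in $\partial_{w_i}\lambda\,\partial_{w_j}\lambda\,\dd\mu/(\lambda'\mu^2)$ shows that $\eta(\partial_{w_i},\partial_{w_j})=0$ whenever $\bar k\leq i\leq\max I$ and $1\leq j\leq\ell$. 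These coordinate fields are the same in your $f$-chart and in the paper's chart $(w_{-p},\dots,w_{-1},w_0,\dots,w_\ell)$, because the Vieta change $f\leftrightarrow(w_{-1},\dots,w_{-p})$ does not involve $w_1,\dots,w_\ell$.

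Hence the $\max I-\bar k+1$ rows indexed by $\bar k,\dots,\max I$ are supported only on the $p+1$ columns of $w_0$ and the $p$ shift parameters. Non-consecutiveness means $\max I-\bar k\geq p+1$, so at least $p+2$ such rows lie in a $(p+1)$-dimensional space, and $\det\eta=0$.

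This pigeonhole argument only locates a null vector somewhere in $\mathrm{span}\{\partial_{w_{\bar k}},\dots,\partial_{w_{\max I}}\}$. In the smallest case $I=\{\bar k,\bar k+2\}$ it is exactly $\partial_{w_{\bar k+1}}$, which your smallest-case guess allows with zero $f$-component. No explicit determinant factorisation or ``persistence'' argument is needed.
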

\begin{proof}
 Let  $I = \{\bar{k},  i_1, \cdots, i_p\} $, where $i_{j_1}<i_{j_2}$ for $j_1<j_2$, and $\bar{k} = \text{min}(I)$. The proposition will be proved by showing that there exist at least $p+2$ columns and rows of $\eta$, each having at most $p+1$ non-zero elements located in the first $p+1$ positions. It follows that  $\text{det}(\eta)=0$. 

   Near a pole at $\mu=r$ of order $s$ we have that $\lambda' \sim (\mu-r)^{-s-1}$ and for $i>0$
   \begin{equation*}
       \partial_{w_i}\lambda \sim \begin{dcases}
           \mu^{i-\bar{k}}, & \text{ if } r = 0,\\
           \mu^{i-\text{max}(I)}, & \text{ if } r = \infty,\\
           (\mu-r)^{-s}, & \text{ otherwise.}
       \end{dcases} 
   \end{equation*}
   Thus, for $\eta$ as defined in \eqref{eq:etares}:
   \begin{itemize}
       \item $\mu = 0$: there is no non-zero contribution to $\eta_{ij}$ as $i+j-\bar{k}-1 \geq 0$,  for $j>0, \bar{k} \leq i$;
       \item $\mu = \infty$: there is no non-zero contribution to $\eta_{ij}$ as $i+j-\ell-2-\text{max}(I) \leq -2$,  for $j\leq \ell, i \leq \text{max}(I)$;
       \item $\mu = r$: to have a non-zero contribution to $\eta_{ij}$ we must have $-s+1 \leq -1$. 
   \end{itemize}
   Since all poles away from $0, \infty$ are simple, we have that $\eta_{ij} = 0$ for $\bar{k} \leq i \leq \text{max}(I)$, $j >0$ (or equivalently $i \leftrightarrow j$),    which gives the result.
\end{proof}

As a consequence of Theorems \ref{thm:FrobAsimple}, \ref{thm:FrobAmult} and Proposition \ref{prop:Anonconsec}, we have the following classification result. 
\begin{theorem}
\label{thm:AMain}
    All Frobenius manifolds obtained for $\Roots = A_{\ell}$ by a shift of type \eqref{eq:shiftgen} are generalised type $A_{\ell}$ DZ-manifolds in the sense that they correspond to  Frobenius structures obtained from Landau--Ginzburg models $(\lambda, \dd\log\mu)$ with $\lambda$  of the form \eqref{eq:lambdaAmult}.
\end{theorem}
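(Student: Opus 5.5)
The plan is a case split on the shape of the index set $I\subset\{1,\dots,\ell\}$ in the shift \eqref{eq:shiftgen}, with each case reduced to a result already established. Order $I=\{i_1<\dots<i_{|I|}\}$ and set $\bar k=\min(I)$. Because $Q_\ell$ in \eqref{eq:detA} is linear in each $w_i$, the shifted curve $S^{(I)}_{A_\ell}$ is linear in $\lambda$. So $\lambda$ is always the rational function \eqref{eq:lambdagen}, with denominator
\[
\sum_{i\in I}(-1)^i f_i\mu^i=\mu^{\bar k}\,Q^{(I)}_\ell(\mu),
\]
where $Q^{(I)}_\ell$ is a polynomial in $\mu$ of degree $\max(I)-\bar k$. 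Every candidate LG-model is therefore of the form $w_0Q_\ell/(\mu^{\bar k}Q^{(I)}_\ell)$, and the task is to decide which such forms carry a Frobenius structure.

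\emph{Case 1: $I$ is not consecutive.} Proposition \ref{prop:Anonconsec} shows that the residue pairing \eqref{eq:etares} is degenerate. The generic simple-pole situation treated there is the open dense locus of the parameter space. Since the Saito metric must be non-degenerate everywhere on a Frobenius manifold, no Frobenius structure arises for such $I$. One point needs checking here: on a non-generic sublocus the $f_i$ could force coalescence of the roots of $Q^{(I)}_\ell$. I would argue that such a locus is not a Frobenius manifold obtained from the shift, because the parameters $f_i$ are generic by definition. Failing that, I would apply the same order-counting of residues from the proof of Proposition \ref{prop:Anonconsec} to show that the rows $\bar k\le i\le\max(I)$ of $\eta$ still vanish outside a block of size at most $p+1$.

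\emph{Case 2: $I=\{\bar k,\dots,\bar k+p\}$ is consecutive.} After the normalisation $f_{\bar k+p}=1$, the Vieta map identifies the shift parameters with the roots $w_{-1},\dots,w_{-p}$ of $Q^{(I)}_\ell$, and Remark \ref{rmk:Afepsilon} gives the inverse identification $f_{\bar k+p-i}=\epsilon_i(w_{-1},\dots,w_{-p})$. For generic $f_i$ the roots are distinct, and $\lambda$ takes the form \eqref{eq:lambdaAsimple}. Theorem \ref{thm:FrobAsimple} then gives a Frobenius manifold of charge one with the unit and Euler field of Lemma \ref{lemma:eEgen}. The non-generic loci, where roots coincide according to some partition $\pi\vdash p$, are described by the coalescence map $h$. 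On these loci $\lambda$ takes the form \eqref{eq:lambdaAmult}, and Theorem \ref{thm:FrobAmult} supplies the Frobenius structure. Together these cover every outcome of a consecutive shift, and each is of the form \eqref{eq:lambdaAmult}, which is the claimed generalised DZ-form.

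The step I expect to be the main obstacle is making Case 1 airtight. Proposition \ref{prop:Anonconsec} is proved on the generic locus, and one has to rule out that some degenerate specialisation of the $f_i$ in a non-consecutive $I$ reproduces a non-degenerate metric. I would handle this by observing that any such specialisation yields a $\lambda$ whose denominator has the form $\mu^{\bar k}P(\mu)$. Whenever it is non-degenerate it must coincide with a superpotential of the form \eqref{eq:lambdaAmult}, after relabelling by the pole orders at $0$ and $\infty$. So even in that event, the conclusion that all Frobenius manifolds are of form \eqref{eq:lambdaAmult} is preserved.
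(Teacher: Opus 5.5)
Your proposal is correct and follows the same route as the paper. The paper obtains Theorem~\ref{thm:AMain} directly from Proposition~\ref{prop:Anonconsec} (non-consecutive $I$ gives a degenerate Saito metric) together with Theorems~\ref{thm:FrobAsimple} and~\ref{thm:FrobAmult} (consecutive $I$ with distinct or coalesced movable poles), which is exactly your Case~1/Case~2 split. The paper does not discuss non-generic specialisations of the $f_i$ for non-consecutive $I$. If you keep that discussion, drop the ``order-counting'' fallback: the proof of Proposition~\ref{prop:Anonconsec} relies on the movable poles being simple, since a pole of order $s\geq 2$ does contribute a residue. It can also fail outright, because $I=\{\bar k,\bar k+2\}$ with $f_{\bar k}=0$ specialises to a non-degenerate Dubrovin--Zhang structure. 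Rely on your final-paragraph argument instead.
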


We conclude this section by explicitly giving the flat coordinates for the intersection form.
 \begin{proposition}
     Let
     \begin{equation*}
         w_i = \begin{dcases}
             \eqref{eq:wkinx}, & \text{ for } i > 0,\\
             e^{x_{i}}, & \text{ for } i \leq 0.
         \end{dcases}
     \end{equation*}
        The coordinates $\{x_i\}_{i=-\ell(\pi), \cdots, \ell}$ constitute a flat frame for the intersection form $g$ of the Frobenius manifold of Theorem \ref{thm:AMain} associated with the partition $\pi=(m_1,\dots, m_{\ell(\pi)})\vdash p\,$. In particular, its Gram matrix is block diagonal: 
     \[
    g=
    \begin{pmatrix}
        G_1 & 0\\
        0 & G_2
    \end{pmatrix},
\]
with blocks:
\[
    G_1=
    \dfrac{1}{\bar{k}}\, \begin{pmatrix}
        -1-\frac{\bar{k}}{\ell+1-\bar{k}-p}
        &
       m_1
        &
        \cdots
        &
       m_{\ell(\pi)}
        \\[1em]
       m_1
        &
        -\bar{k} m_1-m_1^2
        &
        \cdots
        &
        -m_1 m_{\ell(\pi)}
        \\
        \vdots
        &
        \vdots
        &
        \ddots
        &
        \vdots
        \\
        m_r
        &
        -m_{\ell(\pi)} m_1
        &
        \cdots
        &
        -\bar{k} m_{\ell(\pi)}-m_{\ell(\pi)}^2
    \end{pmatrix},
\]
and
\[
    (G_2)_{ij}
    =
    \sum_{\widehat{\omega}\in\Gamma(\rho_{\omega})}
    \widehat{\omega}_i\widehat{\omega}_j,
    \qquad 1\leq i,j\leq \ell.
\]
 \end{proposition}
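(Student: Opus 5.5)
We will compute the intersection form directly from its residue formula, working in the coordinates $x_i$ themselves rather than passing through canonical coordinates. Starting from \eqref{eq:gres}, the standard Hurwitz-space argument (Lecture 5 of \cite{Dub96}, as in \cite{BvG22}) moves the contour from the critical points $q_i$ of $\lambda$ to the remaining singularities of the integrand. This gives, for any two coordinate vector fields $\partial',\partial''$,
\[
g(\partial',\partial'')=\sum_{P\in\lambda^{-1}(0)\cup\lambda^{-1}(\infty)}\Res_{P}\frac{\partial'(\log\lambda)\,\partial''(\log\lambda)}{\dd\log\lambda}\,\phi^2 ,
\]
up to an overall sign fixed by normalisation. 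The derivatives are taken at fixed $\mu$, with $\phi=\dd\log\mu$. The key observation is that in the $x$-coordinates $\log\lambda$ is a sum of logarithms with linear dependence on $x$. Writing $a_j=e^{\widehat\omega_j\cdot x}$ for the $\ell+1$ weights of the defining representation (so $\sum_j\widehat\omega_j=0$ and $\prod_j a_j=1$), we have
\[
\log\lambda=x_0+\sum_{j=1}^{\ell+1}\log(\mu-a_j)-\bar k\log\mu-\sum_{\alpha=1}^{\ell(\pi)}m_\alpha\log(\mu-e^{x_{-\alpha}}).
\]
Hence every $\partial_{x_i}\log\lambda$ is a rational function with at most simple poles, located at the $a_j$ or the $e^{x_{-\alpha}}$, plus a constant ($1$ for $i=0$).

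The computation then splits into the four types of points $P$. At a zero $a_j$ we have $\dd\log\lambda\sim\dd\mu/(\mu-a_j)$ and $\partial_{x_i}\log\lambda\sim-a_j\widehat\omega_{j,i}/(\mu-a_j)$ for $i>0$, with the $x_{\le0}$ derivatives regular there. The residue is therefore $\widehat\omega_{j,i}\widehat\omega_{j,k}$, and summing over $j$ produces exactly $G_2$. At a movable pole $e^{x_{-\alpha}}$ of order $m_\alpha$, only $\partial_{x_{-\alpha}}\log\lambda\sim m_\alpha e^{x_{-\alpha}}/(\mu-e^{x_{-\alpha}})$ is singular and $\dd\log\lambda\sim-m_\alpha\dd\mu/(\mu-e^{x_{-\alpha}})$. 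This gives a diagonal contribution proportional to $-m_\alpha$. The finer entries of $G_1$ (the $-m_\alpha m_\gamma$, the $m_\alpha$ in the $x_0$ row, and the $1/\bar k$ normalisation) come from the two fixed poles. At $\mu=0$, $\dd\log\lambda\sim-\bar k\,\dd\mu/\mu$ and $\phi^2=\dd\mu^2/\mu^2$, so the residue is $-\tfrac1{\bar k}$ times the product of the constant terms of $\partial_{x_i}\log\lambda$ at $\mu=0$. These constant terms are $1$ for $x_0$, $m_\alpha$ for $x_{-\alpha}$, and $\sum_j\widehat\omega_{j,i}=0$ for $i>0$. At $\mu=\infty$ we use the local parameter $1/\mu$; there the pole order of $\lambda$ is $\ell+1-\bar k-p$, the value at $\infty$ of $\partial_{x_0}\log\lambda$ is $1$, and $\partial_{x_{-\alpha}}\log\lambda$ and $\partial_{x_i}\log\lambda$ both vanish. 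This contributes only to the $(x_0,x_0)$ entry, with the factor $1/(\ell+1-\bar k-p)$.

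Adding the four contributions gives the stated blocks $G_1$ and $G_2$. The off-diagonal blocks between $\{x_i\}_{i>0}$ and $\{x_{i\le0}\}$ vanish because the two families of derivatives are never simultaneously singular at the same $P$. At $0$ this uses $\sum_j\widehat\omega_j=0$, and at $\infty$ it uses that $\partial_{x_i}\log\lambda=O(\mu^{-1})$ for $i>0$. Since all resulting entries are constants, the $x_i$ form a flat frame for $g$. Non-degeneracy follows from Theorems \ref{thm:FrobAsimple} and \ref{thm:FrobAmult}.

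The main obstacle I expect is the local expansions at $\mu=0$ and $\mu=\infty$, in particular getting the signs and the $1/\bar k$ prefactor to match the stated normalisation of $G_1$. One issue is that the $(x_0,x_0)$ entry mixes the $0$ and $\infty$ contributions. Another is that at $0$ the residue picks up products of constant terms for all pairs among $x_0,x_{-\alpha}$, which is what produces the rank-one $m_\alpha m_\gamma$ corrections. I would do this carefully for $\pi=(1)$, comparing with Remark \ref{rmk:maZuoisoA1} and \cite{Zuo20}, to fix conventions before treating a general partition.
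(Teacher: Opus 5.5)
Your route is the same as the paper's: write $\lambda$ in the $x$-coordinates, turn the contour in \eqref{eq:gres} from the critical points to the zeros and poles of $\lambda$, and read off the residues of $\partial_{x_i}\log\lambda\,\partial_{x_j}\log\lambda/(\mu^2\partial_\mu\log\lambda)$ at $0$, $\infty$, $e^{x_{-\alpha}}$ and $e^{(\widehat\omega,x)}$ from the leading local behaviour.

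There is one slip to fix. At $\mu=0$ the constant term of $\partial_{x_{-\alpha}}\log\lambda=m_\alpha e^{x_{-\alpha}}/(\mu-e^{x_{-\alpha}})$ is $-m_\alpha$, not $m_\alpha$. This sign is exactly what makes the $(x_0,x_{-\alpha})$ entries equal to $-\tfrac{1}{\bar k}\cdot 1\cdot(-m_\alpha)=+\tfrac{m_\alpha}{\bar k}$, as in $G_1$. The $m_\alpha m_\gamma$ entries are insensitive to it.

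No overall sign normalisation is needed either. With \eqref{eq:gres}, moving the contour off the critical points produces exactly a plus sign in front of your sum over zeros and poles.
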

 \begin{proof}
 Written in $x$-coordinates, $\lambda$ takes the form
 \begin{equation*}
     \lambda(\mu;x) = \dfrac{e^{x_0} \prod_{\widehat{\omega} \in \Gamma(\rho_{\omega})}(\mu-e^{\widehat{\omega},x})}{\mu^{\bar{k}}\prod_{\alpha=1}^{\ell(\pi)}(\mu-e^{x_{-\alpha}})^{m_{\alpha}}}.
 \end{equation*}
   Let
     \begin{equation}
         \bar{g}_{ij} \coloneqq \dfrac{\partial_{x_i}\log\lambda \,\partial_{x_j}\log\lambda}{\mu^2 \, \partial_\mu \log\lambda}\,,
         \label{eq:xflattemp}
     \end{equation}
so that: $    g_{ij} = -\sum_{q_a}\underset{q_a}{\Res} \,\bar{g}_{ij} \text{d}\mu
$ for $q_a \in \text{Cr}_{\lambda}\,$. 

Turning around the contour allows us to consider the residues of \eqref{eq:xflattemp} at the poles $(\mu=0, \infty, e^{x_{-\alpha}})$ and zeros $(\mu=e^{(\widehat{\omega}, x)})$ of $\lambda$ rather than its critical points. The leading order behaviours of the derivatives of log $\lambda$ near the relevant points are shown in Table \ref{tab:gasym}

 \begin{table}[h!]
\begin{center}
{\renewcommand{\arraystretch}{1.5}
\begin{tabular}{ |c|c|c| }
\hline
Point
& 
$\partial_\mu\log\lambda$
& 
$L_a=\partial_{x_a}\log\lambda$
\\
\hline
\hline
$\mu=0$
&
$-\frac{\bar{k}}{\mu}+\order{1}$
&
$L_0=1,\quad
L_{-\alpha}=-m_\alpha+\order{\mu},\quad
L_i=\order{\mu}$
\\
\hline
$\mu=e^{x_{-\alpha}}$
&
$-\frac{m_\alpha}{\mu-e^{x_{-\alpha}}}+\order{1}$
&
$L_{-\alpha}
=
\frac{m_\alpha e^{x_{-\alpha}}}
{\mu-e^{x_{-\alpha}}}
+\order{1},
\quad
L_a=\order{1}\ $ for $a\neq-\alpha$
\\
\hline
$\mu=e^{(\widehat{\omega},x)}$
&
$\frac{1}{\mu-e^{(\widehat{\omega},x)}}+\order{1}$
&
$L_i
=
-\frac{\widehat{\omega}_i e^{(\widehat{\omega},x)}}
{\mu-e^{(\omega,x)}}
+\order{1},
\quad
L_0,L_{-\alpha}=\order{1}$
\\
\hline
$\mu=\infty$
&
$\frac{N}{\mu}+\order{\mu^{-2}}$
&
$L_0=1,\quad
L_{-\alpha}=\order{\mu^{-1}},\quad
L_i=\order{\mu^{-1}}$\\
\hline
\end{tabular}
}
\end{center}
\caption{Asymptotics of derivatives of log $\lambda$ near $g$-relevant points.}
\label{tab:gasym}
\end{table}
 Hence, 
     \begin{itemize}
         \item at $\mu=0$:
         \vspace{-0.5em}
         \begin{equation*}
             \bar{g}_{ij} = -\dfrac{L_iL_j}{\bar{k}\mu} + \order{\mu}\, \,   \implies \underset{0}{\text{Res}}\, \bar{g}_{ij}\, \text{d}\mu = -\frac{1}{\bar{k}}\begin{dcases}
                 1 & i=j=0,\\
                 -m_a & i=0, j=-a \text{ or } i=-a, j=0,\\
                 m_{-i} m_{-j} & i,j<0,\\
                 0 & \text{ otherwise;}
             \end{dcases}
         \end{equation*}
         \item at $\mu = \infty$:
         \begin{equation*}
             \bar{g}_{ij} = \frac{L_i L_j}{(\ell+1-\bar{k}-p)\mu} + \order{\mu^{-2}} \, \, \implies \underset{\infty}{\text{Res}}\, \bar{g}_{ij}\, \text{d}\mu = -\frac{1}{\ell+1-\bar{k}-p},
         \end{equation*}
         for $i=j=0$, and zero otherwise. 
         \item at $\mu = e^{x_{-\alpha}}$: 
         \begin{equation*}
             \bar{g}_{ij} = - \frac{(\mu-e^{x_{-\alpha}})}{m_{\alpha}\mu^2}L_i L_j + \text{less singular terms} \, \, \implies \underset{e^{x_{-\alpha}}}{\text{Res}}\bar{g}_{ij}\text{d}\mu = -m_{\alpha}, 
         \end{equation*}
         for $i=j=-\alpha$, and zero otherwise;
         \item at $\mu = e^{(\omega,x)}$:
         \begin{equation*}
             \bar{g}_{ij} = \frac{\mu-e^{(\widehat{\omega}, x)}}{\mu^2}L_i L_j + \text{ less singular terms} \, \, \implies \underset{e^{(\omega,x)}}{\text{Res}}\bar{g}_{ij}\text{d}\mu = \omega_i \omega_j.
         \end{equation*}
     \end{itemize}
 \end{proof}
\label{sec:Ashifts}

\section{Examples}\label{sec:Ex}
 
We provide a set of examples where the $\eta$-flat coordinates and prepotentials are computed explicitly. In particular, we observe that when the prepotential is associated with a longest partition, we recover the form of \cite[Section 4]{PS26} (\cref{ex:A5123}), and, when it is associated to a shortest partition, we recover the conjectured form of \cites{Zuo20,MZ24} (\cref{ex:A513,ex:A71234deep}). In the intermediate cases,  new types of terms appear that were unobserved in those previous works (as seen in \cref{ex:A71234}). The examples appear in pairs, where the latter example of each pair is related to the former by colliding a pair of movable poles of the associated superpotential. The flat coordinates not obtained by taking residues of functions involving poles that have collided, are related simply by taking the corresponding limit. For the remaining coordinates some regularisation and renormalisation must be applied. In the next section, we will introduce a framework able to predict these phenomena, for an arbitrary partition, via subsequent coarsenings of the longest partitions, producing higher-order poles in the superpotential.

In the following examples, up to taking linear combinations respecting the grading, when indicated,  $\eta$-flat coordinates have been computed using the formulae from \cite{Dub96} defined on the Hurwitz space $\mathcal{H}_{g; \underline{n}}\,$, for $\underline{n}\in\ZZ_{\geq 0}^{q+1}\,$:
\begin{subequations}
\begin{equation}	
    \tau_{i;\alpha} \coloneqq \,  \underset{\infty_i}{\Res} \,  \kappa_i^{-\alpha} \log\mu \, \dd\lambda\,, \quad \alpha = 1, \dots, n_i,
    \label{Eq:5.1a}
\end{equation} 
\begin{equation}
    \tau_j^{\text{ext}} \coloneqq \mathrm{p.v. } \,  \int_{\infty_0}^{\infty_j} \dd \log\mu, \quad j=1, \dots, q+1,
    \label{Eq:5.1b}
\end{equation}
\begin{equation}
    \tau_i^{\text{res}} \coloneqq\,  \underset{\infty_i}{\Res}\lambda \, \dd\log\mu\,, \quad i = 0, \dots, q+1,
    \label{Eq:5.1c}
\end{equation}
\end{subequations}
where $\kappa_i$ are local coordinates defined by $\lambda(\kappa)=\kappa_i^{n_i+1} +\order{1}$ near $\infty_i$, in the notation of Section \ref{sec:subHur}, and the principal value, \text{p.v.}, indicates the subtraction of the divergent part in $\kappa_i$.

Common to all the examples are: 
\begin{equation}
    t_{0} = \text{log}(w_0), \quad t_i = \text{log}(w_0 w_i), \text{ for } i<0,
    \label{eq:exnegflats}
\end{equation}
\begin{equation*}
    e =  -\sum_{i \in I}^{} \dfrac{f_i}{w_0}\partial_{w_i} = \partial_\ell, \qquad E = w_0 \partial_{w_0} = \sum_{i=1}^{6}  d_i t^i \partial_i   + \sum_{j = -p}^{0}\partial_{j},
\end{equation*}
for $\underline{d} = (d_i)_{i=1, \cdots, \ell}$ to be specified. Here we have chosen $t_i = \text{log}(w_i) + \text{log}(w_0)$ for $i<0$ as opposed to simply $\text{log}(w_i)$ as obtained by the residue formulae in \cite{Dub96}. This is to simplify comparison with orbit space constructions and for the Euler vector field to take standard form. This is allowed due to the grading attached to all non-positively indexed coordinates being 0. The prepotentials are obtained by computing the c-tensor components in $\eta$-flat coordinates   using \eqref{eq:cres} (with $\phi = \dd\log\mu$) with Wolfram Mathematica and integrating three times.

\begin{example}[$\Roots = A_5$, $I = \{1,2,3\}$, $\pi=(1,1)$] \label{ex:A5123}
Consider $Q_5$, as in \eqref{eq:detA} and let 
\begin{equation*}
    w_i \mapsto w_i + f_i \frac{\lambda}{w_0}, \text{ for } i \in I = \{1,2,3\}, \quad \text{with }  f_i = \epsilon_{3-i}(w_{-1}, w_{-2}). 
\end{equation*}

This gives
\begin{align*}
    S \, & \, =  1+ \mu^6+\sum_{i=1}^{3}(-1)^i(w_i+\epsilon_{3-i}(w_{-1}, w_{-2}))\mu^i+\sum_{i=4}^{5}(-1)^iw_i\mu^i \\ \, & \,  =  Q_{5} + \frac{\lambda}{w_0}\left(-w_{-1}w_{-2}\mu+(w_{-1}+w_{-2})\mu^2 -\mu^3\right), 
\end{align*}
with $\epsilon_i$ denoting the $i^{\text{th}}$ elementary symmetric polynomial in its arguments, and $Q_5$ as in \eqref{eq:detA}. By solving $S=0$ for $\lambda$ we get
\begin{equation*}
   \lambda = \dfrac{w_0 \, Q_{5}}{\mu^{}(\mu-w_{-1})(\mu-w_{-2})}\,,
\end{equation*}
after letting $w_0 \mapsto -w_0$. A flat frame is given by $t_i$ as in \eqref{eq:exnegflats} for $i\leq 0$, together with
\begin{gather*}
 t_1 = w_0^{\frac{1}{3}}\left(w_{-2}+w_{-1}-w_5\right), \qquad (\text{\ref{Eq:5.1a}}, \infty_i = \infty, \alpha = 1);\\
  t_2 = w_0^{\frac{2}{3}}\left(5(w_{-2}^2+w_{-1}^2 + 4w_{-2}w_{-1}-4w_5(w_{-2}+w_{-1})-w_5^2+6w_4)\right), \qquad (\text{\ref{Eq:5.1a}}, \infty_i = \infty, \alpha = 2);\\
  t_3 = \dfrac{w_0}{w_{-1}^2(w_{-1}-w_{-2})}\left(w_{-1}^6+1+\sum_{j=1}^{5}(-1)^j w_j w_{-1}^j\right),\qquad (\text{\ref{Eq:5.1c}}, \infty_i = w_{-1});\\
    t_4 = w_0\left((w_{-2}+w_{-1})(w_{-2}^2+w_{-1}^2+w_4)-w_5(w_{-2}^2+w_{-2}w_{-1}+w_{-1}^2)-w_3\right)-t_5, \\ (\text{\ref{Eq:5.1c}}, \infty_i = \infty)- (\text{\ref{Eq:5.1c}}, \infty_i = 0),\\
    t_5 =-\dfrac{w_0}{w_{-2}^2w_{-1}^2}\left(w_{-2}w_{-1}w_1-w_{-2}-w_{-1}\right), \qquad  (\text{\ref{Eq:5.1c}}, \infty_i = 0),
\end{gather*}
with $\underline{d}= \left(\frac{1}{3}, \frac{2}{3}, 1,1,1\right)$, and the label next to each coordinate denotes how the coordinate was obtained (up to a constant factor). Note that for $t_4$ we have taken a linear combination of flat coordinates such that $\partial_i \lambda = 1$ for some $i$ (here $i=5$). This means that the unit vector field is  $\partial_{i}$  and $\eta_{jk} = c_{ijk} =  \partial^3_{ijk} F$.  The flat coordinates have been ordered in ascending degree. 

The associated prepotential is given by, 
\begin{align} 
    F = \, & \, 
\frac{1}{3}e^{3t_{-2}-2t_0}
\left(t_3-t_4\right)
-\frac{1}{3}e^{3t_{-1}-2t_0}t_3
+\frac{1}{2}e^{2t_{-2}-\frac{4}{3}t_0}
t_1\left(t_3-t_4\right)
-\frac{1}{2}e^{2t_{-1}-\frac{4}{3}t_0}t_1t_3
\\ \notag
\, & \, 
+\frac{1}{6}e^{t_{-2}-\frac{2}{3}t_0}
\left(t_1^2t_3+t_2t_3-t_1^2t_4-t_2t_4\right)
-\frac{1}{6}e^{t_{-1}-\frac{2}{3}t_0}
\left(t_1^2+t_2\right)t_3
-\frac{1}{6}e^{-t_{-2}-t_{-1}+\frac{10}{3}t_0}
\left(t_1^2+t_2\right)
\\ \notag
\, & \, 
-e^{-2t_{-2}-t_{-1}+4t_0}(t_3-t_4)
+e^{-t_{-2}-2t_{-1}+4t_0}t_3
+t_3(t_3-t_4)\log(e^{t_{-2}}-e^{t_{-1}})
-\frac{1}{2}t_3^2\log t_3
\\ \notag 
\, & \, 
-\frac{1}{2}(t_3-t_4)^2\log(t_3-t_4)
+\frac{t_1^6}{19440}
-\frac{1}{3888}(t_1^4-3t_1^2t_2+t_2^2)t_2
-\frac{1}{18}t_1t_2t_4
-\frac{1}{2}(t_{-2}+t_{-1})t_3^2
-\frac{1}{2}t_{-2}t_4^2
\\
\notag \, & \,
+\frac{1}{3}t_0t_4^2
+t_{-2}t_3t_4 + t_5 \left(\frac{t_{-2}}{2}(2t_3-2t_4 + t_5 ) - \frac{t_{-1}}{2}(2t_3-t_5)+\frac{t_0}{3}(2t_4-5t_5)-\frac{t_1t_2}{18} \right),
\end{align}
    \label{eq:F1}
    which, as expected, has the form of \cite{PS26}.
\end{example}

\begin{example}[$\Roots = A_5$, $I = \{1,2,3\}$, $\pi=(2)$]\label{ex:A513}
Here we consider the same $Q_5$ and $I$ as in the previous example, but now the shift functions $f_i$ for $i \in I$ are
\begin{equation*}
  f_i = \epsilon_{3-i}(w_{-1}, w_{-2})|_{w_{-2} \mapsto w_{-1}} \implies    \lambda = \dfrac{w_0 \, Q_{5}}{\mu^{}(\mu-w_{-1})(\mu-w_{-2})}\Big|_{w_{-2}\mapsto w_{-1}}= \dfrac{w_0 \, Q_{5}}{\mu^{}(\mu-w_{-1})^2}.
\end{equation*}

A flat frame is given by $t_i$ as in \eqref{eq:exnegflats} for $i\leq 0$, together with
\begin{gather*}
    t_i = \tilde{t}_i, \quad \text{ for } i=1,2, 4,5, \\
    t_3 = -2 \left(\frac{w_0}{w_{-1}^3}(w_{-1}^6+1+\sum_{i=1}^{5}(-1)^iw_iw_{-1}^i)\right)^{\frac{1}{2}},  \qquad (\text{\ref{Eq:5.1a}}, \infty_i = w_{-1}, \alpha = 1),    
\end{gather*}
where $\underline{d}= \left(\frac{1}{3}, \frac{2}{3}, \frac{1}{2}, 1,1\right)$, and $\tilde{t}_i$ denotes $t_i$ as in  Example \ref{ex:A5123} after imposing $w_{-2}\mapsto w_{-1}$. Notice that the flat coordinates obtained from taking residues away from the  movable pole $w_{-1}$ correspond to limits of flat coordinates of the previous example. The fact that coordinates obtained from poles not involved in a collision in general have well-defined limits and give flat coordinates for the target Frobenius manifold is not difficult to prove analytically. The behaviour of the coordinates obtained from colliding poles is more involved and will be studied in the subsequent section.  

The associated prepotential is given by
\begin{align} 
F = \, & \,
-\frac{1}{4}e^{3t_{-1}-2t_0}
\left(t_3^2+\frac{4}{3}t_4\right)
-\frac{1}{4}e^{2t_{-1}-\frac{4}{3}t_0}
\left(t_3^2+2t_4\right)t_1
-\frac{1}{24}e^{t_{-1}-\frac{2}{3}t_0}
\left(t_3^2+4t_4\right)\left(t_1^2+t_2\right)
\\ \notag
\, & \,
-\frac{1}{6}e^{\frac{10}{3}t_0-2t_{-1}}
\left(t_1^2+t_2\right)
-\frac{1}{4}e^{4t_0-3t_{-1}}
\left(t_3^2-4t_4\right)
-\frac{1}{2}t_4^2\log t_3
+\frac{t_1^6}{19440}
-\frac{1}{3888}(t_1^4-3t_1^2t_2+t_2^2)t_2
\\ \notag
\, & \,
-\frac{1}{18}t_1t_2t_4
+\frac{t_3^4}{384}
-\frac{1}{8}t_3^2t_4
-\left(\frac{1}{2}t_{-1}-\frac{1}{3}t_0\right)t_4^2 -t_5 \left(t_{-1}(t_4-t_5)- \frac{t_0}{3}(2t_4-5t_5)+\frac{1}{18}t_1t_2+\frac{1}{4}t_3^2 \right),
    \label{eq:F2}
\end{align}
which has the form conjectured in \cite{MZ24}, for a superpotential of type $A$ with an order two movable pole. 
\end{example}

\begin{example}[$\Roots = A_7$, $I = \{1,2,3,4\}$, $\pi=(1,2)$]\label{ex:A71234}
In this case, we have:
\begin{equation*}
  f_i = \epsilon_{4-i}(w_{-1}, w_{-2}, w_{-3})|_{w_{-3} \mapsto w_{-1}} \text{ for } i \in I \implies    \lambda = \dfrac{w_0 \, Q_{7}}{\mu^{}(\mu-w_{-1})^2(\mu-w_{-2})}.
\end{equation*}
A flat frame is given by $t_i$ as in \eqref{eq:exnegflats} for $i\leq 0$, together with
\begin{gather*}
    t_1 = w_0^{\frac{1}{4}}(2w_{-2}+w_{-1}-w_7), \qquad (\ref{Eq:5.1a}, \infty_i = \infty, \alpha = 1),\\[1mm] 
    t_2 = w_0^{\frac{1}{2}}\left(
    8w_{-2}^2+3w_{-1}^2+4w_{-1}w_{-2}
    -2w_7(2w_{-2}+w_{-1})-w_7^2+4w_6
    \right), \qquad (\ref{Eq:5.1a}, \infty_i = \infty, \alpha = 2),\\[1mm]
    t_3 =\left(\frac{w_0 (1+w_{-1}^8+\sum_{i=1}^{7}(-1)^iw_{-1}^iw_i)}{w_{-1}^3(w_{-1}-w_{-2})}\right)^{\frac{1}{2}} \qquad (\ref{Eq:5.1a}, \infty_i = \infty, \alpha = 3),\\[1mm]
    t_4 = w_0^{\frac{3}{4}}\Big(
    77w_{-2}^3+126w_{-2}^2w_{-1}
    +180w_{-2}w_{-1}^2+280w_{-1}^3 
    -w_7(63w_{-2}^2+108w_{-2}w_{-1}+180w_{-1}^2) \\
    +(72w_6-9w_7^2)(w_{-2}+2w_{-1}) 
    -5w_7^3-96w_5+24w_6w_7
    \Big) \qquad (\ref{Eq:5.1a}, \infty_i = w_{-1}, \alpha = 1),\\[1mm]
    t_5=
w_0\Bigg(
\sum_{k=0}^{4}(k+1)w_{-2}^{4-k}w_{-1}^{k}
+
\sum_{i=4}^{7}(-1)^i w_i
\sum_{k=0}^{i-4}
(k+1)w_{-2}^{i-4-k}w_{-1}^{k}
\Bigg)
 - t_7, \qquad  (\ref{Eq:5.1c}, \infty_i  = \infty) - (\ref{Eq:5.1c}, \infty_i = 0),\\[1mm]
    t_6 =
    \frac{w_0}{w_{-1}^3(w_{-2}-w_{-1})^2}
\Bigg(
w_{-1}^8(5w_{-1}-6w_{-2})
+\sum_{i=1}^{7}(-1)^i w_i w_{-1}^i
\big((i-3)w_{-1}-(i-2)w_{-2}\big)
+2w_{-2}-3w_{-1}
\Bigg), \\ (\ref{Eq:5.1c}, \infty_i = w_{-1}),\\[1mm]
    t_7 = \dfrac{w_0(w_{-2}(w_{-1}w_1-2)-w_{-1})}{w_{-2}^2w_{-1}^3}, \qquad (\ref{Eq:5.1c}, \infty_i = 0),
\end{gather*}
with $\underline{d} = \left(\frac{1}{4}, \frac{1}{2}, \frac{1}{2}, \frac{3}{4}, 1, 1, 1\right)\,$. Again, we have taken a linear combination of flat coordinates to make the unit vector field $\partial_7$. Moreover, 
\begin{align}
F ={}& 
-\frac{1}{4}e^{4t_{-2}-3t_0}(t_5-t_6)
-\frac{1}{4}e^{4t_{-1}-3t_0}(4t_3^2+t_6)
-\frac{1}{3}e^{3t_{-2}-\frac{9}{4}t_0}t_1(t_5-t_6)
-\frac{1}{3}e^{3t_{-1}-\frac{9}{4}t_0}t_1(3t_3^2+t_6)
\\\notag
&\,
-\frac{1}{8}e^{2t_{-2}-\frac{3}{2}t_0}
(t_1^2+t_2)(t_5-t_6)
-\frac{1}{8}e^{2t_{-1}-\frac{3}{2}t_0}
(t_1^2+t_2)(2t_3^2+t_6)
-\frac{1}{96}e^{t_{-2}-\frac{3}{4}t_0}
(t_1^3+6t_1t_2+t_4)(t_5-t_6)\\ \notag & \, 
-\frac{1}{96}e^{t_{-1}-\frac{3}{4}t_0}
(t_1^3+6t_1t_2+t_4)(t_3^2+t_6)
+\frac{1}{96}e^{-t_{-2}-2t_{-1}+\frac{17}{4}t_0}
(t_1^3+6t_1t_2+t_4)
-e^{-2t_{-2}-2t_{-1}+5t_0}(t_5-t_6)
\\ \notag
&\,
+e^{-t_{-2}-3t_{-1}+5t_0}(t_3^2-t_6)
+\frac{t_3^2}{e^{t_{-2}}-e^{t_{-1}}}
\left(
e^{t_{-1}}t_5
-\frac{e^{t_{-2}}+e^{t_{-1}}}{2}t_6
\right)
+\log(e^{t_{-2}}-e^{t_{-1}})\,t_6(t_6-t_5)
\\\notag 
&\,-\frac{1}{2}t_6^2\log t_3
-\left(\frac{1}{2}t_{-2}-\frac{3}{8}t_0\right)t_5^2
+t_{-2}t_5t_6
-\frac{1}{2}(t_{-2}+t_{-1})t_6^2
-\frac{1}{2}(t_5-t_6)^2\log(t_5-t_6)
-\frac{1}{128}t_2^2t_5
+\frac{t_3^4}{24}\\ \notag & \, 
+\frac{1}{4128768}
\left(
t_1^8-14t_1^6t_2+84t_1^4t_2^2
-168t_1^2t_2^3+84t_2^4
\right)
+\frac{1}{2949120}
\left(
t_1^5-20t_1^3t_2+60t_1t_2^2
+5(t_1^2-2t_2)t_4
\right)t_4 
\\ \notag
&\,
+t_7\left(
-\frac{1}{128}t_2^2
-t_3^2
-\frac{1}{384}t_1t_4
+\left(-t_{-2}+\frac{3}{4}t_0\right)t_5
+(t_{-2}-t_{-1})t_6
+\left(\frac{1}{2}t_{-2}+t_{-1}-\frac{17}{8}t_0\right)t_7
\right).
\end{align}
Notice in particular that this prepotential includes rational terms in the difference, $e^{t_{-2}}-e^{t_{-1}}$, between the two movable poles. Such terms are not present in the predicted forms of \cites{PS26, Zuo20, MZ24}. 
\end{example}

\begin{example}[$\Roots = A_7$, $I = \{1,2,3,4\}$, $\pi=(3)$]\label{ex:A71234deep}
\begin{equation*}
    f_i = \epsilon_{4-i}(w_{-1}, w_{-2}, w_{-3})|_{w_{-3}, w_{-2} \mapsto w_{-1}} \implies \lambda = \dfrac{w_0 \, Q_{7}}{\mu(\mu-w_{-1})^3}\,.
\end{equation*}
A flat frame is given by $t_i$ as in \eqref{eq:exnegflats} for $i \neq 0$, together with
\begin{gather*}
t_1=-\widetilde t_1,\quad
t_j=\widetilde t_{j-1}, \, \text{ for } j \in \{3,5, 6\}\quad
t_7 = \widetilde t_7, \\
t_2
=
\left(
\frac{
w_0\left(
w_{-1}^8+1+\sum_{i=1}^{7}(-1)^i w_i w_{-1}^i
\right)
}{
w_{-1}^4
}
\right)^{\frac13}, \qquad (\ref{Eq:5.1a}, \infty_i = w_{-1}, \alpha = 1),
 \\
t_4
=
\frac{
w_0^{\frac{2}{3}}
\left(
11w_{-1}^8-5+
\sum_{i=1}^{7}(2i-5)(-1)^i w_iw_{-1}^i
\right)
}{
6w_{-1}^{\frac{8}{3}}
\left(
w_{-1}^8+1+\sum_{i=1}^{7}(-1)^iw_iw_{-1}^i
\right)^{\frac{1}{3}}
}, \qquad (\ref{Eq:5.1a}, \infty_i = w_{-1}, \alpha = 2),
\end{gather*}
where $\tilde{t}_i$ denotes $t_i$ as in  Example 4.3 after imposing $w_{-2}\mapsto w_{-1}$,  $\underline{d} = \left(\frac{1}{4}, \frac{1}{3}, \frac{1}{2}, \frac{2}{3}, \frac{3}{4}, 1,1 \right)$, and the first line of coordinates are obtained from $(\ref{Eq:5.1a}, \infty_i=\infty, \alpha=1), \,  (\ref{Eq:5.1a}, \infty_i=\infty, \alpha=2),  \, (\ref{Eq:5.1a}, \infty_i=\infty, \alpha=2), \, (\ref{Eq:5.1a}, \infty_i=\infty, \alpha=3), \, (\ref{Eq:5.1c}, \infty_i=\infty)-(\ref{Eq:5.1c}, \, \infty_i=0), (\ref{Eq:5.1c}, \infty_i=0)$, respectively. Moreover, 
\begin{align}
F ={}&
-\frac{1}{4}e^{4t_{-1}-3t_0}
\left(8t_2^3+12t_2t_4+t_6\right)
+\frac{1}{6}e^{3t_{-1}-\frac{9}{4}t_0}
t_1\left(9t_2^3+18t_2t_4+2t_6\right)
\\ \notag
&\,
-\frac{1}{8}e^{2t_{-1}-\frac{3}{2}t_0}
(t_1^2+t_3)\left(2t_2^3+6t_2t_4+t_6\right)
+\frac{1}{192}e^{t_{-1}-\frac{3}{4}t_0}
(t_1^3+6t_1t_3-t_5)
\left(t_2^3+6t_2t_4+2t_6\right)
\\ \notag
&\,
-\frac{1}{96}e^{-3t_{-1}+\frac{17}{4}t_0}
(t_1^3+6t_1t_3-t_5)
-\frac{1}{2}e^{-4t_{-1}+5t_0}
(t_2^3-6t_2t_4+2t_6)
-\frac{1}{2}t_6^2\log t_2
-\frac{3t_4^2t_6}{2t_2}
+\frac{3t_4^4}{4t_2^2}
\\ \notag
&\,
+\left(-\frac{1}{2}t_{-1}+\frac{3}{8}t_0\right)t_6^2
+\frac{1}{4128768}
\left(
t_1^8-14t_1^6t_3+84t_1^4t_3^2-168t_1^2t_3^3
\right)
-\frac{1}{2949120}
\left(
t_1^5-20t_1^3t_3+60t_1t_3^2
\right)t_5\\ \notag
&\,
+\frac{1}{589824}t_1^2t_5^2
+\frac{1}{384}t_1t_5t_6
+\frac{t_2^6}{960}
-\frac{1}{24}t_2^3t_6
+\frac{3}{8}t_2^2t_4^2
-\frac{3}{2}t_2t_4t_6
+\frac{t_3^4}{49152}
-\frac{t_3^2t_6}{128}
-\frac{t_3t_5^2}{294912}
\\ \notag
&\,
+t_7\left(
\frac{1}{384}t_1t_5
-3t_2t_4
-\frac{1}{128}t_3^2
+\left(-t_{-1}+\frac{3}{4}t_0\right)t_6
+\left(\frac{3}{2}t_{-1}-\frac{17}{8}t_0\right)t_7
\right).
\end{align}
Notice the presence of rational terms in $t_2$ as conjectured by \cite{MZ24}, which were not present in Example \ref{ex:A513}. This is due to the latter being associated to a movable pole of order two only. In the next section we show that such rational dependence will arise whenever we have a movable pole in the superpotential of order strictly larger than two, or in other words, when we consider the collision of more than two movable poles in the superpotential associated with the longest partition.
\end{example}

\section{Description of the Frobenius structures on the Hurwitz boundary}\label{sec:polecollision}
We have seen that all Landau--Ginzburg superpotentials we obtain from shifting the fundamental characters in the $A_\ell$ spectral curve are of the form in \cref{eq:lambdaAmult}. In particular, for a fixed initial node $\bar{k}:=\min I\,$, different superpotentials correspond to different partitions of $p\,$, which is the number of nodes after $\bar{k}$ we are shifting. In particular, these are all rational maps of degree $\ell+1\,$. For notational convenience, in this section we fix $\hat{\ell},\bar{k},p\in\ZZ_{\geq 0}$ so that $\ell:=\hat{\ell}+\bar{k}+p$ and we denote by:
\[
\Rat_{\ell+1}:=\mathbb{P}\bigl(H^0(\RS,\Hol(\ell+1))^{\oplus 2}\bigr)\smallsetminus\{\mathrm{RES}=0\}\,\subseteq \mathbb{P}^{2\ell+3}\,,
\]
the space of rational maps of degree $\ell+1\,$, where $\mathrm{RES}$ is the resultant. This is a quasi-projective variety of dimension $2\ell+3\,$. For any fixed $\hat{\ell}, \bar{k},p$, the superpotentials in  \cref{eq:lambdaAmult}  have poles at $\infty$ and zero of order $\hat{\ell}+1$, $\bar{k}$ respectively, in addition to other poles in the complement $\RS\smallsetminus\{0,\infty\}\cong\CC^\ast$ so that their orders sum up to $p\,$. Notice that, unless $p=0\,$ (i.e. in the  Dubrovin--Zhang case), the reflection symmetry about the centre of the type $A$ Dynkin diagram does not allow one to swap the poles at $\infty$ and $0\,$. Hence, before taking the quotient by the action of the group $\PSL_2\equiv \PSL_2(\CC)$ of automorphisms of the source $\RS\,$ to get a Hurwitz space, we are looking at rational functions whose pole configuration is given by:
\[
\mathrm{PConf}_p:=\bigl\{\bigl((X,Y),D\bigr)\in\bigl((\RS)^2\smallsetminus\mathrm{diag}\bigr)\times \Div^p(\RS)\,:\, X,Y\not\in\mathrm{supp}(D)\,\bigr\}\,,
\]
where $\Div^n(\RS)$ denotes the space of degree-$n$ effective divisors on $\RS\,$. Our type $A$ generalised Dubrovin--Zhang superpotentials correspond to the degree-$(\ell+1)$ rational function whose pole divisor is $(\hat{\ell}+1)\,\infty+\bar{k}\,0+D\,$. In other words, we need to rigidify our space so that it keeps track of the positions of the first two poles. From the Hurwitz Frobenius manifold point of view, they are marked by the fact that the primary differential only has poles at them, while being holomorphic at the movable poles of the superpotential. The additional information is introduced by defining the fibre product:
\begin{equation*}
\Rat_{\hat{\ell},\bar{k}\,;\,p}:=\Rat_{\ell+1}\,\times_{\Div^{\ell+1}(\RS)}\, \mathrm{PConf}_p\,,
\end{equation*}
with respect to the maps:
\[
\begin{aligned}
\begin{aligned}
        \rm{pole}:\Rat_{\ell+1}&\to\Div^{\ell+1}(\RS)\\
    (A,B)&\mapsto (B)
\end{aligned}\,,\qquad\begin{aligned}
    \mathrm{ad}_{\hat{\ell},\bar{k}}:\mathrm{PConf}_p&\to \Div^{\ell+1}(\RS)\\
    \bigl((X,Y),\,D\bigr)&\mapsto D+(\hat{\ell}+1)\,X+\bar{k}\,Y
\end{aligned}\,.
\end{aligned}
\]
The fibre product is also a quasi-projective variety of dimension:
\[
\dim  \Rat_{\hat{\ell},\bar{k}\,;\,p}=\dim \Rat_{\ell+1}+\dim \mathrm{PConf}_p-\dim \Div^{\ell+1}(\RS)=\ell+1+p+3\,.
\]

Now, the space $\Div^p(\RS)$ can be stratified by partitions $\pi=(m_1,\dots, m_{\ell(\pi)})\vdash p\,$. In particular, for any such partition we consider the subset $\Div^p_\pi(\RS)$ of degree-$p$ effective divisors on $\RS$ of the form $D=\sum_{i=1}^{\ell(\pi)}m_i\,x_i\,$, for pair-wise distinct $x_1,\dots, x_{\ell(\pi)}\in \RS\,$. Clearly, $\Div^p(\RS)=\bigsqcup_{\pi\vdash p}\Div^p_\pi(\RS)\,$. This induces a similar partition of the pole configuration space into disjoint subsets $\mathrm{PConf}_p(\pi)\,$, and therefore of the fibre product:
\begin{equation*}
    \begin{aligned}
\Rat_{\hat{\ell},\bar{k}\,;\,p}&=\bigsqcup_{\pi\vdash p}\Rat_{\hat{\ell},\bar{k}}(\pi)\,,\\
        \Rat_{\hat{\ell},\bar{k}}(\pi)&:=\Rat_{\ell+1}\,\times_{\Div^{\ell+1}(\RS)}\, \mathrm{PConf}_p(\pi)\,.
    \end{aligned}
\end{equation*}
Each stratum is itself a quasi-projective variety with $\dim \Rat_{\hat{\ell},\bar{k}}(\pi)=\ell+1+\ell(\pi)+3\,$. As expected, the top-dimensional stratum comes from choosing the longest partition $\pi=(1^p)\,$, and it is not difficult to see that:
\begin{equation*}
    \overline{\Rat_{\hat{\ell},\bar{k}}(\pi)}=\bigsqcup_{\pi' \text{coarser than } \pi}\Rat_{\hat{\ell},\bar{k}}(\pi')\,,
\end{equation*}
where the bar denotes the Zariski closure.

Finally, the group of M\"{o}bius maps $\PSL_2$ acts by pre-composition on $\Rat_{\ell+1}\,$, and clearly on the pole configuration space. Therefore, we consider, for any $\pi\vdash p\,$, the stack quotients:
\begin{equation*}
\Hw_{\hat{\ell},\bar{k}\,;\,p}:=\bigl[\Rat_{\hat{\ell},\bar{k}\,;\,p}\,/\,\PSL_2\bigr]\,,\qquad  \Hw_{\hat{\ell},\bar{k}}(\pi):=\bigl[\Rat_{\hat{\ell},\bar{k}}(\pi)\,/\,\PSL_2\bigr]\,.
\end{equation*}
In particular, the former is the disjoint union of the latter as $\pi$ ranges over all partitions of $p\,$. Their coarse moduli spaces are also quasi-projective varieties, of dimensions reduced by $3$ with respect to the previous expressions, i.e.:
\[
\begin{aligned}
    \dim  \Hw_{\hat{\ell},\bar{k}\,;\,p}&=\ell+1+p\,,&&& \dim  \Hw_{\hat{\ell},\bar{k}}(\pi)&=\ell+1+\ell(\pi)\,.
\end{aligned}
\]
In this sense, for any fixed partition, $\Hw_{\hat{\ell},\bar{k}}(\pi)$ is a Hurwitz space -- with additional information coming from remembering the positions of the movable poles -- and  it admits Frobenius manifold structures given after choosing a suitable primary differential.  Note that, while the number of admissible differentials varies from stratum to stratum, the third kind primary differential $\phi \equiv\dd\log\mu$ is well-defined and admissible on all strata, therefore it defines a Frobenius manifold on each of them, thus confirming \cref{thm:FrobAsimple,thm:FrobAmult}.
\begin{remark}
    In the standard Hurwitz theory convention, the integers in the partition $\underline{n}$ that determine the ramification profile over $\infty$ represent the ramification indices, rather than the order of the points as poles of the meromorphic function\footnote{The ramification index is one less than the corresponding order.}. In this convention, $\pi$ is a non-negative integer partition representing the ramification indices over $\infty\,$, and its entries sum up to $p-\ell(\pi)\,$. Since the length of the partition does not change, however, neither do the dimension formulae.
\end{remark}

The top-dimensional stratum $\Hw_{\hat{\ell},\bar{k}}^{\rm{(top)}}:=\Hw_{\hat{\ell},\bar{k}}(1^p)\,$ corresponds to the case where all the movable poles are simple. The associated Frobenius structure admits an explicit description in terms of its flat coordinates \cite{PS26}. The aim of this section is to study the behaviour of the structure restricted to boundary strata, i.e under  subsequent coarsenings of the longest partition $(1^p)\,$, or in other words, under coalescence of the movable poles of \eqref{eq:lambdaAsimple} to obtain \eqref{eq:lambdaAmult}. To do so, however, requires some results from  complex and functional analysis.

%%%%%%%%%%%%%%%%%%%%%%%%%%%%%%%%%%%%%%%%%%%%%%%%%%%%%%%%%%%%%%%

\subsection{Pole collision framework}
For $m\leq p\, $, we consider a sequence of functions $\{\lambda_n\}_{n\geq 0}\subseteq \Hw_{\hat{\ell},\bar{k}}^{\rm{(top)}}$ in the top-dimensional stratum, which converges -- at this stage pointwise away from its singularities -- to a rational function $\lambda\in \Hw_{\hat{\ell},\bar{k}}(m,1^{p-m})$ in a deeper stratum in the boundary of $\Hw_{\hat{\ell},\bar{k}}^{\rm{(top)}}\,$. In particular, the limit function has a pole of order $m$ in $\CC^\ast\,$, and $p-m$ remaining simple poles. By partial fraction decomposition, in an affine chart we can uniquely write a function $\lambda_n$ in the sequence as:
\begin{equation}\label{eq:lambdanresiduepole}
    \lambda_n(z)=h(z)+\frac{r_1(n)}{z-p_1(n)}+\dots+\frac{r_m(n)}{z-p_m(n)}\,,
\end{equation}
for some mutually disjoint sequences $(p_1(n),\dots, p_m(n))\in(\CC^\ast)^m\smallsetminus\diag$ representing the colliding poles of the rational function $\lambda_n$ and $\{r_1(n),\dots, r_m(n)\}_{n\geq 0}$ being the sequences of residues. Here, $h$ is a meromorphic function on $\RS$ encoding the poles at $0\,$, $\infty$ and the remaining $p-m$ simple poles. In particular, $h$ is holomorphic at the other poles of $\lambda_n$ and is unaffected by the collision. 

Since we want the limit function to have an order-$m$ pole coming from collision of the $m$ simple poles we have pinpointed, this forces the sequences $\{p_1(n),\dots p_m(n)\}_{n\geq 0}$ to converge to the same point $p^\ast\in\CC^\ast$ as $n\to\infty\,$. The natural parametrisation of the moving simple poles of $\lambda_n$ by their locations and residues as in \cref{eq:lambdanresiduepole} is not well-defined on the boundary. Therefore, it will be convenient to introduce a better-suited coordinate system. To this end, we define:
\begin{equation}\label{eq:collisionquantities}
    \begin{aligned}
        p^\ast&:=\tfrac1m\sum_{j=1}^m p_j\,;&&&
        \Delta_i&:=p_i-p^\ast\,,&&&  
        c_i&:=\sum_{j=1}^mr_j \Delta_j^{i-1}\,,&&& i&=1,\dots, m\,.
    \end{aligned}
\end{equation}
Note that the first quantity, $p^\ast$, represents the location of the higher-order pole in the lower-dimensional stratum. We call this the  \textit{collision point}. The differences $\Delta_i\,$ measure the distance between a simple pole and the collision point for finite $n\,$, and therefore the limit stratum corresponds to their vanishing set. As the displacements $\Delta_i$ satisfy $\Delta_1+\dots+\Delta_m=0\,$,   when defining a suitable coordinate system on the top-dimensional stratum, we replace these with the elementary symmetric polynomials $e_2,\dots, e_m$ generating the coordinate ring $\CC[\Delta]^{S_m}/<\Delta_1+\dots+\Delta_m>\,$. Clearly, the lower-dimensional stratum will still be the vanishing set of these polynomials. Notice that the remaining quantities are linear combinations of the residues with coefficients given by the entries of the Vandermonde matrix $V_\Delta $ in $\Delta_1,\dots, \Delta_m\,$. 
\begin{defn}
The \textit{collision coordinates} $(p^\ast,c_1,\dots, c_m,e_2,\dots, e_m)$ in the collision cluster of size $m$ on the top dimensional stratum $\Hw_{\hat{\ell},\bar{k}}^{\rm{(top)}}$ are defined, in terms of the residue-pole affine chart, as follows: $p^\ast,c_1,\dots, c_m$ are as in \cref{eq:collisionquantities}, while $e_2,\dots, e_m$ are the elementary symmetric polynomials in $\Delta_1,\dots, \Delta_m\,$, ordered so that $\deg e_i=i\,$.
\end{defn}
\begin{prop}
    The transition function between the residue-pole coordinates and the collision coordinates is singular on the limit stratum $\Hw_{\hat{\ell},\bar{k}}(m,1^{p-m})$. In particular, if $J$ denotes the Jacobian matrix:
    \[
    \det J\propto\biggl[\prod_{1\leq i <j\leq m}\bigl(p_i-p_j\bigr)\biggr]^2
    \]
\end{prop}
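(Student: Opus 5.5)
The plan is to factor the transition map
\[
(p_1,\dots,p_m,r_1,\dots,r_m)\longmapsto(p^\ast,e_2,\dots,e_m,c_1,\dots,c_m)
\]
through the intermediate chart $(p^\ast,\Delta_1,\dots,\Delta_m,r_1,\dots,r_m)$ and compute each Jacobian factor in turn. Only the collision cluster matters: the other coordinates (those of $h$) are untouched, so their block of $J$ is the identity.

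\textbf{Block-triangular structure.} Order the variables as poles first, then residues. The new quantities $p^\ast$ and $e_2,\dots,e_m$ depend only on the $p_j$, not on the $r_j$. Hence $J$ is block lower-triangular, and
\[
\det J=\det J_{\text{pole}}\cdot\det\Bigl(\tfrac{\partial c_i}{\partial r_j}\Bigr).
\]

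\textbf{Residue block.} By \cref{eq:collisionquantities}, $\partial c_i/\partial r_j=\Delta_j^{i-1}$. This is exactly the Vandermonde matrix $V_\Delta$, so
\[
\det V_\Delta=\prod_{i<j}(\Delta_j-\Delta_i)=\prod_{i<j}(p_j-p_i),
\]
since the differences of the $\Delta$'s equal the differences of the $p$'s.

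\textbf{Pole block.} First pass from $(p_1,\dots,p_m)$ to $(p^\ast,\Delta_1,\dots,\Delta_{m-1})$, using $\Delta_m=-\sum_{i<m}\Delta_i$. This is an invertible linear map with constant determinant, which can be checked directly.

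Next, the map $(\Delta_1,\dots,\Delta_m)\mapsto(e_1,\dots,e_m)$ has Jacobian determinant $\pm\prod_{i<j}(\Delta_i-\Delta_j)$; this is the classical result that the Jacobian of the elementary symmetric polynomials is the Vandermonde determinant. We need this restricted to the hyperplane $e_1=0$ with coordinates $\Delta_1,\dots,\Delta_{m-1}$. To get it, write $(p_1,\dots,p_m)\mapsto(e_1(p),\dots)$. Under translation by $p^\ast$, the elementary symmetric polynomials of the $p$'s become polynomials in $p^\ast$ and the $e_k(\Delta)$ with unitriangular dependence, namely $e_k(p)=\sum_j\binom{m-j}{k-j}(p^\ast)^{k-j}e_j(\Delta)$.

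The map $(p_i)\mapsto(e_k(p))$ therefore factors as
\[
(p_i)\mapsto(p^\ast,e_2(\Delta),\dots,e_m(\Delta))\mapsto(e_k(p)),
\]
with $e_1(p)=m\,p^\ast$. The second map has constant nonzero Jacobian, because it is triangular with diagonal entries $m,1,\dots,1$. Consequently
\[
\det J_{\text{pole}}\propto\det\Bigl(\tfrac{\partial e_k(p)}{\partial p_j}\Bigr)=\pm\prod_{i<j}(p_i-p_j).
\]

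\textbf{Conclusion.} Multiplying the two blocks gives $\det J\propto\prod_{i<j}(p_i-p_j)^2$. This vanishes exactly on the diagonal locus, that is, on the limit stratum where the $m$ poles coalesce, so the transition is singular there.

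The main obstacle is the pole block: $e_1(\Delta)=0$ is imposed as a constraint, so the Vandermonde Jacobian cannot be quoted verbatim. The factorisation through $e_k(p)$ above is how I would handle this, and the point to double-check is that the triangular change $(p^\ast,e_\bullet(\Delta))\leftrightarrow e_\bullet(p)$ really has constant determinant.
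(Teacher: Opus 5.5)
Your proof is correct. It shares the paper's skeleton: since $p^\ast,e_2,\dots,e_m$ do not involve the residues, $J$ is block-triangular and $\det J=\det V_\Delta\cdot\det J_{\text{pole}}$. Your placement of $V_\Delta=\bigl(\Delta_j^{i-1}\bigr)$ under the residue columns is the right one.

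The routes differ in how the pole block is handled. The paper argues by divisibility: $\det J_{\text{pole}}$ is a polynomial of degree at most $\tfrac12 m(m-1)$ that vanishes whenever two poles coincide, hence a constant multiple of the Vandermonde polynomial. You instead factor $(p_i)\mapsto(e_k(p))$ through $(p^\ast,e_2(\Delta),\dots,e_m(\Delta))$ and quote the classical Jacobian of the elementary symmetric polynomials. The point you flagged does hold: by your binomial identity the second map is lower-triangular in the order $(p^\ast,e_2,\dots,e_m)$ against $(e_1(p),\dots,e_m(p))$, with diagonal $(m,1,\dots,1)$, so its determinant is $m$.

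The paper's argument is shorter and self-contained. Strictly, though, it only shows that the determinant is \emph{some} multiple of the Vandermonde, and a separate check that the multiple is nonzero is still needed. Your factorisation fixes the constant, giving $\det J=\pm\tfrac1m\prod_{i<j}(p_i-p_j)^2$, so $J$ is genuinely invertible off the diagonal. The cost is relying on the classical identity.

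Two small points. First, the preliminary passage to $(p^\ast,\Delta_1,\dots,\Delta_{m-1})$ and the discussion of restricting to $e_1=0$ are unnecessary once you use the factorisation. Second, the zero locus of $\prod_{i<j}(p_i-p_j)^2$ is the whole big diagonal, i.e.\ any two poles coinciding. That locus contains the limit stratum, where all $m$ poles coalesce, rather than being equal to it.
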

\begin{proof}
      It is easy to see that:
    \[
    J=\pdv{(p^\ast,e,c)}{(p_1,\dots,p_m,r_1,\dots,r_m)}=\mqty[ \tfrac{1}{m}\mathbf{1}&0\\ E & 0\\
    V_\Delta  & \ast]\,,
    \]
    where $\mathbf{1}$ is the row with $m$ entries equal to one.     It follows that:
    \[
    \det J=\det V_\Delta \,\det \mqty[\tfrac{1}{m}\mathbf{1}&0\\ E & 0].
    \]
     The second determinant is a polynomial in $\Delta$ of degree at most $\tfrac12m(m-1)\,$, which vanishes whenever two variables coincide. It follows that it is a multiple of the Vandermonde polynomial, which proves the statement.
\end{proof}
\begin{remark}
This setup can be extended to any given collision. In particular if, from the top-dimensional stratum, we want to land in the stratum $\Hw_{\hat{\ell},\bar{k}}(\pi)\,$, for some partition $\pi=(m_1,\dots, m_{\ell(\pi)})\,\vdash p\,$, then we divide the residue-pole pairs $(p_i,r_i)$ for the simple poles into $\ell(\pi)$ subsets of size $m_1,\dots, m_{\ell(\pi)}\,$, and change coordinates to the collision chart relative to each of these subsets. The target stratum $\Hw_{\hat{\ell},\bar{k}}(\pi)$ coincides with the simultaneous vanishing set of all the elementary-symmetric-polynomial coordinates from all such subsets. The coordinate transformation splits up in blocks of size $m_1,\dots, m_{\ell(\pi)}\,$, and so does the Jacobian matrix. Hence, it is sufficient to study explicitly the case of a single collision cluster.
\end{remark}

The role of the coordinates $c_1,\dots, c_m$ is highlighted by the following key result, which also makes our intuitive discussion above precise.
\begin{thm}\label{prop:polecoalescence}
    Let $\{p_1(n),\dots, p_m(n)\}_{n\geq 0}$ be sequences of complex numbers converging to the same limit $p^\ast\in\CC\,$, and let $h$ be a meromorphic function holomorphic at $p^\ast\,$. Consider the sequence of meromorphic functions $\{\lambda_n\}_{n\geq 0}$ defined by:
\begin{equation}\label{eq:sequenceoffunctions}
    \lambda_n(z)=h(z)+\frac{r_1(n)}{z-p_1(n)}+\dots+\frac{r_m(n)}{z-p_m(n)}\,,
\end{equation}
where $r_i(n)=\Res_{p_i(n)}\{\lambda_n\dd{z}\}\,$.

Then:
\begin{enumerate}
    \item Assume that $\{\lambda_n\}_{n\geq 0}$ converges uniformly to a function $\lambda$ on compact sets not containing $p^\ast\,$ or poles of $h\,$.
Then $\lambda$ is meromorphic and of the form
    \begin{equation}\label{eq:limitfunction}
            \lambda(z)=h(z)+\sum_{k=1}^m\frac{c_k}{(z-p^\ast)^k}\,,
    \end{equation}
    where:
    \[
    \begin{aligned}
c_k=\lim_{n\to\infty}c_k(n)\equiv\lim_{n\to\infty}\sum_{i=1}^mr_i(n)\,\Delta^{k-1}_i(n)\,.
    \end{aligned}
    \]
    \item Conversely, assuming that the sequences $c_1(n),\dots, c_m(n)$ have finite limits $c_1,\dots, c_m\in\CC$ as $n\to\infty\,$ -- with $c_m\neq 0$ --
then $\{\lambda_n\}_{n\geq 0}$ converges to \cref{eq:limitfunction} uniformly on compact subsets not containing $p^\ast\,$ or poles of $h\,$.
\end{enumerate}
\end{thm}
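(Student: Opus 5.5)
The whole argument rests on expanding each simple-pole term around the collision point $p^\ast_n:=p^\ast(n)=\tfrac1m\sum_j p_j(n)$ rather than around the limit $p^\ast$. For $|z-p^\ast_n|>\max_i|\Delta_i(n)|$ the geometric series gives
\[
\frac{r_i(n)}{z-p_i(n)}=\frac{r_i(n)}{(z-p^\ast_n)-\Delta_i(n)}=\sum_{k\geq 1}\frac{r_i(n)\,\Delta_i(n)^{k-1}}{(z-p^\ast_n)^k}\,.
\]
Summing over $i$, the singular part of $\lambda_n$ at the cluster is the Laurent series $\sum_{k\geq1}c_k(n)\,(z-p^\ast_n)^{-k}$, with $c_k(n)=\sum_i r_i(n)\Delta_i(n)^{k-1}$ defined for every $k\ge1$, not only for $k\le m$. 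Equivalently, $c_k(n)=\frac{1}{2\pi i}\oint_{|z-p^\ast_n|=\rho}(\lambda_n-h)(z)\,(z-p^\ast_n)^{k-1}\dd z$ for any small $\rho$ enclosing the cluster. This contour representation is the tool for both directions.

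For part (1), fix a small circle $\gamma_\rho$ of radius $\rho$ around $p^\ast$ which avoids the poles of $h$. For $n$ large the cluster lies inside it, and $p^\ast_n\to p^\ast$ because every $p_i(n)\to p^\ast$. Uniform convergence on $\gamma_\rho$ then lets me pass to the limit in the contour formula, so $c_k:=\lim c_k(n)$ exists for every $k$ and equals $\frac1{2\pi i}\oint_{\gamma_\rho}(\lambda-h)(z-p^\ast)^{k-1}\dd z$. The function $\lambda-h$ is holomorphic on the punctured disc, being a locally uniform limit of functions holomorphic there. These numbers are therefore its Laurent coefficients at $p^\ast$, and no positive powers appear: $\lambda_n-h\to0$ at infinity uniformly in $n$ and the same persists in the limit, so the principal part is the whole function. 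I would make this precise by letting $\rho$ grow, or by applying the same contour argument on a large circle.

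The main obstacle in part (1) is showing that $c_k=0$ for $k>m$, i.e. that the limit has a pole of order at most $m$ and not an essential singularity. I would handle this with Newton's identities. Let $e_j(n)$ be the elementary symmetric polynomials of the $\Delta_i(n)$. Each $\Delta_i$ satisfies $\Delta^m-e_1\Delta^{m-1}+\dots\pm e_m=0$, so multiplying by $r_i\Delta_i^{k-1}$ and summing over $i$ gives the linear recursion
\[
c_{k+m}(n)=\sum_{j=1}^m (-1)^{j+1}e_j(n)\,c_{k+m-j}(n)\,.
\]
Since $e_j(n)\to0$, and since the $c_{k}(n)$ for $k\le m$ are bounded because they converge, induction on $k$ gives $c_{k}(n)\to0$ for all $k>m$. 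The limit function is then exactly \eqref{eq:limitfunction}, and it is meromorphic.

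For part (2), the same recursion shows that $|c_k(n)|\le C\,\varepsilon_n^{\,k-m}$ for $k>m$, where $\varepsilon_n\to0$ controls $\max_j|e_j(n)|^{1/j}$, equivalently $\max_i|\Delta_i(n)|$. Alternatively one can bound the terms directly, $|c_k(n)|\le \sum_i|r_i|\,|\Delta_i|^{k-1}$. The catch is that the individual residues $r_i(n)$ may blow up. I would control this through the Vandermonde relation $c=V_\Delta^{\mathsf T} r$, which expresses the tail $c_k$ for $k>m$ in terms of $c_1,\dots,c_m$ alone; this is exactly the recursion again. Then on a compact set $K$ with $\operatorname{dist}(K,p^\ast)\ge\delta$, the series $\sum_{k}c_k(n)(z-p^\ast_n)^{-k}$ differs from $\sum_{k\le m}c_k(z-p^\ast)^{-k}$ by terms bounded by $\sum_{k\le m}|c_k(n)-c_k|\,\delta^{-k}$, plus the error from $p^\ast_n\to p^\ast$, plus the tail $\sum_{k>m}C\varepsilon_n^{k-m}\delta^{-k}$. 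All of these tend to zero uniformly on $K$, which gives the uniform convergence. The hypothesis $c_m\neq0$ is used only to guarantee that the limit genuinely lies in the stratum with an order-$m$ pole; it is not needed for the convergence itself.
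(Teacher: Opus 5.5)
Your proof is correct, but it is organised differently from the paper's. The paper never works with the full Laurent series. It clears denominators, setting $D_n=\prod_{i}(z-p_i(n))$ and $F_n=D_n(\lambda_n-h)$, which is a polynomial of degree at most $m-1$.

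\emph{Part (1).} After the same contour identification $c_k=\lim_n c_k(n)$ that you use, the paper bounds the pole order differently. It observes that $F_n\to(z-p^\ast)^m(\lambda-h)$, then applies the maximum principle and Montel's theorem to extend this limit holomorphically across $p^\ast$.

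\emph{Part (2).} The paper rewrites the coefficients of $F_n$, via Newton-type identities, as finite combinations of $c_1(n),\dots,c_m(n)$ and $e_k(\Delta)$. Then $F_n\to\sum_{k=0}^{m-1}c_{k+1}(z-p^\ast)^{m-1-k}$ and $D_n\to(z-p^\ast)^m$ finish the proof with no tail estimate.

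You instead keep all the moments $c_k(n)$, $k\geq 1$, and use the recurrence $c_{k+m}=\sum_{j}(-1)^{j+1}e_j\,c_{k+m-j}$ twice. In (1) it forces $c_k(n)\to 0$ for $k>m$; in (2) it gives the geometric bound $|c_k(n)|\leq C\varepsilon_n^{k-m}$ that controls the tail. The underlying algebra is the same. Your recurrence is exactly \cref{lem:polarisednewton}, and your tail bound is essentially the estimate the paper proves later in \cref{cor:generalpolecollapselinear}.

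What each version buys:
\begin{itemize}
\item The paper's version is finite-dimensional, needs no convergence estimates, and introduces the $D_n$, $F_n$ notation reused in the critical-point analysis.
\item Yours gives an explicit rate in $\delta_n$. It also tracks the moving centroid $p^\ast_n$ separately from the limit $p^\ast$, which the paper's contour step does not distinguish.
\item Yours makes explicit that $\lambda-h$ has no regular part at $p^\ast$; the paper leaves this implicit in the degree bound on $F_n$.
\end{itemize}

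For that last point, your phrase ``uniformly in $n$ at infinity'' is best replaced by the contour argument you hint at. For $j\geq 0$, the rational function $(\lambda_n-h)(z-p^\ast)^{-j-1}$ has all its poles inside $\gamma_\rho$ and is $O(z^{-2})$ at infinity. So its integral over $\gamma_\rho$ vanishes for every large $n$, and uniform convergence on $\gamma_\rho$ carries this to the limit.
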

\begin{proof}
\begin{enumerate}
    \item  Let $\gamma_r\,$ be a circle centred at $p^\ast$ with radius $r>0\,$. Since $\lambda_n$ converges uniformly to $\lambda$ on $\gamma_r\,$, we have, for any $k\in\ZZ_{\geq 1}\,$, that the principal Laurent coefficient $c_k$ of $\lambda-h$ at $p^\ast$ is given by
        \[
        \begin{aligned}
            c_k&=\tfrac{1}{i2\pi}\lim_{n\to\infty}\int_{\gamma_r}(z-p^\ast)^{k-1}\bigl(\lambda_n(z)-h(z)\bigr)\dd{z}\\
            &=\lim_{n\to\infty}\sum_{i=1}^m\Res_{p_i(n)}\bigl\{(z-p^\ast)^{k-1}\bigl(\lambda_n(z)-h(z)\bigr)\bigr\}=\lim_{n\to\infty}c_k(n)\,.
        \end{aligned}
        \]
        Now, let $D_n:=\prod_{i}(z-p_i(n))\,$. Clearly, $D_n$ is a sequence of entire functions having entire uniform limit $(z-p^\ast)^m$ on any compact set. As a consequence, each function in the sequence $\{F_n:=D_n(\lambda_n-h)\}_{n\geq 0}$ is holomorphic on any open neighbourhood of $p^\ast\,$ not containing poles of $h\,$, and converges uniformly to $(z-p^\ast)^m(\lambda-h)$ on any compact subset not containing $p^\ast$ or poles of $h\,$. Thus, by the maximum principle $\{F_n\}_{n\geq 0}$ is uniformly bounded on any open disk centred at $p^\ast\,$, and therefore, by Montel's Theorem, the limit function extends holomorphically to $p^\ast\,$. This proves that $\lambda$ has at most a pole of order $m$ at $p^\ast\,$. 
        \item Conversely, let $F_n:=D_n(\lambda_n-h)$ as in (1). Notice that:
        \[
        \begin{aligned}
            F_n(z)&=\sum_{i=1}^mr_i(n)\prod_{j\neq i}\bigl(z-p^\ast-\Delta_j(n)\bigr)\\
            &=\sum_{k=0}^{m-1}(z-p^\ast)^{m-1-k}\sum_{i=1}^m(-1)^{k} r_i e_{k}^{(i)}(\Delta) \,,
        \end{aligned}
        \]
        where $e_k^{(i)}(\Delta)$ denotes the elementary symmetric polynomial of degree $k$ in the $m-1$ variables $\Delta_1,\dots, \widehat{\Delta_i},\dots, \Delta_m\,$. It follows from Newton identities that $e_s^{(i)}=\sum_{r=0}^{s}(-1)^{r} \Delta_i^{r}e_{s-r}\,$ for $s=0,\dots,m-1\,$. Hence, $F_n$ further simplifies to:
         \[
        \begin{aligned}
            F_n(z)&=\sum_{k=0}^{m-1}(z-p^\ast)^{m-1-k}\sum_{r=0}^{k}(-1)^{k+r} e_{k-r}(\Delta)\, c_{r+1}(n)\,,
        \end{aligned}
        \]
        and only depends on the first $m$ sequences $c_1(n),\dots, c_m(n)\,$, which we have assumed to have finite limits. As a consequence, since for any $k>0\,$, $e_k(\Delta)\to 0$  as $n\to\infty\,$, $F_n$ converges uniformly on any compact set not containing poles of $h$ to the function:
        \[
        F(z):=\sum_{k=0}^{m-1}(z-p^\ast)^{m-1-k}c_{k+1}\,.
        \]
        Finally, since $D_n\to (z-p^\ast)^m$  uniformly on any compact set, it follows that $\lambda_n-h$ converges to $(z-p^\ast)^{-m}\,F(z)$ on any compact set that does not contain $p^\ast\,$ or poles of $h\,$.
\end{enumerate}
\end{proof}
 \begin{cor}\label{cor:existencesequenceuniformlimit}
        Let $p^\ast\in\CC$ and $\lambda$ be a meromorphic function of the form:
        \[
        \lambda(z)=\sum_{k=0}^m\frac{c_k}{(z-p^\ast)^k}+h(z)\,,
        \]
       with $h$ being a meromorphic function with no pole at $p^\ast\,$. There exist $m$ mutually disjoint sequences $\{p_1(n),\dots, p_m(n)\}_{n\geq 0}$ and a sequence of functions $\{\lambda_n\}_{n\geq 0}$ such that:
        \begin{itemize}
            \item $p_1(n),\dots, p_m(n)$ all converge to $p^\ast$ as $n\to\infty$,
            \item  $\lambda_n$ has simple poles at $p_1(n),\dots, p_m(n)\,$.
            \item $\lambda_n-h$ converges uniformly to $\lambda-h$ on any compact subset not containing $p^\ast\,$.
        \end{itemize}
    \end{cor}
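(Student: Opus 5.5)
The plan is to construct the approximating sequence explicitly and then apply part (2) of \cref{prop:polecoalescence}. Fix $m$ pairwise distinct auxiliary numbers $\delta_1,\dots,\delta_m\in\CC$ with $\sum_i\delta_i=0$; for instance, take the $m$-th roots of unity, which sum to zero when $m\geq 2$. Put $p_i(n):=p^\ast+\delta_i/n$, so that $\Delta_i(n)=\delta_i/n$. These $m$ sequences are mutually disjoint, and all of them converge to $p^\ast$, with collision point exactly $p^\ast$. The constant $c_0$ causes no trouble: absorb it into $h$, since adding a constant to $h$ changes neither the collision nor the hypotheses.

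The residues $r_i(n)$ are then chosen so that the collision quantities reproduce the prescribed Laurent coefficients exactly, i.e. $c_k(n)=c_k$ for every $n$ and $k=1,\dots,m$. By \cref{eq:collisionquantities} this is the linear system $\sum_{i=1}^m r_i(n)\,(\delta_i/n)^{k-1}=c_k$, whose matrix is the Vandermonde matrix $V_{\Delta(n)}$. Its determinant is $n^{-m(m-1)/2}\prod_{i<j}(\delta_j-\delta_i)\neq 0$, so a unique solution exists. Explicitly, $r(n)=V_\delta^{-1}\,\mathrm{diag}(1,n,\dots,n^{m-1})\,\underline{c}$. We then define $\lambda_n:=h+\sum_i r_i(n)/(z-p_i(n))$.

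To obtain the second bullet we must make sure that each $r_i(n)\neq 0$, so that every $p_i(n)$ really is a pole. This is the only genuinely delicate point, and it is also where the hypothesis $c_m\neq 0$ is used (if $c_m=0$, replace $m$ by the actual pole order). For large $n$ the dominant term of $r_i(n)$ is $n^{m-1}c_m\,(V_\delta^{-1})_{i,m}$. The last column of $V_\delta^{-1}$ consists of the entries $1/\prod_{j\neq i}(\delta_i-\delta_j)$, which are all nonzero. Hence $r_i(n)\neq 0$ for all sufficiently large $n$, and we can discard the finitely many initial terms.

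Finally, since $c_k(n)\equiv c_k$ has the finite limit $c_k$ and $c_m\neq 0$, part (2) of \cref{prop:polecoalescence} applies. It gives uniform convergence of $\lambda_n$, and hence of $\lambda_n-h$, to $\lambda$ (respectively $\lambda-h$) on compact sets avoiding $p^\ast$ and the poles of $h$. The bullet as stated asks for this only on compacts avoiding $p^\ast$, where $\lambda_n-h$ has no other poles. For that version, we run the argument of part (2) with $h$ replaced by $0$, using the formula for $F_n$ there.
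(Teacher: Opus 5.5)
Your proposal is correct and follows the paper's proof essentially verbatim. The paper also fixes a collision shape $\underline{\kappa}$ (your $\delta$), sets $p_i(n)=p^\ast+\kappa_i/n$, solves the rescaled Vandermonde system to get $r_i(n)=\sum_j n^{j-1}(V_\kappa^{-1})_{ij}c_j$, and invokes part (2) of \cref{prop:polecoalescence}. Your extra checks fill in details the paper leaves implicit: choosing $\sum_i\delta_i=0$ makes the centroid in \cref{eq:collisionquantities} exactly $p^\ast$, and the nonzero last column of $V_\delta^{-1}$ together with $c_m\neq 0$ guarantees $r_i(n)\neq 0$ for large $n$.
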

    \begin{proof}
         Let $(\kappa_1,\dots, \kappa_m)\in\CC^m\smallsetminus\diag$ and consider the sequences $p_i(n):=p^\ast+\tfrac1n\kappa_i\,$. By \cref{prop:polecoalescence}, if 
        \[
        \lambda_n(z):=h(z)+\sum_{i=1}^m\frac{r_i(n)}{z-p_i(n)}\,,
        \]
        then, to prove the statement, it is sufficient to show that we can prescribe the residues $r_1(n),\dots, r_m(n)$ at the simple poles of $\lambda_n$ in such a way that the corresponding sequences $c_1(n),\dots, c_m(n)$ converge to the principal Laurent coefficients $c_1,\dots, c_m$ of $\lambda$. 

         Now, if $\Delta_i:=p_i-p^\ast=\tfrac1n \kappa_i\,$. Then,
        \[
        \mqty(c_1(n)\\\vdots\\c_m(n))=V_\Delta \,\mqty(r_1(n)\\\vdots\\r_m(n))=\mqty[1 & 0&\dots  &0 \\
       0 &\tfrac1n &\dots &0\\
        \vdots&\vdots&\ddots &\vdots\\
       0 &0&\dots&\tfrac{1}{n^{m-1}}]\,V_\kappa\,\mqty(r_1(n)\\\vdots\\r_m(n))\,.
        \]
       Since we chose the constants $\kappa_1,\dots, \kappa_m$ to be pairwise distinct, the Vandermonde matrix $V_\kappa$ is invertible and it suffices to set:
        \[
        r_i(n):=\sum_{j=1}^m n^{j-1}\,\bigl(V_\kappa^{-1}\bigr)_{i j}c_j\,,
        \]
        to obtain the statement.
    \end{proof}
     \begin{cor}\label{cor:convergencederivatives}
             Let $p^\ast\in\CC$ and $\lambda$ be as in \cref{cor:existencesequenceuniformlimit}, and let $\{\lambda_n\}_{n\geq 0}$ be a sequence of functions with $m$ simple poles converging to $\lambda$ uniformly on compacts not containing $p^\ast\,$. 

For any $k\in\ZZ_{\geq 0}\,$,  $\bigl\{\lambda_n^{(k)}\,,\,\partial_p\lambda_n\,,\,\partial_{c_1}\lambda_n,\dots, \partial_{c_m}\lambda_n\bigr\}_{n\geq 0}$  converge to $\bigl\{\lambda^{(k)}\,,\,\partial_p\lambda\,,\,\partial_{c_1}\lambda,\dots, \partial_{c_m}\lambda\bigr\}$ uniformly on compact subsets in $\CC\smallsetminus\{p^\ast\}\,$.
\end{cor}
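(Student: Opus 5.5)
The plan is to reduce everything to the explicit representation obtained in the proof of \cref{prop:polecoalescence}(2), namely
\[
\lambda_n(z)-h(z)=\frac{F_n(z)}{D_n(z)}\,,\qquad F_n(z)=\sum_{k=0}^{m-1}(z-p^\ast)^{m-1-k}\sum_{r=0}^{k}(-1)^{k+r}e_{k-r}(\Delta)\,c_{r+1}(n)\,,\qquad D_n(z)=\prod_{i=1}^m\bigl(z-p_i(n)\bigr)\,.
\]
In this form $\lambda_n$ is an explicit rational function of $z$ and of the collision coordinates $(p^\ast,c,e)$, with coefficients polynomial in $(c,e)$. First, by part (1) of \cref{prop:polecoalescence}, uniform convergence forces $c_k(n)\to c_k$. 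Because $e_j(\Delta(n))\to 0$, the numerators and denominators converge in the coefficient topology of polynomials. The limit numerator is $F(z)=\sum_k c_{k+1}(z-p^\ast)^{m-1-k}$ and the limit denominator is $(z-p^\ast)^m$.

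For the $z$-derivatives $\lambda_n^{(k)}$ I would simply invoke the Weierstrass theorem. On a compact $K\subset\CC\smallsetminus\{p^\ast\}$ that avoids the poles of $h$, choose a slightly larger compact neighbourhood $K'$ with the same property. Uniform convergence on $K'$ together with the Cauchy integral formula gives uniform convergence of all derivatives on $K$. The $h$ part is unaffected, since $h$ does not depend on $n$.

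For the parameter derivatives I would differentiate the representation above with respect to $p^\ast$ and $c_1,\dots,c_m$, holding the $e_j$ fixed. Differentiating in $c_j$ gives
\[
\partial_{c_j}\lambda_n=\frac{\sum_{k\ge j-1}(z-p^\ast)^{m-1-k}(-1)^{k+j-1}e_{k-j+1}(\Delta)}{D_n(z)}\,,
\]
which converges uniformly on $K$ to $(z-p^\ast)^{-j}$. This is exactly $\partial_{c_j}\lambda$. The point is that $D_n$ depends on $p^\ast$ and the $e$'s but not on $c$, and that $D_n$ is bounded away from zero on $K$ for large $n$, because its roots tend to $p^\ast\notin K$.

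For $\partial_{p^\ast}$, write $D_n(z)=\sum_j(-1)^je_j(z-p^\ast)^{m-j}$. Then both $F_n$ and $D_n$ are polynomials in $w:=z-p^\ast$ with coefficients independent of $p^\ast$, so that $\lambda_n-h=R_n(z-p^\ast)$ and $\partial_{p^\ast}\lambda_n=-R_n'(z-p^\ast)$. This reduces the $p^\ast$-derivative to the $z$-derivative case already treated, and yields convergence to $-\partial_z(\lambda-h)=\partial_{p^\ast}\lambda$.

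The main obstacle is to fix precisely which coordinate system the partial derivatives refer to. In the naive residue–pole chart the derivatives diverge, in line with the Jacobian singularity proposition, so I would state explicitly that $\partial_p$ and $\partial_{c_i}$ are taken in the collision chart with the $e$'s held fixed. I would also ensure that the lower bound $|D_n|\ge\delta>0$ on $K$ holds uniformly in $n$.
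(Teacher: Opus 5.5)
Your proposal is correct and follows essentially the paper's route. You use Cauchy's formula for the $z$-derivatives, and for the $c_i$-derivatives you differentiate the representation from the proof of \cref{prop:polecoalescence} with the $e$'s held fixed, which gives $\partial_{c_i}\lambda_n=D_n^{-1}\sum_{k=0}^{m-i}(-1)^k(z-p^\ast)^{m-i-k}e_k(\Delta)$, exactly as in the paper. Your direct lower bound on $D_n$ over $K$ (its roots cluster at $p^\ast\notin K$) is a slightly cleaner substitute for the paper's estimate on circles $\gamma_r$. Your identity $\partial_{p^\ast}\lambda_n=-\partial_z(\lambda_n-h)$ makes precise the step the paper covers with ``similarly''.
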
 
\begin{proof}
     The uniform convergence of the $z$-derivatives is a standard result that follows from Cauchy's integral formula for the derivatives.

     For derivatives with respect to $c_1,\dots, c_m\,$, we can write, in the notation of the proof of \cref{prop:polecoalescence}:
      \[
    \begin{aligned}
    \pdv{\lambda_n}{c_i}=\frac{1}{D_n(z)}\sum_{k=0}^{m-i}(-1)^{k}(z-p^\ast)^{m-i-k}\,e_{k}(\Delta)\,.
\end{aligned}
    \]
    This clearly converges pointwise to $(z-p^\ast)^{-i}=\pdv{\lambda}{c_i}$ away from the collision point. On the other hand:
    \[
    \begin{aligned}
        \pdv{\lambda_n}{c_i}-\pdv{\lambda}{c_i}&=\frac{(-1)^{m-i-1}}{D_n(z) }\sum_{k=1}^i\frac{(-1)^{k}\,e_{m-i+k}(\Delta)}{(z-p^\ast)^{k}}\,.
    \end{aligned}
    \]
    Therefore, let $r>0$ and consider the circle $\gamma_r$ of radius $r$ centred at $p^\ast\,$. For any $c\in(0,1)\,$, there exists $N\in\ZZ_{\geq 0}$ such that $\max_j\abs{\Delta_j(n)}\leq c r$ for $n\geq N\,$. It follows that, for $n\geq N\,$,
    \[
    \begin{aligned}
      \abs{D_n(z)}&\geq r^m-\sum_{k=0}^{m-2}\abs{e_{m-k}(\Delta)}r^k\\&\geq r^m\biggl(1-\sum_{k=0}^{m-2}\binom{m}{k}c^k\biggr)=(2+mc-(c+1)^m)\,r^m\,.
    \end{aligned}
    \]
Now, the function $c\mapsto 2+mc-(c+1)^m$ is differentiable, strictly decreasing for $c\geq 0\,$, and  positive at $c=0\,$. It follows that it has a unique zero $\bar{c}>0\,$. If, then, we choose any $c\in(0,\bar{c})\,$, we have the following bound on $\gamma_r\,$:
\[
\begin{aligned}
    \abs{\pdv{\lambda_n}{c_i}-\pdv{\lambda}{c_i}}\leq \frac{1}{2+cm-(c+1)^m}\sum_{k=1}^i\binom{m}{i-k}\frac{1}{r^{k+m}}\max_j\abs{\Delta_j}^{m+k-i}\,.
\end{aligned}
\]
Since, for $k=1,\dots,i\,$, $m+k-i\geq m-i+1>0\,$, then the difference between the two derivatives vanishes uniformly on the circle as  $n\to\infty\,$. Similarly,  $\bigl\{\partial_p\lambda_n\bigr\}_{n\geq 0}$ converges uniformly to $\partial_p\lambda$ away from the collision point.
\end{proof}
\begin{remark}
    \cref{cor:convergencederivatives} highlights the importance of the collision chart: it admits a subset which still constitutes a coordinate system on the target stratum, and therefore the comparison of derivatives in the limit in the tangent directions is well-defined. This is in contrast with the residue-pole frame, where the residue derivatives do not even converge in the limit.
\end{remark}

\subsection{Critical points} Since the quantities of the Frobenius structure ($\eta$, $c$) are defined in terms of residues at the critical points of the superpotential, to study the Frobenius structure on the Hurwitz boundary, it is necessary to analyse the limit behaviour of the critical points of the functions in the sequences $\{\lambda_n\}_{n\geq 0}\subseteq \Hw_{\hat{\ell},\bar{k}}^{\rm{(top)}}$. The following theorem will be useful in this endeavour.
\begin{thm}\label{thm:criticalpts}
       Let $\lambda$ be a meromorphic function on $\RS$  of the form in \cref{cor:existencesequenceuniformlimit}, and let $\{\lambda_n\}_{n\geq 0}$ be a sequence of rational functions having $m$ distinct simple poles for any $n$ which converges to $\lambda$ as $n\to\infty$ uniformly on compact sets in $\CC\smallsetminus\{p^\ast\}\,$.
       
       Then,
\begin{itemize}
    \item There are $m-1$ distinct sequences $\{s_1(n),\dots, s_{m-1}(n)\}_{n\geq 0}$ converging to $p^\ast$ such that, for any fixed $n\in\ZZ_{\geq 0}\,$, $s_1(n),\dots, s_{m-1}(n)\in\Cr_n\equiv \Cr_{\lambda_n}$ are critical points of $\lambda_n\,$.
    \item For any $q\in \mathrm{Cr}_\lambda\,$, there is a unique sequence $\{q_n\}_{n\geq 0}$ such that $q_n\in\mathrm{Cr}_n$ for every $n\geq 0$ and $q_n\to q$ as $n\to\infty\,$.
\end{itemize}
\end{thm}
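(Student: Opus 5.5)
The plan is to work with the derivatives $\lambda_n'$ and $\lambda'$ and to count zeros with Hurwitz's theorem and the argument principle, applied on a small circle around $p^\ast$ and on small discs around the critical points of $\lambda$.

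\textbf{Step 1: the $m-1$ critical points near $p^\ast$.} Fix $r>0$ so small that $h$ is holomorphic on $\overline{B_r(p^\ast)}$ and that no critical point or pole of $\lambda$ lies in this disc other than $p^\ast$. This is possible because near $p^\ast$ we have $\lambda'\sim -m c_m (z-p^\ast)^{-m-1}$, with $c_m\neq 0$. By \cref{cor:convergencederivatives}, $\lambda_n'\to\lambda'$ uniformly on $\gamma_r=\partial B_r(p^\ast)$.

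Apply the argument principle to $\lambda_n'$ on $B_r(p^\ast)$. For $n$ large all $m$ simple poles $p_i(n)$ lie inside $B_r(p^\ast)$, and each is a double pole of $\lambda_n'$, so the pole count is $2m$. For $\lambda'$ the pole count is $m+1$ (one pole of order $m+1$) and there are no zeros. Uniform convergence on $\gamma_r$, together with $|\lambda'|$ being bounded below there, gives equality of the winding numbers of $\lambda_n'$ and $\lambda'$ around $0$ along $\gamma_r$ for large $n$. Hence
\[
Z_n-2m=0-(m+1),
\]
so $Z_n=m-1$ zeros of $\lambda_n'$, counted with multiplicity, lie in $B_r(p^\ast)$. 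Letting $r\to0$ along a sequence and using a diagonal argument, these zeros converge to $p^\ast$.

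These zeros are critical points of $\lambda_n$: they are not poles, since the poles of $\lambda_n'$ were counted separately. Distinctness holds for generic $n$ in the top stratum, where critical points are simple because the points are in the semisimple locus. If the sequences are chosen generically (as in \cref{cor:existencesequenceuniformlimit}), we can assume this; otherwise, list the zeros with multiplicity. I would state this caveat explicitly.

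\textbf{Step 2: the other critical points.} For $q\in\Cr_\lambda$ with $q\neq p^\ast$, pick a disc $B_\rho(q)$ containing no other zero or pole of $\lambda'$ and avoiding $p^\ast$. The function $\lambda'$ is holomorphic near $q$, and $\lambda_n'\to\lambda'$ uniformly there by \cref{cor:convergencederivatives}; note that $\lambda$ is holomorphic at $q$ because critical points are not poles. Hurwitz's theorem then gives that $\lambda_n'$ has exactly $\operatorname{ord}_q\lambda'$ zeros in $B_\rho(q)$ for large $n$. On the semisimple locus $q$ is simple, so there is exactly one zero $q_n$, and $q_n\to q$ follows by shrinking $\rho$.

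For uniqueness, any sequence $q_n'\in\Cr_n$ converging to $q$ eventually lies in $B_\rho(q)$, hence coincides with $q_n$.

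\textbf{Step 3: critical points near the fixed poles and at $\infty$.} Critical points near $0$ and $\infty$ are handled the same way in the local coordinate $1/z$, using the uniform convergence of the Laurent expansions there. Only critical points of $\lambda$ are claimed, so no further counting is needed.

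As a consistency check, the total count of $\ell+1+p$ critical points for $\lambda_n$ versus $\ell+1+\ell(\pi)$ for $\lambda$ matches the $m-1$ found in Step 1.

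\textbf{Main obstacle.} The main obstacle is Step 1: bounding $|\lambda'|$ from below on $\gamma_r$ uniformly, and controlling the simple poles inside the disc. I would handle this with the $D_n$ estimate already used in the proof of \cref{cor:convergencederivatives}.
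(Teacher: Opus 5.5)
Your argument is correct. It is the paper's zero-counting argument with different bookkeeping.

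The paper clears denominators. It works with the holomorphic functions $P_n=-D_n^2\lambda_n'$ and $(z-p^\ast)^{m-1}Q=-(z-p^\ast)^{2m}\lambda'$. By expanding in the collision coordinates it shows that $P_n(w)=w^{m-1}Q(w)+R_n(w)$ with $R_n\to 0$, and then applies Rouch\'e on shrinking circles, using $Q(p^\ast)=mc_m\neq 0$.

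You instead apply the argument principle directly to the meromorphic $\lambda_n'$. The count $2m-(m+1)=m-1$ then comes from the $2m$ double poles against the single pole of order $m+1$. Since $D_n^2$ has winding number $2m$ on $\gamma_r$ once all the $p_i(n)$ lie inside, the two counts are the same identity.

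What each route gives:
\begin{itemize}
\item Your version needs only the uniform convergence of $\lambda_n'$ on the circle from \cref{cor:convergencederivatives}. It avoids the expansion in collision coordinates entirely.
\item The paper's version produces the explicit relation $P_n=w^{m-1}Q+R_n$ in terms of the $c_k$ and $e_k$. That relation is what feeds the Newton-polygon analysis of the inner critical points in \cref{cor:asymptoticinnerCP}.
\end{itemize}

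Three small remarks:
\begin{itemize}
\item What you call the main obstacle is not one. Once $r$ is chosen so that $\lambda'$ has no zeros on $\overline{B_r(p^\ast)}\smallsetminus\{p^\ast\}$, $|\lambda'|$ is bounded below on $\gamma_r$ by continuity and compactness, so no $D_n$-estimate is needed.
\item Step~3 is unnecessary. Critical points of $\lambda$ never sit at the poles $0$ or $\infty$, so Step~2 already covers all of $\mathrm{Cr}_\lambda$.
\item The distinctness of the $s_i(n)$ and the uniqueness of $q_n$ genuinely require simple critical points. The paper's proof also uses this tacitly, so your explicit genericity caveat is appropriate.
\end{itemize}
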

\begin{proof}
Using \cref{cor:existencesequenceuniformlimit},  we let $D_n(z):=\prod_{i=1}^m(z-p_i(n))$ and $D_n^{(i)}(z):=D_n(z)\,/\,(z-p_i(n))\,$. The critical points of $\lambda_n$, $\lambda$ are respectively the zeros of:
\[
    \begin{aligned}
        P_n(z)&=\sum_{i=1}^mr_i \bigl(D_n^{(i)}(z)\bigr)^2-h'(z)\,D_n(z)^2\,,\\ Q(z)&=\sum_{k=0}^{m-1}(k+1)c_{k+1}(z-p^\ast)^{m-1-k}-h'(z)\,(z-p^\ast)^{m+1}\,.
    \end{aligned}
    \] In particular, $P_n$ and $Q$ are both holomorphic at $p^\ast\,$. For the sake of brevity, from now on we denote $w:=z-p^\ast\,$. We want to express the coefficients of $P_n$ in terms of the collision coordinates, and relate it to $Q\,$. We have
    \[
    \begin{aligned}
           \bigl(D_n^{(i)}(w)\bigr)^2 &=\sum_{k=0}^{2(m-1)}(k+1)\,\Delta_i^k\,w^{2(m-1)-k}+R^{(i)}_n(w)\,,
    \end{aligned}
    \]
    where $R^{(i)}_n$ denotes a polynomial whose coefficients are polynomials of positive degree of $e_2,\dots, e_m\,$. In particular, it converges to zero as $n\to\infty\,$. This leads to the expression:
    \[
    \begin{aligned}
        P_n(w)&=\sum_{k=0}^{m-1}(k+1)\,c_{k+1}\,w^{2(m-1)-k}-h'(p^\ast+w)\,w^{2m}+R_n(w)\\
        &=w^{m-1}Q(w)+R_n(w)\,,
    \end{aligned}
    \]
    where, again, $R_n\to 0$ as $n\to\infty\,$. 
Let $\{r_k\}_{k\geq 0}$ be a sequence of positive real numbers converging to zero as $k\to\infty\,$ and consider the corresponding sequence of circles $\{\gamma_k\}_{k\geq 0}$ of radius $r_k$ centred at $p^\ast\,$. Since $Q$ does not depend on $\Delta_1,\dots, \Delta_m\,$, there exists,  for any $k\geq 0$, $N(k)\in\ZZ_{\geq 0}$ such that:
    \[ 
    \sup_{z\in \gamma_k}\abs{R_n(z)}\leq r_k^{m-1}\sup_{z\in\gamma_k}\abs{Q(z)}\,,\qquad \forall n\geq N(k)\,.
    \]
Thus, it follows from Rouch\'{e}'s Theorem that $P_n(z)$ and $(z-p^\ast)^{m-1}Q(z)$ have the same number of zeros inside $\gamma_k\,$, counted with  multiplicity. Since $p^\ast$ cannot be a root of $Q\,$, it follows that $\gamma_k$ will contain $m-1$ critical points of $\lambda_n$ for $n\geq N(k)\,$. Similarly, one can show that any circle centred at a critical point $q$ of $\lambda$ eventually contains precisely one zero of $P_n\,$.
\end{proof}
\begin{remark}
    The critical points of the functions in the top-dimensional stratum, therefore, split into those that converge to the critical points of the limit function and those that converge to the collision point. The latter set of points consists precisely of those which control the drop in dimension between the two strata. See \cref{fig:collisionscheme} for a schematic representation.
\end{remark}

\begin{figure}[h]
\centering
\resizebox{\textwidth}{!}{%

\begin{tikzpicture}[
  scale=1.05,
  line cap=round,
  line join=round,
  % ---------- colours (edit if you like)
  contourLambda/.style={draw=cyan!80!black, line width=1.15pt},
  contourColl/.style={draw=green!70!black, line width=1.15pt},
  tend/.style={draw=black!65, line width=0.75pt, ->},
  % all points are dots; colour encodes type
  ptLambda/.style={circle, fill=cyan!80!black, inner sep=1.55pt},
  ptLn/.style={circle, fill=red!80!black, inner sep=1.55pt},
  ptPole/.style={circle, fill=black, inner sep=1.55pt},
  ptColl/.style={circle, fill=green!70!black, inner sep=1.75pt},
  lab/.style={font=\scriptsize}
]

\definecolor{colLambdaContour}{HTML}{56B4E9} % medium/light blue  (medium gray in B/W)
\definecolor{colCollContour}{HTML}{A1D99B}   % very light green   (light gray in B/W)

\definecolor{colLambdaPts}{HTML}{081D58}     % very dark navy     (near-black in B/W)
\definecolor{colLnPts}{HTML}{D55E00}         % vermillion/orange  (mid-gray in B/W)

\definecolor{colPole}{HTML}{000000}          % black
\definecolor{colCollision}{HTML}{AE017E}     % dark magenta       (dark gray in B/W)

\pgfdeclarelayer{bg}
\pgfdeclarelayer{fg}
\pgfsetlayers{bg,main,fg}

\begin{pgfonlayer}{fg}
% ============================================================
% COORDINATES
% ============================================================

% --- Region A (lambda-contour): 4 crit points of lambda
\coordinate (L1) at (-3.25,  1.10);
\coordinate (L2) at (-3.45,  0.45);
\coordinate (L3) at (-3.20, -0.10);
\coordinate (L4) at (-2.95, -0.70);

% --- Region A: 4 crit points of lambda_n that tend to those 4
\coordinate (N1) at (1,  2.5);
\coordinate (N2) at (-1.5,  0.35);
\coordinate (N3) at (0.2, -1.3);
\coordinate (N4) at (-4, 2);

% --- Region B (collision contour): collision point
\coordinate (C)  at ( 4.85,  0.05);
%\node[ptColl] at (C) {};
%\node[lab] at ($(C)+(0.25,0.10)$) {$p^\ast$};

% --- Region B: remaining 3 crit points of lambda_n, all tend to collision
\coordinate (N5) at ( 2.40,  1.2);
\coordinate (N6) at ( 3,  0.05);
\coordinate (N7) at ( 2.45, -1.5);

% --- Region B: 4 simple poles of lambda_n, all tend to collision
\coordinate (P1) at ( 5,  1.5);
\coordinate (P2) at ( 4.2,  1.7);
\coordinate (P3) at ( 5.4, 0.7);
\coordinate (P4) at (3.7, -0.50);

\node[ptLambda, label=above:{$q_1$}] at (L1) {};
\node[ptLambda, label=above:{$q_2$}] at (L2) {};
\node[ptLambda, label=right:{$q_3$}] at (L3) {};
\node[ptLambda, label=below:{$q_4$}] at (L4) {};
\node[ptLn, label=below:{$q_1^{(n)}$}] at (N1) {};
\node[ptLn, label=above:{$q_2^{(n)}$}] at (N2) {};
\node[ptLn, label=above:{$q_3^{(n)}$}] at (N3) {};
\node[ptLn, label=above:{$q_4^{(n)}$}] at (N4) {};

\node[ptPole,   label=right:{$p_1$}]    at (P1) {};
\node[ptPole,   label=left:{$p_2$}]    at (P2) {};
\node[ptPole,   label=right:{$p_3$}]    at (P3) {};
\node[ptPole,   label=left:{$p_4$}]    at (P4) {};

\node[ptColl,   label=right:{$p^\ast$}]    at (C) {};

\node[ptLn,   label=below:{$s_1^{(n)}$}]    at (N5) {};
\node[ptLn,   label=left:{$s_2^{(n)}$}]    at (N6) {};
\node[ptLn,   label=above:{$s_3^{(n)}$}]    at (N7) {};

% ============================================================
% CONTOURS
% colours: lambda contour = lambda crit colour; collision contour = collision colour
% ============================================================

% Lambda contour (blue) enclosing L1..L4 and N1..N4
\draw[contourLambda, contour antipodal reversed]
  plot[smooth cycle, tension=0.9] coordinates {
    (-4.80, 0.55)
    (-5.75, 1.85)
    (-5.10, 3.55)
    (-3.05, 3.35)
    (-1.20, 3.15)
    ( 0.20, 3.55)
    ( 1.25, 2.85)  % keep rightmost <= 1.25 so it won't hit green contour
    ( 1.05, 0.55)
    (-0.10,-2.15)
    (-1.80,-2.35)
    (-3.85,-1.85)
    (-4.70,-0.80)
  };

% Collision contour (green) 
\draw[contourColl, contour antipodal reversed] 
  plot[smooth cycle, tension=0.95] coordinates {
    (1.55, 0.20)   % left-mid
    (1.35,-0.95)   % left-lower (taken from your control)
    (2.05,-1.70)   % lower-left (taken from your control)
    (3.20,-1.85)   % bottom
    (4.55,-2.05)   % bottom-right-ish (taken from your control)
    (5.60,-1.25)   % right-lower (taken from your control)
    (5.90,-0.25)   % right-mid
    (6.20, 0.85)   % right-upper
    (5.65, 1.85)   % upper-right
    (4.25, 2.15)   % top
    (2.55, 2.55)   % upper-left-ish (taken from your control)
    (1.55, 1.55)   % left-upper
  };

% ============================================================
% POINTS (all dots)
% ============================================================

% lambda crit points (blue)
\node[ptLambda] at (L1) {};
\node[ptLambda] at (L2) {};
\node[ptLambda] at (L3) {};
\node[ptLambda] at (L4) {};

% lambda_n crit points (red)
\node[ptLn] at (N1) {};
\node[ptLn] at (N2) {};
\node[ptLn] at (N3) {};
\node[ptLn] at (N4) {};
\node[ptLn] at (N5) {};
\node[ptLn] at (N6) {};
\node[ptLn] at (N7) {};

% simple poles (orange)
\node[ptPole] at (P1) {};
\node[ptPole] at (P2) {};
\node[ptPole] at (P3) {};
\node[ptPole] at (P4) {};

% collision point (purple)
\node[ptColl] at (C) {};\end{pgfonlayer}

\begin{pgfonlayer}{bg}
% ============================================================
% TENDING-TO ARROWS (only these arrows; no extra streamlines)
% ============================================================

% 4 crit points of lambda_n -> 4 lambda crit points (in lambda contour)
\draw[tend,-,midarrow=0.55] (N1) .. controls (-2.55,0.7) .. (L1);
\draw[tend,-,midarrow=0.55] (N2) .. controls (-2.65,0.6) .. (L2);
\draw[tend,-,midarrow=0.7] (N3) .. controls (-2.65,-0.5) .. (L3);
\draw[tend,-,midarrow=0.55] (N4) .. controls (-3.5,-1) .. (L4);

% 3 crit points of lambda_n -> collision point
\draw[tend,-,midarrow=0.55] (N5) .. controls (4.05,1.05) .. (C);
\draw[tend,-,midarrow=0.55] (N6) .. controls (4.00,0.10) .. (C);
\draw[tend,-,midarrow=0.55] (N7) .. controls (4.05,-0.65) .. (C);

% 4 simple poles -> collision point
\draw[tend,-,midarrow=0.55] (P1) .. controls (4.55,0.55) .. (C);
\draw[tend,-,midarrow=0.55] (P2) .. controls (4.60,0.15) .. (C);
\draw[tend,-,midarrow=0.7] (P3) .. controls (5.55,-0.40) .. (C);
\draw[tend,-,midarrow=0.55] (P4) .. controls (4.55,-0.55) .. (C);
\end{pgfonlayer}

% ============================================================
% LEGEND
% ============================================================
\begin{scope}[shift={(7,2)}] % move as needed (x,y)
  % background box (optional) — adjust height for number of lines
  \draw[black!20, fill=white, rounded corners=2pt]
    (-0.25,-1.1) rectangle (3,0.3);

  % line spacing
  \def\dy{0.42}

  % Row 1
  \node[ptLambda] at (0,0) {};
  \node[lab, anchor=west] at (0.30,0) {Critical points of $\lambda$};

  % Row 2
  \node[ptLn] at (0,-1*\dy) {};
  \node[lab, anchor=west] at (0.30,-1*\dy) {Critical points of $\lambda_n$};

  % Row 3
  \node[ptPole] at (0,-2*\dy) {};
  \node[lab, anchor=west] at (0.30,-2*\dy) {Simple poles of $\lambda_n$};

\end{scope}
\end{tikzpicture}}
\caption{Schematic representation of the behaviour of the simple poles and critical points of the sequence of functions $\{\lambda_n\}_{n\geq 0}$ in the collision limit of $m=4$ simple poles. The contours enclose all the points that converge to the critical points of the limit function or to the collision point respectively. Integrating over the blue contour gives the three-point function components in the limit, as we shall discuss briefly.}\label{fig:collisionscheme}
\end{figure}

\begin{cor}\label{cor:asymptoticinnerCP}
       Let $\lambda$ and $\{\lambda_n\}_{n\geq 0}$ be as in \cref{thm:criticalpts}.
    Denote by $\delta_n:=\max_i\abs{\Delta_i(n)}\,$. The sequences $\{\kappa_1(n),\dots, \kappa_m(n)\}_{n\geq 0}$ defined by $\Delta_i(n)=\delta_n \,\kappa_i(n)\,$ all admit a converging subsequence to pairwise distinct complex numbers $(\kappa_1,\dots, \kappa_m)\in(\CC^\ast)^m\smallsetminus\diag$.

    Furthermore, if $\{s_1(n),\dots, s_{m-1}(n)\}_{n\geq 0}$ denote the sequences of critical points of $\lambda_n$ which converge to $p^\ast\,$, then:

    \begin{itemize}
        \item If $m>2\,$, then, generically,
        \[
        s_i(n)=p^\ast+\xi_i\,\delta_n+\order{\delta_n^2}\,,\qquad i=1,\dots, m-1\,,
        \]
        where $\xi_1,\dots, \xi_{m-1}$ are the critical points of $\widetilde{D}(w):=\prod_{i=1}^m(w-\kappa_i)\,$;
        \item If $m=2\,$ then,
        \[
        s_n=p^\ast-\tfrac{c_1}{2c_2}\delta_n^2+\order{\delta_n^4}\,.
        \]
    \end{itemize}
\end{cor}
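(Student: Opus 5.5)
The plan is to rescale the critical-point equation from the proof of \cref{thm:criticalpts} by $\delta_n$ and read off leading orders. Write $\Delta_i(n)=\delta_n\kappa_i(n)$. By construction $\max_i\abs{\kappa_i(n)}=1$, so the vector $(\kappa_1(n),\dots,\kappa_m(n))$ lies in a compact set. Bolzano--Weierstrass then gives a convergent subsequence, with limit $(\kappa_1,\dots,\kappa_m)$ satisfying $\sum\kappa_i=0$ and $\max\abs{\kappa_i}=1$.

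Pairwise distinctness of the limit needs an extra input. I would use the Vandermonde relation $c(n)=V_\Delta r(n)$ from \cref{cor:existencesequenceuniformlimit}, together with $c_m\neq0$. I expect this step to be the main obstacle: nothing in the hypotheses literally forbids two poles from approaching each other faster than $\delta_n$. So I would either restrict to the generic situation (as the word ``generically'' in the statement suggests) or pass to the sequences constructed in \cref{cor:existencesequenceuniformlimit}, where $\kappa_i$ are fixed distinct constants. The same genericity is needed to exclude $\kappa_i=0$.

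For $m>2$, substitute $w=\delta_n u$ into $P_n(w)=\sum_i r_i\bigl(D_n^{(i)}(w)\bigr)^2-h'(p^\ast+w)D_n(w)^2$. Here $D_n(\delta_n u)=\delta_n^m\widetilde D_n(u)$ and $D_n^{(i)}(\delta_n u)=\delta_n^{m-1}\widetilde D_n^{(i)}(u)$. The $h'$ term is then $\order{\delta_n^{2m}}$ and negligible. Next express the residues through $r=V_\Delta^{-1}c$, with $V_\Delta=\diag(1,\delta_n,\dots,\delta_n^{m-1})V_\kappa$, so that $r=V_\kappa^{-1}\diag(1,\delta_n^{-1},\dots,\delta_n^{1-m})c$. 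The dominant residues are therefore of order $\delta_n^{1-m}c_m\,(V_\kappa^{-1})_{im}$. Since $(V_\kappa^{-1})_{im}=1/\widetilde D'(\kappa_i)$, Lagrange interpolation gives
\[
\sum_i\frac{\bigl(\widetilde D^{(i)}(u)\bigr)^2}{\widetilde D'(\kappa_i)}=\widetilde D(u)\sum_i\frac{\widetilde D^{(i)}(u)}{\widetilde D'(\kappa_i)(u-\kappa_i)}\,.
\]
I would then use the partial-fraction identity $\sum_i \frac{1}{\widetilde D'(\kappa_i)(u-\kappa_i)^2}=-\bigl(1/\widetilde D\bigr)'$-type manipulation to identify the leading-order polynomial in $u$ as a nonzero multiple of $\widetilde D'(u)$. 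Its roots are exactly the $\xi_j$. Once the leading term is identified, Hurwitz's theorem (or Rouch\'e on small circles around each $\xi_j$, assumed simple, which is again a generic condition) gives $s_j(n)=p^\ast+\delta_n(\xi_j+o(1))$. The next-order term in $\delta_n$ of the rescaled polynomial is analytic in $\delta_n$, so the implicit function theorem at the simple root upgrades $o(1)$ to $\order{\delta_n}$.

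For $m=2$, the computation is explicit. We have $\Delta_1=-\Delta_2=\delta_n\kappa$ (taking $\kappa_1=-\kappa_2=\kappa$ with $\abs{\kappa}=1$ after the normalisation). The residues are $r_{1,2}=\tfrac12 c_1\pm c_2/(2\delta_n\kappa)$. Substituting into $P_n(w)=r_1(w+\delta_n\kappa)^2+r_2(w-\delta_n\kappa)^2-h'(p^\ast+w)(w^2-\delta_n^2\kappa^2)^2$ gives
\[
P_n(w)=c_1\bigl(w^2+\delta_n^2\kappa^2\bigr)+2c_2w+\order{w^4,\delta_n^4}\,.
\]
The relevant root satisfies $2c_2 s+c_1\delta_n^2\kappa^2+\order{\delta_n^4}=0$. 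Hence $s=-\tfrac{c_1}{2c_2}\kappa^2\delta_n^2+\order{\delta_n^4}$, and this reduces to the stated formula once $\kappa^2$ is absorbed into the normalisation of $\delta_n$ (for instance by taking $\delta_n:=\Delta_1$, which I would state explicitly). The error is $\order{\delta_n^4}$ because the corrections are even in $\delta_n$ under $\kappa\mapsto-\kappa$.
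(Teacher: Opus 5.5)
Your proposal is correct and is essentially the paper's argument. The rescaling $w=\delta_n u$ is what the paper's Newton polygon encodes: a single edge, giving $w\sim\delta_n$ for $m>2$ and $w\sim\delta_n^2$ for $m=2$. Your identity $\sum_i(\widetilde D^{(i)})^2/\widetilde D'(\kappa_i)=\widetilde D'$ is the paper's edge-polynomial computation $P_n=c_mD_n'-D_n^2S_n'$, with Rouch\'e and the implicit function theorem replacing the formal Puiseux step.

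Your caveats are justified and go beyond the paper, whose proof only shows that $\kappa(n)$ is bounded:
\begin{itemize}
\item The limiting shape can be degenerate. Take e.g.\ $m=3$, $\Delta(n)=(\tfrac1n,-\tfrac1{2n}+\tfrac1{n^2},-\tfrac1{2n}-\tfrac1{n^2})$ with $r(n)=V_{\Delta(n)}^{-1}c$.
\item The limiting shape can contain $0$, e.g.\ the fixed shape $(1,0,-1)$ in \cref{cor:existencesequenceuniformlimit}.
\item The $\order{\delta_n^2}$ error with the limiting $\xi_i$ genuinely needs a fixed shape, as in your fallback; the paper's Puiseux expansion assumes this tacitly.
\item For $m=2$ the edge polynomial is $2c_2\xi+c_1\kappa^2$, so the formula holds with $\Delta_1(n)^2$ in place of $\delta_n^2$.
\end{itemize}
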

\begin{proof}
The fact that the sequences $\{\kappa_1(n),\dots,\kappa_m(n)\}_{n\geq 0}$ admit subsequences that converge follows from the obvious bound $\abs{\kappa_i(n)}\leq 1\,$. In particular, the sequences are contained in a compact set.

    For the asymptotic expansion of the critical points which converge to $p^\ast\,$, suppose for the moment that $h=0\,$. In the notation of the proof of \cref{thm:criticalpts}, we look for the Puiseux expansion of the roots of the equation:
    \[
   P_n(w)= \sum_{i=1}^mr_i(n)\,\bigl(D_n^{(i)}(w)\bigr)^2=0\,,
    \]
    near $(w,\delta_n)=(0,0)\,$. 
    
    To compute the leading order term in the expansion, we draw the Newton diagram $\Lambda\subseteq \ZZ_{\geq 0}^2$ of $P_n\,$.     Each polynomial $(D_n^{(i)})^2$ is homogeneous of degree $2(m-1)\,$. On the other hand,  $r_i(n)=\sum_{j}(V_\Delta^{-1})_{ij}c_j(n)\,$. The sequences $\{c_1(n),\dots, c_m(n)\}_{n\geq 0}$ all have finite limits $c_1,\dots, c_m\,$, according to \cref{prop:polecoalescence}, with $c_m\neq 0\,$, and the entry $\bigl(V_\Delta^{-1}\bigr)_{ij}$ is a homogeneous rational function in $\Delta_1,\dots, \Delta_m$ of degree $1-j\,$. As a consequence, the coefficient of $c_j(n)$ in $P_n$ is a homogeneous polynomial in $w,\Delta_1,\dots, \Delta_m$ of degree $2m-1-j\,$. Thus, $\Lambda$ is made up of integral points on parallel lines $i+j=2m-1-j$ for $j=1,\dots,m\,$.
   
   Now, since $e_1(\Delta)=0\,$, for $m>2$    the Newton polygon will only have one edge consisting of the segment    joining $(m-1,0)$ and $(0,m-1)\,$. On the other hand, when $m=2\,$, $(m-1,0)=(1,0)\notin\Lambda\,$.   As a consequence, the unique edge of the Newton polygon has slope $-2\,$.     For $m>2\,$, therefore, the leading term in the Puiseux expansion of a root of $P_n(w)$ around $(0,0)$ is of order $\delta_n\,$, while for $m=2$ it is of order $\delta_n^2\,$. \cref{fig:Newtondiagramcollision} shows the Newton diagram for $m=2,4\,$. 

    The leading coefficient of the expansion is given by a root of the edge polynomial. Using the final expression for $F_n$ in the proof of \cref{prop:polecoalescence}, it is easy to see that $\lambda_n=S_n+c_m\,/\,D_n\,$, where the numerator of $S_n(w)$ only depends on $c_1(n),\dots, c_{m-1}(n)\,$.    As a consequence:
    \[
    P_n(w)=-D_n(w)^2\,\lambda'_n(w)=c_m(n)\,D_n'(w)-D_n^2(w)\,S_n'(w)\,.
    \]
   It follows that    the edge polynomial is the $w$-derivative of $\widetilde{D}(w)=\prod_i(w-\kappa_i)\,$. Generically, therefore, we have $m-1$ distinct branches converging at the singular point $(w,\delta_n)=(0,0)\,$ with leading-order coefficient given by a simple critical point of $\widetilde{D}\,$. This proves the first part of the statement for $h=0\,$.

    For $m=2\,$, on the other hand,   the edge polynomial of the only edge of the Newton polygon is $2c_2\xi+c_1\,$, which has unique root $\xi=-\tfrac{c_1}{2c_2}\,$.  
    
    If $h\neq 0\,$, then $P_n(w)=\sum_ir_i(n)\bigl(D_n^{(i)}(w)\bigr)^2-h'(p^\ast+w)\,D_n(w)^2\,$. Since $h'$ does not have a pole at $w=0\,$, it follows that the contribution of the second term consists of monomials of the form $w^j\delta_n^i$ with $i+j\geq 2m\,$. In particular, this will change the shape of $\Lambda$ in the integer lattice, but will not affect the lower boundary of the convex hull or the edge polynomial.
\end{proof}
\begin{figure}[h]
\centering
\resizebox{0.7\textwidth}{!}{%
\begin{tikzpicture}[scale=0.95]

% ==========================================================
% Panel (a): m=2  (x=q, y=p)
% Support in (p,q): (2,0),(1,0),(0,2)
% Support in (q,p): (0,2),(0,1),(2,0)
% Lower hull in (q,p): (0,1)->(2,0)   (no vertical edge)
% ==========================================================
\begin{scope}[xshift=0.2cm]
  % axes
  \draw[->] (-0.4,0) -- (3.0,0) node[below] {$i$};
  \draw[->] (0,-0.4) -- (0,3.0) node[left]  {$j$};

  % all lattice points in 0<=p,q<=2 (plot at (q,p))
  \foreach \p in {0,1,2}{
    \foreach \q in {0,1,2}{
      \fill[gray!18] (\q,\p) circle (1.1pt);
    }
  }
  \draw[very thin,gray!12] (0,0) grid (2,2);

  % support points (q,p): (0,2),(0,1),(2,0)
  \foreach \Q/\P in {0/2,0/1,2/0}{
    \fill[black] (\Q,\P) circle (2.0pt);
  }

  % Newton polygon (lower convex hull in (q,p)): (0,1)->(2,0)
  \draw[thick,blue] (0,1) -- (2,0);

  % labels for hull lattice points (now as (q,p))
  \node[left]  at (0,1) {\small $(0,1)$};
  \node[below] at (2,0) {\small $(2,0)$};

  % panel title
  \node[font=\small] at (1.2,2.75) {(a)\ $m=2$};
\end{scope}

% ==========================================================
% Panel (b): m=4  (x=q, y=p)
% Support S_4 given previously in (p,q); swap to (q,p) for plotting.
% Lower hull in (q,p): (0,3)->(3,0)->(6,0)
% Hull lattice points include (1,2),(2,1) and (4,0),(5,0)
% ==========================================================
\begin{scope}[xshift=5.6cm]
  % axes
  \draw[->] (-0.6,0) -- (7.4,0) node[below] {$i$};
  \draw[->] (0,-0.6) -- (0,7.4) node[left]  {$j$};

  % all lattice points in 0<=p,q<=6 (plot at (q,p))
  \foreach \p in {0,...,6}{
    \foreach \q in {0,...,6}{
      \fill[gray!18] (\q,\p) circle (0.8pt);
    }
  }
  \draw[very thin,gray!12] (0,0) grid (6,6);

  % support points S_4 (listed here already in (q,p) form)
  \foreach \Q/\P in {
    0/6,0/5,0/4,0/3,        % from (p,q) = (6,0),(5,0),(4,0),(3,0)
    2/4,2/3,2/2,            % from (p,q) = (4,2),(3,2),(2,2)
    3/3,3/2,                % from (p,q) = (3,3),(2,3)
    4/2,4/1,                % from (p,q) = (2,4),(1,4)
    5/1,                    % from (p,q) = (1,5)
    6/0,5/0,4/0,3/0,        % from (p,q) = (0,6),(0,5),(0,4),(0,3)
    2/1                     % from (p,q) = (1,2)
  }{
    \fill[black] (\Q,\P) circle (1.8pt);
  }

  % Newton polygon (lower convex hull in (q,p)): (0,3)->(3,0)->(6,0)
  \draw[thick,blue] (0,3) -- (3,0) -- (6,0);

  % label ALL lattice points on the hull (as (q,p))
  % slanted edge: (0,3)->(3,0)
  \node[left]  at (0,3) {\small $(0,3)$};
  \node[left] at (2,1) {\small $(2,1)$};
  \node[below] at (3,0) {\small $(3,0)$};

  % horizontal edge: (3,0)->(6,0)
  \node[below] at (4,0) {\small $(4,0)$};
  \node[below] at (5,0) {\small $(5,0)$};
  \node[below] at (6,0) {\small $(6,0)$};

  % panel title
  \node[font=\small] at (3.2,7.05) {(b)\ $m=4$};
\end{scope}

\end{tikzpicture}
}

\caption{\textbf{Newton diagrams for $m=2,4$ for $h=0\,$.} 
Lattice points $(i,j)$ correspond to monomials $w^j\delta_n^i\,$. The blue lines represent the Newton polygon boundary. }\label{fig:Newtondiagramcollision}
\end{figure}

\begin{cor}
    Let $\lambda$ and $\{\lambda_n\}_{n\geq 0}$ be as in \cref{thm:criticalpts}. Let us denote by $\mathrm{CV}_n$ the set of critical values of $\lambda_n\,$, and by $\mathrm{CV}$ the critical values of $\lambda\,$.
    \begin{itemize}
        \item For any $u\in \mathrm{CV}\,$, there is a sequence $\{u_n\}_{n\geq 0}$ convergent to $u$ such that $u_n\in\mathrm{CV}_n$ for any $n\in\ZZ_{\geq0}\,$.
        \item There are $m-1$ disjoint sequences $\{v_1(n),\dots, v_{m-1}(n)\}_{n\geq 0}$ with $v_i(n)\in\mathrm{CV}_n$ for any $n\in\ZZ_{\geq0}$ and they diverge to the order $\delta_n^{-m}$ as $n\to\infty\,$.
    \end{itemize}
\end{cor}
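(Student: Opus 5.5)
The plan is to combine \cref{thm:criticalpts} and \cref{cor:asymptoticinnerCP} with a direct evaluation of $\lambda_n$ at its critical points. The two bullets follow the two families of critical points identified there.

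For the first bullet, let $u\in\mathrm{CV}$ with $u=\lambda(q)$ for some $q\in\Cr_\lambda$. By \cref{thm:criticalpts} there is a sequence $q_n\in\Cr_n$ with $q_n\to q$. Since $q\neq p^\ast$, we can fix a small closed disc $K$ around $q$ that avoids $p^\ast$ and the poles of $h$. If $q$ is at $0$ or $\infty$, we use the local coordinate there; in that case $\lambda$ has a pole, which is not a critical point in the finite sense, so only finite critical points occur. On $K$ the convergence $\lambda_n\to\lambda$ is uniform, so $\lambda$ is continuous and hence equicontinuous on the disc. Then
\[
\abs{\lambda_n(q_n)-\lambda(q)}\le \sup_K\abs{\lambda_n-\lambda}+\abs{\lambda(q_n)-\lambda(q)}\to 0 .
\]
Setting $u_n:=\lambda_n(q_n)$ gives the first bullet.

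For the second bullet, take $v_i(n):=\lambda_n(s_i(n))$, where the $s_i(n)$ are the critical points converging to $p^\ast$ from \cref{thm:criticalpts}. Using $w=z-p^\ast$ and the expression for $F_n$ in the proof of \cref{prop:polecoalescence}, write
\[
\lambda_n(w)=h(p^\ast+w)+\frac{1}{D_n(w)}\sum_{k=0}^{m-1}w^{m-1-k}\sum_{r=0}^{k}(-1)^{k+r}e_{k-r}(\Delta)\,c_{r+1}(n).
\]
Suppose $m>2$ and we are in the generic situation of \cref{cor:asymptoticinnerCP}, so $s_i=\xi_i\delta_n+\order{\delta_n^2}$ and $\Delta_j=\delta_n\kappa_j$. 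Then $D_n(s_i)=\delta_n^m\bigl(\widetilde D(\xi_i)+\order{\delta_n}\bigr)$. The numerator is homogeneous of weight $m-1-r$ in $(w,\Delta)$ in front of $c_{r+1}$, so its leading contribution is the $r=m-1$ term $c_m(n)\to c_m\neq0$, which has weight zero. The other terms are $\order{\delta_n}$. Hence
\[
v_i(n)=\frac{c_m}{\widetilde D(\xi_i)}\,\delta_n^{-m}\bigl(1+\order{\delta_n}\bigr),
\]
and $\widetilde D(\xi_i)\neq 0$ because generically a critical point of $\widetilde D$ is not one of its roots $\kappa_j$ (which are distinct). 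This gives divergence of exact order $\delta_n^{-m}$.

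For disjointness, the generic condition that the critical values $\widetilde D(\xi_i)$ of $\widetilde D$ are pairwise distinct makes the leading coefficients $c_m/\widetilde D(\xi_i)$ pairwise distinct. The $v_i(n)$ are therefore eventually distinct, and they are disjoint from the convergent $u_n$ because they diverge.

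The case $m=2$ needs a separate check. There $s_n=-\tfrac{c_1}{2c_2}\delta_n^2+\order{\delta_n^4}$, and we may take $\Delta_{1,2}=\pm\delta_n$. Then $D_n(s_n)=s_n^2-\delta_n^2=-\delta_n^2+\order{\delta_n^4}$, while the numerator is $c_1 s_n+c_2\to c_2$. So $v(n)\sim -c_2\delta_n^{-2}$, again of order $\delta_n^{-m}$.

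The main obstacle is making the $\order{\delta_n}$ control of the numerator uniform. Two points need care: $c_k(n)\to c_k$ only along the sequence, and the normalised displacements $\kappa_j(n)$ converge only along subsequences by \cref{cor:asymptoticinnerCP}. I would handle this by arguing along an arbitrary subsequence on which the $\kappa$'s converge, and noting that the bound obtained does not depend on the subsequence chosen. The claim then holds for the full sequence, under the genericity assumption that the limit configurations have simple, nonvanishing critical values of $\widetilde D$.
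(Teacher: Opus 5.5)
Your proposal is correct and follows essentially the paper's route. The paper treats the first bullet as immediate from \cref{thm:criticalpts} and gets the $\delta_n^{-m}$ divergence by evaluating $\lambda_n$ at the inner critical points via \cref{cor:asymptoticinnerCP}; you make both steps explicit, finding $v_i(n)\sim c_m\,\delta_n^{-m}/\widetilde D(\xi_i)$ for $m>2$ and order $\delta_n^{-2}$ for $m=2$. Two minor refinements, neither affecting the conclusion: $\widetilde D(\xi_i)\neq 0$ is automatic rather than generic once the limiting $\kappa_j$ are pairwise distinct, since a critical point of a polynomial with simple roots is never a root; and the relative error is only $o(1)$, not $\order{\delta_n}$, because $c_m(n)\to c_m$ at an unspecified rate.
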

\begin{proof}
    The only non-trivial consequence of \cref{thm:criticalpts} is that the divergent sequences are of order $\delta_n^{-m}$ as $n\to\infty\,$. This follows from evaluating $\lambda_n$ onto the critical points that converge to the collision point and using \cref{cor:asymptoticinnerCP}.
\end{proof}
\begin{remark}
    Notice the difference between our limit strata and caustics or discriminants of semi-simple Frobenius manifolds. The latter are natural submanifolds, as described in \cite{Str04}. On the other hand, we have a boundary stratification of a compactification of the Hurwitz space by partitions. As such, our point of view is closer in spirit to e.g. that of \cite{Lan19}, albeit the compactification in question is not the stable-map compactification of  Kontsevich. 
\end{remark}
\subsection{Metric and three-point function}\label{subs:metric3ptfunction}
In view of the previous subsection, the non-trivial part of comparing Frobenius manifold structures at the Hurwitz boundary arises from the fact that some critical points of the superpotential in the higher-dimensional stratum collapse into the collision point, rather than converging to the critical points of the lower-dimensional stratum. In particular, the components of the metric and of the three-point function may diverge.
Moreover, if they converge in the limit, they will in general not converge to the
corresponding components in the target stratum, even in directions tangent to the boundary. However, our discussion gives a concrete way to renormalise such components so that they do indeed converge to the desired result.
\begin{cor}\label{lem:expansioncritical}
       Let $\lambda$ and $\{\lambda_n\}_{n\geq 0}$ be as in \cref{thm:criticalpts}.

Denote by $\mathrm{Cr}^{\rm{(in)}}_n$ the critical points of $\lambda_n$ that converge to $ p^\ast$ as $n\to\infty\,$, and $\mathrm{Cr}^{\rm{(out)}}_n$ the critical points of $\lambda_n$ that converge to critical points of $\lambda$ in the limit. Let $\phi$ be a meromorphic differential on $\mathbb{P}^1$ such that $(\phi)+(\dd h)_\infty\geq 0\,$.
Define:
\[
\begin{aligned}
\phi_{ijk}&:=\lambda_i\lambda_j\lambda_k\tfrac{\phi^2}{\dd\lambda}\,,\\
    \phi_{ijk}^{(n)}&:=(\lambda_n)_i(\lambda_n)_ j(\lambda_n)_k\tfrac{\phi^2}{\dd\lambda_n}\,,
\end{aligned}
\]
where $\lambda_\alpha\,$, $(\lambda_n)_\alpha$ are the derivatives of $\lambda$ and $\lambda_n$, respectively, with respect to the $\alpha^{\rm{th}}$ collision coordinate in the limit stratum. Then,
\[
\begin{aligned}
    \sum_{x\in\mathrm{Cr}_\lambda}\Res_x\phi_{ijk}&=\lim_{n\to\infty}\biggl[\sum_{x\in\mathrm{Cr}_n}\Res_x\phi_{ijk}^{(n)}-\sum_{x\in\mathrm{Cr}_n^{\rm{(in)}}}\Res_x\phi_{ijk}^{(n)}\biggr]\,.
\end{aligned}
\]
\end{cor}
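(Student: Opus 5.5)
The plan is to rewrite both sides as contour integrals and pass to the limit under the integral sign. By \cref{thm:criticalpts}, $\mathrm{Cr}_n=\mathrm{Cr}^{\rm(in)}_n\sqcup\mathrm{Cr}^{\rm(out)}_n$ for $n$ large. The bracketed quantity is then exactly $\sum_{x\in\mathrm{Cr}^{\rm(out)}_n}\Res_x\phi^{(n)}_{ijk}$, so it suffices to show that this sum converges to $\sum_{x\in\mathrm{Cr}_\lambda}\Res_x\phi_{ijk}$.

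Around each $q\in\mathrm{Cr}_\lambda$ choose a small circle $\gamma_q$. It should avoid $p^\ast$, the poles of $h$, the zeros of $\phi$ and all other critical points of $\lambda$. By the second item of \cref{thm:criticalpts}, for $n\geq N$ the disc bounded by $\gamma_q$ contains exactly one point $q_n\in\mathrm{Cr}^{\rm(out)}_n$ and no point of $\mathrm{Cr}^{\rm(in)}_n$. It also contains no pole of $\lambda_n$, since the movable poles converge to $p^\ast$ and the remaining poles are those of $h$, which we excluded. The hypothesis $(\phi)+(\dd h)_\infty\geq 0$ ensures that $\phi^2/\dd\lambda_n$ has no additional singularities coming from $\phi$ at the fixed poles. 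Hence the only singularity of $\phi^{(n)}_{ijk}$ inside $\gamma_q$ is at $q_n$, and
\[
\Res_{q_n}\phi^{(n)}_{ijk}=\tfrac{1}{2\pi i}\oint_{\gamma_q}\phi^{(n)}_{ijk}\,.
\]
The same holds for $\phi_{ijk}$ at $q$.

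Next, compare the integrands on $\gamma_q$. Write $\phi=f(z)\,\dd z$ and $\phi^{(n)}_{ijk}=(\lambda_n)_i(\lambda_n)_j(\lambda_n)_k\,f^2/\lambda_n'\,\dd z$. By \cref{cor:convergencederivatives}, $\lambda_n'\to\lambda'$ and $(\lambda_n)_\alpha\to\lambda_\alpha$ uniformly on the compact set $\gamma_q$, which lies away from $p^\ast$. The derivatives $(\lambda_n)_\alpha$ are taken with respect to the collision coordinates $p^\ast,c_1,\dots,c_m$, and with respect to the other coordinates of the limit stratum, which are unaffected by the collision. Since $\lambda'$ does not vanish on $\gamma_q$, we have $\inf_{\gamma_q}\abs{\lambda_n'}\geq\tfrac12\inf_{\gamma_q}\abs{\lambda'}>0$ for $n$ large. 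Therefore $\phi^{(n)}_{ijk}\to\phi_{ijk}$ uniformly on $\gamma_q$, the integrals converge, and summing over the finitely many $q\in\mathrm{Cr}_\lambda$ gives the claim. For genericity we assume $\lambda$ has simple critical points, which gives the bijection of \cref{thm:criticalpts}. If not, the contour argument still works, with several $q_n$ inside each $\gamma_q$.

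The main obstacle is making sure no unwanted singularity of $\phi^{(n)}_{ijk}$ enters or escapes the contours $\gamma_q$. This concerns poles of $\lambda_n$, zeros of $\lambda_n'$ other than $q_n$, and points where $(\lambda_n)_\alpha$ blows up. Everything singular that moves does so near $p^\ast$: the movable poles, $\mathrm{Cr}^{\rm(in)}_n$, and the singular behaviour of $\partial_{c_\alpha}\lambda_n$. So this reduces to choosing $\gamma_q$ outside a fixed disc around $p^\ast$ and invoking Rouché's theorem as in \cref{thm:criticalpts}. The point at $\infty$ and the fixed poles $0,\infty$ need care. There I would use the divisor condition on $\phi$ to check that $\phi^{(n)}_{ijk}$ and $\phi_{ijk}$ have no residues coming from those points inside the $\gamma_q$, which holds trivially since the $\gamma_q$ are small circles avoiding them.
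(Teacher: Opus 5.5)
Your proposal is correct and follows the paper's proof. You split $\mathrm{Cr}_n=\mathrm{Cr}^{\mathrm{(in)}}_n\sqcup\mathrm{Cr}^{\mathrm{(out)}}_n$ and exchange the limit with the contour integrals around $\mathrm{Cr}_\lambda$ using the uniform convergence of \cref{cor:convergencederivatives}; your circles $\gamma_q$ and the lower bound on $\abs{\lambda_n'}$ make explicit what the paper leaves implicit. Two minor precisions: the exhaustiveness of that splitting is not actually stated in \cref{thm:criticalpts} but follows from Riemann--Hurwitz ($\lambda_n$ has exactly $m-1$ more critical points than $\lambda$, counted with multiplicity), and the divisor condition on $\phi$ is only needed to make $\phi$ holomorphic inside each $\gamma_q$, not to control anything at the fixed poles.
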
 
\begin{proof}
Since $\phi_{ijk}=\lim_n\phi_{ijk}^{(n)}$ pointwise and, in view of \cref{cor:convergencederivatives}, one can exchange the limit with the integral sign, it is immediate to see that:
    \[
    \begin{aligned}
\sum_{x\in\mathrm{Cr}_\lambda}\Res_x\phi_{ijk}&=\lim_{n\to\infty}\sum_{x\in\mathrm{Cr}_n^{\rm{(out)}}}\Res_x\phi_{ijk}^{(n)}\\
&=\lim_{n\to\infty}\biggl[\sum_{x\in\mathrm{Cr}_n}\Res_x\phi_{ijk}^{(n)}-\sum_{x\in\mathrm{Cr}_n^{\rm{(in)}}}\Res_x\phi_{ijk}^{(n)}\biggr]\,.
    \end{aligned}
    \]
\end{proof}
Therefore, geometrically, we have the following picture. For any fixed $\lambda$ in the stratum $\Hw_{\hat{\ell},\bar{k}}(m,1^{p-m})\,$, we construct a sequence of functions $\{\lambda_n\}_{n\geq 0}\subseteq \Hw_{\hat{\ell},\bar{k}}^{\mathrm{(top)}}$ that converges to $\lambda$ uniformly on compact sets that do not include its poles. Concretely, this sequence of functions depends on pairwise distinct non-zero complex numbers $\kappa_1,\dots, \kappa_m$ that determine the locations of the simple poles of $\lambda_n\,$. Different choices of such parameters will determine different sequences converging to the same point in $\Hw_{\hat{\ell},\bar{k}}(m,1^{p-m})\,$. For this reason, we shall call the parameters $\underline{\kappa}\in(\CC^\ast)^m\smallsetminus\diag$ the \textit{collision shape}.

For a fixed collision shape $\underline{\kappa}\,$, we have the maps $\bigl\{\iota_n:\Hw_{\hat{\ell},\bar{k}}(m,1^{p-m})\hookrightarrow\Hw_{\hat{\ell},\bar{k}}^{\mathrm{(top)}}\bigr\}_{n\geq 0} $ sending $\lambda$ to $\lambda_n\,$, and we can pull-back the Frobenius metric and three-point function on the top-dimensional stratum along these maps, thus giving families of tensor fields $\{\iota_n^\ast\eta^{\mathrm{(top)}}\,,\,\iota_n^\ast c^{\mathrm{(top)}}\}_{n\geq 0}$ on the limit stratum. Using that $\Cr_n=\Cr_n^\out\sqcup \Cr_n^{\inn}\,$, we can write:
\[
\begin{aligned}
    \iota_n^\ast \eta^\ttop&=\eta^\out_n+\eta^\inn_n\,,&&& \iota_n^\ast c^\ttop&=c^\out_n+c^\inn_n\,.
\end{aligned}
\]
The superscript $\out$, $\inn$ denotes the sum over the critical points of $\lambda_n$ in $\Cr^\out_n$, $\Cr^\inn_n$, respectively, and the equality is to be understood at the level of symmetric tensor fields. In particular, $\eta^\out_n$ and $\eta^\inn_n$ can be degenerate. From \cref{lem:expansioncritical},  the components of the $\out$ tensors in the collision chart admit a finite limit as $n\to\infty\,$, which coincides with the Frobenius metric and three-point function on the lower-dimensional stratum. Schematically, we write:
\[
\lim_{n\to\infty}\eta^\out_n=\eta^\low\,,\qquad \lim_{n\to\infty}c^\out_n=c^\low\,.
\]
We shall say that a collision-chart component of $\eta^\low$, $c^\low$ needs to be \textit{renormalised} if it differs from the limit of the corresponding component of $\iota_n^\ast\eta^\ttop$, $\iota_n^\ast c^\ttop$, respectively. Equivalently, a component needs renormalisation if and only if the corresponding inner-tensor component has a non-vanishing limit as $n\to\infty\,$.
\begin{prop}\label{lem:divergencectensorcomponents}
    Let us denote components of a tensor field $T$ in the collision chart on the target stratum by $T_{\ast ij}:=T_{p^\ast c_ic_j}\,$. The only components of $\iota^\ast_n c^\ttop$ that may require renormalisation as $n\to\infty$ are the ones that contain at least two Latin indices. Equivalently:
    \[
    \begin{aligned}
        \lim_{n\to\infty}(c^\inn_n)_{\ast\ast\ast}=\lim_{n\to\infty}(c^\inn_n)_{\ast\ast i}=0\,,
    \end{aligned}
    \]
    for all $i=1,\dots, m\,$.
\end{prop}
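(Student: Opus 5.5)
We control the inner residues directly. Write $(c^\inn_n)_{abc}=-\sum_{s\in\Cr_n^\inn}\Res_s \phi^{(n)}_{abc}$ with $\phi=\dd\log\mu$, which is holomorphic and non-vanishing near $p^\ast\in\CC^\ast$. Since $\Cr^\inn_n$ consists of the $m-1$ critical points $s_1(n),\dots,s_{m-1}(n)$ of \cref{thm:criticalpts}, we choose a small circle $\gamma_r$ around $p^\ast$. For $n$ large it encloses $\Cr^\inn_n$ and the $m$ colliding poles, and no other critical points or poles. By the residue theorem,
\[
\sum_{s\in\Cr^\inn_n}\Res_s\phi^{(n)}_{abc}=\tfrac{1}{2\pi i}\oint_{\gamma_r}\phi^{(n)}_{abc}-\sum_{j=1}^m\Res_{p_j(n)}\phi^{(n)}_{abc}\,.
\]
By \cref{cor:convergencederivatives}, the contour integral converges to $\oint_{\gamma_r}\phi_{abc}$. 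Since $\lambda$ has no critical points inside $\gamma_r$ besides possibly $p^\ast$, this limit equals $\Res_{p^\ast}\phi_{abc}$ (up to sign). So the task reduces to computing pole residues, both for $\lambda_n$ at the $p_j(n)$ and for $\lambda$ at $p^\ast$.

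At a simple pole $p_j(n)$ we have $\dd\lambda_n\sim -r_j(z-p_j)^{-2}\dd z$. The derivative $\partial_{p^\ast}\lambda_n$ is a sum of terms $r_i(z-p_i)^{-2}$, so it has a double pole there. By contrast, $\partial_{c_i}\lambda_n=D_n^{-1}\sum_k(\ldots)$ (formula in the proof of \cref{cor:convergencederivatives}) has only a simple pole at $p_j$. For one or zero Latin indices, the numerator $\lambda_{\ast}\lambda_{\ast}\lambda_{a}$ has pole order at most $2+2+2$ in the worst case, against the $(z-p_j)^{-2}$ in the denominator. We therefore expand to the needed order. For $\ast\ast\ast$, the integrand is $\sim r_j^3(z-p_j)^{-6}/(r_j(z-p_j)^{-2})=r_j^2(z-p_j)^{-4}(\ldots)$. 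The residue is a combination of $r_j^2$ times derivatives of the regular part, including the other $r_i/(z-p_i)^2$ and $h$, evaluated at $p_j$.

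The cleaner route is to avoid this expansion and use the Euler/translation structure. The key observation is that $\partial_{p^\ast}\lambda_n=-\partial_z(\lambda_n-h)$, since translating all colliding poles is the same as translating $z$ in the singular part. So, near the cluster, $\lambda_\ast=-\lambda_n'+h'$. Substituting,
\[
\phi^{(n)}_{\ast\ast a}=(\lambda_n'-h')^2(\lambda_n)_a\frac{\phi^2}{\lambda_n'\dd z}=\Bigl(\lambda_n'-2h'+\frac{h'^2}{\lambda_n'}\Bigr)(\lambda_n)_a\,\phi^2/\dd z\,.
\]
The first two terms are holomorphic at the critical points, so they contribute nothing to $\Res_{s}$. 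The last term is $h'^2(\lambda_n)_a\phi^2/\dd\lambda_n$. Near the cluster, this $\dd\lambda_n$ is dominated by the colliding poles: $1/\lambda_n'\sim D_n^2/P_n$ is of size $O(\delta_n^{m+1})$ on $\gamma_r$ relative to its poles. We evaluate it through the contour identity above. Its residues at the $p_j$ vanish because $1/\lambda_n'$ has a double zero there while $(\lambda_n)_a$ has at most a double pole (for $a=\ast$) or a simple pole (for $a=c_i$). Its contour integral converges to $\Res_{p^\ast}h'^2\lambda_a\phi^2/\dd\lambda$, which is $0$ because $1/\lambda'$ has a zero of order $m+1$ at $p^\ast$, dominating the pole of $\lambda_a$ of order at most $m+1$. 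A pole order of exactly $m+1$ occurs only for $a=\ast$, and then the factor $h'^2\phi^2$ is regular. Hence the inner residue sum tends to $0$ for both $\ast\ast\ast$ and $\ast\ast i$.

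The main obstacle is the bookkeeping of pole and zero orders in the last step. In particular, the $a=\ast$ case is borderline, with order $m+1$ against order $m+1$. It must be checked that the residue there is genuinely zero (a regular function times zeros) rather than a finite nonzero constant. I would verify this by writing $1/\lambda'=-(z-p^\ast)^{m+1}/(mc_m)+\dots$ explicitly.
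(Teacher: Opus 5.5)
Your argument is correct, and it takes a genuinely different route from the paper's proof.

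\textbf{The paper's route.} The paper works pointwise at the inner critical points. It assumes these are generically simple zeros of $\dd\lambda_n$, so each inner residue is $\varphi^2(s_n)\,(\lambda_n)_a(\lambda_n)_b(\lambda_n)_c/\lambda_n''$ evaluated at $s_n$. It then uses the asymptotics of \cref{cor:asymptoticinnerCP} to get
\[
\partial_{p^\ast}\lambda_n(s_n)=O(1),\qquad \partial_{c_i}\lambda_n(s_n)\sim\delta_n^{-i},\qquad \lambda_n''(s_n)\sim\delta_n^{-(m+2)} .
\]
This gives residues of order $\delta_n^{m+2}$ and $\delta_n^{m+2-k}$. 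The case $m=2$ is handled separately, because there the inner critical point sits at distance $\delta_n^2$ from $p^\ast$.

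\textbf{Your route.} You use the exact translation identity $\partial_{p^\ast}\lambda_n=h'-\lambda_n'$. This reduces the part of $\phi^{(n)}_{\ast\ast a}$ that can have residues at critical points to $h'^2(\lambda_n)_a\varphi^2/\lambda_n'$. You check that this term has no residue at the colliding poles: $1/\lambda_n'$ has a double zero there, against at most a double pole of $(\lambda_n)_a$. You then trade the inner residue sum for an integral over a fixed small circle $\gamma_r$. On $\gamma_r$, \cref{cor:convergencederivatives} gives uniform convergence to an integrand that is holomorphic in the whole disc. For this step you need $r$ small enough that $h$ and $\varphi$ are holomorphic and $\lambda'$ has no zeros in the punctured disc, and you need $c_m\neq 0$ so that $\lambda'$ and hence $\lambda_n'$ stay bounded away from zero on $\gamma_r$. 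Both conditions are implicit in your setup.

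\textbf{What each route buys.} Your approach needs neither the simplicity of the inner critical points nor the Newton-polygon asymptotics. It also treats $m=2$ uniformly with the other cases. What it does not provide are the rates $\delta_n^{m+2-(i+j+k)}$ for the remaining components. The paper's estimate produces these at no extra cost, and they are reused later in \cref{lem:flatcm} and \cref{prop:renormalisedlimitexplicit}.

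\textbf{Minor points.}
\begin{itemize}
\item The borderline case $a=\ast$ that you flag as the main obstacle is not one. Since $\partial_{p^\ast}\lambda/\lambda'=h'/\lambda'-1$ is manifestly holomorphic at $p^\ast$, the residue of a function with a removable singularity is simply zero.
\item The claim that $1/\lambda_n'=O(\delta_n^{m+1})$ on $\gamma_r$ is inaccurate. On a fixed circle it is $O(r^{m+1})$. Nothing in your argument depends on this claim, so it can be dropped.
\item The abandoned paragraph expanding directly at the simple poles can be deleted.
\end{itemize}
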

\begin{proof}
Since an inner critical point $s_n\in\Cr^\inn_n$ is generically a simple zero of $\dd\lambda_n\,$,  around $s_n$ we have
\[
\phi_{ijk}^{(n)}=\biggl[\varphi^2(s_n)\,\eval{\frac{(\lambda_n)_i(\lambda_n)_j(\lambda_n)_k}{\lambda_n''}}_{s_n}\frac{1}{\zeta}+\order{1}\biggr]\dd{\zeta}\,,
\]
where $\zeta$ is a local coordinate centred at $s_n$ and $\phi=\varphi\,\dd{\zeta}\,$. The formula clearly holds even if $i$ $j$ or $k$ are equal to $\ast\,$. As a consequence, a rough estimate for each component of $c^\inn$ is given by estimating the derivatives of $\lambda_n$ at the inner critical points. This will give a lower bound for the leading-order term in the corresponding component as $n\to\infty\,$.

We start from the $c_i$-derivative, which can be written as:
\[
\pdv{\lambda_n}{c_i}=\sum_{j=1}^m\pdv{r_j}{c_i}\frac{1}{z-p_j}=\delta_n^{1-i}\sum_{j=1}^m(V_\kappa^{-1})_{ij}\frac{1}{z-p_j}\,.
\]
At $s_n\,$, using \cref{cor:asymptoticinnerCP}, we find that, for $m>2\,$, $\partial_{c_i}\lambda_n\lvert_{s_n}$ diverges at order $\delta_n^{-i}$ as $n\to\infty\,$.  

As for the $p^\ast$-derivative, since  $\lambda_n'=h'-\partial_{p^\ast}\lambda_n$ and $s_n$ is a zero of $\lambda_n'\,$, it follows that $\partial_{p^\ast}\lambda_n\lvert_{s_n}$ converges to a non-zero constant as $n\to\infty\,$.

As for the second $z$-derivative:
\[
\lambda_n''=h''(z)+2\sum_{i=1}^m\frac{r_i}{(z-p_i)^3}=h''(z)+2\sum_{i,j=1}^m\delta_n^{1-j}\bigl(V^{-1}_\kappa\bigr)_{ij}\frac{c_j}{(z-p_i)^3}\,.
\]
At $s_n\,$, this blows up like $\delta_n^{-(m+2)}$ as $n\to\infty\,$.

Hence, we have the estimates:
\[
\begin{aligned}
    \Res_{s_n}\phi_{ijk}^{(n)}&\sim \delta_n^{m+2-(i+j+k)}\,,&&& \Res_{s_n}\phi_{\ast jk}^{(n)}&\sim\delta_n^{m+2-(j+k)}\,,\\
    \Res_{s_n}\phi^{(n)}_{\ast \ast k}&\sim\delta_n^{m+2-k}\,, &&& \Res_{s_n}\phi_{\ast\ast\ast}^{(n)}&\sim \delta_n^{m+2}\,. 
\end{aligned}
\]
In particular, only the residues in the top line can diverge. 

When $m=2\,$, we can write
\[
\lambda_n(w)=h(p^\ast+w)+\frac{c_1 w+c_2}{w^2-\Delta^2\,},
\]
where $w$ is a coordinate centred at $p^\ast\,$. It follows that $\partial_{c_1}\lambda_n$ converges when computed at the inner critical point, $\partial_{c_2}\lambda_n$ diverges at order $\Delta^{-2}$ and $\lambda'_n$ at order $\Delta^{-4}\,$. Hence, the only components of $c^\inn_n$ that could have a non-finite limit as $n\to\infty$ are $(c^\inn_n)_{122}$ and $(c^\inn_n)_{222}\,$. The statement, therefore, holds even in this case.
\end{proof}

\subsection{Prepotential}
Let us now fix $h$ to be a type $A$ Dubrovin--Zhang superpotential -- and, if $m<p\,$, it contains the additional simple poles that we are not collapsing -- and $\phi=\dd\log\mu$ the corresponding primary differential. Once we have established which three-point function components may require renormalisation, we can compute the actual $c$-tensor in the top-dimensional stratum in the collision chart using the explicit formula for its prepotential in \cite{PS26}. In particular, we write the prepotential on $\Hw_{\hat{\ell},\bar{k}}^\ttop$ as follows:
\begin{equation}\label{eq:Ftop}
    F^\ttop=F_{\log}+F_{\mathrm{lin}}\,,
\end{equation}
where:
\begin{equation}\label{eq:Flog}
    \begin{aligned}
        F_{\log}(\alpha,\beta)&=\tfrac12 \sum_{i=1}^m\alpha_i^2\log(\alpha_i e^{\beta_i})+\sum_{1\leq i<j\leq m}\alpha_i\alpha_j\log(e^{\beta_i}-e^{\beta_j})\,,
    \end{aligned}
\end{equation}
while $F_{\mathrm{lin}}$ is the part that depends at most linearly on the $\alpha$-variables. The coordinate transformation between the flat $(\alpha,\beta)$ coordinates and the residue-pole coordinates we have been using so far is:
\begin{equation*}
    \begin{aligned}
        p_i&=:e^{\beta_i}\,,&&& r_i&=:\alpha_ie^{\beta_i}\,.
    \end{aligned}
\end{equation*}
Therefore, the contribution $F_{\mathrm{lin}}$ -- when written in residue-pole coordinates -- will still be linear in the residues. The importance of $F_{\mathrm{lin}}$ is highlighted by Proposition \ref{cor:generalpolecollapselinear}, which is preceded by a technical lemma. 
\begin{lem}[Polarised Newton's formulae]\label{lem:polarisednewton}
Let $n\in\ZZ_{\geq 1}$ and consider the polarised power sums $\Theta_r\in\CC[x,y]^{S_n}$ in the $2n$ variables $x=(x_1,\dots, x_n)\,$ and $y=(y_1,\dots, y_n)$ that are linear in the first set of variables: $\Theta_r:=\sum_{i=1}^nx_i y_i^r\,$. For any $r\in\ZZ_{\geq 1}$, we have
\[
    \Theta_{n+r}(x,y)+\sum_{i=0}^{n-1}(-1)^{n-i}e_{n-i}(y)\,\Theta_{r+i}(x,y)=0\,,\]
    where $e_1(y),\dots, e_n(y)$ denote the elementary symmetric polynomials in the second set of variables, ordered such that $\deg e_i=i\,$. 

    In particular, $\Theta_r$ is a linear combination of the preceding $n$ such polynomials, with coefficients in $\CC[y]^{S_n}\,$.
\end{lem}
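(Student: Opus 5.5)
The idea is to reduce the identity to the classical Newton-type recursion satisfied by each single variable $y_i$, and then sum against the weights $x_i$. Consider the monic polynomial
\[
P(t):=\prod_{j=1}^n(t-y_j)=\sum_{k=0}^{n}(-1)^{k}e_k(y)\,t^{n-k}\,,
\]
which is the same expansion already used in Remark~\ref{rmk:Afepsilon}. By construction $P(y_i)=0$ for every $i=1,\dots,n$, so $y_i^{r}P(y_i)=0$ for any $r\geq 0$. Expanding this gives
\[
y_i^{n+r}+\sum_{k=1}^{n}(-1)^k e_k(y)\,y_i^{n+r-k}=0\,.
\]
Setting $k=n-i$ with $i=0,\dots,n-1$ rewrites the sum as $\sum_{i=0}^{n-1}(-1)^{n-i}e_{n-i}(y)\,y_i^{\,r+i}$, where the exponent index is of course distinct from the variable index. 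The shift is harmless, but keeping the summation index separate from the variable label avoids the notational clash present in the statement.

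Next, multiply the $i$-th identity by $x_i$ and sum over $i$. Since the coefficients $e_k(y)$ do not depend on the summation index, they factor out of the sum, and the left-hand side becomes exactly
\[
\Theta_{n+r}(x,y)+\sum_{i=0}^{n-1}(-1)^{n-i}e_{n-i}(y)\,\Theta_{r+i}(x,y)=0\,.
\]
This is the claim, and the argument actually holds for all $r\geq 0$. The final sentence of the statement then follows by moving the sum to the right-hand side. For $s\geq n$, writing $s=n+r$ expresses $\Theta_s$ as a combination of $\Theta_{s-n},\dots,\Theta_{s-1}$ with coefficients $\pm e_k(y)\in\CC[y]^{S_n}$. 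For $s\geq n+1$, which is the case $r\geq1$, this is precisely the stated range.

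The proof has essentially no analytic obstacle. The only point needing care is bookkeeping of signs and indices: matching $(-1)^k e_k$ in the expansion of $P$ with $(-1)^{n-i}e_{n-i}$ after reindexing. I would verify this on $n=1$, where the identity reads $\Theta_{1+r}-y_1\Theta_r=0$ and is immediate, and on $n=2$ before writing the general step. It is also worth remarking that the identity is exactly the linear recurrence with characteristic polynomial $P$. This is the form in which it will be used later, with $x=r$ (residues) and $y=\Delta$, to reduce higher moments $\sum_j r_j\Delta_j^{k}$, $k\geq m$, to the collision coordinates $c_1,\dots,c_m$.
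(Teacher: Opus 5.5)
Your proof is correct and follows the paper's argument: evaluate $P(t)=\prod_{j}(t-y_j)$ at each $y_j$, multiply by $x_j y_j^{r}$, and sum over $j$. Your observation that the identity already holds for every $r\geq 0$ is also correct.
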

\begin{proof}
One can write the degree-$n$ monic polynomial vanishing at $y_1,\dots, y_n$ as:
\[
P(t)=\prod_{i=1}^n(t-y_i)=t^n+\sum_{i=0}^{n-1}(-1)^{n-i} e_{n-i}t^i\,.
\]
Therefore, $y_j^n+\sum_{i=0}^{n-1}(-1)^{n-i} e_{n-i}y_j^i=0$ for any $j=1,\dots, n\,$. Multiplying by $x_j y_j^r$ and summing over $j$ gives the statement.
\end{proof}

\begin{prop}\label{cor:generalpolecollapselinear}
           Let $\lambda$ be a meromorphic function on $\RS$, as in  \cref{cor:existencesequenceuniformlimit}, and let $\{\lambda_n\}_{n\geq 0}$ be a sequence of rational functions having $m$ distinct simple poles for any $n$ which converges to $\lambda$ as $n\to\infty$ uniformly on compact sets in $\CC\smallsetminus\{p^\ast\}\,$.
 
    Let $\Phi$ be a holomorphic function having $\ell^1$ Taylor coefficients at $p^\ast$ -- i.e. $\Phi(z)=\sum_{k\geq 0}\varphi_k(z-p^\ast)^k$ in a disk centred at $p^\ast\,$, and assume that the series $\sum_{k\geq 0}\varphi_k$ is absolutely convergent. Then:
    \[
    \lim_{n\to\infty}\sum_{i=1}^m r_i \,\Phi(p_i)=\sum_{i\geq 0} \varphi_{i} c_{i+1}\,,
    \]
    where $\{c_i\}_{i\in\ZZ_{\geq 1}}$ are the principal Laurent coefficients of $\lambda$ around $p^\ast\,$.
\end{prop}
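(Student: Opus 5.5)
Expanding $\Phi$ around the collision point, the sum $\sum_i r_i\Phi(p_i)$ becomes a series in the moments $\Theta_k:=\sum_i r_i\Delta_i^k$. The first $m$ of these are exactly the collision coordinates $c_{k+1}(n)$. The higher ones are controlled by \cref{lem:polarisednewton}. Concretely, for $n$ large all $p_i(n)$ lie in the disk of convergence, and, since the $p_i(n)$ converge to $p^\ast$, eventually $\delta_n:=\max_i\abs{\Delta_i(n)}\leq 1$. We write
\[
\sum_{i=1}^m r_i\,\Phi(p_i)=\sum_{k\geq 0}\varphi_k\sum_{i=1}^m r_i\Delta_i^k=\sum_{k\geq 0}\varphi_k\,\Theta_k(r,\Delta)\,,
\]
with $\Theta_k(r,\Delta)=c_{k+1}(n)$ for $k\leq m-1$. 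The interchange of sums is legitimate because the sum over $i$ is finite and each series in $k$ converges absolutely, as $\abs{\Delta_i}\leq 1$ and $\sum_k\abs{\varphi_k}<\infty$.

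Next I would control the tail $k\geq m$. By \cref{lem:polarisednewton}, applied with $x=r$ and $y=\Delta$, each $\Theta_{k}$ for $k\geq m$ is a linear combination of $\Theta_{k-m},\dots,\Theta_{k-1}$ with coefficients $\pm e_j(\Delta)$. Iterating this, $\Theta_k=\sum_{j=0}^{m-1}a_{k,j}(\Delta)\,c_{j+1}(n)$, where $a_{k,j}$ is a symmetric polynomial in $\Delta$, homogeneous of degree $k-j$. Rather than tracking the recursion explicitly, I would bound $a_{k,j}$ directly: it is the complete-homogeneous-type coefficient obtained from the generating function $1/\prod_i(1-\Delta_i t)$. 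Hence $\abs{a_{k,j}}\leq C_m\,\binom{k+m}{m}\,\delta_n^{k-j}$ for a constant $C_m$ depending only on $m$. Since $c_j(n)\to c_j$ by \cref{prop:polecoalescence}, the $c_j(n)$ are uniformly bounded. Moreover $e_1(\Delta)=0$, so for $k\geq m$ every $a_{k,j}$ has positive degree and therefore tends to $0$.

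The main obstacle is the exchange of the limit $n\to\infty$ with the infinite tail sum. Absolute convergence of $\sum_k\varphi_k$ alone does not absorb the polynomial growth $\binom{k+m}{m}$ in the bound above; one needs $\delta_n^{k-j}$ to decay. I would split the tail at some $K$. For $m\leq k\leq K$, each term goes to zero termwise. For $k>K$, once $\delta_n\leq 1/2$, the factor $\binom{k+m}{m}\delta_n^{k-m}$ is bounded uniformly in $k$, so the tail is dominated by $C\sum_{k>K}\abs{\varphi_k}$, which is small for large $K$. This is a dominated convergence argument in the index $k$.

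If the binomial bound proves awkward, I would instead use the alternative formula $\Theta_k=\frac{1}{2\pi i}\oint w^k(\lambda_n-h)\,\dd w$ on a fixed small circle, combined with the uniform convergence in \cref{prop:polecoalescence}. This gives $\Theta_k\to\frac{1}{2\pi i}\oint w^k(\lambda-h)\,\dd w$, which equals $c_{k+1}$ for $k\leq m-1$ and $0$ for $k\geq m$, with a bound uniform in $k$ when the radius is at most $1$. Summing against $\varphi_k$ then yields $\sum_{k\geq 0}\varphi_k c_{k+1}$, where $c_{k+1}=0$ for $k\geq m$, as claimed.
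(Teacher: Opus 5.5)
Your main route is correct and is essentially the paper's proof: you expand $\Phi$ at $p^\ast$ so the sum becomes $\sum_k\varphi_k\,c_{k+1}(n)$, control the higher moments $c_{k+1}(n)$ for $k\geq m$ by powers of $\delta_n$ through \cref{lem:polarisednewton}, and exchange limit and series by dominated convergence. Your explicit bound $\abs{a_{k,j}}\leq C_m\binom{k+m}{m}\delta_n^{k-j}$ is more careful than the paper's claim of a $k$-independent constant $C'$. The naive recursion does not directly give such a constant, although this is harmless once $\delta_n$ is small. Your contour-integral variant $\sum_i r_i\Phi(p_i)=\frac{1}{2\pi i}\oint_\gamma \Phi\,(\lambda_n-h)\,\dd{z}$ also gives a valid, shorter argument.
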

\begin{proof}
    For finite, sufficiently large $n\,$, with all the finite simple poles of $\lambda_n$ lying within the disk around $p^\ast$ where the Taylor expansion of $\Phi$ converges, we have:
    \[
    \sum_{i=1}^m r_i \,\Phi(p_i)=\sum_{j\geq 0}\varphi_j\sum_{i=1}^m r_i \Delta_i^j=\sum_{j\geq 0}\varphi_j\, c_{j+1}(n)\,.
    \]
    Thanks to \cref{prop:polecoalescence}, $\lim_{n\to\infty} c_i(n)=c_{i}\,$, implying that the statement follows if we can exchange the order of the limit and the series.
   Hence, by the dominant convergence Theorem, we look to bound $ \xi_i(n):=\abs{\varphi_i\,c_{i+1}(n)}$ by the general term $g_i$ -- crucially, independent of $n$ -- of an absolutely convergent series $\sum_{i\geq 0}g_i\,$. 

   According to \cref{prop:polecoalescence}, the first $m$ sequences $\{c_1(n),\dots, c_m(n)\}_{n\geq 0}$ converge to the Laurent coefficients of $\lambda\,$ -- in particular, they are bounded -- while the higher-degree ones converge to zero. Using the relation in \cref{lem:polarisednewton}, we have, for  positive $i\,$:
\[
\abs{c_{m+i}(n)}\leq\sum_{j=0}^{m-1}\abs{e_{m-j}}\,\abs{c_{i+j}(n)} \leq C \sum_{j=0}^{m-1}\delta_n^{m-j}\,\abs{c_{i+j}(n)}\,,
\]
for some $C>0\,$. Here, $\delta_n:=\max_i\abs{\Delta_i(n)}$ is a positive sequence that converges to zero. In particular, $\abs{c_i(n)}$ for $i>m$ is bounded by a linear combination of the previous $m$ such sequences, with coefficients that vanish in the limit. Therefore, it is easy to prove inductively that, for any $i> m\,$:
\[
\abs{c_i(n)}\leq C'\,\delta_n^{i+1-m}\,.
\]
Fix, then, $\varepsilon\in(0,1)$ and let $N\in\ZZ_{\geq 1}$ such that $\delta_n<\varepsilon$ for $n\geq N\,$. Then, for $n\geq \max\{N,m-1\}\,$, we have the uniform bound:
\[
\xi_i(n)=\abs{\varphi_i\,c_{i+1}(n)}<C'\,\abs{\varphi_i}\varepsilon^{i+1-m}\,.
\]
Under the assumption that $\sum_{i\geq 0}\varphi_i$ absolutely converges, the ratio test gives that the series with general term $g_i:= \abs{\varphi_i}\varepsilon^{i+1-m}>0$ converges, which proves the statement.
\end{proof}
\begin{remark}
    In particular, only $c_1,\dots, c_m$ are at most non-zero, therefore the series $\sum_{i\geq 0}\varphi_ic_{i+1}$ is actually a finite sum and the result only depends on the first $m$ Taylor coefficients of $\Phi$ at the collision point.
\end{remark}
Thus, according to \cref{cor:generalpolecollapselinear}, $F_{\mathrm{lin}}$ will have a finite, computable limit as $n\to\infty\,$, while any divergence of the prepotential is contained in $F_{\log}$.

Before moving on to computing the components of $\iota_n^\ast c^\ttop$ in the collision chart, we notice the following. Since movable poles lie in $\CC^\ast$ and the $\phi$-flat coordinates are  given by logarithms of the simple-pole locations in the top-dimensional stratum, it is more natural to deform the collision point multiplicatively rather than additively. Let us pick a branch of the logarithm and set the collision point at $e^{\beta}\,$. In other words, we redefine our collision quantities as:
\begin{equation}\label{eq:collisionchartexp}
\begin{aligned}
        \beta&:=\tfrac1m (\beta_1+\dots+\beta_m)\,,&&&
        \Delta_i&:=\beta_i-\beta\,,&&&
c_i&:=\sum_{j=1}^m \alpha_j\,\Delta_j^{i-1}\,.
\end{aligned}
\end{equation}
Two items need to be checked: first, that this still defines a sequence of functions converging uniformly on compact subsets that do not contain the collision point, and second, whether the new $c$-coordinates still have a meaningful interpretation as coefficients of the limit function. It is easy to check that \cref{prop:polecoalescence} still holds true for the function $\widetilde{\lambda}_n(z):=\lambda_n(e^z)\,$, i.e. that the sequence of functions $\{\widetilde{\lambda}_n\}_{n\geq 0}$ converges to $\widetilde{\lambda}(z):=\lambda(e^z)$ uniformly on compact sets away from $\beta\,$, and that $c_1,\dots, c_m$ are the principal Laurent coefficients of $\widetilde{\lambda}$ at $\beta\,$. Since, once a branch is chosen, the map $\beta\mapsto e^\beta$ is biholomorphic, it follows that the result regarding uniform convergence is preserved. The collision shape $(\kappa_1,\dots, \kappa_m)\in (\CC^\ast)^m\smallsetminus\diag$ is introduced likewise. With an abuse of notation, we shall still call $c_1,\dots, c_m$ \textit{principal Laurent coefficients} of the limit function.

\begin{lem}\label{lem:ctensorcomponebeta}
    In the exponential collision chart defined by \eqref{eq:collisionchartexp}, $F_{\log}$ contributes only to the components of $\iota_n^\ast c^\ttop$ in the directions with at most one $\beta$-entry, while, conversely, the third derivatives of $F_{\mathrm{lin}}$ only appear in the $\beta\beta$-block. 
    
    Explicitly, in the first block we have:
    \[
\begin{aligned}
 \bigl(\iota_n^\ast c^\ttop\bigr)_{ijk}&=\delta_n^{m+2-(i+j+k)}\sum_{a=1}^m\frac{(V^{-1}_\kappa)_{ai}(V^{-1}_\kappa)_{aj}(V^{-1}_\kappa)_{ak}}{\sum_{s=1}^m\delta_n^{m-s}(V^{-1}_\kappa)_{as}c_s}\,,\\
 \bigl(\iota_n^\ast c^\ttop\bigr)_{\beta jk}&=\delta_{j1}\delta_{k1}\,.
 \end{aligned}
    \]
\end{lem}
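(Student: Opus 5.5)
The plan is to view $\iota_n^\ast c^\ttop$ as the third covariant derivative of $F^\ttop=F_{\log}+F_{\mathrm{lin}}$, computed in the exponential collision chart $(\beta,c_1,\dots,c_m,e_2,\dots,e_m)$ restricted to the slice of fixed collision shape. On that slice the $e$'s are frozen: $\Delta_a=\delta_n\kappa_a$ with $\kappa$ fixed. The old flat coordinates $(\alpha_a,\beta_a)$ are related to the chart by
\[
\beta_a=\beta+\delta_n\kappa_a,\qquad
\alpha_a=\sum_{s=1}^m (V_\Delta^{-1})_{as}\,c_s=\sum_{s=1}^m\delta_n^{1-s}(V_\kappa^{-1})_{as}\,c_s .
\]
So $\alpha$ is linear in $c$ and independent of $\beta$, while $\beta_a$ is affine in $\beta$ and independent of $c$. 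The map $(\beta,c)\mapsto(\alpha,\beta_a)$ is therefore affine in the tangent directions of the slice. Hence the Christoffel terms vanish, and the components are plain third partial derivatives of $F^\ttop\circ\iota_n$. This is the first step to check, and I expect it to be immediate.

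Next, I split by block.

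\textbf{The $F_{\mathrm{lin}}$ part.} $F_{\mathrm{lin}}$ is at most linear in $\alpha$, hence at most linear in $c$. Any third derivative with two or more $c$-indices therefore kills it. Its surviving contributions lie in directions with at least two $\beta$-indices, i.e.\ the $\beta\beta$-block.

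\textbf{The interaction sum in $F_{\log}$.} The term $\sum_{i<j}\alpha_i\alpha_j\log(e^{\beta_i}-e^{\beta_j})$ depends on $\beta$ only through $\log(e^{\beta_i}-e^{\beta_j})=\beta+\log(e^{\delta_n\kappa_i}-e^{\delta_n\kappa_j})$. It is thus of the form $Q(c)\,\beta+(\text{const in }\beta)$ with $Q$ quadratic in $c$. Its third $c$-derivatives vanish, and its only nonzero third derivative is $\partial_\beta\partial_{c_j}\partial_{c_k}$.

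\textbf{The diagonal sum in $F_{\log}$.} The term $\tfrac12\sum\alpha_i^2(\log\alpha_i+\beta_i)$ contributes $\tfrac12\sum\alpha_i^2\beta_i$, whose $\beta c c$-derivative is $\sum_a\partial_{c_j}\alpha_a\,\partial_{c_k}\alpha_a$. It also contributes the purely-$c$ term $\tfrac12\sum\alpha_a^2\log\alpha_a$.

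Combining the $\beta c c$ pieces, I must show
\[
\sum_a \alpha_a^{(j)}\alpha_a^{(k)}+2\sum_{a<b}\alpha_a^{(j)}\alpha_b^{(k)}\ \text{(symmetrised)}=\Bigl(\sum_a\alpha_a^{(j)}\Bigr)\Bigl(\sum_b\alpha_b^{(k)}\Bigr),
\qquad \alpha^{(j)}_a:=\partial_{c_j}\alpha_a .
\]
Since $\sum_a\alpha_a=c_1$, we have $\sum_a\alpha_a^{(j)}=\delta_{j1}$, which gives $\delta_{j1}\delta_{k1}$. Any $\beta\beta$ or $\beta\beta\beta$ pieces from $F_{\log}$ vanish because $F_{\log}$ is affine in $\beta$ along the slice. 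This shows that $F_{\log}$ only reaches components with at most one $\beta$.

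\textbf{The $ccc$-block.} Differentiating $\tfrac12\alpha_a^2\log\alpha_a$ three times along linear directions gives
\[
\tfrac{\alpha_a^{(i)}\alpha_a^{(j)}\alpha_a^{(k)}}{\alpha_a},\qquad
\alpha_a^{(i)}=\delta_n^{1-i}(V_\kappa^{-1})_{ai},\qquad
\alpha_a=\sum_s\delta_n^{1-s}(V_\kappa^{-1})_{as}c_s .
\]
Multiplying numerator and denominator by $\delta_n^{m-1}$ produces the prefactor
\[
\delta_n^{3-(i+j+k)+(m-1)}=\delta_n^{m+2-(i+j+k)}
\]
and the denominator $\sum_s\delta_n^{m-s}(V_\kappa^{-1})_{as}c_s$, exactly as stated.

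The main obstacle is bookkeeping rather than analysis. One point is making sure the $\beta$-dependence of $\log(e^{\beta_i}-e^{\beta_j})$ really factors through $\beta$ additively. Another is correctly identifying the normalisation of $c_i$ in the exponential chart, since $\alpha_a$ rather than $r_a=\alpha_ae^{\beta_a}$ enters.
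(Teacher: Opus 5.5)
Your proposal is correct and follows essentially the same route as the paper. Both arguments rest on the same facts. For fixed $n$ and collision shape, $\alpha$ is linear in $c$ and $\beta_a=\beta+\Delta_a$ is affine in $\beta$, so $F_{\log}$ is affine in $\beta$ and $F_{\mathrm{lin}}$ is at most linear in $c$. The $ccc$-block then comes from $\partial^3_{\alpha_a}\bigl(\tfrac12\alpha_a^2\log\alpha_a\bigr)=1/\alpha_a$, and the $\beta cc$-block from the column sums $\sum_a (V_\Delta^{-1})_{aj}=\delta_{j1}$, which is your identity $\sum_a\alpha_a=c_1$.

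Your explicit splitting into diagonal and interaction pieces is the paper's observation that $\partial_\beta\partial_{\alpha_b}\partial_{\alpha_c}F_{\log}=1$ for all $b,c$, written out term by term.
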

\begin{proof}
  Firstly, the fact that the logarithmic terms do not contribute to any component in more than one $\beta$ dimension is a consequence of the fact that $\pdv{\beta}\log(e^{\beta_i}-e^{\beta_j})=1$ for any $i<j\,$. Since $\beta$ is a linear combination of flat coordinates, this proves that the logarithmic term does not contribute to the $\beta\beta $-block. Similarly, $F_{\mathrm{lin}}$ does not contribute to any component in more than one $c$-direction as it is, by construction, at most linear in $\alpha\,$. Secondly, we have, as a consequence,
    \begin{align*}
        (\iota_n^\ast c^\ttop)_{ijk}&=\sum_{a,b,c=1}^m\pdv{\alpha_a}{c_i}\pdv{\alpha_b}{c_j}\pdv{\alpha_c}{c_k}\frac{\partial^3 F_{\log}}{\partial \alpha_a\partial\alpha_b\partial\alpha_c}\\&=\delta_n^{m+2-(i+j+k)}\sum_{a=1}^m\frac{(V^{-1}_\kappa)_{ai}(V^{-1}_\kappa)_{aj}(V^{-1}_\kappa)_{ak}}{\sum_{s=1}^m\delta_n^{m-s}(V^{-1}_\kappa)_{as}c_s}\,,\\
        (\iota_n^\ast c^\ttop)_{\beta jk}&=\sum_{b,c=1}^m\pdv{\alpha_b}{c_i}\pdv{\alpha_c}{c_j}\frac{\partial^3 F_{\log}}{\partial \beta\partial\alpha_b\partial\alpha_c}\\
        &=\sum_{b,c=1}^m (V^{-1}_\Delta)_{bj}(V^{-1}_\Delta)_{ck}=\delta_{j1}\delta_{k1}\,,
    \end{align*}
    where the last equality follows from $\sum_{i=1}^m (V^{-1})_{ij}=\delta_{j1}$ in any set of variables.
\end{proof}
According to \cref{lem:divergencectensorcomponents}, the components of $c^\low$ that do not require renormalisation are precisely the ones  arising from third derivatives of  $F_{\mathrm{lin}}\,$, which has a finite limit as $n\to\infty$ by  \cref{cor:generalpolecollapselinear}. In particular, by \cite[Thm. 4.3]{PS26}, we have:
\begin{equation}\label{eq:Flin}
    F_{\mathrm{lin}}(\underline{t},\underline{\alpha},\underline{\beta})=F^{\mathrm{DZ}(\hat{\ell},\bar{k})}(\underline{t})+\sum_{i=1}^m\alpha_i\,H(\beta_i,\underline{t})\,,\qquad H'(z,\underline{t})=h(e^z,\underline{t})\,,
\end{equation}
where $F^{\mathrm{DZ}(\hat{\ell},\bar{k})}(\underline{t})$ denotes the type $A$ Dubrovin--Zhang solution associated with the extended affine Weyl group $\widetilde{W}^{(\bar{k})}(A_{\hat{\ell}+\bar{k}})$, and the coordinates $\underline{t}$ are flat coordinates for the corresponding Dubrovin--Zhang manifold.
\begin{lem}\label{lem:ctensorcomptwobeta}
    In the exponential collision coordinates defined by \cref{eq:collisionchartexp}, the components of $\iota_n^\ast c^\ttop$ that depend on $F_{\mathrm{lin}}$ are:
    \[
    \begin{aligned}
\bigl(\iota_n^\ast c^\ttop\bigr)_{\beta\beta k}&=\tfrac{1}{(k-1)!}\,H^{(k+1)}(\beta)+\order{\delta_n^{m+1-k}}\,,\\
\bigl(\iota_n^\ast c^\ttop\bigr)_{\beta\beta\beta}&=\sum_{i=1}^m\tfrac{1}{(i-1)!} c_i\,H^{(i+2)}(\beta)+\order{\delta_n}\,.
\end{aligned}
    \]
    The derivatives of $H$ are to be taken with respect to $\beta\,$.
\end{lem}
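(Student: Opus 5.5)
We will work directly from the decomposition \eqref{eq:Flin}, since \cref{lem:ctensorcomponebeta} already shows that the components of $\iota_n^\ast c^\ttop$ carrying two or three $\beta$-indices come only from $F_{\mathrm{lin}}$. On the cluster being collapsed, the only $\beta$- and $\alpha$-dependent part of $F_{\mathrm{lin}}$ is $G:=\sum_{i=1}^m\alpha_i H(\beta_i,\underline t)$; the term $F^{\mathrm{DZ}}$ does not depend on the collision coordinates. Write $\beta_i=\beta+\Delta_i$ and $\Delta_i=\delta_n\kappa_i$. Expanding $H$ in Taylor series about $\beta$ then gives
\[
G=\sum_{r\geq 0}\tfrac{1}{r!}H^{(r)}(\beta)\sum_{i=1}^m\alpha_i\Delta_i^r=\sum_{r\geq 0}\tfrac{1}{r!}H^{(r)}(\beta)\,c_{r+1}(n)\,.
\]
This is exactly the identity used in the proof of \cref{cor:generalpolecollapselinear}. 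For $n$ large the expansion converges, because $H$ is holomorphic near $\beta$ (its derivative is $h(e^z)$, and $h$ is holomorphic at the collision point).

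To differentiate, we fix the collision shape $\underline\kappa$ and treat $(\beta,c_1,\dots,c_m)$ as coordinates, so that the $\Delta_i$ are held fixed. Then $c_{r+1}(n)$ depends only on the $c$'s. For $r\le m-1$ it is just the coordinate $c_{r+1}$. For $r\geq m$, \cref{lem:polarisednewton} expresses it as a linear combination of $c_1,\dots,c_m$ whose coefficients are polynomials in $e_2(\Delta),\dots,e_m(\Delta)$, and the inductive bound of \cref{cor:generalpolecollapselinear} shows these coefficients are $O(\delta_n^{r+1-m})$. Differentiating $G$ twice in $\beta$ and once in $c_k$ therefore gives
\[
\partial^2_\beta\partial_{c_k}G=\tfrac{1}{(k-1)!}H^{(k+1)}(\beta)+\sum_{r\geq m}\tfrac{1}{r!}H^{(r+2)}(\beta)\,\partial_{c_k}c_{r+1}(n)\,.
\]
The first term is the $r=k-1$ contribution. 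The tail must then be shown to be $O(\delta_n^{m+1-k})$, which is the main quantitative step. Differentiating three times in $\beta$ gives $\sum_{r}\tfrac{1}{r!}H^{(r+3)}c_{r+1}(n)$. After the shift $i=r+1$ its truncation to $r\le m-1$ is the stated finite sum, and the tail is $O(\delta_n)$ by the same bound applied to $c_{m+1}(n)=O(\delta_n)$.

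The tail estimate is the main obstacle. The naive bound $\partial_{c_k}c_{r+1}(n)=O(\delta_n^{r+1-m})$ only gives $O(\delta_n)$, whereas the statement asks for the sharper $O(\delta_n^{m+1-k})$. To obtain it we will use the structure of the recursion in \cref{lem:polarisednewton} more carefully. Differentiate the linear recursion $c_{m+s}=-\sum_{j=0}^{m-1}(-1)^{m-j}e_{m-j}c_{s+j}$ with respect to $c_k$. Since $e_{m-j}$ has degree $m-j$ in $\Delta$, the seed $c_k$ enters $c_{m+s}$ with weight $\delta_n^{m+s-k}$, which is the homogeneity of $\sum_i\alpha_i\Delta_i^{m+s-1}$ once $\alpha$ is expressed through $V_\Delta^{-1}$. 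Equivalently, $\partial_{c_k}c_{r+1}(n)=\sum_i(V_\Delta^{-1})_{ik}\Delta_i^{r}$, which is homogeneous of degree $r+1-k$ in $\Delta$. This yields $\partial_{c_k}c_{r+1}=O(\delta_n^{r+1-k})$ with a constant depending only on $\underline\kappa$. Summing over $r\ge m$ against the Taylor coefficients $H^{(r+2)}/r!$, which are $\ell^1$-summable after shrinking the disk, then gives the error $O(\delta_n^{m+1-k})$ by the dominated-convergence argument of \cref{cor:generalpolecollapselinear}. The $\beta\beta\beta$ error is handled in the same way.

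Finally, we check that no other contributions arise. $F_{\log}$ does not feed into the $\beta\beta$-block, by \cref{lem:ctensorcomponebeta}. Mixed derivatives in which $\partial_\beta$ acts on $\underline t$-dependence do not appear, since $\underline t$ is a separate set of coordinates. The remaining clusters of simple poles contribute $\alpha_jH(\beta_j)$ terms that are independent of this cluster's $(\beta,c)$.
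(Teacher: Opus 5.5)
Your proposal is correct and follows essentially the paper's proof. The paper writes $(\iota_n^\ast c^{\ttop})_{\beta\beta k}=\sum_i (V_\Delta^{-1})_{ik}H''(\beta+\Delta_i)$ and $(\iota_n^\ast c^{\ttop})_{\beta\beta\beta}=\sum_i \alpha_i H'''(\beta+\Delta_i)$, Taylor-expands about $\beta$, and uses $\sum_i (V_\Delta^{-1})_{ik}\Delta_i^j=\delta_{j+1,k}$ for $j<m$ together with $\sum_i (V_\Delta^{-1})_{ik}\Delta_i^j=\delta_n^{j+1-k}\sum_i (V_\kappa^{-1})_{ik}\kappa_i^j$ for the tail; this is exactly the homogeneity you land on, so your detour through the Newton recursion is unnecessary. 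One small inaccuracy: your closing remark ignores cross terms such as $\alpha_i\gamma_j\log(e^{\beta_i}-e^{\xi_j})$ in a partial collapse, which do depend on this cluster's $(\beta,\underline c)$; however, they are again of the form $\alpha_i\Phi(\beta_i)$ with $\Phi$ holomorphic near the collision point, so they only modify $H$ and your argument is unaffected.
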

\begin{proof}
    According to \cref{lem:ctensorcomponebeta}, $F_{\log}$ does not contribute to the components in more than one $\beta$ direction. In the computations, we omit the $t$-dependence for the sake of brevity. We have
    \[
    \begin{aligned}
        \bigl(\iota_n^\ast c^\ttop\bigr)_{\beta\beta k}&=\sum_{i=1}^m\pdv{\alpha_i}{c_k}\frac{\partial^3 F_{\mathrm{lin}}}{\partial \beta\partial\beta\partial\alpha_i}=\sum_{i=1}^m(V^{-1}_\Delta)_{ik}\,H''(\beta+\Delta_i)\\
        &=\sum_{j\geq 0}\tfrac{1}{j!}H^{(j+2)}(\beta)\sum_{i=1}^m(V^{-1}_\Delta)_{ik}\Delta_i^j\\&=\tfrac{1}{(k-1)!}\,H^{(k+1)}(\beta)+\delta_n^{m+1-k}\sum_{j\geq m}\delta_n^{j-m}\tfrac{1}{j!}H^{(j+2)}(\beta)\sum_{i=1}^m (V^{-1}_\kappa)_{ik}\kappa_i^j\,,\\
         \bigl(\iota_n^\ast c^\ttop\bigr)_{\beta\beta \beta}&=\sum_{i=1}^m H'''(\beta+\Delta_i)\sum_{j=1}^m (V_\Delta^{-1})_{ij}c_j\\
         &=\sum_{i=1}^m\tfrac{1}{(i-1)!} c_i\,H^{(i+2)}(\beta)+\delta_n\sum_{k\geq m}\sum_{i,j=1}^m\delta_n^{k-j} \tfrac{1}{k!}H^{(k+3)}(\beta)\,c_j\,(V^{-1}_\kappa)_{ij}\kappa_i^k\,. 
    \end{aligned}
    \]
\end{proof}
On the other hand, by  \cref{cor:generalpolecollapselinear},
\[
\lim_{n\to\infty}\sum_{i=1}^m\alpha_iH(\beta_i)=\sum_{i=1}^m\tfrac{1}{(i-1)!}\,c_i\,H^{(i-1)}(\beta)\,.
\]
Clearly, taking the appropriate third derivatives with respect to the collision coordinates reproduces the limits of the expressions for the components of $\iota_n^\ast c^\ttop$ in \cref{lem:ctensorcomptwobeta}.
\begin{prop}\label{prop:prepotentiallimit}
    Let $F^\ttop$ be the prepotential in the top-dimensional stratum $\Hw_{\hat{\ell},\bar{k}}^\ttop\,$ and let $F_{\log}$ and $F_{\mathrm{lin}}$ be as in \cref{eq:Ftop,eq:Flog,eq:Flin}. If $F^\low$ denotes the prepotential in the target stratum in the collision chart, then:
    \[
    F^\low(\underline{t},\underline{c},\beta)=F_{\mathrm{ren}}(\underline{c},\beta)+\lim_{n\to\infty} F_{\mathrm{lin}}(\underline{t},\underline{c},\beta)\,,
    \]
    where $F_{\mathrm{ren}}$ comes from suitably renormalising the divergences in $F_{\log}\,$. In particular, it is universal in the sense that it only depends on the number of poles that are being collapsed, and not on the terms in the superpotential that are kept fixed.
\end{prop}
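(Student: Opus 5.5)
The plan is to reconstruct $F^\low$ from its third derivatives in the collision chart $(\underline{t},\beta,\underline{c})$ on the target stratum, and to compare these component by component with the limits of $\iota_n^\ast c^\ttop = c^\out_n + c^\inn_n$. By \cref{lem:expansioncritical}, $c^\low=\lim_n c^\out_n = \lim_n(\iota_n^\ast c^\ttop - c^\inn_n)$. So $F^\low$ is determined, up to terms in the kernel of $\nabla^3$, by the collision-chart components of $\lim_n\iota^\ast_n c^\ttop$ minus the surviving inner contributions. (Third derivatives in the curved $c$-directions involve Christoffel symbols; I would handle this by working with the tensor identity and only passing to a potential at the end, where it suffices to exhibit one function whose covariant Hessian-of-Hessian matches.)

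\textbf{Step 1.} I will split $F^\ttop=F_{\log}+F_{\mathrm{lin}}$ as in \cref{eq:Ftop}. By \cref{lem:ctensorcomponebeta}, $F_{\mathrm{lin}}$ only feeds the components with at least two $\beta$-indices (and the $t$-directions, which are untouched by the collision). By \cref{lem:divergencectensorcomponents}, these components have vanishing inner contributions, so they need no renormalisation. \cref{lem:ctensorcomptwobeta}, combined with \cref{cor:generalpolecollapselinear}, shows that they coincide with the third derivatives of $\lim_n F_{\mathrm{lin}}=F^{\mathrm{DZ}}(\underline{t})+\sum_i \tfrac{1}{(i-1)!}c_iH^{(i-1)}(\beta,\underline{t})$. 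Mixed $t$-components are handled by the same argument, since $h$ depends on $\underline{t}$ only through the fixed part and \cref{cor:convergencederivatives} lets limits pass through derivatives. This identifies the $F_{\mathrm{lin}}$ sector of $F^\low$.

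\textbf{Step 2.} The remaining components are those with at most one $\beta$-index and at least two $c$-indices. These come only from $F_{\log}$, and by \cref{lem:ctensorcomponebeta} they are explicit rational expressions in $\delta_n$, $\underline{\kappa}$ and $\underline{c}$, independent of $h$ and $\underline{t}$. The $\beta jk$ components equal $\delta_{j1}\delta_{k1}$ exactly and need no renormalisation. For the $ijk$ components, I will subtract $(c^\inn_n)_{ijk}$, computed as in the proof of \cref{lem:divergencectensorcomponents} from residues at the $m-1$ inner critical points. Using \cref{cor:asymptoticinnerCP}, the local data $\lambda_n''(s_n)$ and $\partial_{c_i}\lambda_n(s_n)$ are, to leading orders, determined by the principal part $\sum_i r_i/(z-p_i)$ alone, because $h$ enters only at relative order $\delta_n^{m}$. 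The difference therefore has a finite limit depending only on $m$ and $\underline{c}$. I would then define $F_{\mathrm{ren}}(\underline{c},\beta)$ as a potential for this limiting tensor together with the $\tfrac12 c_1^2\beta$-type term that reproduces $\delta_{j1}\delta_{k1}$. Its existence follows because $c^\low$ is a genuine Codazzi tensor on the target stratum, so the difference from the $F_{\mathrm{lin}}$ part is closed.

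\textbf{Main obstacle.} The hard step is universality: showing that the $h$-dependent corrections to the inner residues genuinely vanish in the limit rather than combining with the divergent $\delta_n^{m+2-(i+j+k)}$ prefactors to give finite $h$-dependent terms. I would first try to show directly that the $h$-correction to each inner residue is suppressed by $\delta_n^{m}$ relative to the leading term, which beats the worst divergence $\delta_n^{2-m}$ (attained at $i=j=k=m$). As a fallback, I would use the fact that $\sum_{\Cr_n}\Res=-\sum_{\text{poles}}\Res$ to rewrite the inner sum as a residue at $\infty$ of the rescaled local model near the cluster, which is manifestly $h$-independent up to terms of order $o(1)$.
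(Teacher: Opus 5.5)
\textbf{There is a genuine gap}, in Step 2 and in your treatment of mixed $\underline{t}$-components in Step 1. You assume that the sector split of \cref{lem:ctensorcomponebeta} survives renormalisation: $F_{\mathrm{lin}}$ feeds only components with at least two $\beta$'s, and $F_{\log}$ feeds the rest. From this you conclude two things. First, that the $(\beta,c_j,c_k)$ and $(t,c_j,c_k)$ components need no renormalisation. Second, that the renormalised $F_{\log}$-sector tensor depends only on $m$ and $\underline{c}$. Both conclusions are false.

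The split is valid for $\iota_n^\ast c^{\ttop}$, because that tensor comes from the chain rule in the $\eta^{\ttop}$-flat coordinates $(\alpha,\beta)$. But $c^{\low}={}^{\low}\nabla^3F^{\low}$, and the collision chart is not flat for $\eta^{\low}$: its Christoffel symbols in the $(c_2,\dots,c_m)$-block are non-zero. You mention this in passing, yet you still identify the $F_{\mathrm{lin}}$ sector through ordinary third derivatives. In fact ${}^{\low}\nabla^3(\lim_nF_{\mathrm{lin}})$ has non-zero, $h$-dependent components in the $(\beta,c_j,c_k)$ and $(t,c_j,c_k)$ directions, and for $m\ge 3$ also in the $(c_i,c_j,c_k)$ directions. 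They come from the terms $\Gamma^l_{jk}\,\partial_{c_l}\lim_nF_{\mathrm{lin}}=\Gamma^l_{jk}H^{(l-1)}(\beta,\underline{t})/(l-1)!$. Correspondingly, the inner residues in exactly these directions have finite, $h$-dependent limits.

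You can check this on the paper's own examples:
\begin{itemize}
\item For $m=2$, with $\eta^{\low}=\dd{\beta}\,\dd{c_1}+\tfrac{1}{2c_2}\dd{c_2}^2$ and the final prepotential given there, one gets $c^{\low}_{\beta c_2c_2}=(e^\beta-e^{t_2-\beta})/(2c_2)=H''(\beta)/(2c_2)$ and $c^{\low}_{t_1c_2c_2}=1/(2c_2)$. By contrast, $(\iota_n^\ast c^{\ttop})_{\beta c_2c_2}=(\iota_n^\ast c^{\ttop})_{t_1c_2c_2}=0$.
\item For $m=3$, rewriting $c_2H'(\beta)+\tfrac12 c_3H''(\beta)$ in the flat coordinates \cref{eq:flatthreepoles} shows that $c^{\low}_{c_3c_3c_3}$ contains $H''(\beta)/(9c_3^2)$.
\end{itemize}
The mechanism is already visible in the proof of \cref{lem:divergencectensorcomponents}. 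There $\partial_{p^\ast}\lambda_n(s_n)=h'(s_n)$, so the inner residue of $\phi^{(n)}_{\ast jk}$ at $s_n$ is finite and proportional to $h'(p^\ast)$ when $j+k=m+2$.

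Consequently the ``main obstacle'' you set yourself is a false target: the $h$-dependent parts of the inner residues do not vanish. Your estimate is also off.
\begin{itemize}
\item At the cluster scale $\lambda_n'\sim\delta_n^{-(m+1)}$, so the $h'$-correction has relative order $\delta_n^{m+1}$, not $\delta_n^{m}$.
\item The worst leading order is $\delta_n^{m+2-3m}=\delta_n^{2-2m}$, not $\delta_n^{2-m}$.
\item Their product is $\delta_n^{3-m}$, which does not vanish for $m\ge 3$. Indeed the $h'(p^\ast)$-term of the rescaled local model has a non-zero residue at $\infty$ when $i+j+k=2m+3$, so your fallback does not help either.
\end{itemize}

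The missing idea is that these surviving $h$-dependent inner contributions are exactly the Christoffel corrections of $\eta^{\low}$ acting on $\lim_nF_{\mathrm{lin}}$. Universality must therefore be shown for $c^{\low}-{}^{\low}\nabla^3(\lim_nF_{\mathrm{lin}})$, using the full covariant derivative, rather than by showing that the inner $h$-terms vanish. This also needs a check, absent from your proposal, that the collision block of $\eta^{\low}$ is $h$-independent and orthogonal to the $\underline{t}$-directions.

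The paper sidesteps the component split. It regularises the function $F^{\ttop}\circ\iota_n$ by an $h$-independent counterterm $CT_n$ built from $F_{\log}$ alone. It then computes ${}^{\low}\nabla^3F^{(\mathrm{reg})}$ for the whole regularised limit, which already contains $\lim_nF_{\mathrm{lin}}$ together with its Christoffel terms. Only the remaining discrepancy $D$ is absorbed into a function $G$ with ${}^{\low}\nabla^3G=D$.
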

\begin{proof}
We provide an algorithmic procedure to construct the renormalisation, while showing that the only terms that can be affected come from $F_{\log}\,$.

    Fix a collision shape $\underline{\kappa}\in (\CC^\ast)^m\smallsetminus\diag$ and let $\{\iota_n\}_{n\geq 0}$ be the associated family of immersions of the lower-dimensional stratum into the top-dimensional, as discussed in \cref{subs:metric3ptfunction}. Consider the family of functions $\{F^\ttop_n:=F^\ttop\circ \iota_n\}_{n\geq 0}$ on the target stratum, written in terms of the collision coordinates. Due to \cref{cor:generalpolecollapselinear}, the failure of this family to have a pointwise limit as $n\to\infty$ is completely determined by $F_{\log}\,$. 
    
    In particular, given the explicit expression for $F_{\log}$ in \cref{eq:Flog} and the change of variables in \eqref{eq:collisionchartexp}, one can explicitly identify the divergent terms and \textit{regularise} the family of functions by subtracting a $\underline{\kappa}$-dependent counterterm $CT_n\,$. This produces the sequence $\bigl\{F^{\mathrm{(reg)}}_n:=F^\ttop_n-CT_n\}_{n\geq 0}$ that has a finite pointwise limit $F^{\mathrm{(reg)}}$  on the target stratum as $n\to\infty\,$. Let ${}^\low\nabla$ be the Levi--Civita connection of the Saito metric $\eta^\low\,$, and compute the third covariant derivative ${}^\low\nabla^3 F^{\mathrm{(reg)}}\,$. If this coincides with $c^\low\,$, then regularisation coincides with renormalisation and no further step is needed.  If it does not, define the tensor field:
    \[
    D:={}^\low\nabla^3 F^{\mathrm{(reg)}}-c^\low\,.
    \]
    If there exists a function $G$ on the target stratum such that ${}^\low\nabla^3 G=D\,$, then the prepotential $F^\low$ is given by $F^\low=F^{\mathrm{(reg)}}-G\,$. Since $\eta^\low$ is flat, this follows from \cite{Fer81}. As we have noticed, the third derivatives of the limit of $F_{\mathrm{lin}}$ already give the three-point function components to which they contribute (i.e. the ones in at least two $\beta$ directions). As a consequence, the components of $D$ in those directions are automatically zero, up to terms depending on lower derivatives of $F^{\mathrm{(reg)}}$ due to the fact that the collision coordinates are, in general, not flat. Hence, the explicit form of $G$ will depend on how the flat coordinates of $\eta^\low$ are given as functions of the collision chart -- and, possibly, up to remaining discrepancies coming from how the naive counterterm $CT_n$ was chosen -- but will not modify the terms coming from the limit of $F_{\mathrm{lin}}\,$, as they are known to contribute to some three-point function components of $c^\low\,$. 
\end{proof}
\begin{example}\label{ex:prepotentialtwopoles}
      Let us compute the renormalisation explicitly for the collision of two poles. The logarithmic terms that need renormalisation are:
    \[
    F_{\log}(\underline{\alpha},\underline{\beta})=\tfrac12 \alpha_1^2\log(\alpha_1e^{\beta_1})+\tfrac12\alpha_2^2\log(\alpha_2e^{\beta_2})+\alpha_1\alpha_2\log(e^{\beta_1}-e^{\beta_2})\,.
    \]
    Consider the exponential collision chart in \cref{eq:collisionchartexp}. For $m=2\,$, we pick $\Delta:=\Delta_1=\tfrac12(\beta_2-\beta_1)$ as the shape collision parameter, which tends to zero as we approach the limit stratum. Using the inverse coordinate transformation, we get to:
    \[
    \begin{aligned}
        F_{\log}(c_1,c_2,\beta)&=\frac{c_2^2}{4\Delta^2}\log\frac{c_2}{4\Delta^2}+\tfrac12c_1^2\beta+\tfrac14 c_1^2\log c_2+\tfrac12 c_1c_2-\tfrac{1}{24}c_2^2+\order{\Delta^2}\,.
    \end{aligned}
    \]
    We regularise by a minimal regularisation scheme, i.e. we subtract:
    \[
    CT_\Delta(c_2):=\frac{c_2^2}{4\Delta^2}\log\frac{c_2}{4\Delta^2}\,.
    \]
    
   In order to match the corresponding finite limit with the prepotential on the target, we  need flat coordinates on the limit stratum. Using the expression for the metric in residue-pole coordinates in \cite{PS26} on the top-dimensional stratum, it is easy to see that, in collision coordinates, the pull-back to the lower-dimensional stratum is degenerate: $\dd{\beta}\dd{c_1}\,$. This is true for any number of colliding poles. In particular, this ensures that $\beta$ and $c_1$ are still flat coordinates on the target stratum -- as expected from the formulae in \cite[Theorem 5.1]{Dub96} -- and that the non-trivial flat coordinates come from the inner critical points. In particular, using the asymptotics in \cref{cor:asymptoticinnerCP}, one finds:
   \[
   \eta^\low=\dd{\beta}\dd{c_1}+\tfrac{1}{2c_2}\dd{c_2}^2\,.
   \]
   Therefore, we can take as flat coordinates on the lower-dimensional stratum:
   \begin{equation}\label{eq:flattwopole}
       \begin{aligned}
           t_{-1}&:=\beta\,,&&& t_{-2}&:=c_1\,,&&&t_{-3}&:=\sqrt{c_2}\,.
       \end{aligned}
   \end{equation}

  The pull-backs of the components of $c^\ttop$ to the lower-dimensional stratum that need renormalisation are computed using the criterion provided by \cref{lem:divergencectensorcomponents}, and  the asymptotic expansion of the corresponding components of the inner three-point function  is worked out using \cref{lem:expansioncritical}. Subtracting the two and taking the limit as $\Delta\to0$ gives the collision-chart components of $c^\low\,$.

  Finally, we express both $c^\low $ and the finite limit of $F_{\log}-CT_\Delta$ in the flat coordinates of $\eta^\low$. In this case, the expressions already match, so no further renormalisation other than subtraction of $CT_\Delta$ is required, and we get the renormalised logarithmic term:
  \begin{equation*}
       F^{(m=2)}_{\mathrm{ren}}(c_1,c_2,\beta)=\tfrac12c_1^2\beta+\tfrac14 c_1^2\log c_2+\tfrac12 c_1c_2-\tfrac{1}{24}c_2^2\,,
  \end{equation*}
 or the equivalent expression in the flat coordinates in \eqref{eq:flattwopole}.

 Let us explicitly derive the prepotential on the limit-stratum in the low-dimensional example $\hat{\ell}=4\,$, $\bar{k}=1$ and $p=m=2\,$. In other words, the top-dimensional stratum only has two additional simple poles and its boundary is only made up of the one component $\Hw_{4,1}(2)\,$. This is a case of a total collapse of all the simple poles into a higher-order pole. The prepotential on the top-dimensional stratum can be found in \cite[Example 4.1]{PS26}. We have,
 \[
 \begin{aligned}
      F_{\mathrm{lin}}(\underline{t},\underline{\alpha},\underline{\beta})&=\tfrac12 t_1^2t_2+e^{t_2}+\alpha_1 H(\beta_1,\underline{t})+\alpha_2 H(\beta_2,\underline{t})\,,
 \end{aligned}
 \]
 where
 \[
 H(z,\underline{t}):=e^z+t_1z-e^{t_2-z}\,,
 \]
 and by \cref{cor:generalpolecollapselinear}:
 \[
 \begin{aligned}
     \lim_{\Delta\to 0}F_{\mathrm{lin}}(\underline{t},\underline{\alpha},\underline{\beta})&=\tfrac12 t_1^2t_2+e^{t_2}+c_1H(\beta)+c_2H'(\beta)\\
     &=\tfrac12 t_1^2t_2+e^{t_2}+e^\beta(c_1+c_2)+t_1(c_1\beta+c_2)-e^{t_2-\beta}(c_1-c_2)\,.
 \end{aligned}
 \]
Therefore, the prepotential on the limit stratum in the collision chart is, using  \cref{prop:prepotentiallimit}, given by
 \[
 \begin{aligned}
     F(\underline{t},\underline{c},\beta)&=\tfrac12 t_1^2t_2+e^{t_2}+\tfrac12c_1^2\beta+\tfrac14 c_1^2\log c_2+\tfrac12 c_1c_2-\tfrac{1}{24}c_2^2+\\
     &\quad +e^\beta(c_1+c_2)+t_1(c_1\beta+c_2)-e^{t_2-\beta}(c_1-c_2)\,.
 \end{aligned}
 \]
 When written in the flat coordinates in \eqref{eq:flattwopole}, this coincides -- up to a rescaling of $t_{-2}$ and a relabelling of $t_2$ -- with the prepotential in \cite[Example 3.4]{MZ24}, as expected.
\end{example}
\begin{remark}\label{rem:partialcollapse}
    In a partial collapse, i.e. when $m<p$ and the limit stratum is not the deepest one in the boundary, the result changes due to  $F_{\mathrm{lin}}$ containing terms depending on the simple poles that are fixed in the collision. In particular, suppose $m=2$ and denote by $\gamma_1,\dots, \gamma_{p-2}$ and $\xi_1,\dots, \xi_{p-2}$ the flat coordinates corresponding to the simple poles that do not take part in the collision, then the linear term will also contain:
 \[
 F_{\mathrm{lin}}^{\mathrm{partial}}:=\alpha_1\sum_{i=1}^{p-2}\gamma_i\log(e^{\beta_1}-e^{\xi_i})+\alpha_2\sum_{i=1}^{p-2}\gamma_i\log(e^{\beta_2}-e^{\xi_i})\,.
 \]
 The point is that $F_{\mathrm{lin}}$ contains all the terms that are at most linear in the residue-coordinates of the poles we are colliding. In particular, this will have finite limit:
 \[
 \begin{aligned}
     \lim_{\Delta\to 0} F_{\mathrm{lin}}^{\mathrm{partial}}&=c_1\sum_{i=1}^{p-2}\gamma_i\log(e^{\beta}-e^{\xi_i})+e^\beta c_2\sum_{i=1}^{p-2}\frac{\gamma_i}{e^\beta-e^{\xi_i}}\,.
 \end{aligned}
 \]
 This explains the presence of rational terms in the difference between the position of two movable poles in the \cref{ex:A71234}, and why they are absent in the other examples in \cref{sec:Ex}, as they either correspond to the longest or shortest partition. This clearly generalises to the partial collision of a higher number of simple poles, the limit will in these cases contain higher-order derivatives of the logarithmic term.
\end{remark}

\begin{example}
For the collapse of $m=3$ simple poles, we start by determining the flat coordinates on the lower-dimensional stratum. The Frobenius metric in the collision chart is:
\[
\begin{aligned}
    \eta^\low=\dd{\beta}\dd{c_1}+\frac{2}{3}\frac{1}{c_3}\dd{c_2}\dd{c_3}-\frac{2}{9}\frac{c_2}{c_3^2}\dd{c_3}^2\,.
\end{aligned}
\]
This immediately gives the flat coordinates:
    \begin{equation}\label{eq:flatthreepoles}
            \begin{aligned}
        t_{-1}&:=\beta\,, &&& t_{-2}&:=c_1\,,&&& t_{-3}&:=\frac{c_2}{3\sqrt[3]{c_3}}\,, &&&t_{-4}&:=\sqrt[3]{c_3}\,.
    \end{aligned}
    \end{equation}
As for the prepotential, an explicit calculation using the asymptotics in \cref{cor:asymptoticinnerCP} shows that, unlike the previous example, here minimal regularisation of the logarithmic terms does not give renormalisation. Therefore, on top of subtracting a collision-shape dependent counterterm to make the limit finite, we also need to add a function on the lower-dimensional stratum to make the three-point function components match. The final result in collision coordinates is:
\begin{equation*}
    \begin{aligned}
F^{(m=3)}_{\mathrm{ren}}(\beta,c_1,c_2,c_3)&= 
\tfrac{1}{2}\,c_1^{2}\,\beta
+\tfrac{1}{2}\,c_1^{2}\,\log c_3
+\tfrac{1}{2}\,c_1c_2
+\tfrac{1}{24}\,c_1c_3\\[4pt]
&\quad+\tfrac{1}{6}\,\frac{c_1c_2^{2}}{c_3}
-\tfrac{1}{24}\,c_2^{2}
-\tfrac{1}{108}\,\frac{c_2^{4}}{c_3^{2}}
-\tfrac{1}{960}\,c_3^{2}\,.
\end{aligned}
\end{equation*}
Again, we explicitly consider the case of a total collapse $m=p=3$ with $\hat{\ell}=5$ and $\bar{k}=1\,$. The contribution to the top-dimensional-stratum prepotential that is linear in the residue coordinates of the collapsing poles is \cite{PS26}:
\[
\begin{aligned}
    F_{\mathrm{lin}}(\underline{t},\underline{\alpha},\underline{\beta})&=\tfrac12 t_1^2t_2+e^{t_2}+\alpha_1\,H(\beta_1,\underline{t})+\alpha_2\,H(\beta_2,\underline{t})+\alpha_3\, H(\beta_3,\underline{t})\,,
\end{aligned}
\]
with the same $H$ as in the previous example.
This has a finite limit given by, according to \cref{cor:generalpolecollapselinear}:
\[
\begin{aligned}
    \lim_{n\to\infty}F_{\mathrm{lin}}(\underline{t},\underline{c},\beta)&=\tfrac12 t_1^2t_2+e^{t_2}+c_1H(\beta,\underline{t})+c_2H'(\beta,\underline{t})+\tfrac12 c_3H''(\beta,\underline{t})\\
    &=\tfrac12 t_1^2t_2+e^{t_2}+e^{\beta}\bigl(c_1+c_2+\tfrac12 c_3\bigr)+\\&\quad+t_1\bigl(c_1\beta+c_2\bigr)-e^{t_2-\beta}\bigl(c_1-c_2+\tfrac12 c_3\bigr)\,.
\end{aligned}
\]
Therefore, in this case the prepotential on the limit stratum in the collision chart is:
\[
\begin{aligned}
    F(\underline{t},\underline{c},\beta)&=\tfrac12 t_1^2t_2+e^{t_2}+\tfrac{1}{2}\,c_1^{2}\,\beta
+\tfrac{1}{2}\,c_1^{2}\,\log c_3
+\tfrac{1}{2}\,c_1c_2
+\tfrac{1}{24}\,c_1c_3\\
&\quad+\tfrac{1}{6}\,\frac{c_1c_2^{2}}{c_3}
-\tfrac{1}{24}\,c_2^{2}
-\tfrac{1}{108}\,\frac{c_2^{4}}{c_3^{2}}
-\tfrac{1}{960}\,c_3^{2}+t_1\bigl(c_1\beta+c_2\bigr)+\\
 &\quad+e^{\beta}\bigl(c_1+c_2+\tfrac12 c_3\bigr)- e^{t_2-\beta}\bigl(c_1-c_2+\tfrac12 c_3\bigr)\,.
\end{aligned}
\]
As a function of its flat coordinates in \cref{eq:flatthreepoles}, this coincides with the prepotential in \cite[Example 3.5]{MZ24} up to a linear change of variables.
\end{example}

From the previous two examples, we notice that the renormalised limit of the logarithmic sector of the prepotential seems to exhibit a universal structure. Namely, it contains a logarithmic term in the leading principal Laurent coefficient $c_m\,$, and the other terms are Laurent polynomials in $c_m$ and ordinary polynomials in the other variables. For $m=2\,$, the dependence on $c_2$ is only polynomial, but this case is expected to be special due to \cref{cor:asymptoticinnerCP}. Thus, we anticipate that the cases $m>3$ have similar behaviour to $m=3\,$. Let us, therefore, pin down the universal structure of the renormalised term.
\begin{lem}\label{lem:regularisedstructural}
Upon subtraction of the minimal counterterm $CT_n\,$, the regularised limit of the logarithmic contribution in $F^\ttop_n:=F^\ttop\circ\iota_n$ is of the following form:
\[
   \lim_{n\to\infty}\bigl(F_{\log}-CT_n\bigr) =\tfrac12 c_1^2\beta+\tfrac12\gamma\,  c_1^2 \log c_m+P(\underline{c})\,, 
    \]
    with $P\in \CC[c_1,\dots, c_{m-1},c_m^{\pm1 }]$ and $\gamma\in\CC\,$.
\end{lem}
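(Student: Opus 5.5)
We substitute the exponential collision chart \eqref{eq:collisionchartexp} into $F_{\log}$ in \eqref{eq:Flog}, with fixed collision shape $\Delta_i=\delta_n\kappa_i$. The residue variables then become $\alpha_a=\sum_{s}\delta_n^{1-s}(V_\kappa^{-1})_{as}c_s$, so each $\alpha_a$ is a Laurent polynomial in $\delta_n$ whose most singular term is $\delta_n^{1-m}(V_\kappa^{-1})_{am}c_m$. We split $F_{\log}$ into three pieces and treat them separately:
\[
\tfrac12\sum_a\alpha_a^2\beta_a+\tfrac12\sum_a\alpha_a^2\log\alpha_a+\sum_{a<b}\alpha_a\alpha_b\log(e^{\beta_a}-e^{\beta_b})\,.
\]

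\textbf{First piece.} Since $\beta_a=\beta+\delta_n\kappa_a$ and $\sum_a\alpha_a\Delta_a^{j}=c_{j+1}$, this term equals $\tfrac12\sum_a\alpha_a^2\beta+\tfrac12\delta_n\sum_a\alpha_a^2\kappa_a$. Each quadratic form $\sum_a\alpha_a^2\kappa_a^j$ is a Laurent polynomial in $\delta_n$ with coefficients quadratic polynomials in $\underline c$. Its finite part is therefore polynomial in $c_1,\dots,c_m$. The pure $\beta$ term survives as $\tfrac12 c_1^2\beta$, because the $\delta_n^0$ coefficient of $\sum_a\alpha_a^2$ pairs $c_1$ with itself. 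The other terms involving $\beta$ are divergent and are part of the minimal counterterm.

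\textbf{Third piece.} We write $\log(e^{\beta_a}-e^{\beta_b})=\beta+\log\delta_n+\log(\kappa_a-\kappa_b)+\log g(\delta_n(\kappa_a-\kappa_b))$, where $g(x)=(e^{\beta_a-\beta}-e^{\beta_b-\beta})/x$ is analytic and nonvanishing near $0$. Expanding, all $\delta_n$-dependence is again a Laurent series in $\delta_n$ times quadratic polynomials in $\underline c$, plus $\log\delta_n$ times quadratic forms. The $\beta$ contributions combine with the first piece into $\tfrac12(\sum_a\alpha_a)^2\beta=\tfrac12 c_1^2\beta$, which is the identity we will verify. Terms of positive $\delta_n$-order vanish in the limit, so the finite part is polynomial in $\underline c$.

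\textbf{Second piece.} This is the key step. We write
\[
\alpha_a=\delta_n^{1-m}\,(V^{-1}_\kappa)_{am}\,c_m\,\bigl(1+u_a\bigr),\qquad u_a=\sum_{s<m}\delta_n^{m-s}\frac{(V^{-1}_\kappa)_{as}c_s}{(V^{-1}_\kappa)_{am}c_m},
\]
where $(V^{-1}_\kappa)_{am}\neq0$. Then
\[
\log\alpha_a=(1-m)\log\delta_n+\log (V^{-1}_\kappa)_{am}+\log c_m+\log(1+u_a)\,.
\]
Here $u_a$ is $O(\delta_n)$ and polynomial in $c_1,\dots,c_{m-1}$ and $c_m^{-1}$. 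The expansion of $\alpha_a^2\log(1+u_a)$ consists of a power series in $u_a$ times $\alpha_a^2$. Only finitely many orders reach $\delta_n^0$, because $\alpha_a^2$ is $O(\delta_n^{2-2m})$ and $u_a^k$ is $O(\delta_n^k)$. The finite part is therefore a polynomial in $c_1,\dots,c_{m-1}$ and a Laurent polynomial in $c_m$. The $\log c_m$ term contributes $\tfrac12\log c_m\sum_a\alpha_a^2$, whose finite part is $\tfrac12\gamma c_1^2\log c_m$ plus $\log c_m$ times other quadratic terms.

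We expect the main obstacle to be showing that no term $c_ic_j\log c_m$ with $(i,j)\neq(1,1)$ survives. We would approach this using homogeneity. Assign weights $\deg c_s=s-1$ and $\deg\delta_n=1$; then $\alpha_a$ has weight $0$. A finite ($\delta_n^0$) term multiplying $\log c_m$ is a $\delta_n^0$ coefficient of $\sum_a\alpha_a^2$, expressed via $\sum_{a}(V_\kappa^{-1})_{ai}(V_\kappa^{-1})_{aj}\delta_n^{2-i-j}c_ic_j$. For this coefficient to be $\delta_n^0$ we need $i=j=1$. This fixes the prefactor as $\gamma=\sum_a(V^{-1}_\kappa)_{a1}^2$, and we must then check that it is independent of $\kappa$, or, more safely, absorb the $\kappa$-dependent part into the counterterm.

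In the same way, the $\log\delta_n$ and $\log (V^{-1}_\kappa)_{am}$ terms multiply quadratic forms and go into $CT_n$. Any finite remainder of these terms is a polynomial in $\underline c$.

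Collecting the three pieces, we define $CT_n$ to be all terms with negative powers of $\delta_n$ or with $\log\delta_n$. What remains has the stated form, with $P\in\CC[c_1,\dots,c_{m-1},c_m^{\pm1}]$.
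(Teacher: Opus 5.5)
Your proposal is correct and follows essentially the same route as the paper: fix the collision shape, expand $\log(e^{\beta_a}-e^{\beta_b})$ and $\log\alpha_a$ in $\delta_n$, and observe three things. The total $\beta$-coefficient is exactly $\tfrac12(\sum_a\alpha_a)^2=\tfrac12c_1^2$; the finite part of the $\log c_m$-coefficient $\tfrac12\sum_a\alpha_a^2$ is a shape-dependent multiple of $c_1^2$; and the remaining finite terms come from the $\order{\delta_n}$ corrections and are Laurent in $c_m$ only (via $\log(1+u_a)$).

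Two small tidy-ups. In your first piece, the first piece alone gives $\tfrac12\gamma c_1^2\beta$, not $\tfrac12c_1^2\beta$. Its divergent $\beta$-terms are not removed by $CT_n$; they cancel identically against the third piece, as your later identity shows. Also, you need not check or absorb the $\underline{\kappa}$-dependence of $\gamma=\sum_a(V^{-1}_\kappa)_{a1}^2$: it is genuinely shape-dependent for $m\geq 3$, as the paper notes, and the statement allows any $\gamma$.
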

\begin{proof}
   The minimal counterterm is the one that precisely subtracts all the divergent terms of $F_{\log}\,$, without adding any artificial finite term at the regularisation step. Let us fix a collision shape $\underline{\kappa}\in(\CC^\ast)^m\smallsetminus\diag$ and consider the following expansions of the individual log terms:
   \[
   \begin{aligned}
       \log(e^{\beta_i}-e^{\beta_j})&=\beta+\log(e^{\delta_n\kappa_i}-e^{\delta_n\kappa_j})=\beta+\log\delta_n+\log(\kappa_i-\kappa_j)+\order{\delta_n}\,,\\
       \log\alpha_i&=-(m-1)\log\delta_n+\log c_m+\log (V_\kappa^{-1})_{im}+\order{\delta_n}\,.
   \end{aligned}
   \]
Now, since $\alpha_i=\sum_j\delta_n^{1-j} (V^{-1}_\kappa)_{ij}c_j\,$, it follows that in $F_{\log}\,$, the coefficient of $\beta$ is going to be:
\[
\begin{aligned}
    \tfrac12\sum_i\alpha_i^2+\sum_{i<j }\alpha_i\alpha_j=\tfrac12\biggl( \sum_i\alpha_i\biggr)^2=\tfrac12 c_1^2\,.
\end{aligned}
\]
Similarly, $\log c_m$ will have a coefficient $\tfrac12 \sum_i\alpha_i^2\,$. The finite part of such expression in the limit is clearly proportional to $c_1^2\,$, with a shape-dependent constant. Other finite terms may arise from multiplying the quadratic expressions in the $\alpha$-variables with the terms of order $\delta_n$ in the expansions of the logarithms. This will at most produce functions of the $c$-coordinates with a pole at $c_m=0$ coming from the positive-$\delta_n$-powers in $\log\alpha_i\,$, because otherwise $\alpha_i$ is polynomial in $c_1,\dots, c_m\,$.
\end{proof}
\begin{lem}\label{lem:flatcm}
    In the collision chart, the components of the three-point function in the collision block on the limit stratum lie in the ring $\mathcal{A}_{\underline{t}}[\beta,c_1,\dots, c_{m-1},c_m^{\pm 1}\bigr]\,$, where $\mathcal{A}_{\underline{t}}$ is a suitable ring generated by polynomials in the flat coordinates $\underline{t}$ that are left unchanged by the collision.
\end{lem}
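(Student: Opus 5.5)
The plan is to compute the collision-block components of $c^\low$ directly on the limit stratum, without passing through the top stratum. I would turn the residue formula \eqref{eq:cres} over critical points into a single residue at the collision point, and then read off the coefficient ring from a Laurent expansion there. The ingredients are: the limit superpotential in the exponential collision chart, $\widetilde{\lambda}(z)=\lambda(e^z)=\widetilde{h}(z)+\sum_{k=1}^m c_k(z-\beta)^{-k}$, where $\widetilde{h}$ is holomorphic at $z=\beta$; and the fact that $\phi=\dd\log\mu$ becomes $\dd z$.

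The first step is to determine the tangent vectors in the collision block. Since $\widetilde h$ does not depend on $\beta$ or on the $c_i$, we have
\[
\partial_{c_i}\widetilde\lambda=(z-\beta)^{-i}\,,\qquad \partial_\beta\widetilde\lambda=-\bigl(\widetilde\lambda'-\widetilde h'\bigr)\,.
\]
Back in the $\mu$-variable, $\partial_{c_i}\lambda$ is rational, vanishes at $\mu=0,\infty$, and has its only pole at $\mu=e^\beta$. For any component $c_{ijk}$ with at least one Latin index in the block, I would write $-\sum_{q\in\Cr_\lambda}\Res_q$ of $\lambda_i\lambda_j\lambda_k\,\phi^2/\dd\lambda$ as the sum of the residues at the remaining singularities of this meromorphic differential on $\RS$, using the residue theorem. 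These remaining points are the poles of $\lambda$ and the points $0,\infty$.

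Next, I would check which of these points contribute. At poles of $\lambda$ other than $e^\beta$, the factor $1/\lambda'$ vanishes to high order and the numerator is regular, so the residue is zero. At $0$ and $\infty$, the factor $\partial_{c_i}\lambda$ vanishes, and this compensates the double pole of $\phi^2$ against the zero of $1/\dd\lambda$. This is the same order count used in the proof of \cref{prop:Anonconsec}, and it should be checked there carefully. The only surviving contribution is therefore the residue at $z=\beta$.

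The core computation is the expansion at $z=\beta$. There
\[
\widetilde\lambda'=-m\,c_m\,(z-\beta)^{-m-1}\Bigl(1+\textstyle\sum_{k\ge1}b_k(z-\beta)^k\Bigr)\,,
\]
where each $b_k$ is $c_m^{-1}$ times a polynomial in $c_1,\dots,c_{m-1}$ and the Taylor coefficients of $\widetilde h'$ at $\beta$. Inverting gives
\[
1/\widetilde\lambda'=-(z-\beta)^{m+1}(m c_m)^{-1}\bigl(1+\textstyle\sum_k a_k(z-\beta)^k\bigr)\,,
\]
with each $a_k\in\CC[c_1,\dots,c_{m-1},c_m^{\pm1}][\text{Taylor coefficients of } \widetilde h]$. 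The residue extracts a single coefficient of a finite-order expansion. For purely Latin indices this is the coefficient of $(z-\beta)^{i+j+k-m-2}$. When $\beta$ appears, I would substitute $\partial_\beta\widetilde\lambda=-\widetilde\lambda'+\widetilde h'$; the term $-\widetilde\lambda'$ cancels one factor of $1/\widetilde\lambda'$, leaving a holomorphic part plus $\widetilde h'/\widetilde\lambda'$, which is treated the same way. In all cases the component is a polynomial in the $a_k$ and the Taylor coefficients of $\widetilde h$. Finally, the Taylor coefficients of $\widetilde h(z)=h(e^z,\underline t)$ at $\beta$ are the derivatives $H^{(k)}(\beta,\underline t)$ from \eqref{eq:Flin}. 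For type $A$ Dubrovin--Zhang $h$, these lie in the polynomial ring in $\underline t$, $e^{\pm t}$ and $e^{\pm\beta}$, which is what I take for $\mathcal A_{\underline t}$ in the statement. The $\beta$-dependence is polynomial in $\beta$ and $e^{\pm\beta}$, with genuine powers of $\beta$ arising only through terms like $t_1 z$ in $H$. Hence every component lies in $\mathcal A_{\underline t}[\beta,c_1,\dots,c_{m-1},c_m^{\pm1}]$.

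I expect the main obstacle to be the contour-deformation step. The difficulty is verifying rigorously that $0$, $\infty$ and the other movable poles contribute nothing in every collision-block direction, and especially for mixed $\beta$ components, where $\widetilde h'$ does have poles at $0$ and $\infty$. If that fails, I would compute those residues explicitly. They are residues of rational functions whose coefficients depend polynomially on the Laurent data at $0,\infty$, and hence on $\underline t$, and rationally on $e^\beta$ only through $(\mu-e^\beta)^{-i}$ expanded at $0$ and $\infty$. Such contributions still land in the same ring, so the conclusion should survive.
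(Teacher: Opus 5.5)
Your strategy is to compute $c^{\low}$ directly on the limit stratum with the residue theorem, so that for indices in $\{\beta,c_1,\dots,c_m\}$ only the collision point contributes. This is a genuinely different route from the paper's and it does prove the lemma, but the tangent vectors you start from are wrong in the exponential chart and must be corrected.

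In that chart the cluster enters as $\sum_i\alpha_i/(e^{z-\beta_i}-1)$. Its limit is $\sum_{k=1}^m c_k\,G_k(z-\beta)$, where $G_k(w)=\frac{(-1)^{k-1}}{(k-1)!}\partial_w^{k-1}\frac{1}{e^w-1}=w^{-k}+(\text{holomorphic, with rational Bernoulli-type coefficients})$. The part of $\lambda$ that is independent of $(\beta,\underline c)$ is $h(e^z,\underline t)$, the Laurent-polynomial part. Your $\widetilde h=\widetilde\lambda-\sum_k c_k(z-\beta)^{-k}$ depends on both; indeed $(z-\beta)^{-i}$ is not even a function of $\mu$, which contradicts your next sentence.

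The correct formulas are $\partial_{c_i}\widetilde\lambda=G_i(z-\beta)$ and $\partial_\beta\widetilde\lambda=-\partial_z(\widetilde\lambda-h(e^z))$, i.e.\ $\partial_\beta\lambda=-\mu\partial_\mu(\lambda-h)$. This changes actual values. For example, $\Res_{w=0}G_1(w)^2\dd{w}=-1$ gives $c_{\beta c_1c_1}=1$, as \cref{lem:ctensorcomponebeta,lem:divergencectensorcomponents} predict, whereas your formulas give $0$. The correction only adds rational constants to the Laurent data at $\beta$, however, so ring membership is unaffected.

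Two smaller points.
\begin{enumerate}
\item $\partial_{c_1}\lambda=e^\beta/(\mu-e^\beta)$ equals $-1$ at $\mu=0$, so your claimed vanishing is false. It is also unnecessary. The form $\phi^2/\dd\lambda=\dd\mu/(\mu^2\lambda')$ is already holomorphic at $0$ and $\infty$, with zeros of order $\bar k-1$ and $\hat\ell$, and it vanishes to order at least $2$ at the other movable poles. All block derivatives are regular at these points.
\item Your ``main obstacle'' disappears if you apply the residue theorem to the full integrand, since the correct $\partial_\beta\lambda$ is regular at $0$, $\infty$ and the other poles. Split $\partial_\beta\widetilde\lambda=-\widetilde\lambda'+\partial_z h(e^z)$ only locally at $z=\beta$. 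The piece $-\widetilde\lambda_a\widetilde\lambda_b$ is not holomorphic there, but its residue is a rational constant.
\end{enumerate}
You then get $c_{abc}=\Res_{z=\beta}\widetilde\lambda_a\widetilde\lambda_b\widetilde\lambda_c\,\dd{z}/\widetilde\lambda'$, which visibly lies in $\mathcal{A}_{\underline t}[e^{\pm\beta}][c_1,\dots,c_{m-1},c_m^{\pm1}]$. Only $h=H'$ and its derivatives enter, so no polynomial $\beta$-dependence actually arises.

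The paper argues instead through the top stratum. Via \cref{lem:expansioncritical} it writes $c^{\low}=\lim_n(\iota_n^\ast c^{\ttop}-c^{\inn}_n)$. It reads off the Laurent dependence on $c_m$ of $\iota_n^\ast c^{\ttop}$ from the explicit prepotential of \cite{PS26}, using $\alpha_i=\delta_n^{1-m}\gamma_i(\underline\kappa)\,c_m(1+R_\kappa)$. It controls $c^{\inn}_n$ through the expansion of $1/\lambda_n''$ at the inner critical points. That keeps the lemma inside the renormalisation machinery later used for $F_{\mathrm{ren}}$. However, it rests on the generic asymptotics of \cref{cor:asymptoticinnerCP} and on taking limits of separately divergent pieces.

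Your route, once corrected, has several advantages:
\begin{itemize}
\item it never leaves the limit stratum;
\item it needs only the residue theorem and a single Laurent expansion, with no genericity assumption;
\item it yields an explicit formula for every block component;
\item it confirms your reading that $\mathcal{A}_{\underline t}$ must contain $e^{\pm\beta}$, which the statement leaves implicit.
\end{itemize}
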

\begin{proof}
Fix a collision shape $\underline{\kappa}\,$. The result is basically a consequence of the fact that:
\[
\alpha_i=\delta_n^{1-m} \gamma_i(\underline{\kappa})\,c_m\bigl(1+R_\kappa(\underline{c})\bigr)\,,
\]
with $R_\kappa\in \CC[c_1/c_m,\dots, c_{m-1}/c_m]\,$. Given the prepotential on the top-dimensional stratum, it follows immediately that the three-point function components of $\iota_n^\ast c^\ttop$ are Laurent polynomials in $c_m\,$, but only depend polynomially on any other collision coordinate. According to \cref{lem:expansioncritical}, we just need to check that the same applies to the corresponding coordinates of $c^\inn_n\,$. Clearly, the derivatives of $\lambda_n$ with respect to the collision coordinates only depend on the principal Laurent coefficients of the limit function polynomially. The only rational dependence comes from $\lambda_n''$ evaluated at the inner critical point in the denominator. From the expression of $\lambda_n''$ in the proof of \cref{lem:divergencectensorcomponents}, it immediately follows that one has an expansion analogous to the one of $\alpha_i$ for $1\,/\,\lambda_n''$ at the inner critical point, which proves the statement.
\end{proof}
\begin{lem}\label{lem:rootofcmflat}
    In the collision of $m$ poles, flat coordinates in the collision cluster are of the following form, as a function of the cluster collision coordinates:
    \[
    \begin{aligned}
        s&:=\beta\,,\\
        t_{-1}&:=c_1\,,\\
        t_{-k}&\in c_m^{k/m}\,\CC\bigl[\tfrac{c_1}{c_m},\dots, \tfrac{c_{m-1}}{c_m}\bigr]\,,&&& k=2,\dots, m-1\,,\\
        t_{-m}&:=c_m^{1/m}\,.
    \end{aligned}
    \] 
    Moreover, the Jacobian matrix of $(c_2,\dots, c_{m-1})\mapsto (t_{-2},\dots, t_{-(m-1)})$ is upper-triangular.
\end{lem}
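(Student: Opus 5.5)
We use the residue description of flat coordinates in \cite[Lecture 5]{Dub96}, together with the quasi-homogeneity provided by the Euler field and the scaling of the local Laurent data at the collision point. The idea is to identify the metric in the collision chart first, and then to read off the flat coordinates from the local structure of the limit superpotential at its order-$m$ pole.

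The first step is the block structure. As in \cref{ex:prepotentialtwopoles}, pulling back the top-stratum metric of \cite{PS26} shows that the outer part contributes exactly $\dd{\beta}\,\dd{c_1}$. All remaining contributions come from the inner critical points, and by \cref{lem:divergencectensorcomponents} these only involve the $c$-directions. Hence $s=\beta$ and $t_{-1}=c_1$ are flat. The block $\eta^\low_{c_ic_j}$ for $i,j\geq 2$ then depends only on $(c_2,\dots,c_m)$ and not on $h$. This universality follows from \cref{lem:expansioncritical} and \cref{cor:asymptoticinnerCP}, because the inner residues are governed by $\widetilde D$ and the $c_j$ alone.

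Next, we identify this block with a standard Hurwitz structure. Locally near the pole we have $\lambda=\sum_k c_k w^{-k}+\order{1}$ in the exponential variable $w=z-\beta$, and $\phi=\dd{z}+$ holomorphic. Dubrovin's formula \eqref{Eq:5.1a} at the pole of order $m$ gives flat coordinates
\[
\tau_{k}=\Res_{w=0}\kappa^{-k}\,\phi\,\lambda\,,\qquad \kappa^{-m}=\lambda\,,\quad k=1,\dots,m-1\,.
\]
Writing $\kappa^{-1}=c_m^{1/m}w^{-1}\bigl(1+\sum_{j<m}(c_{m-j}/c_m)w^{j}\bigr)^{1/m}$ and expanding, $\kappa^{-k}$ has the form $c_m^{k/m}w^{-k}$ times a series whose coefficients are polynomials in $c_1/c_m,\dots,c_{m-1}/c_m$. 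Taking the residue against $\dd{w}$, and noting that $\phi$ is regular, selects finitely many such coefficients. Relabelling $t_{-k}$ with $k=m-\text{index}$ gives the stated form $c_m^{k/m}\CC[c_1/c_m,\dots]$. The coordinate $t_{-m}$ comes from the normalisation $\kappa$ itself, i.e.\ from the residue-type coordinate $c_m^{1/m}$; this matches \eqref{eq:flattwopole} and \eqref{eq:flatthreepoles}.

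It remains to check that the $h$-dependent parts of the $\order{1}$ terms do not contaminate these coordinates. We plan to handle this via \cref{lem:flatcm} and degree counting. Assign $\deg c_j=j$ under $\Delta\mapsto t\Delta$, equivalently the Euler scaling. Only finitely many coefficients enter, and $\phi$ contributes nothing singular, so the dependence is confined to $c$. We expect this bookkeeping to be the main obstacle.

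Triangularity then follows from the same grading. In $t_{-k}$, the ratio $c_j/c_m$ appears only with total homogeneous weight compatible with $k$. The expansion of $(1+x)^{1/m}$ truncated by the residue shows that $t_{-k}$ depends on $c_j$ only for $j\leq m-k+1$ (or its reverse, depending on labelling). Moreover, it depends linearly on the top admissible $c_j$, with nonzero coefficient $\tfrac1m c_m^{k/m-1}$. This yields an upper-triangular Jacobian with nonvanishing diagonal.
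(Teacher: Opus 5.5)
Your route is the paper's: $s=\beta$ and $t_{-1}=c_1$ come from the degenerate pulled-back metric $\dd\beta\,\dd c_1$ (equivalently, Dubrovin's type-b and type-c coordinates), and the remaining flat coordinates are the type-a ones at the collision point, read off from the expansion of $\lambda^{k/m}$. The formula you display for $\tau_k$ is wrong, however. Integrating \eqref{Eq:5.1a} by parts at a pole of order $m$ gives, up to constants, $\operatorname{Res}_{w=0}\lambda^{j/m}\phi$ for $j=1,\dots,m-1$, i.e.\ $\operatorname{Res}_{w=0}\kappa^{-j}\,\dd w$ in your normalisation $\kappa^{-m}=\lambda$. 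Your extra factor $\lambda$ turns this into $\operatorname{Res}_{w=0}\lambda^{1+k/m}\,\dd w$. That residue extracts the coefficient of $w^{m+k-1}$, and at that order the regular part of $\lambda$ at the collision point does enter. The computation you describe next (the residue of $\kappa^{-k}$ against $\dd w$) is the correct one, so this is probably a slip. It is likely why the $h$-dependence looks to you like the main obstacle.

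With the correct formula that step is a one-line order count. You need neither \cref{lem:flatcm} (which is about three-point functions) nor a separate grading argument. Write $\lambda=c_m w^{-m}(1+R)$ with $R=\sum_{i=1}^{m-1}\tfrac{c_{m-i}}{c_m}w^i+\tfrac{g(w)}{c_m}w^m$ and $g$ holomorphic at $w=0$. Then $\operatorname{Res}_{w=0}\lambda^{k/m}\,\dd w$ is $c_m^{k/m}$ times the coefficient of $w^{k-1}$ in $(1+R)^{k/m}$. Since $k-1\leq m-2$, neither $g$ (which enters at order $w^m$) nor even $c_1$ (order $w^{m-1}$) can contribute.

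For $k=1$ this gives $t_{-m}=c_m^{1/m}$. For $2\leq k\leq m-1$ it lies in $c_m^{k/m}\CC[c_{m-k+1}/c_m,\dots,c_{m-1}/c_m]$ and is linear in $c_{m-k+1}$ with coefficient $\tfrac{k}{m}c_m^{k/m-1}$ (not $\tfrac1m c_m^{k/m-1}$). Hence no relabelling $k\mapsto m-k$ is needed, and $t_{-k}$ involves $c_j$ only for $j\geq m-k+1$. The Jacobian is therefore triangular with non-vanishing diagonal; it is upper-triangular once one of the two lists is taken in reverse order.

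Two smaller points about your first step. First, it is the full pull-back $\iota_n^\ast\eta^{\ttop}$, not the outer part, that equals $\dd\beta\,\dd c_1$ on the cluster. The $c_ic_j$ block ($i,j\geq 2$) of $\eta^{\low}=\lim_n\eta^{\out}_n$ is minus the limit of the inner part. Second, the $h$-independence of that block is not needed for this lemma, since flatness of the type-a coordinates follows from Dubrovin's theorem applied directly on the limit stratum.
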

\begin{proof}
The computations are similar to the ones in \cite[Lemma 4.2]{BvG22}, with $w_0$ replaced by $c_m$ and the ratios $c_j/c_m$ instead of the $w$-coordinates with positive indices.
The coordinates $s$ and $t_{-1}$ are proportional to the type-b and type-c flat coordinates in \cite[Theorem 5.1]{Dub96} respectively, see also \cref{Eq:5.1b,Eq:5.1c}. Equivalently, using \cref{lem:divergencectensorcomponents}, it is easy to see that the Saito metric components on the target stratum that contain at least one $\beta$ index do not require renormalisation, and are therefore simply given by the corresponding components of the Saito metric on the top-dimensional stratum. As we discussed in the examples, the latter is the degenerate metric $ \dd\beta\dd{c_1}\,$. As for $t_{-m}\,$, this is proportional to the type-a flat coordinate for $\infty_i$ being the collision point and $\alpha=1\,$ -- see \cref{Eq:5.1a} -- whereas the other ones correspond to the other choices of $\alpha\,$. Their form follows from expanding:
\[
\lambda(w)^{k/m}=\bigl(\tfrac{c_m}{w}\bigr)^{k/m}\,\bigl(1+R(\underline{c},w)\bigr)^{k/m}\,,
\]
where $R(\underline{c},w)\in \mathcal{A}_{\underline{t}}[c_1/c_m,\dots, c_{m-1}/c_m][w]$ and $w$ is a coordinate centered at the collision point.
\end{proof}
\begin{prop}\label{prop:renormalisedlimitexplicit}
   The renormalised limit of the logarithmic sector for the collision of $m$ poles is of the form:
    \[
    F^{(m)}_{\mathrm{ren}}(\beta,\underline{c})=\tfrac12 \gamma c_1^2\log c_m+\tfrac12 c_1^2\beta+G^{(m)}(\underline{c})\,,
    \]
    with $G^{(m)}\in \CC\bigl[c_1,\dots, c_{m-1},c_m^{\pm 1}\bigr]$ and $\gamma\in\CC\,$. For $m=2\,$, $G^{(2)}$ depends polynomially on $c_2\,$.
\end{prop}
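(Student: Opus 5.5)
The proof combines the two-step construction of \cref{prop:prepotentiallimit} with the structural lemmas above. Write $F^{(m)}_{\mathrm{ren}}=F^{\mathrm{(reg)}}-G$. Here $F^{\mathrm{(reg)}}=\lim_{n\to\infty}(F_{\log}-CT_n)$ is the minimally regularised limit, and $G$ is the correction needed to make ${}^\low\nabla^3F^{\mathrm{(reg)}}$ agree with $c^\low$. By \cref{lem:regularisedstructural}, $F^{\mathrm{(reg)}}=\tfrac12c_1^2\beta+\tfrac12\gamma c_1^2\log c_m+P(\underline{c})$ with $P\in\CC[c_1,\dots,c_{m-1},c_m^{\pm1}]$. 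The whole task is therefore to show that $G$ can be chosen in the same ring $\CC[c_1,\dots,c_{m-1},c_m^{\pm1}]$, up to terms killed by $\nabla^3$. Such terms are at most quadratic in flat coordinates and can be dropped or absorbed.

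\emph{Step 1.} Work in the flat coordinates of \cref{lem:rootofcmflat}: $s=\beta$, $t_{-1}=c_1$, $t_{-k}\in c_m^{k/m}\CC[c_1/c_m,\dots,c_{m-1}/c_m]$ and $t_{-m}=c_m^{1/m}$. The triangular Jacobian inverts to give $c_k\in t_{-m}^{k}\,\CC[t_{-1}/t_{-m}^{m},\dots]$; equivalently, $c_k$ is a polynomial in $t_{-1},\dots,t_{-(m-1)}$ and $t_{-m}^{\pm1}$. Hence $\CC[c_1,\dots,c_{m-1},c_m^{\pm1}]$ embeds into $\CC[t_{-1},\dots,t_{-(m-1)},t_{-m}^{\pm1}]$. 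The log term becomes $\tfrac{m}{2}\gamma t_{-1}^2\log t_{-m}$.

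\emph{Step 2.} By \cref{lem:flatcm}, the components of $c^\low$ in the collision chart lie in $\CC[\beta,c_1,\dots,c_{m-1},c_m^{\pm1}]$. Universality in \cref{prop:prepotentiallimit} lets us drop $\mathcal A_{\underline t}$ for the logarithmic sector. Converting to flat coordinates, the Jacobian entries are Laurent in $t_{-m}$ and polynomial in the others, so the flat components of $c^\low$ are Laurent polynomials in $t_{-m}$. The same holds for $\partial^3F^{\mathrm{(reg)}}$, apart from the terms coming from $\tfrac{m}{2}\gamma t_{-1}^2\log t_{-m}$, whose third derivatives are rational. Moreover, $D$ vanishes on any component with a $\beta$-index by \cref{lem:ctensorcomponebeta,lem:ctensorcomptwobeta}. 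Thus $D$ is a closed symmetric cubic tensor in the variables $t_{-1},\dots,t_{-m}$ with Laurent polynomial entries.

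\emph{Step 3, the main obstacle.} We must integrate $D$ three times inside the Laurent ring. The danger is that integrating in $t_{-m}$ a monomial with exponent $-1$ produces extra logarithms. I would argue degree by degree using quasi-homogeneity. The Euler field gives $c_k$ weight $k/m\cdot$(const), and the prepotential is quasi-homogeneous of degree $2$ up to quadratic terms. A $\log t_{-m}$ can only arise in a weight-$2$ monomial with $t_{-m}$-exponent $-1$ after three integrations, i.e. of the form $t_{-1}^2\log t_{-m}$ (the $c_1$-block has weight compatible with $c_1^2$). Such a term is already accounted for by adjusting $\gamma$. All other monomials integrate to Laurent monomials, and the integration constants are fixed as quadratic polynomials. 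Transforming back via Step 1, however, the result lies a priori only in $\CC[t_{-1},\dots,t_{-(m-1)},t_{-m}^{\pm1}]$, and one must check that it actually lands in $\CC[c_1,\dots,c_{m-1},c_m^{\pm1}]$. Here I would use that the monodromy $t_{-m}\mapsto\zeta_m t_{-m}$, together with the induced action on $t_{-k}$, preserves $c^\low$ and $F^{\mathrm{(reg)}}$. Averaging $G$ over this $\ZZ_m$-action keeps it a valid correction and forces it into the invariant subring, which is exactly the $c$-Laurent ring.

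\emph{Step 4.} For $m=2$, \cref{ex:prepotentialtwopoles} computes $F^{(2)}_{\mathrm{ren}}$ directly: $G=0$ and $P=\tfrac12c_1c_2-\tfrac1{24}c_2^2$, which is polynomial. More conceptually, by \cref{cor:asymptoticinnerCP} the inner critical point satisfies $s_n-p^\ast=\order{\delta_n^2}$, so $\lambda_n''(s_n)^{-1}$ has no negative powers of $c_2$, and the expansion of $\log\alpha_i$ contributes no $c_2^{-1}$ at finite order.
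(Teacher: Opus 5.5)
Your argument works in outline, but at the decisive step it takes a different route from the paper's.

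\textbf{Comparison with the paper.} The paper stays inside the collision analysis. Because the three-point functions are Laurent in $c_m$, any $\log c_m$ must multiply a quadratic in flat coordinates. \cref{lem:regularisedstructural} shows the regularised limit supplies only $c_1^2$, so a further logarithm would have to come from a finite $1/c_m$ term in an inner component $(c^{\inn})_{ijm}$. The order count $\delta_n^{m+2-(i+j+k)}$ then forces $(i,j)=(1,1)$, and the return to the $c$-ring is delegated to \cref{lem:rootofcmflat}.

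You replace the order count by Euler quasi-homogeneity in flat coordinates, which forces the quadratic multiplier of $\log t_{-m}$ to be $\propto t_{-1}^2$. You then average over $t_{-k}\mapsto\zeta_m^k t_{-k}$, an automorphism of $\eta^{\low}$ and $c^{\low}$ since their collision-chart components lie in the $c$-ring. This descends from $\CC[t_{-1},\dots,t_{-(m-1)},t_{-m}^{\pm1}]$ to $\CC[c_1,\dots,c_{m-1},c_m^{\pm1}]$ and discards non-invariant quadratic ambiguities such as $t_{-m}^2=c_m^{2/m}$.

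Your version makes the descent to the $c$-ring explicit, which the paper only asserts. Its selection of the logarithm also does not rely on generic leading-order estimates at the inner critical points. The paper's version identifies which inner component produces $\gamma$ and needs no homogeneity of the regularised limit. Neither argument yields the $m=2$ refinement on its own: weight and $\ZZ_2$-invariance still allow $c_1^3/c_2=t_{-1}^3t_{-2}^{-2}$. Both rest on the explicit computation in \cref{ex:prepotentialtwopoles}.

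\textbf{Details to fix.} First, the weights as you state them are wrong. The $\Delta_i$ have weight $0$ and the $\alpha_i$ weight $1$, so \emph{every} $c_k$ has weight $1$. The fractional weights sit on the flat coordinates: $\beta\mapsto 0$, $t_{-1}\mapsto 1$, $t_{-k}\mapsto k/m$ for $2\le k\le m-1$, and $t_{-m}\mapsto 1/m$. Correspondingly $c_k\in t_{-m}^{m}\,\CC[t_{-1}t_{-m}^{-m},t_{-2}t_{-m}^{-2},\dots]$, not $t_{-m}^{k}\,\CC[\dots]$. Step 3 needs exactly these weights, because then every quadratic monomial other than $t_{-1}^2$ has weight $<2$. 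You should also check that $F^{\mathrm{(reg)}}$ is quasi-homogeneous. It is, because the minimal counterterm removes precisely the divergent part of $E(F_{\log})-2F_{\log}=\tfrac12\sum_i\alpha_i^2$, whose finite part is $\propto c_1^2$.

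Second, \cref{lem:ctensorcomponebeta,lem:ctensorcomptwobeta} concern $\iota_n^\ast c^{\ttop}$ and do not show that $D$ has no $\beta$-components. \cref{lem:divergencectensorcomponents} allows $\beta jk$ components with $j+k\ge m+2$ to be renormalised, and they are. For $m=2$, the inner critical point contributes a finite, generically non-zero multiple of $H''(\beta)/c_2$ to $c_{\beta 22}$. This is matched by ${}^{\low}\nabla^3_{\beta c_2c_2}\bigl(c_2H'(\beta)\bigr)=H''(\beta)/(2c_2)$, coming from $\lim F_{\mathrm{lin}}$ via $\Gamma^{c_2}_{c_2c_2}=-1/(2c_2)$. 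So $D$ must be defined as the log-sector defect after subtracting ${}^{\low}\nabla^3\bigl(F^{\mathrm{DZ}}+\lim F_{\mathrm{lin}}\bigr)$, and its $\beta$-independence needs its own argument. The paper does not spell this out either.

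Finally, your ``conceptual'' $m=2$ remark is false. One has $\lambda_n''(s_n)\sim -2c_2\,\Delta^{-4}$, so $1/\lambda_n''(s_n)$ does carry $c_2^{-1}$.
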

\begin{proof}
This follows from the previous three Lemmas. In particular, since the three-point function components only depend rationally on the top Laurent coefficient, hence on the flat coordinate $t_{-m}$ from \cref{lem:rootofcmflat}, integrating will only produce at most logarithmic terms in $c_m\,$. We just need to check that there are no other such terms except for the one proportional to $c_1^2\log c_m\,$. Since the three-point function components are not logarithmic in $c_m\,$ (or in any other variable), $\log c_m$ must be at most proportional to a quadratic polynomial in the flat coordinates. No other quadratic contribution can be produced by the regularised limit, according to \cref{lem:regularisedstructural}. Hence, such terms, if they exist, can only be coming from the finite part of an inner three-point function components. In particular, they would give a non-zero residue at $c_m=0\,$, finite as $n\to\infty\,$, to a component $(c^\inn)_{ijm}\,$. From the asymptotics as $n\to\infty$ for $\Res_{s_n}\phi_{ijk}^{(n)}$ in \cref{lem:divergencectensorcomponents} and for $1/\lambda_n''\,$, as discussed in \cref{lem:flatcm}, however, it follows that a finite term proportional to $1/c_m$ in the inner three-point function component $(c^\inn)_{ijk}$ can only be produced if $m+2=i+j+k\,$. When $k=m\,$, the only solutions to such equation are $i=j=1\,$. Thanks to \cref{lem:rootofcmflat}, this proves the statement.
\end{proof} 

We conclude our study with a structure result for the prepotential of type $A$ generalised Dubrovin--Zhang Frobenius manifold.
\begin{thm}\label{thm:prepotential}
    Let $\hat{\ell},\bar{k},p\in\ZZ_{\geq 0}\,$. Fix $\pi\vdash p\,$, and let $(\lambda(\mu), \phi)$ be a Landau--Ginzburg model as in \cref{thm:FrobAmult} associated to the partition $\pi\,$. 
    Let $F^{\mathrm{DZ}(\hat{\ell},\bar{k})}(\underline{t})\in \mathcal{A}^{(\bar{k})}_{\hat{\ell}+\bar{k}}$ be the prepotential of the Dubrovin--Zhang Frobenius manifold associated to $\widetilde{W}^{(\bar{k})}\bigl(A_{\hat{\ell}+\bar{k}}\bigr)$, as in \cref{thm:DZ:reconstruction}, written in the flat coordinates $\underline{t}$ for its Saito metric. For any collision cluster $\alpha=1,\dots, \ell(\pi)\,$, we denote by $\beta_\alpha$ and $\underline{c}_\alpha:=\{c_{i\alpha}\}_{i=1}^{m_\alpha}$ its collision coordinates, as defined in \cref{eq:collisionchartexp}. 
    
  In the coordinates $\underline{t},\beta_1,\dots, \beta_{\ell(\pi)},\underline{c}_1,\dots,\underline{c}_{\ell(\pi)}\,$, the prepotential for the generalised Dubrovin--Zhang Frobenius manifold with Landau--Ginzburg model $(\lambda,\phi)$ takes the form:
    \begin{equation*}
        \begin{aligned}        F_\pi(\underline{t},\underline{\beta},\underline{c})&=F^{\mathrm{DZ}(\hat{\ell},\bar{k})}(\underline{t})+\sum_{\alpha=1}^{\ell(\pi)}F^{(m_\alpha)}_{\mathrm{ren}}(\beta_\alpha,\underline{c}_\alpha)+\\&\quad +\sum_{\alpha=1}^{\ell(\pi)}\sum_{i=1}^{m_\alpha}\Lambda_i(\underline{t},\beta_\alpha)\,c_{i\alpha}+\sum_{1\leq \alpha<\gamma\leq \ell(\pi)}F_{\mathrm{int}}^{\alpha,\gamma}(\beta_\alpha,\beta_\gamma,\underline{c}_\alpha,\underline{c}_\gamma)\,,
\end{aligned}
    \end{equation*}
  where $F^{(m)}_{\mathrm{ren}}$ is the renormalised limit of the logarithmic sector of the top-stratum prepotential for the collapse of $m$ simple poles, as introduced in \cref{prop:prepotentiallimit}, $\Lambda_i(\underline{t},\nu)\in \mathcal{A}^{(\bar{k})}_{\hat{\ell}+\bar{k}}[\nu,e^{\pm \nu}]$ and the pairwise interaction term between two collision clusters is of the form:
  \[
  \begin{aligned}
      F_{\mathrm{int}}^{\alpha,\gamma}(\beta_\alpha,\beta_\gamma,\underline{c}_\alpha,\underline{c}_\gamma)=c_{1\alpha}c_{1\gamma}\log(e^{\beta_\alpha}-e^{\beta_\gamma})+\sum_{k= 1}^{m_\alpha+m_\gamma-2}\frac{1}{(e^{\beta_\alpha}-e^{\beta_\gamma})^k}P^{\alpha,\gamma}_k(e^{\beta_\alpha},e^{\beta_\gamma},\underline{c}_\alpha,\underline{c}_\gamma)\,,
  \end{aligned}
  \]
 for some families of polynomials $\{P^{\alpha,\gamma}_k\}\,$.    In particular:
    \begin{enumerate}
        \item $F_\pi$ contains $\ell(\pi)$ logarithmic terms proportional to $c_{1\alpha}^2\log c_{m_\alpha\alpha}\,$, and else depends polynomially on $\{c_{1\alpha},\dots, c_{(m_\alpha-1)\alpha}\}_{1\leq \alpha\leq \ell(\pi)}\,$ and it is a Laurent polynomial in the leading principal Laurent coefficient $\{c_{m_\alpha\alpha}\}_{1\leq \alpha\leq \ell(\pi)}\,$ of any pole of order at least three.
        \item It is invariant under the action of the Young subgroup $\Sym_\pi$ of the partition $\pi$ acting by permutation of collision blocks giving poles of the same order.
    \end{enumerate}
\end{thm}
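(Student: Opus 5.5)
The plan is to start from the top-stratum prepotential $F^\ttop$ of \cite{PS26} on $\Hw^\ttop_{\hat\ell,\bar k}$, with all $p$ movable poles simple. We then perform all collisions at once, partitioning the simple poles into $\ell(\pi)$ clusters of sizes $m_1,\dots,m_{\ell(\pi)}$. By the Remark following the Jacobian computation, the collision chart splits block-diagonally, and we pass to the exponential collision coordinates \eqref{eq:collisionchartexp} in each block.

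The first step is to decompose $F^\ttop$ relative to this clustering. We write it as $F^{\mathrm{DZ}}(\underline t)+\sum_i\alpha_iH(\beta_i,\underline t)+F_{\log}$, where $F_{\log}$ collects $\tfrac12\alpha_i^2\log(\alpha_ie^{\beta_i})$ and all $\alpha_i\alpha_j\log(e^{\beta_i}-e^{\beta_j})$. We split the latter sum into intra-cluster pairs and inter-cluster pairs. The intra-cluster pairs, together with the diagonal terms of a given cluster, form exactly the logarithmic sector $F_{\log}$ of \eqref{eq:Flog} for that cluster. Following \cref{prop:prepotentiallimit}, we regularise it by the minimal counterterm and renormalise, producing $F^{(m_\alpha)}_{\mathrm{ren}}(\beta_\alpha,\underline c_\alpha)$. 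This term is universal and, by \cref{prop:renormalisedlimitexplicit}, already has the structure claimed in (1) within the block.

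For the linear sector we apply \cref{cor:generalpolecollapselinear} with $\Phi(z)=H(z,\underline t)$ expanded at $\beta_\alpha$, giving $\sum_i\tfrac{1}{(i-1)!}c_{i\alpha}H^{(i-1)}(\beta_\alpha,\underline t)$. This identifies $\Lambda_i=H^{(i-1)}/(i-1)!$. Since $H'(z,\underline t)=h(e^z,\underline t)$ is a Laurent polynomial in $e^z$ with coefficients in $\mathcal A^{(\bar k)}$, plus a $t$-linear term integrating to $z$, the $\Lambda_i$ lie in $\mathcal A^{(\bar k)}_{\hat\ell+\bar k}[\nu,e^{\pm\nu}]$.

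For the inter-cluster terms $\sum_{i\in\alpha,\,j\in\gamma}\alpha_i\alpha_j\log(e^{\beta_i}-e^{\beta_j})$, these are bilinear in the residues of two different clusters. We therefore apply \cref{cor:generalpolecollapselinear} twice, once in each cluster, using the analytic function $\log(e^{x}-e^{y})$ near $(\beta_\alpha,\beta_\gamma)$, which is holomorphic there since the collision points are distinct. The limit is
\[
\sum_{a=1}^{m_\alpha}\sum_{b=1}^{m_\gamma}\tfrac{c_{a\alpha}c_{b\gamma}}{(a-1)!\,(b-1)!}\,\partial_x^{a-1}\partial_y^{b-1}\log(e^x-e^y)\big|_{(\beta_\alpha,\beta_\gamma)}.
\]
The $(1,1)$ term gives $c_{1\alpha}c_{1\gamma}\log(e^{\beta_\alpha}-e^{\beta_\gamma})$. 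Every other derivative of order $a+b-2\ge1$ is a polynomial in $e^{\pm\beta}$ divided by $(e^{\beta_\alpha}-e^{\beta_\gamma})^{k}$ with $k\le a+b-2\le m_\alpha+m_\gamma-2$. Inducting on the number of derivatives shows this, and regrouping by $k$ gives the polynomials $P_k^{\alpha,\gamma}$. These contributions are finite, and their third derivatives carry at most one $c$-index per cluster. Hence they need no renormalisation: an extension of \cref{lem:divergencectensorcomponents} shows that inner critical points of one cluster only affect components with at least two Latin indices of that same cluster.

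Finally, property (2) follows because $F^\ttop$ is symmetric under permutations of the simple poles. Swapping two clusters of equal size is realised by such a permutation, and it commutes with the renormalisation, since $F^{(m)}_{\mathrm{ren}}$ depends only on $m$. The polynomiality statement (1) combines \cref{prop:renormalisedlimitexplicit} with the explicitly polynomial or logarithmic-in-$\beta$ form of the other pieces.

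The main obstacle is justifying that simultaneous collisions decouple. Concretely, we must show that the renormalisation in each block is unaffected by the other clusters and by the interaction terms. We must also show that no cross-cluster inner residue survives in the limit. I would first prove that inner critical points of cluster $\alpha$ see the other poles only through $h$, which is holomorphic near $e^{\beta_\alpha}$. Then \cref{thm:criticalpts} and \cref{cor:asymptoticinnerCP} apply cluster by cluster, with the $\delta_n$-estimates of \cref{lem:divergencectensorcomponents} killing all mixed components.
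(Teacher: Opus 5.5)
Your argument is the paper's in substance. You use the same splitting of the top-stratum prepotential of \cite{PS26}:
\begin{itemize}
\item $F^{\mathrm{DZ}}$;
\item the part linear in the residues, which gives $\Lambda_i=H^{(i-1)}/(i-1)!$ via \cref{cor:generalpolecollapselinear};
\item the intra-cluster logarithmic sector, renormalised as in \cref{prop:prepotentiallimit};
\item the inter-cluster logarithms, whose finite limits give $F_{\mathrm{int}}$.
\end{itemize}
The difference is organisational. The paper collapses the clusters one at a time. At each step the other poles (simple or already coalesced) sit inside $h$, the cross-logarithms are part of $F_{\mathrm{lin}}$ as in \cref{rem:partialcollapse}, and only the single-cluster \cref{prop:prepotentiallimit} is ever invoked. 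You collapse everything simultaneously. This buys a closed formula for $F_{\mathrm{int}}$ as the double sum of $\partial_x^{a-1}\partial_y^{b-1}\log(e^x-e^y)$, whose numerators are genuine polynomials in $e^{\beta_\alpha},e^{\beta_\gamma}$ and whose pole order is visibly at most $m_\alpha+m_\gamma-2$. It also gives an explicit argument for (2). The price is that you need a separate multi-cluster decoupling statement.

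Your sketch of that statement rests on a false claim: cross-cluster inner residues \emph{do} survive. Let $X$ be any direction external to cluster $\alpha$, i.e.\ a $t$-direction or a coordinate of cluster $\gamma$. Then $\partial_X\lambda_n$ is holomorphic near $e^{\beta_\alpha}$. The integrand $\partial_{c_{i\alpha}}\lambda_n\,\partial_{c_{j\alpha}}\lambda_n\,\partial_X\lambda_n\,\phi^2/\dd\lambda_n$ is regular at the simple poles of the cluster. So the sum of its residues over the inner critical points is a contour integral around the cluster. By \cref{cor:convergencederivatives} this converges to the residue of the limiting integrand at the collision point, which is generically nonzero once $i+j\geq m_\alpha+2$.

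Concretely, for $m_\alpha=2$ the component $(c_{2\alpha},c_{2\alpha},c_{1\gamma})$ tends to a nonzero multiple of $\partial_{c_{1\gamma}}\lambda/c_{2\alpha}$ evaluated at the collision point. This matches the count $\delta_n^{m+2-(i+j)}$ of \cref{lem:divergencectensorcomponents}. Your earlier, correct remark that only components with two Latin indices of the same cluster are affected allows for this. Your later claim that the $\delta_n$-estimates kill all mixed components does not.

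What has to be shown is not that these limits vanish. It is that, in ${}^{\low}\nabla^3$, they are matched by the Christoffel symbols of the non-flat collision chart acting on $\partial_{c_{k\alpha}}\partial_X$ of $F_{\mathrm{int}}$ and of the $\Lambda$-terms. By \cref{lem:rootofcmflat} these Christoffel symbols only involve cluster-$\alpha$ coordinates. This is exactly how the $t$-directions are treated in the proof of \cref{prop:prepotentiallimit}.

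Your observation that cluster $\alpha$ sees the other clusters only through $h$ is the right way to get there. It means the coordinates of cluster $\gamma$ play the role of the $t$'s, so the single-cluster proposition applies with them as external coordinates. That is precisely the recursive argument the paper uses.
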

\begin{proof}
    This comes from recursively taking the renormalised limit of the top-dimensional stratum prepotential $F$ in its flat coordinates -- given in \cite{PS26} -- as described in \cref{prop:prepotentiallimit} for single pole collision. In particular, the renormalised limits $F_{\mathrm{ren}}^{(m_\alpha)}$ come from the terms in the logarithmic sector that only involve residue-pole coordinates in the same collision block, whereas the terms $F_{\mathrm{int}}^{\alpha,\gamma}$ are the finite limits of the logarithmic terms involving pairs $(\alpha_i,\beta_i)$ and $(\alpha_j,\beta_j)$ on the top-dimensional stratum lying in different collision clusters. Their explicit expression is a consequence of the phenomenon described in \cref{rem:partialcollapse} for a single collision. Finally, the terms involving the coefficient $\Lambda_i$ come from taking the limit of the terms in $F$ depending on the diagonal invariants that are linear in the residue coordinates. In particular, it is easy to see from \cref{cor:generalpolecollapselinear} that $\Lambda_i$ is proportional to the $(i-1)^{\text{th}}$ derivative of an exponential anti-derivative of the underlying Dubrovin--Zhang superpotential for the extended affine Weyl group $\widetilde{W}^{(\bar{k})}(A_{\hat{\ell}+\bar{k}})\,$. This proves the statement.
\end{proof}
\begin{remark}
Following \cite{Jon96}, by \textit{Young subgroup}, here, we mean the group of permutations of the equal parts of the partition. Explicitly, we split a given partition $\pi=(m_1,\dots,m_{\ell(\pi)})$ into pairwise disjoint blocks of size $\abs{I_k}$ for $k\in\ZZ_{\geq 0}\,$, where $I_k:=\{i=1,\dots, \ell(\pi)\,:\,m_i=k\}\,$. Then, we set:
\[
\Sym_\pi:=\prod_{k\in\ZZ_{\geq 0}}\Sym_{\abs{I_k}}\leq \Sym_{\ell(\pi)}\,.
\]
If $\pi$ is an ordered partition, this can be thought of as its stabiliser under the action of $\Sym_{\ell(\pi)}$ permuting its elements.
    Invariance under the action of the Young subgroup is the residual symmetry after breaking the full diagonal-invariance of the prepotential on the top-dimensional stratum \cite{PS26}, which is the case $\pi=(1^p)$ and, therefore, $\Sym_\pi\cong\Sym_p\,$.
\end{remark}
\begin{cor}\label{cor:prepotentialZuo}
On  the deepest boundary stratum, corresponding to the shortest partition $\pi=(p)\,$, the prepotential takes the form:
\[
\begin{aligned}
    F(\underline{t},\beta,\underline{c})&=F^{\mathrm{DZ}(\hat{\ell},\bar{k})}(\underline{t})+F^{(p)}_{\mathrm{ren}}(\beta,\underline{c})+\sum_{i=1}^{p}\Lambda_i(\underline{t},\beta)\,c_{i}\,.
\end{aligned}
\]
In particular, this is a Laurent polynomial in the leading Laurent coefficient $c_p$ whenever $p>2\,$, up to a logarithmic term of the form $ c_1^2\log c_p\,$, and it depends polynomially on the lower coefficients $c_1,\dots, c_{p-1}\,$. 
\end{cor}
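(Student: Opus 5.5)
The corollary is the specialisation of \cref{thm:prepotential} to the single-block partition $\pi=(p)$, so the plan is to specialise that statement and then read off the analytic form from \cref{prop:renormalisedlimitexplicit} and the explicit form of the $\Lambda_i$. First, set $\ell(\pi)=1$ and $m_1=p$. The sum over pairs $1\leq\alpha<\gamma\leq\ell(\pi)$ is then empty, so the interaction terms $F^{\alpha,\gamma}_{\mathrm{int}}$ disappear. The block sums collapse to the single terms $F^{(p)}_{\mathrm{ren}}(\beta,\underline{c})$ and $\sum_{i=1}^p\Lambda_i(\underline{t},\beta)c_i$. This gives the displayed formula directly. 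The Young subgroup $\Sym_{(p)}$ is trivial, so property (2) of \cref{thm:prepotential} carries no extra content here.

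Next I would establish the analytic statement by treating the three summands separately.
\begin{itemize}
\item The term $F^{\mathrm{DZ}(\hat{\ell},\bar{k})}(\underline{t})$ does not involve $\underline{c}$ at all.
\item The term $\sum_i\Lambda_i(\underline{t},\beta)c_i$ is linear in $c_1,\dots,c_p$, because $\Lambda_i\in\mathcal{A}^{(\bar{k})}_{\hat{\ell}+\bar{k}}[\beta,e^{\pm\beta}]$. Concretely, \cref{cor:generalpolecollapselinear} gives $\Lambda_i=\tfrac{1}{(i-1)!}H^{(i-1)}(\beta,\underline{t})$, which is independent of $\underline{c}$.
\item For $F^{(p)}_{\mathrm{ren}}$, I would invoke \cref{prop:renormalisedlimitexplicit}. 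It gives $F^{(p)}_{\mathrm{ren}}=\tfrac12\gamma c_1^2\log c_p+\tfrac12 c_1^2\beta+G^{(p)}(\underline{c})$ with $G^{(p)}\in\CC[c_1,\dots,c_{p-1},c_p^{\pm1}]$.
\end{itemize}
Combining the three, the only non-polynomial dependence on $\underline{c}$ is the single term $c_1^2\log c_p$ together with the possible negative powers of $c_p$. The dependence on $c_1,\dots,c_{p-1}$ is polynomial. The restriction $p>2$ enters only through the remark in \cref{prop:renormalisedlimitexplicit} that $G^{(2)}$ is polynomial in $c_2$. The Laurent statement is therefore only non-trivial, and genuinely Laurent rather than polynomial, for $p\geq3$, matching \cref{ex:A71234deep}.

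The main obstacle is not in this corollary but is inherited from \cref{prop:renormalisedlimitexplicit}. That result requires that the renormalisation function $G$ of \cref{prop:prepotentiallimit} introduces no additional logarithms. In the single-cluster case I would check this by re-running its argument in this setting, as follows.
\begin{itemize}
\item By \cref{lem:flatcm}, the three-point function components lie in $\mathcal{A}_{\underline{t}}[\beta,c_1,\dots,c_{p-1},c_p^{\pm1}]$.
\item By \cref{lem:rootofcmflat}, the flat coordinates are $c_1$, $\beta$, $c_p^{1/p}$ and upper-triangular expressions in the ratios $c_j/c_p$.
\end{itemize}
Integrating three times in these flat coordinates can then produce a logarithm only multiplying a quadratic in the flat coordinates. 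The exponent count $i+j+k=p+2$ with $k=p$ then forces that quadratic to be $c_1^2$.

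Finally, I would note that rewriting the result in the flat coordinates of \cref{lem:rootofcmflat} preserves polynomiality in the lower coordinates. This recovers the form conjectured in \cite{MZ24}.
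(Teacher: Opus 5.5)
Your proposal is correct and follows essentially the same route as the paper. The corollary is the single-cluster case $\ell(\pi)=1$, $m_1=p$ of \cref{thm:prepotential}: the interaction terms vanish and $\Sym_\pi$ is trivial. The analytic form is then read off from \cref{prop:renormalisedlimitexplicit}, together with $\Lambda_i=\tfrac{1}{(i-1)!}H^{(i-1)}(\beta,\underline{t})$ from \cref{cor:generalpolecollapselinear}. Your re-run of the ``no extra logarithms'' argument is redundant, since it is already the content of the proof of \cref{prop:renormalisedlimitexplicit}, but it matches that proof.
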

\begin{remark}
        Notice, in particular, that this proves the conjecture about the form of the prepotential for the Frobenius manifold on orbit spaces of extended affine Weyl groups in type $A$ as constructed in \cite{MZ24}. In particular, up to the logarithmic term, $F$ depends exponentially on two variables, one being the standard generator of the extended direction in the Dubrovin--Zhang Frobenius manifold \cite{DZ98}, the other being the location of the collision point in the exponential chart. Furthermore, it depends rationally on one flat variable for $p>2\,$, which is the one given as a $p^{\text{th}}$ root of $c_p\,$, and polynomially on all the remaining flat variables, as described in \cref{lem:rootofcmflat}.
\end{remark}

\begin{example}
    Consider the case $\hat{\ell}=2\,,\bar{k}=1$ and $p=2\,$. The two partitions $\pi_1=(1,1)$ and $\pi_2=(2)$ of $p$ correspond to the cases where the superpotential has two movable simple poles and a double pole respectively, hence they correspond to the previous \cref{ex:A5123,ex:A513}. We, here, show that our pole collision procedure produces the same prepotentials, up to a linear change of variables.

    Firstly, we identify the underlying Dubrovin--Zhang type solution, which is the same for both Landau--Ginzburg models. This is the one associated to the extended affine Weyl group $\widetilde{W}^{(1)}(A_3)\,$, which can be found in \cite[Example 2.5]{DZ98}:
    \[
    \begin{aligned}
        F^{\mathrm{DZ}(2,1)}(\underline t)&=\tfrac12 t_1^2t_4+\tfrac13 t_1t_2t_3+\tfrac{1}{18}t_2^3-\tfrac{1}{36}t_2^2t_3^2+\tfrac{1}{648}t_2t_3^4-\tfrac{1}{19440}t_3^6+\bigl(t_2+\tfrac16 t_3^2\bigr)e^{t_4}\,.
    \end{aligned}
    \]

    Secondly, we construct the solution associated to the longest partition $\pi=(1,1)\,$. Since this is a case where the two movable poles are both simple, the solution is computed using the results of \cite{PS26}, and it is:
    \[
    \begin{aligned}
        F_{\pi_1}(\underline t, \underline{ \alpha},\underline{\beta})&= F^{\mathrm{DZ}(2,1)}(\underline t)+\tfrac12 \bigl(\alpha_1^2\log(\alpha_1e^{\beta_1})+\alpha_2^2\log(\alpha_2e^{\beta_2})\bigr)+\alpha_1\alpha_2\log(e^{\beta_1}-e^{\beta_2})+\\
        &\quad +\tfrac13\bigl(\alpha_1e^{3\beta_1}+\alpha_2e^{3\beta_2}\bigr)-\tfrac12 t_3\bigl(\alpha_1e^{2\beta_1}+\alpha_2e^{2\beta_2}\bigr)+\bigl(t_2+\tfrac16 t_3^2\bigr)\bigl(\alpha_1e^{\beta_1}+\alpha_2e^{\beta_2}\bigr)+\\
        &\quad-t_1\bigl(\alpha_1\beta_1+\alpha_2\beta_2\bigr)-e^{t_4}\bigl(\alpha_1e^{-\beta_1}+\alpha_2e^{-\beta_2}\bigr)-\tfrac13 t_2t_3(\alpha_1+\alpha_2)\,.
    \end{aligned}
    \]
    This coincides with the solution in \cref{ex:A5123}, under the following linear change of variables:
    \[
    \begin{aligned}
        t_1&\mapsto t_5\,,&
        t_2&\mapsto \tfrac16 t_2\,,&t_3&\mapsto t_1\,,&t_4&\mapsto\tfrac{10}{3}t_0-t_{-2}-t_{-1}\,,\\
        \alpha_1&\mapsto t_3-t_4\,,&\alpha_2&\mapsto -t_3\,,&\beta_1&\mapsto t_{-2}-\tfrac23 t_0+i\pi\,& \beta_2&\mapsto t_{-1}-\tfrac23 t_0+i\pi\,.
    \end{aligned}
    \]

 Thirdly, we compute the prepotential on the boundary stratum corresponding to the partition $\pi_2=(2)$ using the pole collision formalism, starting from the prepotential of the top-dimensional stratum given above. Since we have two simple poles colliding, we do not have to repeat the computations for the renormalised limit of the logarithmic sector, or its flat coordinates, since they have been carried out in \cref{ex:prepotentialtwopoles}, and the result is universal. We only need to work out the finite limit of the linear term in the residue variables $\alpha_1\,,\alpha_2$. In particular, the function $H$ defined in \cref{eq:Flin} is here given by:
    \[
    H(\beta,\underline t)=\tfrac13 e^{2\beta}-\tfrac12 t_3e^{2\beta}+\bigl(t_2+\tfrac16 t_3^2\bigr)e^\beta-t_1\beta-e^{t_4-\beta}-\tfrac13 t_2t_3\,.
    \]
    Therefore, according to \cref{prop:renormalisedlimitexplicit}, the limit of the linear part in the residue coordinates is
    \[
    \begin{aligned}
\lim_{n\to\infty}\bigl(\alpha_1H(\beta_1)+\alpha_2H(\beta_2)\bigr)&=e^{3\beta}\bigl(c_2+\tfrac13 c_1\bigr)-t_3e^{2\beta}\bigl(c_2+\tfrac12 c_1\bigr)+e^\beta\bigl(t_2+\tfrac16 t_3^2\bigr)(c_1+c_2)+\\
&\quad -t_1(c_1\beta+c_2)+e^{t_4-\beta}(c_2-c_1)-\tfrac13 c_1t_2t_3\,.
    \end{aligned}
    \]
    In the flat coordinates in \cref{eq:flattwopole}, the prepotential on the lower-dimensional stratum is therefore:
    \[
    \begin{aligned}
        F_{\pi_2}(\underline t, t_{-1},t_{-2},t_{-3})&=F^{\mathrm{DZ}(2,1)}(\underline t)+\tfrac12 t_{-2}^2t_{-1}+\tfrac12 t_{-2}^2\log t_{-3} +\tfrac12 t_{-2}t_{-3}^2 -\tfrac{1}{24}t_{-3}^4+\\
        &\quad+e^{3t_{-1}}\bigl(t_{-3}^2 +\tfrac13 t_{-2}\bigr)-t_3e^{2t_{-1}}\bigl(t_{-3}^2 +\tfrac12 t_{-2}\bigr)+e^{t_{-1}}\bigl(t_2+\tfrac16 t_3^2\bigr)(t_{-2}+t_{-3}^2 )+\\
&\quad -t_1(t_{-2}t_{-1}+t_{-3}^2 )+e^{t_4-t_{-1}}(t_{-3}^2 -t_{-2})-\tfrac13 t_{-2}t_2t_3\,.
    \end{aligned}
    \]
    This coincides with the prepotential in Example \ref{ex:A513}, up to a linear change of variables. In particular, $t_1,t_2$ and $t_3$ map as above, since their images do not depend on coordinates in the directions of the movable poles, whereas for the remaining coordinates we have:
    \[
    \begin{aligned}
        t_4&\mapsto \tfrac{10}{3}t_0-2\,t_{-1}\,,& t_{-1}&\mapsto t_{-1}-\tfrac23 t_0+i\pi\,,&t_{-2}&\mapsto -t_4\,,&t_{-3}&\mapsto-\tfrac{i}{2}t_3\,.
    \end{aligned}
    \]
\end{example}

  \section{Conclusion and Outlook}
  \label{sec:conc}
  We have classified Frobenius manifold structures associated with  Landau--Ginzburg models constructed by taking a degeneration of the family of spectral curves of the relativistic Toda chain in Dynkin type $A\,$, obtained by shifting arbitrary subsets of fundamental characters as in \cref{eq:shiftgen}. This generalises the result of \cite{BvG22}, where a B-model for Dubrovin--Zhang Frobenius manifolds \cite{DZ98}, of any fixed Dynkin type, was constructed by shifting a single fundamental character. For this reason, and due to their form, we refer to the resulting Landau--Ginzburg models as being of \textit{generalised Dubrovin--Zhang type}. 

  Given the classification result in \cref{thm:AMain}, for a fixed index set $I$ of size $p+1$, we noticed that the corresponding Landau--Ginzburg superpotentials are in one-to-one correspondence with partitions of $p\,$. Moreover, the dimension of the corresponding Frobenius manifold is an increasing function of the length of the partition. In the longest partition, the Landau--Ginzburg superpotential has a ramification profile over the marked point $\infty$ in the target that differs from that of the associated Dubrovin--Zhang Frobenius manifold by the number of simple ramification points. Coarsening the partition produces non-simple poles, which can be interpreted as a collision limit of the simple poles in the top-dimensional Frobenius manifold. With this picture in mind, in \cref{sec:polecollision} we studied the behaviour of the Frobenius structure in this limit. Upon developing a suitable complex analytic formalism, we described how the metric and three-point functions for different partitions are related to one another, and eventually, given the explicit description of the top-dimensional structure in \cite{PS26}, we gave an algorithmic renormalisation procedure to compute the prepotential, leading to the explicit formula in \cref{thm:prepotential} in a well-behaved coordinate system. 

  As a special case, we recover the Frobenius manifolds on orbit spaces of type $A$ extended affine Weyl groups, as constructed in \cites{Zuo20,MZ24}. In particular, we are able to prove the conjectural form for the prepotential in \cite{MZ24} as a special case of \cref{thm:prepotential}.\\

  We conclude this work with some open questions:
  \begin{enumerate}
      \item In \cite{BvG22}, Landau--Ginzburg models for Dubrovin--Zhang Frobenius manifolds are constructed for any Dynkin type. In an upcoming preprint \cite{PvGprep}, we will generalise some aspects of the current paper to the other Dynkin types.  
      \item The pole collision formalism can be directly employed to study limits of prepotentials in genus zero, more specifically to types $B, C, D\,$. 
      \item In higher genus, we expect that similar pole-collision algorithms might exist. As a direct application, one could consider the Landau--Ginzburg models for Frobenius manifolds on the orbit spaces of (extended) Jacobi groups \cites{Ber00b,Ber00,Alm22}. The formalism developed in \cites{Rej23,Rej25} might prove useful to this end, and to extensions beyond the Dubrovin--Zhang Frobenius manifold class. More on the differential geometric side, one could also study the behaviour of the horizontal and vertical components of the identity and Euler vector field near the boundary, in the spirit of \cites{Str01,Str04}.
      \item It might be interesting to rephrase the complex-analytic formalism we have introduced in a more algebro-geometric language, in relation to standard compactifications of Hurwitz schemes \cites{HM82,Kon94,ELSSV01,ACV03,Lan19}. 
      \item The Landau--Ginzburg model in the deepest boundary stratum, corresponding to $\pi=(p)\,$, is known to give the Frobenius manifold on the space of orbits of extended affine Weyl groups with two marked Dynkin nodes \cites{Zuo20,MZ24}. This is also true for the other classical Dynkin types \cite{MZ22}. An orbit space description for arbitrary partition is, at present, missing. Given that, in our language, the Landau--Ginzburg model for $\pi=(p)$ is also obtained by shifting two fundamental characters at distance $p$ from one another, it is natural to expect that intermediate cases would correspond to further extensions of affine Weyl groups. However, this leads to  accidental isomorphisms, suggesting that the question would require a deeper analysis.
      \item The discriminant locus of a Hurwitz Frobenius manifold also has a natural stratification, and is conjectured, in general, to be a natural submanifold in the sense of \cite{Str04}. The behaviour of the Saito determinant on such a subspace has been  studied in \cites{AFS20,BvG00}, for orbit spaces of Weyl groups, and extended affine Weyl groups, respectively. Given our construction, one could generalise these results to the discriminant of generalised Dubrovin--Zhang Frobenius manifolds. 
\item On the A-model side, the canonical Dubrovin--Zhang Frobenius manifolds give the quantum cohomology of the projective line with two (type $A$) or three (types $D$ and $E$) orbifold points \cites{MT08,Ros10}. At least in type $A\,$, it is natural to expect that our generalised Dubrovin--Zhang Landau--Ginzburg models could give the quantum cohomology of $\RS$ with a greater number of orbifold points.
\item From the integrable system point of view, it is known that the principal hierarchy of the Dubrovin--Zhang Frobenius manifold coincides with the dispersionless limit of the Hirota integrable hierarchies constructed by Milanov–Shen–Tseng in \cites{MST16,BvG22}. It would be interesting to investigate the integrable hierarchies associated to our generalised Dubrovin--Zhang Frobenius manifolds with this in mind. This could be linked to the growing interest in the study of principal hierarchies of infinite-dimensional Frobenius manifolds and the relation to their finite-dimensional submanifolds, see e.g. \cites{WX12,MWZ24}. A pole-collision renormalised limit can probably also be introduced for the isomonodromic tau-function \cite{KS05}.
\item While a collision of movable poles modifies the Frobenius manifold structure in quite a non-trivial way, the change in the Dubrovin dual structure, as constructed in \cite{Dub04}, is expected to be more easily controlled \cites{RS07,Ril07}. In particular, both the flat coordinates of the intersection form and the dual solution to the WDVV equations are easier to predict. This is an interesting phenomenon on its own, and  could be straightforward to generalise to higher genus, see e.g. \cite{RS06} for the Jacobi group case. Finally, given the explicit form of the superpotentials and of the WDVV solutions, one can construct pairs of families of trigonometric solutions to the open WDVV equations related by (generalised) Dubrovin duality \cites{Alm25,PS25}.
\item Chekhov-Eynard-Orantin topological recursion (TR) is a recursive formalism recovering higher-genus information from genus-zero data \cite{EO09}. Its input is closely related to the Landau--Ginzburg model of a semisimple Frobenius manifold, and for a Frobenius manifold with a quantum-cohomological interpretation,  TR recovers higher-genus Gromov--Witten invariants \cites{DOSS14,DNOPS18,DNOPS19}. It would be interesting to connect the Frobenius manifolds described here to enumerative geometry by applying TR. The main distinction between the B-model of a Frobenius manifold and the TR input data lies in the presence of a bi-differential encoding first-order deformations of the Frobenius principal hierarchy. For B-models of Dubrovin--Zhang Frobenius manifolds, this bi-differential appears in the Dubrovin–Krichever data (along with a projection) and corresponds to the Chern–Simons planar two-point function \cite{Bri20}. Hence, one may attempt to find TR input data for Frobenius manifolds arising from the relativistic Toda chain  by studying variations of the Dubrovin–Krichever data or via the methods of \cites{KS05,KK06}.
  \end{enumerate}
We will address (some of) these problems in future work.

\appendix

\begin{acknowledgements}
The authors would like to thank Paolo Lorenzoni and Ian Strachan for discussions, suggestions and comments on earlier versions of this paper.
 A. P. was supported by a Ph.D. studentship of the EPSRC Doctoral Training Partnership (EP/W524359/1). K. v. G. was supported by funds of INFN (Istituto  Nazionale di Fisica Nucleare) through IS-CSN4 Mathematical Methods of Nonlinear Physics.
\end{acknowledgements}

\phantomsection

\bibliography{refs} 

@article{Bri20,
 AUTHOR = {Brini, Andrea},
 TITLE = {{$\rm E_8$} spectral curves},
 JOURNAL = {Proc. Lond. Math. Soc. (3)},
 FJOURNAL = {Proceedings of the London Mathematical Society. Third Series},
 VOLUME = {121},
 NUMBER = {4},
 YEAR = {2020},
 PAGES = {954--1032},
 DOI = {10.1112/plms.12331},
 URL = {https://doi.org/10.1112/plms.12331},
 ISSN = {0024-6115,1460-244X},
}

@incollection{Dub96,
 AUTHOR = {Dubrovin, Boris},
 TITLE = {Geometry of {$2$}D topological field theories},
 BOOKTITLE = {Integrable systems and quantum groups ({M}ontecatini {T}erme,
1993)},
 SERIES = {Lecture Notes in Math.},
 VOLUME = {1620},
 YEAR = {1996},
 PAGES = {120--348},
 PUBLISHER = {Springer, Berlin},
 DOI = {10.1007/BFb0094793},
}

@article{DSZZ19,
 AUTHOR = {Dubrovin, Boris and Strachan, Ian A. B. and Zhang, Youjin and
Zuo, Dafeng},
 TITLE = {Extended affine {W}eyl groups of {BCD}-type: their {F}robenius
manifolds and {L}andau-{G}inzburg superpotentials},
 JOURNAL = {Adv. Math.},
 FJOURNAL = {Advances in Mathematics},
 VOLUME = {351},
 YEAR = {2019},
 PAGES = {897--946},
 DOI = {10.1016/j.aim.2019.05.030},
 URL = {https://doi.org/10.1016/j.aim.2019.05.030},
 ISSN = {0001-8708,1090-2082},
 MRCLASS = {53D45},
}

@article{MZ22,
 AUTHOR = {Ma, Shilin and Zuo, Dafeng},
 TITLE = {Frobenius manifolds and a new class of extended affine {W}eyl
groups of {BCD}-type},
 JOURNAL = {J. Geom. Phys.},
 FJOURNAL = {Journal of Geometry and Physics},
 VOLUME = {180},
 YEAR = {2022},
 PAGES = {Paper No. 104622, 26},
 DOI = {10.1016/j.geomphys.2022.104622},
 URL = {https://doi.org/10.1016/j.geomphys.2022.104622},
 ISSN = {0393-0440,1879-1662},
}

@article{Ros10,
 AUTHOR = {Rossi, Paolo},
 TITLE = {Gromov-{W}itten theory of orbicurves, the space of
tri-polynomials and symplectic field theory of {S}eifert
fibrations},
 JOURNAL = {Math. Ann.},
 FJOURNAL = {Mathematische Annalen},
 VOLUME = {348},
 NUMBER = {2},
 YEAR = {2010},
 PAGES = {265--287},
 DOI = {10.1007/s00208-009-0471-0},
 URL = {https://doi.org/10.1007/s00208-009-0471-0},
 ISSN = {0025-5831,1432-1807},
 MRCLASS = {53D42 (14N35 53D45)},
 MRREVIEWER = {Hsian-Hua\ Tseng},
}

@incollection{Fer81,
 AUTHOR = {Ferus, Dirk},
 EDITOR = {Ferus, Dirk
and K{\"u}hnel, Wolfgang
and Simon, Udo
and Wegner, Bernd},
 TITLE = {A remark on {C}odazzi tensors in constant curvature spaces},
 BOOKTITLE = {Global {D}ifferential {G}eometry and {G}lobal {A}nalysis},
 YEAR = {1981},
 PAGES = {257--257},
 PUBLISHER = {Springer Berlin Heidelberg},
 ADDRESS = {Berlin, Heidelberg},
 ISBN = {"978-3-540-38419-9"
}}

@article{Str04,
 AUTHOR = {Strachan, I. A. B.},
 TITLE = {Frobenius manifolds: natural submanifolds and induced
bi-{H}amiltonian structures},
 JOURNAL = {Differential Geometry and its Applications},
 FJOURNAL = {Differential Geometry and its Applications},
 VOLUME = {20},
 NUMBER = {1},
 YEAR = {2004},
 PAGES = {67--99},
 DOI = {10.1016/j.difgeo.2003.10.001},
 URL = {https://doi.org/10.1016/j.difgeo.2003.10.001},
 ISSN = {0926-2245,1872-6984},
}

@article{Alm25,
 AUTHOR = {Almeida, Guilherme F.},
 TITLE = {{O}pen {H}urwitz {F}lat {F} manifolds},
 YEAR = {2025},
 EPRINT = {arXiv:2503.09258}}

@incollection{Dub04,
 AUTHOR = {Dubrovin, Boris},
 TITLE = {On almost duality for {F}robenius manifolds},
 BOOKTITLE = {Geometry, topology, and mathematical physics},
 SERIES = {Amer. Math. Soc. Transl. Ser. 2},
 VOLUME = {212},
 YEAR = {2004},
 PAGES = {75--132},
 PUBLISHER = {Amer. Math. Soc., Providence, RI},
}

@phdthesis{Ril07,
 AUTHOR = {Riley, A.},
 TITLE = {Frobenius manifolds: caustic submanifolds and discriminant almost duality},
 YEAR = {2007},
 MONTH = {01},
 NOTE = {University of Hull},
 SCHOOL = {University of Hull},
}

@incollection{Dub99,
 AUTHOR = {Dubrovin, Boris},
 TITLE = {Painlev\'e{} transcendents in two-dimensional topological
field theory},
 BOOKTITLE = {The {P}ainlev\'e{} property},
 SERIES = {CRM Ser. Math. Phys.},
 YEAR = {1999},
 PAGES = {287--412},
 PUBLISHER = {Springer, New York},
}

@article{HM99,
 AUTHOR = {Hertling, C. and Manin, Yuri I.},
 TITLE = {Weak {F}robenius manifolds},
 JOURNAL = {Internat. Math. Res. Notices},
 FJOURNAL = {International Mathematics Research Notices},
 NUMBER = {6},
 YEAR = {1999},
 PAGES = {277--286},
 ISSN = {1073-7928,1687-0247},
 MRCLASS = {53D45},
}

@article{Man05,
 AUTHOR = {Manin, Yuri I.},
 TITLE = {{F}-manifolds with flat structure and {D}ubrovin's duality},
 JOURNAL = {Adv. Math.},
 FJOURNAL = {Advances in Mathematics},
 VOLUME = {198},
 NUMBER = {1},
 YEAR = {2005},
 PAGES = {5--26},
 DOI = {10.1016/j.aim.2004.12.003},
 URL = {https://doi.org/10.1016/j.aim.2004.12.003},
 ISSN = {0001-8708,1090-2082},
 MRCLASS = {53D45 (14D20 14H10)},
}

@article{DZ98,
 AUTHOR = {Dubrovin, Boris and Zhang, Youjin},
 TITLE = {Extended affine {W}eyl groups and {F}robenius manifolds},
 JOURNAL = {Compositio Math.},
 FJOURNAL = {Compositio Mathematica},
 VOLUME = {111},
 NUMBER = {2},
 YEAR = {1998},
 PAGES = {167--219},
}

@article{RS06,
 AUTHOR = {Riley, Andrew and Strachan, Ian A. B.},
 TITLE = {Duality for {J}acobi group orbit spaces and elliptic solutions
of the {WDVV} equations},
 JOURNAL = {Lett. Math. Phys.},
 FJOURNAL = {Letters in Mathematical Physics},
 VOLUME = {77},
 NUMBER = {3},
 YEAR = {2006},
 PAGES = {221--234},
 DOI = {10.1007/s11005-006-0096-0},
 ISSN = {0377-9017,1573-0530},
 MRCLASS = {53D45 (11G55 33B30)},
}

@article{RS07,
 AUTHOR = {Riley, Andrew and Strachan, Ian A. B.},
 TITLE = {A note on the relationship between rational and trigonometric
solutions of the {WDVV} equations},
 JOURNAL = {J. Nonlinear Math. Phys.},
 FJOURNAL = {Journal of Nonlinear Mathematical Physics},
 VOLUME = {14},
 NUMBER = {1},
 YEAR = {2007},
 PAGES = {82--94},
 ISSN = {1402-9251,1776-0852},
 MRCLASS = {53D45},
}

@article{Ber00,
 AUTHOR = {Bertola, Marco},
 TITLE = {Frobenius manifold structure on orbit space of {J}acobi
groups. {II}},
 JOURNAL = {Differential Geom. Appl.},
 FJOURNAL = {Differential Geometry and its Applications},
 VOLUME = {13},
 NUMBER = {3},
 YEAR = {2000},
 PAGES = {213--233},
 DOI = {10.1016/S0926-2245(00)00027-9},
 URL = {https://doi.org/10.1016/S0926-2245(00)00027-9},
 ISSN = {0926-2245,1872-6984},
 MRCLASS = {11F50 (53D45)},
}

@article{Ber00b,
 AUTHOR = {Bertola, Marco},
 TITLE = {Frobenius manifold structure on orbit space of {J}acobi
groups. {I}},
 JOURNAL = {Differential Geom. Appl.},
 FJOURNAL = {Differential Geometry and its Applications},
 VOLUME = {13},
 NUMBER = {1},
 YEAR = {2000},
 PAGES = {19--41},
 DOI = {10.1016/S0926-2245(00)00026-7},
 URL = {https://doi.org/10.1016/S0926-2245(00)00026-7},
 ISSN = {0926-2245,1872-6984},
 MRCLASS = {11F50 (11F22 53D45)},
}

@article{Zuo20,
 AUTHOR = {Zuo, Dafeng},
 TITLE = {Frobenius manifolds and a new class of extended affine {W}eyl
groups of {A}-type},
 JOURNAL = {Lett. Math. Phys.},
 FJOURNAL = {Letters in Mathematical Physics},
 VOLUME = {110},
 NUMBER = {7},
 YEAR = {2020},
 PAGES = {1903--1940},
}

@article{MZ24,
 AUTHOR = {Ma, Shilin and Zuo, Dafeng},
 TITLE = {Frobenius {M}anifolds and a {N}ew {C}lass of {E}xtended
{A}ffine {W}eyl {G}roups of {A}-type ({II})},
 JOURNAL = {Commun. Math. Stat.},
 FJOURNAL = {Communications in Mathematics and Statistics},
 VOLUME = {12},
 NUMBER = {4},
 YEAR = {2024},
 PAGES = {617--632},
}

@article{PS26,
 AUTHOR = {Alessandro Proserpio and Ian A. B. Strachan},
 TITLE = {Diagonal invariants and genus-zero {H}urwitz {F}robenius manifolds},
 JOURNAL = {Selecta Math. (N.S.)},
 FJOURNAL = {Selecta Mathematica. New Series},
 VOLUME = {32},
 NUMBER = {13},
 YEAR = {2026},
}

@incollection{Dub98,
 AUTHOR = {Dubrovin, Boris},
 TITLE = {Differential geometry of the space of orbits of a {C}oxeter
group},
 BOOKTITLE = {Surveys in differential geometry: integral systems},
 SERIES = {Surv. Differ. Geom.},
 VOLUME = {4},
 YEAR = {1998},
 PAGES = {181--211},
 PUBLISHER = {Int. Press, Boston, MA},
 ISBN = {1-57146-066-7},
}

@article{BvG00,
 AUTHOR = {Andrea Brini and Karoline van Gemst},
 TITLE = {{S}aito discriminant strata for extended affine {W}eyl groups},
 JOURNAL = {in preparation},
}

@article{MT08,
 AUTHOR = {Milanov, Todor E. and Tseng, Hsian-Hua},
 TITLE = {The spaces of {L}aurent polynomials, {G}romov-{W}itten theory
of {$\Bbb P^1$}-orbifolds, and integrable hierarchies},
 JOURNAL = {J. Reine Angew. Math.},
 FJOURNAL = {Journal f\"ur die Reine und Angewandte Mathematik. [Crelle's
Journal]},
 VOLUME = {622},
 YEAR = {2008},
 PAGES = {189--235},
 DOI = {10.1515/CRELLE.2008.069},
 URL = {https://doi.org/10.1515/CRELLE.2008.069},
 ISSN = {0075-4102,1435-5345},
}

@article{EO09,
 AUTHOR = {Eynard, B. and Orantin, N.},
 TITLE = {{Topological recursion in enumerative geometry and random
matrices}},
 JOURNAL = {J. Phys.},
 VOLUME = {A42},
 NUMBER = {29},
 YEAR = {2009},
 PAGES = {293001},
 DOI = {10.1088/1751-8113/42/29/293001},
 SLACCITATION = {"%%CITATION = JPAGA,A42,293001;%%"
}}

@article{AL13,
 AUTHOR = {Arsie, Alessandro and Lorenzoni, Paolo},
 TITLE = {From the {D}arboux-{E}gorov system to bi-flat {F}-manifolds},
 JOURNAL = {J. Geom. Phys.},
 FJOURNAL = {Journal of Geometry and Physics},
 VOLUME = {70},
 YEAR = {2013},
 PAGES = {98--116},
 DOI = {10.1016/j.geomphys.2013.03.023},
 URL = {https://doi.org/10.1016/j.geomphys.2013.03.023},
 ISSN = {0393-0440,1879-1662},
 MRREVIEWER = {Dafeng\ Zuo},
}

@article{BvG22,
 AUTHOR = {Brini, Andrea and van Gemst, Karoline},
 TITLE = {Mirror symmetry for extended affine {W}eyl groups},
 JOURNAL = {J. \'Ec. polytech. Math.},
 FJOURNAL = {Journal de l'\'Ecole polytechnique. Math\'ematiques},
 VOLUME = {9},
 YEAR = {2022},
 PAGES = {907--957},
 DOI = {10.5802/jep.197},
 URL = {https://doi.org/10.5802/jep.197},
 ISSN = {2429-7100,2270-518X},
 MRCLASS = {53D45},
 MRREVIEWER = {Hsian-Hua\ Tseng},
 SLACCITATION = {"%%CITATION = ARXIV:1510.08690;%%"
}}

@article{MWZ24,
 AUTHOR = {Shilin Ma and Chao-Zhong Wu and Dafeng Zuo},
 TITLE = {Infinite-dimensional {F}robenius {M}anifolds and {E}xtensions of {G}enus-{Z}ero {W}hitham {H}ierarchies},
 YEAR = {2024},
 URL = {https://arxiv.org/abs/2406.08239},
 EPRINT = {arXiv:2406.08239},
}

@article{Sai81,
 AUTHOR = {Saito, Kyoji},
 TITLE = {Primitive forms for a universal unfolding of a function with
an isolated critical point},
 JOURNAL = {J. Fac. Sci. Univ. Tokyo Sect. IA Math.},
 FJOURNAL = {Journal of the Faculty of Science. University of Tokyo.
Section IA. Mathematics},
 VOLUME = {28},
 NUMBER = {3},
 YEAR = {1981},
 PAGES = {775--792 (1982)},
 ISSN = {0040-8980},
}

@article{Sai83,
 AUTHOR = {Saito, Kyoji},
 TITLE = {Period mapping associated to a primitive form},
 JOURNAL = {Publ. Res. Inst. Math. Sci.},
 FJOURNAL = {Kyoto University. Research Institute for Mathematical
Sciences. Publications},
 VOLUME = {19},
 NUMBER = {3},
 YEAR = {1983},
 PAGES = {1231--1264},
 DOI = {10.2977/prims/1195182028},
 URL = {https://doi.org/10.2977/prims/1195182028},
 ISSN = {0034-5318,1663-4926},
 MRCLASS = {32G11 (32B30)},
 MRREVIEWER = {Helmut\ Hamm},
}

@article{PS25,
 AUTHOR = {Alessandro Proserpio and Ian A. B. Strachan},
 TITLE = {{D}ubrovin duality for open {H}urwitz flat {F}-manifolds},
 YEAR = {2025},
 EPRINT = {arXiv:2512.08795},
}

@article{LQZ25,
 AUTHOR = {Liu, Si-Qi and Qu, Haonan and Zhang, Youjin},
 TITLE = {Generalized {F}robenius manifolds with non-flat unity and
integrable hierarchies},
 JOURNAL = {Comm. Math. Phys.},
 FJOURNAL = {Communications in Mathematical Physics},
 VOLUME = {406},
 NUMBER = {4},
 YEAR = {2025},
 PAGES = {Paper No. 77, 92},
 ISSN = {0010-3616,1432-0916},
 MRCLASS = {53D45 (37K10 70G45 70H06)},
}

@incollection{Dub98b,
 AUTHOR = {Dubrovin, Boris},
 TITLE = {Flat pencils of metrics and {F}robenius manifolds},
 BOOKTITLE = {Integrable systems and algebraic geometry ({K}obe/{K}yoto,
1997)},
 YEAR = {1998},
 PAGES = {47--72},
 PUBLISHER = {World Sci. Publ., River Edge, NJ},
}

@article{Rej23,
    AUTHOR = {Rejeb, Chaabane},
 TITLE = {New formula for the prepotentials associated with {H}urwitz-{F}robenius manifolds and generalized {WDVV} equations},
 YEAR = {2023},
 EPRINT = {arXiv:2312.00317},
}

@article{AFS20,
 AUTHOR = {Antoniou, G. and Feigin, M. V.  and Strachan, I. A. B.},
 TITLE = {The {S}aito determinant for {C}oxeter discriminant strata},
 JOURNAL = {arXiv:2008.10133v1},
 YEAR = {2020}}

@article{Alm22,
 AUTHOR = {Almeida, Guilherme F.},
 TITLE = {The differential geometry of the orbit space of extended
affine {J}acobi group {$A_ n$}},
 JOURNAL = {J. Geom. Phys.},
 FJOURNAL = {Journal of Geometry and Physics},
 VOLUME = {171},
 YEAR = {2022},
 PAGES = {Paper No. 104409, 52},
 DOI = {10.1016/j.geomphys.2021.104409},
 ISSN = {0393-0440,1879-1662},
 MRCLASS = {53D45},
}

@article {Jon96,
    AUTHOR = {Jones, Andrew R.},
     TITLE = {A combinatorial approach to the double cosets of the symmetric
              group with respect to {Y}oung subgroups},
   JOURNAL = {European J. Combin.},
  FJOURNAL = {European Journal of Combinatorics},
    VOLUME = {17},
      YEAR = {1996},
    NUMBER = {7},
     PAGES = {647--655},
      ISSN = {0195-6698,1095-9971},
   MRCLASS = {20B30 (05A15)},
MRREVIEWER = {G.\ D.\ James},
       DOI = {10.1006/eujc.1996.0056},
       URL = {https://doi.org/10.1006/eujc.1996.0056},
}

@article {ACV03,
    AUTHOR = {Abramovich, Dan and Corti, Alessio and Vistoli, Angelo},
     TITLE = {Twisted bundles and admissible covers},
   JOURNAL = {Comm. Algebra},
  FJOURNAL = {Communications in Algebra},
    VOLUME = {31},
      YEAR = {2003},
    NUMBER = {8},
     PAGES = {3547--3618},
      ISSN = {0092-7872,1532-4125},
MRREVIEWER = {Andrew\ Kresch},
       DOI = {10.1081/AGB-120022434},
       URL = {https://doi.org/10.1081/AGB-120022434},
}

@article {HM82,
    AUTHOR = {Harris, Joe and Mumford, David},
     TITLE = {On the {K}odaira dimension of the moduli space of curves},
   JOURNAL = {Invent. Math.},
  FJOURNAL = {Inventiones Mathematicae},
    VOLUME = {67},
      YEAR = {1982},
    NUMBER = {1},
     PAGES = {23--88},
      ISSN = {0020-9910,1432-1297},
MRREVIEWER = {Yujiro\ Kawamata},
       DOI = {10.1007/BF01393371},
       URL = {https://doi.org/10.1007/BF01393371},
}

@incollection {DNOPS18,
    AUTHOR = {Dunin-Barkowski, P. and Norbury, P. and Orantin, N. and
              Popolitov, A. and Shadrin, S.},
     TITLE = {Primary invariants of {H}urwitz {F}robenius manifolds},
 BOOKTITLE = {Topological recursion and its influence in analysis, geometry,
              and topology},
    SERIES = {Proc. Sympos. Pure Math.},
    VOLUME = {100},
     PAGES = {297--331},
 PUBLISHER = {Amer. Math. Soc., Providence, RI},
      YEAR = {2018},
      ISBN = {978-1-4704-3541-7},
   MRCLASS = {53D45 (32G15 81T45)},
MRREVIEWER = {Hsian-Hua\ Tseng},
       DOI = {10.1090/pspum/100/01768},
       URL = {https://doi.org/10.1090/pspum/100/01768},
}

@article {Ful69,
    AUTHOR = {Fulton, William},
     TITLE = {Hurwitz schemes and irreducibility of moduli of algebraic
              curves},
   JOURNAL = {Ann. of Math. (2)},
  FJOURNAL = {Annals of Mathematics. Second Series},
    VOLUME = {90},
      YEAR = {1969},
     PAGES = {542--575},
      ISSN = {0003-486X},
   MRCLASS = {14.20},
MRREVIEWER = {S.\ L.\ Kleiman},
       DOI = {10.2307/1970748},
       URL = {https://doi.org/10.2307/1970748},
}

@incollection {Lan19,
    AUTHOR = {Lando, Sergei K.},
     TITLE = {On {D}ubrovin's {F}robenius structures on {H}urwitz spaces},
 BOOKTITLE = {Primitive forms and related subjects---{K}avli {IPMU} 2014},
    SERIES = {Adv. Stud. Pure Math.},
    VOLUME = {83},
     PAGES = {221--236},
 PUBLISHER = {Math. Soc. Japan, [Tokyo]},
      YEAR = {2019},
      ISBN = {978-4-86497-085-3},
   MRCLASS = {14H15 (14H70 30F20)},
}

@article{PvGprep,
      title={Frobenius manifolds from the relativistic {T}oda chain: beyond {D}ynkin type {A}}, 
      author={Proserpio, Alessandro and van Gemst, Karoline},
      journal={in preparation},
}

@incollection {Kon94,
    AUTHOR = {Kontsevich, Maxim},
     TITLE = {Enumeration of rational curves via torus actions},
 BOOKTITLE = {The moduli space of curves ({T}exel {I}sland, 1994)},
    SERIES = {Progr. Math.},
    VOLUME = {129},
     PAGES = {335--368},
 PUBLISHER = {Birkh\"auser Boston, Boston, MA},
      YEAR = {1995},
      ISBN = {0-8176-3784-2},
   MRCLASS = {14N10 (14D22 14L30)},
MRREVIEWER = {Anatoly\ Libgober},
       DOI = {10.1007/978-1-4612-4264-2\_12},
       URL = {https://doi.org/10.1007/978-1-4612-4264-2_12},
}

@article {ELSSV01,
    AUTHOR = {Ekedahl, Torsten and Lando, Sergei and Shapiro, Michael and
              Vainshtein, Alek},
     TITLE = {Hurwitz numbers and intersections on moduli spaces of curves},
   JOURNAL = {Invent. Math.},
  FJOURNAL = {Inventiones Mathematicae},
    VOLUME = {146},
      YEAR = {2001},
    NUMBER = {2},
     PAGES = {297--327},
      ISSN = {0020-9910,1432-1297},
   MRCLASS = {14H30 (14H10 14N35)},
MRREVIEWER = {Ravi\ D.\ Vakil},
       DOI = {10.1007/s002220100164},
       URL = {https://doi.org/10.1007/s002220100164},
}

@article {MST16,
    AUTHOR = {Milanov, Todor and Shen, Yefeng and Tseng, Hsian-Hua},
     TITLE = {Gromov-{W}itten theory of {F}ano orbifold curves, gamma
              integral structures and {ADE}-{T}oda hierarchies},
   JOURNAL = {Geom. Topol.},
  FJOURNAL = {Geometry \& Topology},
    VOLUME = {20},
      YEAR = {2016},
    NUMBER = {4},
     PAGES = {2135--2218},
      ISSN = {1465-3060,1364-0380},
   MRCLASS = {14N35 (14J45 17B67 17B69)},
MRREVIEWER = {Pierre-Emmanuel\ Chaput},
       DOI = {10.2140/gt.2016.20.2135},
       URL = {https://doi.org/10.2140/gt.2016.20.2135},
}

@article {WX12,
    AUTHOR = {Wu, Chao-Zhong and Xu, Dingdian},
     TITLE = {A class of infinite-dimensional {F}robenius manifolds and
              their submanifolds},
   JOURNAL = {Int. Math. Res. Not. IMRN},
  FJOURNAL = {International Mathematics Research Notices. IMRN},
      YEAR = {2012},
    NUMBER = {19},
     PAGES = {4520--4562},
      ISSN = {1073-7928,1687-0247},
   MRCLASS = {58D15 (30D30 53D45 58B20)},
MRREVIEWER = {Dafeng\ Zuo},
       DOI = {10.1093/imrn/rnr192},
       URL = {https://doi.org/10.1093/imrn/rnr192},
}

@article {DNOPS19,
    AUTHOR = {Dunin-Barkowski, P. and Norbury, P. and Orantin, N. and
              Popolitov, A. and Shadrin, S.},
     TITLE = {Dubrovin's superpotential as a global spectral curve},
   JOURNAL = {J. Inst. Math. Jussieu},
  FJOURNAL = {Journal of the Institute of Mathematics of Jussieu. JIMJ.
              Journal de l'Institut de Math\'ematiques de Jussieu},
    VOLUME = {18},
      YEAR = {2019},
    NUMBER = {3},
     PAGES = {449--497},
      ISSN = {1474-7480,1475-3030},
   MRCLASS = {53D45 (14H81 32G15 81T45)},
MRREVIEWER = {Felix\ Janda},
       DOI = {10.1017/s147474801700007x},
       URL = {https://doi.org/10.1017/s147474801700007x},
}

@article {DOSS14,
    AUTHOR = {Dunin-Barkowski, P. and Orantin, N. and Shadrin, S. and Spitz,
              L.},
     TITLE = {Identification of the {G}ivental formula with the spectral
              curve topological recursion procedure},
   JOURNAL = {Comm. Math. Phys.},
  FJOURNAL = {Communications in Mathematical Physics},
    VOLUME = {328},
      YEAR = {2014},
    NUMBER = {2},
     PAGES = {669--700},
      ISSN = {0010-3616,1432-0916},
   MRCLASS = {81T45 (14N35 53D45)},
MRREVIEWER = {Wan\ Keng\ Cheong},
       DOI = {10.1007/s00220-014-1887-2},
       URL = {https://doi.org/10.1007/s00220-014-1887-2},
}

@article {KK06,
    AUTHOR = {Kokotov, A. and Korotkin, D.},
     TITLE = {Isomonodromic tau-function of {H}urwitz {F}robenius manifolds
              and its applications},
   JOURNAL = {Int. Math. Res. Not.},
  FJOURNAL = {International Mathematics Research Notices},
      YEAR = {2006},
     PAGES = {Art. ID 18746, 34},
      ISSN = {1073-7928,1687-0247},
   MRCLASS = {53D45 (14H30)},
MRREVIEWER = {Hsian-Hua\ Tseng},
       DOI = {10.1155/IMRN/2006/18746},
       URL = {https://doi.org/10.1155/IMRN/2006/18746},
}

@article {KS05,
    AUTHOR = {Kokotov, Alexey and Strachan, Ian A. B.},
     TITLE = {On the isomonodromic tau-function for the {H}urwitz spaces of
              branched coverings of genus zero and one},
   JOURNAL = {Math. Res. Lett.},
  FJOURNAL = {Mathematical Research Letters},
    VOLUME = {12},
      YEAR = {2005},
    NUMBER = {5-6},
     PAGES = {857--875},
      ISSN = {1073-2780},
   MRCLASS = {53D45 (34M35 34M55 37J35 37K20)},
       DOI = {10.4310/MRL.2005.v12.n6.a7},
       URL = {https://doi.org/10.4310/MRL.2005.v12.n6.a7},
}

@article {Str99,
    AUTHOR = {Strachan, I. A. B.},
     TITLE = {Degenerate {F}robenius manifolds and the bi-{H}amiltonian
              structure of rational {L}ax equations},
   JOURNAL = {J. Math. Phys.},
  FJOURNAL = {Journal of Mathematical Physics},
    VOLUME = {40},
      YEAR = {1999},
    NUMBER = {10},
     PAGES = {5058--5079},
      ISSN = {0022-2488,1089-7658},
   MRCLASS = {37K10 (37K05 53D45)},
       DOI = {10.1063/1.533015},
       URL = {https://doi.org/10.1063/1.533015},
}

@article {Tel12,
    AUTHOR = {Teleman, Constantin},
     TITLE = {The structure of 2{D} semi-simple field theories},
   JOURNAL = {Invent. Math.},
  FJOURNAL = {Inventiones Mathematicae},
    VOLUME = {188},
      YEAR = {2012},
    NUMBER = {3},
     PAGES = {525--588},
      ISSN = {0020-9910,1432-1297},
   MRCLASS = {57R56 (18D10 53D45)},
MRREVIEWER = {Julia\ Bergner},
       DOI = {10.1007/s00222-011-0352-5},
       URL = {https://doi.org/10.1007/s00222-011-0352-5},
}

@article {Giv01,
    AUTHOR = {Givental, Alexander B.},
     TITLE = {Semisimple {F}robenius structures at higher genus},
   JOURNAL = {Internat. Math. Res. Notices},
  FJOURNAL = {International Mathematics Research Notices},
      YEAR = {2001},
    NUMBER = {23},
     PAGES = {1265--1286},
      ISSN = {1073-7928,1687-0247},
   MRCLASS = {53D45 (14N35)},
MRREVIEWER = {Gilberto\ Bini},
       DOI = {10.1155/S1073792801000605},
       URL = {https://doi.org/10.1155/S1073792801000605},
}

@article {Str01,
    AUTHOR = {Strachan, I. A. B.},
     TITLE = {Frobenius submanifolds},
   JOURNAL = {J. Geom. Phys.},
  FJOURNAL = {Journal of Geometry and Physics},
    VOLUME = {38},
      YEAR = {2001},
    NUMBER = {3-4},
     PAGES = {285--307},
      ISSN = {0393-0440,1879-1662},
   MRCLASS = {53D45},
MRREVIEWER = {Andreas\ Gathmann},
       DOI = {10.1016/S0393-0440(00)00064-4},
       URL = {https://doi.org/10.1016/S0393-0440(00)00064-4},
}

@article {Rej25,
    AUTHOR = {Rejeb, Chaabane},
     TITLE = {W{DVV} solutions associated with the genus one holomorphic
              differential},
   JOURNAL = {J. Geom. Phys.},
  FJOURNAL = {Journal of Geometry and Physics},
    VOLUME = {210},
      YEAR = {2025},
     PAGES = {Paper No. 105432, 70},
      ISSN = {0393-0440,1879-1662},
   MRCLASS = {53D45 (35C05 35G20 35Q53 58J60)},
MRREVIEWER = {Iskander\ A.\ Taimanov},
       DOI = {10.1016/j.geomphys.2025.105432},
       URL = {https://doi.org/10.1016/j.geomphys.2025.105432},
}

\end{document}